\newcommand{\thvline}{\vrule width 1pt}
\newcommand{\thhline}{\hrule height 1pt}
\renewcommand{\refname}{Bibliography}
\renewenvironment{thebibliography}[1]
     {\bgroup\raggedright\small\section{\refname
        \@mkboth{\MakeUppercase\refname}{\MakeUppercase\refname}}%
      \list{\name{bib\@arabic\c@enumiv}
	    \@biblabel{\@arabic\c@enumiv}}%
           {\settowidth\labelwidth{\@biblabel{#1}}%
            \leftmargin\labelwidth
            \advance\leftmargin\labelsep
            \@openbib@code
            \usecounter{enumiv}%
            \let\p@enumiv\@empty
            \renewcommand\theenumiv{\@arabic\c@enumiv}}%
      \sloppy\clubpenalty4000\widowpenalty4000%
      \sfcode`\.\@m}
     {\def\@noitemerr
       {\@latex@warning{Empty `thebibliography' environment}}%
      \endlist\egroup}
\def\twoheaddownarrow{\ensuremath{\rotatebox[origin=c]{-90}{$\twoheadrightarrow$}}}
\newtheorem{thm}{Theorem}[section]
\newtheorem*{corollary}{Corollary}
\newtheorem{proposition}{Proposition}[section]
\newenvironment{myenumerate}{
\begin{enumerate}
   \setlength{\itemsep}{1pt}
   \setlength{\parskip}{0pt}
   \setlength{\parsep}{0pt}}{\end{enumerate}}
\newenvironment{myitemize}{
\begin{itemize}
   \setlength{\itemsep}{1pt}
   \setlength{\parskip}{0pt}
   \setlength{\parsep}{0pt}}{\end{itemize}}
\def\vth{\vartheta}
\renewcommand{\Im}{\mbox{Im}}
\renewcommand{\Re}{\mbox{Re}}
\newcommand{\pa}{\partial}
\newcommand{\IR}{\mathbb{R}}
\newcommand{\IZ}{\mathbb{Z}}
  \def\IN{\mathbb{N}}
\newcommand{\Tr}{\mbox{Tr}}
\newcommand{\sgn}{\mbox{sgn}}
\newcommand{\CC}{\cal{C}}
\newcommand{\LL}{{\cal L}}
\def\={\;  = \;}
\def\+{\, + \,}
\def\inn{\,\in\,}
\def\ceil#1{\lceil#1\rceil}  \def\flo#1{\lfloor#1\rfloor}
\def\ssm{\smallsetminus}
\def\dRS#1{\partial^{\rm RS}_{#1}}
 \def\Av#1#2{\text{Av}^{(#1)}\bigl[#2\bigr]}
 \def\AvB#1#2{\text{Av}^{(#1)}\Bigl[#2\Bigr]}
 \def\AvZ#1{\text{\rm Av}_IZ\Bigl[#1\Bigr]}
 \def\AvL#1{\text{Av}_{\IZ\t+\IZ}\Bigl[#1\Bigr]}
 \def\AvZ{\text{Av}_\IZ}
 \def\AvL{\text{Av}^{(m)}_{\IZ\t+\IZ}}
\def\AA{\bold{A}}
\def\BB{\bold{B}}
\def\AA{\mathcal{A}}
\def\BB{\mathcal{B}}
\def\mypmod#1{\; (\mod \; {#1})}
\def\wh{\widehat}
\def\wt{\widetilde}
\def\FR{\mathcal R}
\def\FRR{\mathcal R^{(2)}}
\def\erfc{\text{erfc}}
\def\d{\partial}
      \def\s{\sigma}  \def\g{\gamma}  
\def\t{\tau}   
\def\a{\alpha}  \def\b{\beta}
\def\m{\mu}   
\def\e{\epsilon}  \def\w{\omega}  \def\h{\eta}  \def\l{{\lambda}}  
\def\o{{\omega}}  \def\O{{\Omega}}  \def\D{\Delta}  \def\G{\Gamma}
\def\L{\rm I}
\def\CH{{\cal H}}
\def\CR{{\cal R}}
\def\CM{{\cal M}}
\def\CF{{\cal F}}
\def\CS{{\cal S}}
\def\CZ{{\cal Z}}
\def\CN{{\cal N}}
\def\CO{{\text O}}
\def\CQ{{\cal Q}}
\def\half{{\frac12}}
\def\fr#1#2{{\frac{#1\vphantom{X^Y}}{#2\vphantom{x_j}}}}
\def\NH{^{\rm OG}}
\def\IC{{\mathbb C}}
\def\IQ{{\mathbb Q}}
\def\CN{{\cal N}}
\def\CQ{{\cal Q}}
\def\CZ{{\cal Z}}
\def\1F1{{}_1\!F_1}
\def\2F0{{}_2\!F_0}
\def\ve{\varepsilon}
\def\bve{\pmb{\ve}}
\def\ga{\gamma}
\def\G{\Gamma}
\def\a{\alpha}
\def\h3{$\textrm{H}_3^+$}
\def\d{{\partial}}
\def\IC{{\mathbb C}}
\def\IR{{\mathbb R}}
\def\IZ{{\mathbb Z}}
\def\IM{{\mathbb M}}
\def\IJ{{\mathbb J}}
\newcommand{\E}{\mathbf{e}}
\newcommand{\sm}[4]{\bigl(\smallmatrix #1&#2\\ #3&#4\endsmallmatrix\bigr)}
\newcommand{\tK}{{\tilde K}}
\newcommand{\tS}{{\tilde S}}
\def\p{\partial}
\def\mod{{\rm mod}}
\def\CM{{\cal M}}
\def\CN{{\cal N}}
\def\CR{{\cal R}}
\def\CF{{\cal F}}
\def\CV{{\cal V}}
\def\CZ{{\cal Z}}
\def\CE{{\cal E}}
\def\CH{{\cal H}}
\def\CC{{\cal C}}
\def\CB{{\cal B}}
\def\CS{{\cal S}}
\def\CA{{\cal A}}
\def\CC{{\cal C}}
\def\CK{{\cal K}}
\def\CQ{{\cal Q }}
\def\CE{{\cal E }}
\def\CV{{\cal V }}
\def\CZ{{\cal Z }}
\def\CS{{\cal S }}
\def\Tr{{\rm Tr}}
\def\G{\Gamma}
\font\manual=manfnt
\def\dbend{\lower3.5pt\hbox{\manual\char127}}
\def\c{\cdot}
\def\p{\partial}
\def\bar{\overline}
\def\CS{{\cal S}}
\def\CN{{\cal N}}
\def\CH{{\cal H}}
\def\rt2{\sqrt 2}
\def\irt2{{1\over\sqrt 2}}
\def\t{\tilde}
\def\ndt{\noindent}
\def\s{\sigma}
\def\b{\beta}
\def\a{\alpha}
\def\w{\omega}
\def\g{\gamma}
\def\mod{{\rm mod}}
\font\cmss=cmss10
\font\cmsss=cmss10 at 7pt
\def\IL{\relax{\rm I\kern-.18em L}}
\def\IH{\relax{\rm I\kern-.18em H}}
\def\rlx{\relax\leavevmode}
\def\ZZ{\rlx\leavevmode\ifmmode\mathchoice{\hbox{\cmss Z
\kern-.4em Z}}
 {\hbox{\cmss Z\kern-.4em Z}}{\lower.9pt\hbox{\cmsss Z\kern-.
36em Z}}
 {\lower1.2pt\hbox{\cmsss Z\kern-.36em Z}}\else{\cmss Z\kern-.
4em
 Z}\fi}
\def\Tr{{\rm Tr}}
\def\G{\Gamma}
\def\rt2{\sqrt{2}}
\def\irt2{{1\over\sqrt{2}}}
\def\t{\tilde}
\def\ndt{\noindent}
\def\s{\sigma}
\newcommand{\Z}{{\mathbb Z}}
\newcommand{\R}{{\mathbb R}}
\newcommand{\N}{{\mathbb N}}
\newcommand{\C}{{\mathbb C}}
\newcommand{\M}{{\cal M}}
\newcommand{\Q}{{\mathbb Q}}
\def\Tr{{\rm Tr}}
\newcommand{\bea}{\begin{eqnarray}}
\newcommand{\eea}{\end{eqnarray}}
\newcommand{\be}{\begin{equation}}
\newcommand{\ee}{\end{equation}}
\newcommand{\bes}{\begin{equation*}}
\newcommand{\ees}{\end{equation*}}
\newcommand{\ben}{\begin{eqnarray*}}
\newcommand{\een}{\end{eqnarray*}}
\newcommand{\bem}{\begin{pmatrix}}
\newcommand{\eem}{\end{pmatrix}}
\newcommand{\bal}{\begin{align}}
\newcommand{\eal}{\end{align}}
\def\a{\alpha}
\def\b{\beta}
\def\c{\gamma}
\def\d{\delta}
\def\e{\epsilon}
\def\g{\gamma}
\def\h{\eta}
\def\im{\mathrm{Im}}
\def\k{\kappa}             
\def\l{\lambda}
\def\m{\mu}
\def\o{\omega}  \def\w{\omega}
\def\pa{\partial}
\def\s{\sigma}                                   
\def\t{\tau}
\def\v{\varphi}
\def\D{\Delta}
\def\G{\Gamma}
\def\L{\Lambda}
\def\O{\Omega}
\def\Th{\Theta}
\def\t{\tau}
\def\tbar{\bar{\tau}}
\def\vst{\v^{\text{stand\vphantom{p}}}}
\def\vstF{\Phi^{\rm stand}}
\def\AP#1{\mathcal A_{#1}}
\def\PHI{\v^{\text{opt}}}
\def\PHIF{\Phi^{\rm opt}}
\def\V{\mathcal V}
\def\DD{\Delta}
\def\elliptic{\CE}
\def\mytheta{\vartheta}
\title{Quantum Black Holes, Wall Crossing, and Mock Modular Forms}
\preprint{}
\author{Atish Dabholkar$^{1, 2}$, Sameer Murthy$^3$, and Don Zagier$^{4, 5}$\\

\vspace{0.2cm}

\it $^{1}${Theory Division, CERN, PH-TH Case C01600,  CH-1211,  23 Geneva, Switzerland}
\vspace{0.2cm}

\it $^2${Sorbonne Universit\'es, UPMC Univ Paris 06\\
\it  \hspace{0.15cm}UMR 7589, LPTHE, F-75005, Paris, France}
\vspace{0.2cm}

\it \hspace{0.05cm} {CNRS, UMR 7589, LPTHE, F-75005, Paris, France}
\vspace{0.2cm}


\ndt \it$^3${NIKHEF theory group, Science Park 105}, \it {1098 XG Amsterdam, The Netherlands} 
\vspace{0.2cm}

\ndt \it $^4${Max-Planck-Institut f\"ur Mathematik},\it{Vivatsgasse 7, 53111 Bonn, Germany}
\vspace{0.2cm}

\ndt \it $^5${Coll\`ege de France},  
\it{3 Rue d'Ulm, 75005 Paris, France}

\vspace{4mm}
\hspace{-5mm}
\leftline{Emails: 
\tt{atish@lpthe.jussieu.fr,  murthy.sameer@gmail.com, don.zagier@mpim-bonn.mpg.de}}
}
\abstract{
We show that the meromorphic Jacobi form that counts the  quarter-BPS states  in $\CN=4$ 
string theories can be  canonically decomposed as a sum of  a \textit{mock Jacobi form}  and an 
\textit{Appell-Lerch sum}. The  quantum degeneracies of single-centered black holes are Fourier 
coefficients of this mock Jacobi form, while the Appell-Lerch sum captures  the degeneracies of 
multi-centered black holes which decay upon wall-crossing. The \textit{completion} of the mock 
Jacobi form  restores the modular symmetries expected from $AdS_3/CFT_2$  holography but  has 
a holomorphic anomaly reflecting the non-compactness of the microscopic CFT.   For every positive 
integral value $m$ of  the magnetic charge invariant  of the black hole, our analysis leads 
to a  special mock Jacobi form of weight two  and  index $m$, which we characterize uniquely 
up to a  Jacobi cusp form. This family  of special forms and another closely related family 
of weight-one forms contain almost all the known mock modular forms  including the mock theta 
functions of Ramanujan, the generating function of Hurwitz-Kronecker class numbers, the mock modular forms 
appearing in the Mathieu and Umbral moonshine, as well as an infinite number of new examples.  
}
\keywords{black holes, mock modular forms, superstrings, wall-crossing}
\begin{document}

\textit{``My dream is that I will live to see the day when our young physicists, struggling 
to bring the predictions of superstring theory into correspondence with the facts of nature, 
will be led to enlarge their analytic machinery to include not only theta-functions but mock 
theta-functions\,\dots\; But before this can happen, the purely mathematical exploration of the 
mock-modular forms and their mock-symmetries must be carried a great deal further.''}

\vspace{2mm}
\centerline{ \hspace{60mm} \textsl{ Freeman Dyson (1987 Ramanujan Centenary Conference)}}

\section{Introduction}

The quantum degeneracies associated with a black hole horizon are of central importance in 
quantum gravity. The counting function for these degeneracies for  a large class of black holes
in string theory is expected to be  modular from the perspective of holography. However, 
in situations with wall-crossing, there is an apparent loss of modularity and   it is far 
from clear if and how such a counting function can be modular.   In the context of quarter-BPS 
black holes in $\CN=4$ supersymmetric theories, we develop the  required analytic machinery 
that provides a complete answer to this question which leads naturally to the mathematics of
mock modular forms. We  present a number of new mathematical results motivated by but 
independent of these physical considerations.  

Since this paper is of possible interest to both theoretical physicists (especially
string theorists) and theoretical mathematicians (especially number theorists),
we give two introductions in their respective dialects.

\subsection{Introduction for mathematicians}

In the quantum theory of black holes in the context of string theory, the physical
problem of counting the dimensions of certain eigenspaces (``the number of
quarter-BPS dyonic states of a given charge'') has led to the study of Fourier
coefficients of certain {\it meromorphic} Siegel modular forms and to the
question of the modular nature of the corresponding generating functions.
Using and refining results of S. Zwegers \cite{Zwegers:2002}, we show that
these generating  functions belong to the recently discovered class of functions
called {\it mock modular forms}.

Since this notion is still not widely known, it will be reviewed in some detail
(in \S\ref{basicdefs}). Very roughly, a mock modular form of weight $k$ 
(more precisely, ``pure'' mock modular forms; we will also introduce
a somewhat more general notion of ``mixed'' mock modular forms in~\S\ref{MMMF}) 
is a holomorphic function~$f$ in the upper half plane to which is associated a 
holomorphic  modular form~$g$ of weight $2-k$, called the ``shadow'' of~$f$, such that the 
sum of~$f$ and a suitable non-holomorphic integral of~$g$ transforms like 
a holomorphic modular form of weight~$k$. Functions of this type occur 
in several contexts in mathematics and mathematical physics: as certain $q$-hypergeometric series 
(like Ramanujan's original mock theta functions), as generating functions of 
class numbers of imaginary quadratic fields~\cite{Zagier:1975}, or as characters of extended 
superconformal algebras~\cite{Ooguri:1989fd}, with special cases of the last class being conjecturally 
related to the Mathieu group~$M_{24}$~\cite{Eguchi:2010ej}. They also arise, as was shown 
by Zwegers in his thesis~\cite{Zwegers:2002},  as the Fourier coefficients of 
meromorphic Jacobi forms. It is this last occurrence which is  at the origin of the 
connection to black hole physics, because the Fourier coefficients of meromorphic Jacobi forms have  
the same wall-crossing behavior as that exhibited by the degeneracies of BPS states. 

The specific meromorphic Jacobi forms that are of interest for the black hole counting problem  
are the Fourier-Jacobi coefficients $\psi_m (\t,z)$ of  the meromorphic Siegel modular form
  \be\label{reciproigusa} \frac 1{\Phi_{10}(\O)} \= \sum_{m=-1}^{\infty} \psi_m (\t,z) \, p^m  \,, \qquad 
    \Bigl(\O = \left( \begin{array}{cc} \t & z \\ z & \sigma \\ \end{array} \right),\;\, p=e^{2\pi i \s} \Bigr) \,, \ee
the reciprocal of the Igusa cusp form of weight $10$, which arises as the partition function of 
quarter-BPS dyons in the type~II compactification on the product of a $K3$ surface and an elliptic curve \cite{Dijkgraaf:1996it,Shih:2005uc,David:2006yn}.
These coefficients, after multiplication by the discriminant function $\Delta(\tau)$, are meromorphic 
Jacobi forms of weight~$2$ with a double pole at $z=0$ and no others (up to translation by the period lattice). 

The new mathematical results of the paper are contained in Sections~\ref{MockfromJacobi}, \ref{flyJac}
and \ref{structure}. In~\S\ref{MockfromJacobi}, \-extending the results of~\cite{Zwegers:2002}, we show 
that any meromorphic Jacobi form $\v(\t,z)$ having poles only at torsion points $z=\a\t+\b$ ($\a,\,\b\in\Q$)
has a canonical decomposition into two pieces~$\v^{\rm F}(\t,z)$ and~$\v^{\rm P}(\t,z)$, called its ``finite" part and  
``polar" part, respectively, of which the first is a finite linear combination of classical theta series
with mock modular forms as coefficients, and the second is an elementary expression that is
determined completely by the poles of~$\v$. Again using the results of
Zwegers, we give explicit formulas for the polar part~$\v^P$ for all~$\v$ having only simple or double poles.  
In the particular case of~$\psi_{m}$, the polar part is given by the formula 
  \be \label{simfun} \psi_{m}^{\rm P}(\t,z) \= \frac{p_{24}(m+1)}{\Delta(\t)} \; \AP{2,m}(\t,z) \qquad 
  \big(q=e^{2\pi i \t} \,, \;  y=e^{2\pi i z} \big) \, , \ee
where $\Delta(\t)$ is the Ramanujan discriminant function, $p_{24}(m+1)$ the coefficient 
of~$q^{m}$ in $\Delta(\t)^{-1}$, and $\AP{2,m}(\t,z)$ the elementary function (Appell-Lerch sum) 
  \be \label{defBP} \AP{2,m}(\t,z)\= \sum_{s\in\Z} \frac{q^{ms^2 +s}y^{2ms+1}}{(1 -q^s y)^2} \,. \ee
Note that $\AP{2,m}$ exhibits wall-crossing: for  $0<\Im(z) < \Im(\t)$ it has the Fourier expansion 
  \be \label{AL2} \AP{2,m}(\t,z) \= \sum_{r\ge\ell>0 \atop r\,\equiv\ell\mypmod{2m}} \ell\,q^{\frac{r^{2}-\ell^{2}}{4m}}\,y^r\,, \ee
but the Fourier expansions are different in other strips $n<\Im(z)/\Im(\t)<n+1$.  
This wall-crossing is at the heart of both the mathematical and the physical theories. On the mathematical side
it explains the ``mockness" of the finite part $\psi_m^{\rm F}$ of~$\psi_m$. On the physical side it has
an interpretation in terms of counting two-centered black holes, with the integer $\ell$ in \eqref{AL2} being the  
dimension   of the $SU(2)$ multiplet with angular momentum $(\ell -1)/2$ contained in the electromagnetic 
field produced by the two centers.

Starting in \S\ref{flyJac} we focus attention on two particular classes of functions
$\{\v_{1,m}\}$ and $\{\v_{2,m}\}$, where $\v_{k,m}$ is a meromorphic Jacobi form 
of weight $k\in\{1,\,2\}$ and index~$m\in\N$ having singular part $(2\pi iz)^{-k}\+\text O(1)$ 
as $z\to0$ and no other poles except the translates of this one by the period lattice.  
These functions in the case~$k=2$ are related to the Fourier coefficients~$\psi_{m}$ defined 
in \eqref{reciproigusa} by 
\be \label{psifromphim}
\psi_{m}(\t,z) \= \frac{p_{24}(m+1)}{\Delta(\t)} \; \v_{2,m}(\t,z) \+ \text{$($weakly holomorphic Jacobi form$)$}.
\ee
The polar part of
$\v_{2,m}$ is the function~$\AP{2,m}$ defined above, and that of $\v_{1,m}$ an Appell-Lerch sum $\AP{1,m}$ 
with a similar definition (eq.~\eqref{AAm}), but the functions themselves are not unique, since we 
can add to them any (weak) Jacobi forms of weight~$k$ and index~$m$ without changing the defining property.
This will change the mock Jacobi form $\Phi_{k,m}=\v_{k,m}^{\rm F}$ in the corresponding way.  Much of our
analysis concerns finding ``optimal" choices, meaning choices for which the functions $\Phi_{k,m}$ have 
poles of as small an order as possible at infinity and consequently Fourier coefficients whose asymptotic 
growth is as small as possible.  Our main results concern the case~$k=2$, and are as follows:

\begin{myenumerate}
  \item If $m$ is a prime power, then the functions $\Phi_{2,m}$ can be chosen to be strongly 
holomorphic (i.e., with Fourier expansions containing only monomials $q^ny^r$ with $4nm-r^2\ge0$).
Their Fourier coefficients then have polynomial growth and are given by explicit linear
combinations of class numbers of imaginary quadratic fields.
 \item More generally, for arbitrary $m$, the function $\Phi_{2,m}$ can be chosen to be the sum
of the images under appropriate ``Hecke-like operators" of special mock Jacobi forms $\CQ_M$,
where~$M$ ranges over all divisors of~$m$ having an even number of distinct prime factors. 
These summands are the eigencomponents of~$\Phi_{2,m}$ with respect to the Atkin-Lehner-like operators
$W_{m_1}$ ($m_1|m$, $(m_1,m/m_1)=1$) acting on the space of mock Jacobi forms of index~$m$.
  \item The mock Jacobi form~$\CQ_1$ is the generating function of class numbers of imaginary quadratic fields,
and is strongly holomorphic. The other special mock Jacobi forms~$\CQ_M$ $(M=6$, 10, 14, \dots) 
can be chosen to have ``optimal growth" (meaning that their Fourier expansions contain only monomials 
$q^ny^r$ with $4nm-r^2\ge-1$). Their
Fourier coefficients then grow like $e^{\pi\sqrt{\D}/m}$ as $\D:=4nm-r^2$ tends to infinity.
  \item One can also choose $\Phi_{2,m}(\t,z)$ for arbitrary $m\ge1$ to be of the form 
$\sum\Phi_{2,m/d^2}^0(\t,dz)$, where $d$ ranges over positive integers with $d^2|m$ and each~$\Phi_{2,m/d^2}^0$
has optimal growth.
  \item There are explicit formulas for the polar coefficients (those with $\D<0$) of $\Phi_{2,m}^0$ 
and~$\CQ_M$.  In particular, the coefficient of $q^ny^r$ in~$\CQ_M$ vanishes if $4nM-r^2=0$ and equals
$\pm\v(M)/12$ if $4nM-r^2=-1$, where $\v(M)=\prod_{p|M}(p-1)$ is the Euler $\v$-function of~$M$.
\end{myenumerate}
The proofs of these properties are contained in~\S\ref{structure}, which gives a detailed description 
(based on difficult results on Jacobi forms proved in~\cite{SkoruppaZagier2, SkoruppaZagier}) of the way that the space of
holomorphic Jacobi forms is contained in the spaces of weak or of weakly holomorphic Jacobi forms.
This analysis contains several surprises, such as the result (Theorem~\ref{chooseKm}) that for
all $m\ge1$ the space of holomorphic Jacobi forms of index~$m$ has codimension exactly~1 in
the space of Jacobi forms of index~$m$ with optimal growth. 

For $\v_{1,m}$ the description is much less complete.  There is still a result like~2., with $M$ now 
ranging over the divisors of~$m$ with an~{\it odd}
number of distinct prime factors, but it is no longer possible in general to choose the~$\CQ_M$ to have
optimal growth.  The few cases where this is possible turn out to be related to mock theta functions
that have played a prominent role in the past. Thus $\CQ_{30}$ and $\CQ_{42}$ are essentially equal
to the mock theta functions of ``order~5" and ``order~7" in Ramanujan's original letter to Hardy, and 
$\CQ_2$ and several of the other $\CQ_M$ are related to the ``Mathieu moonshine" and 
``Umbral moonshine'' story~\cite{Eguchi:2010ej, Cheng:2012tq}.
The mock Jacobi forms~$\CQ_M$ for the $k=2$ case are also related to special mock theta functions, 
now of weight~$3/2$, e.g.~$\CQ_6$ is the mock theta function with shadow~$\eta(\tau)$ given in \cite{Zagier:2007}. 

The applications of the results (for the $k=2$ case, via the fact that $\psi_m$ is the sum of
$p_{24}(m+1)\v_{2,m}/\D$ and a weakly holomorphic Jacobi form of weight~$-10$ and index~$m$) to the
original physics problem are explained in~\S\ref{MockforDyons} and in the following ``second introduction."

\subsection{Introduction for physicists}

The microscopic quantum description of supersymmetric black holes in string
theory usually starts with a brane configuration of given
charges and mass at weak coupling, which is localized at a single point in the
noncompact spacetime. One then computes an appropriate indexed partition
function in the world-volume theory of the branes, which from the perspective of enumerative geometry 
computes topological invariants such as the  Donaldson-Thomas invariants. At strong coupling, the brane
configuration gravitates and the indexed partition function is expected to count the
microstates of the corresponding macroscopic gravitational configurations.
Assuming that the gravitational configuration is a single-centered black hole then
gives a way to obtain a statistical understanding of the entropy of the black hole in
terms of its microstates, in accordance with the Boltzmann relation\footnote{Under certain 
conditions the index equals the absolute number \cite{Sen:2008yk,Sen:2008vm, Dabholkar:2010rm}.}.

One problem that one often encounters is that the macroscopic configurations are no longer localized 
at a point and include not only a single-centered black hole of interest but also several multi-centered 
ones \cite{Denef:2000nb,Denef:2002ru,Bates:2003vx,Denef:2007vg}. Moreover, the indexed degeneracy of 
the multi-centered configurations typically jumps upon crossing walls of marginal stability in the 
moduli space where the multi-centered configuration breaks up into its single-centered constituents. 
These jumps are  referred to  as the  `wall-crossing phenomenon'.  

If one is interested in the physics of the horizon or the microstates of a single black
hole, the multi-centered configurations and the associated wall-crossings are thus something of a 
nuisance. It is  desirable to have a mathematical characterization that isolates the single-centered 
black holes directly at the microscopic level. One  distinguishing feature of single-centered 
black holes is that they are `\textit{immortal}' in that they exist as stable quantum states for 
all values of the moduli and hence their degeneracy does not exhibit the wall-crossing phenomenon. 
We will use this property later to   define the counting function for the  immortal black holes.

The wall-crossing phenomenon raises important conceptual questions regarding
the proper holographic formulation in this context. In many cases, the black hole can be viewed 
as an excitation of a black string. The near  horizon geometry of a black string is $AdS_3$ which 
is expected to be holographically dual to a two-dimensional conformal field theory $CFT_2$.  
The conformal boundary of  Euclidean $AdS_3$ is a 2-torus with a complex structure parameter $\tau$,
 and the physical partition function of $AdS_3$ and of the boundary $CFT_2$ is a function of $\tau$. 
The  $SL(2, \Z)$ transformations of $\t$ can be identified geometrically with global diffeomorphisms 
of the boundary of  $AdS_3$ space. The partition function is expected to have good modular properties
under this geometric  symmetry. This symmetry is crucial for the Rademacher-type expansions of the 
black hole degeneracies for understanding the quantum entropy of these black holes via holography \cite{Sen:2008yk,Sen:2008vm,Dabholkar:2011ec,Dijkgraaf:2000fq,deBoer:2006vg,Manschot:2007ha,Strominger:1998yg, 
Murthy:2009dq,Manschot:2009ia,DGMKloos}. Implementing the modular symmetries  presents several subtleties in situations when there is wall-crossing. 

The wall-crossing phenomenon has another important physical implication for the invariance of the 
spectrum under large gauge transformations of the antisymmetric tensor field.  Large gauge transformations 
lead to the `spectral flow symmetry' of the partition function of the black string \cite{deBoer:2006vg}. 
Since these transformations act both on the charges and the moduli, degeneracies of states with a 
charge vector $\Gamma$ at some point $\phi$ in the moduli space  get mapped to the degeneracies of states with charge
vector $\Gamma^{\prime}$ at some other point $\phi^{\prime}$ in the moduli space. Typically, there are many 
walls separating the point $\phi^{\prime}$  and the original point $\phi$. As a result, the degeneracies 
extracted from the black string at a \textit{given} point $\phi$ in the moduli space do not exhibit the 
spectral-flow symmetry.  On the other hand, the spectrum of immortal black holes is independent of 
asymptotic moduli and hence must exhibit the  spectral-flow symmetry.  This raises the question as to 
how to make the spectral-flow symmetry manifest for the degeneracies of immortal black holes in the 
generic situation when there is wall-crossing.

With these motivations, our objective will be to isolate the partition functions of the black string 
associated with immortal black holes and investigate their transformation properties  under the
boundary modular group and large gauge transformations. More
precisely, we would like to investigate the following four questions.
\begin{myenumerate}
  \item Can one define a \textit{microscopic}  counting function that cleanly
isolates  the microstates of immortal black holes from  those of  the multi-centered black configurations?
  \item What are the modular properties of  the counting function of  immortal black holes when the 
asymptotic spectrum exhibits the wall-crossing phenomenon?
  \item Can this counting function be related to a quantity that is properly modular as
might be expected from the perspective of near-horizon $AdS_3/CFT_2$   holography?
  \item Can one define a partition function of the immortal black holes that 
manifestly exhibits the spectral-flow symmetry resulting from large gauge transformations?
\end{myenumerate}

The main difficulties in answering these questions stem from the complicated  moduli dependence
of the black hole spectrum which is often extremely hard to compute. To address the central conceptual
issues in a tractable context, we consider  the compactification of Type-II on $K3 \times T^2$
with $\CN =4$ supersymmetry in four dimensions. The spectrum of  quarter-BPS dyonic states in this
model is exactly computable \cite{Dijkgraaf:1996it,Gaiotto:2005gf,Shih:2005uc,Shih:2005he,LopesCardoso:2004xf,David:2006yn} 
and by now is well understood  at \textit{all} points in the moduli space \cite{Sen:2007vb,Dabholkar:2007vk,
Cheng:2007ch} and for \textit{all} possible duality orbits \cite{Dabholkar:2007vk,
Banerjee:2008pv,Banerjee:2008pu,Banerjee:2008ri,Dabholkar:2008zy}. 
Moreover, as we will see, this particular model exhibits  almost all of the essential issues
that we wish to address.  The  ${\cal N}=4$ black holes have the remarkable property that even though
their spectrum is moduli-dependent, the partition function itself is moduli-independent. The  entire moduli
dependence of the black hole degeneracy is captured by the moduli dependence of
the choice of the Fourier contour \cite{Sen:2007vb,Dabholkar:2007vk, Cheng:2007ch}.
Moreover,  the only  multi-centered configurations that contribute to the supersymmetric index of quarter-BPS states
are the ones with only two centers,  each of which is half-BPS \cite{Dabholkar:2009dq}.  Consequently,  
the only way the index can jump at a wall is by the decay of  a two-centered configuration  into its half-BPS 
constituents \cite{Sen:2007vb,Dabholkar:2007vk, Cheng:2007ch}; this  is a considerable simplification compared 
to the general $\CN=2$ case where more complicated multi-centered decays  are possible. These features  
make the $\CN=4$ case  much more tractable.  

The number of microstates of quarter-BPS  dyonic states for the above-mentioned compactification is given by a Fourier
coefficient of a meromorphic Jacobi form $\psi_{m}(\t, z)$   with a moduli-dependent contour. 
The partition function (\ref{reciproigusa}) referred to earlier is the generating
function for these meromorphic Jacobi forms. Using this simplicity of the moduli
dependence and the knowledge of the exact spectrum, it is possible to give very
precise answers to the above questions in the $\CN =4$ framework, which  turn out
to naturally involve mock modular forms.
\begin{myenumerate}
  \item One can define a holomorphic  function for counting the
microstates of immortal black holes\footnote{We will use the terms  `immortal' and `single-centered' interchangeably.  
In general, the moduli-independent `immortal' degeneracies  can receive contributions not only from single black 
holes but also from scaling solutions \cite{Denef:2007vg}. They are not expected to contribute to the $\CN=4$ 
index that we consider \cite{Dabholkar:2009dq}. In addition, there can be  `hair' degrees of freedom \cite{Banerjee:2009uk, Jatkar:2009yd}, 
which are degrees of freedom localized outside the  black hole horizon that carry part of the charge of the black hole. 
In frames where the black hole is represented entirely in terms of D-branes, such hair modes are expected to be absent.} as a 
Fourier coefficient of the partition function of the black string for a specific choice of the Fourier 
contour \cite{Sen:2007vb,Dabholkar:2007vk,Cheng:2007ch}. The contour corresponds to
choosing the asymptotic moduli of the theory in the attractor region of the single-centered black hole.
  \item Because the asymptotic counting function  is a \textit{meromorphic} Jacobi form, the
near horizon counting function of immortal black holes is a \textit{mock} modular form in that it fails to be modular
but in a very specific way.  The failure of modularity is governed by a
\textit{shadow}, which is given in terms of another holomorphic modular form. 
  \item Given a mock modular form and its shadow, one can define its
\textit{completion} which is a non-holomorphic modular form. The failure of
holomorphy can be viewed as a `holomorphic anomaly' which is also governed by the
shadow.
\item The partition function of  immortal black holes with manifest spectral-flow invariance 
is a mock Jacobi form -- a new mathematical object  defined and  elaborated upon in \S{\ref{MockJacobi}}. 
\end{myenumerate}

The main physical payoff of the mathematics of mock modular forms in this context
is the guarantee that one can still define a completion as in (3) which \textit{is} modular albeit non-holomorphic. 
As mentioned earlier, the modular
transformations on the $\tau$ parameter can be identified with global
diffeomorphisms of the boundary of the near horizon $AdS_3$. This connection
makes the mathematics of mock modular forms physically very relevant for
$AdS_3/CFT_2$ holography in the presence of wall-crossing and holomorphic
anomalies. The required mathematical results concerning mock modular forms  
are developed in sections \S{\ref{Mock}}, \S\ref{MockfromJacobi}, 
\S\ref{flyJac}, and \S\ref{structure}. 

To orient the physics reader, we summarize the essential conclusions of this mathematical investigation from 
the perspective of the questions posed above.

\begin{enumerate}

\item  \textbf{Decomposition}: Given an asymptotic counting function,  the degeneracies of single-centered  
black holes can be isolated using the results in \S\ref{MockfromJacobi}. Applying Theorem~\ref{Thmdouble} to the 
 meromorphic Jacobi form  $\psi_{m}(\t, z)$ gives a unique decomposition
\be\label{decomp} \psi_{m}(\t, z)=\psi^{\rm F}_{m}(\t, z)\+\psi^{\rm P}_{m}(\t, z)\,, \ee
such that $\psi^{\rm P}_{m}(\t, z)$ is a simple function \eqref{simfun} with  the same pole structure in $z$ 
as $\psi_{m}(\t, z)$ and $\psi^{\rm F}_{m}(\t, z)$ has no poles. 
The elegant decomposition \eqref{decomp} is motivated partly by the choice of `attractor contour' for 
single-centered black holes and and has a direct physical interpretation:  $\psi_{m}(\t, z)$ is the counting 
function of all asymptotic states including both single and multi-centered configurations, $\psi^{\rm F}_{m}(\t, z)$ 
is the counting function of immortal black holes, whereas $\psi^{\rm P}_{m}(\t, z)$ is the counting function of multi-centered black holes. 

\hspace{6mm} Since both $\psi_{m}(\t, z)$ and $\psi^{\rm P}_{m}(\t, z)$ have poles in $z$,  their Fourier 
coefficients depend on the choice of the contour which in turn depends on the moduli. On the other hand,  
the Fourier coefficients of $\psi^{\rm F}_{m}(\t, z)$ are unambiguously defined without any contour or 
moduli dependence. This is what is expected for immortal black holes. 

\item \textbf{Modular Completion}: 
The immortal counting function $\psi^{\rm F}_{m}(\t, z)$  defined by the decomposition \eqref{decomp} is 
not modular.  However, theorem 8.3 ensures that by adding a specific nonholomorphic function  to $\psi^{\rm F}_{m}(\t, z)$,  
one can obtain its nonholomorphic completion $\wh \psi^{\rm F}_{m}(\t, z)$  which \textit{is} modular and transforms as 
a Jabobi form.  The failure of holomorphy of  the completion   $\wh \psi^{\rm F}_{m}(\t, z)$ is given by the equation
\be\label{holanom1}
 \tau_2^{3/2} \; \frac{\partial} {\partial \bar{\tau}}   \, \wh \psi_{m}^{\rm F}(\tau,z) \= 
\sqrt{\frac{m}{8 \pi i}} \; \frac{ p_{24}(m+1)}{ \Delta(\tau)} \,  
 \sum_{\ell \, \mod \, (2m)}  {\overline{\vth_{m,\ell}(\tau,0)}} \, \vartheta_{m,\ell} (\tau,z) \, .
\ee
Hence the counting function   of immortal black holes has a hidden modular symmetry and more specifically is
a mock Jacobi form as defined  in \S\ref{MockJacobi}. This is one of our main physics results and is described 
in more detail in \S\ref{MockforDyons}.

\item  \textbf{Holomorphic anomaly}:  The completion is a natural object to be identified with the indexed 
partition function of the superconformal field theory $SCFT_{2}$ dual to a single-centered $AdS_{3}$, which is expected  to be modular. 

\hspace{6mm} {}From this perspective,  the equation \eqref{holanom1} can be viewed  as a holomorphic anomaly and 
can in fact be taken as a defining property of a mock Jacobi form for physics applications. 
 Naively, an indexed partition function is expected to be holomorphic because of a   cancellation between 
right-moving bosons and fermions as for the elliptic genus \cite{Witten:1986bf}. However, if the $SCFT_{2}$ 
is noncompact, then the spectrum is continuous and this naive reasoning may fail leading  to an `anomaly'. 
The holomorphic anomaly can then arise as a consequence of the fact that for the right-movers in a noncompact 
$SCFT$, the density of states of bosons and fermions may be slightly different \cite{Troost:2010ud, Ashok:2011cy} 
and may not precisely cancel.  The detailed connection between the holomorphic anomaly and the noncompactness
of the $SCFT_{2}$ in this context needs to be understood better from a path-integral perspective.

\item 
 \textbf{Optimality}:  
 The Fourier coefficients of $\psi^{\rm F}_{m}(\t, z)$ grow exponentially rapidly as expected for a counting 
function of black hole degeneracies. It is clear from the anomaly equation \eqref{holanom1} that if  we add  
a holomorphic  true Jacobi form to $\psi^{\rm F}_{m}(\t, z)$ with the same weight and index, it will still 
admit a modular completion satisfying the same anomaly equation. This raises the question whether for a 
given holomorphic anomaly there is an `optimal'  mock  Jacobi form whose Fourier coefficients grow as slowly as possible.  
The  answer to this question (for the functions~$\v_{2,m}$ related to~$\psi_{m}$ by \eqref{psifromphim}) 
is  in the affirmative but is subtle and requires several new results in the theory of Jacobi forms 
developed in~\S\ref{structure}, motivated by and explaining the numerical experiments and 
observations described in~\S\ref{flyJac}.

\hspace{6mm} A practical implication of such an optimal choice is that the leading contribution to the black 
hole entropy is  then determined essentially by a Fourier coefficient of a true Jacobi form.  One can thus 
apply the  familiar Cardy formula  and  the Rademacher expansion of the Fourier coefficients of true modular 
forms for the leading answer. There will be exponentially subleading corrections to this leading answer
coming from the optimal mock Jacobi form. A Rademacher expansion for these corrections requires  a   
generalization~\cite{BringOno, BringMahlburg} applicable for mock rather than true modular forms.

\item  \textbf{Examples}:  Modular forms with slowly growing  Fourier coefficients are mathematically 
particularly interesting and many of the best-known examples of mock modular forms share this property.   
An `optimal' choice thus enables us in many  cases  to obtain a very  explicit expression for the immortal 
counting function $\psi^{\rm F}_{m}(\t, z)$ in terms of these known  mock modular forms. For example, 
for all $m$ prime, the optimal mock part  of $\psi^{\rm F}_{m}(\t, z)$  can be expressed in terms  
of the generating function of Hurwitz-Kronecker class numbers (see \S\ref{choosephi}).
On the other hand, for a nonprime $m$ our analysis leads to  new  mock modular forms with very 
slowly growing Fourier coefficients.

\hspace{6mm} The functions $\psi_{m}(\t, z)$ that arise in the black hole problem have a double 
pole at~$z=0$ and its translates, but our methods can also be applied to a related class of functions 
with just a single pole at $z=0$. This leads to a second infinite family of mock modular forms, this time 
of weight~1/2. (In the double pole case, the mock Jacobi forms had weight~2 and their coefficients 
were mock modular forms of weight~3/2. In both cases, the coefficients are in fact mock theta functions, i.e., 
mock modular forms whose shadows are unary theta series.) Unlike the first case, where we found that 
the mock modular forms occurring can always be chosen to have at most simple poles at the cusps, 
the pole order here is much greater in general, and there are only a handful of examples having only 
simple poles. It is remarkable that these include essentially all the most prominent examples of mock theta
functions, including the original ones of Ramanujan, the mock theta function conjecturally related 
to the Mathieu group $M_{24}$~\cite{Eguchi:2010ej} and the functions arising in the umbral moonshine 
conjecture~\cite{Cheng:2010pq}. 

\end{enumerate}

Modular symmetries are very powerful in physics applications because they relate
strong coupling to weak coupling, or high temperature to low temperature.   The  hidden modular symmetry 
of mock modular forms is therefore expected to be  useful in  diverse physics contexts.  As mentioned above,
mock modularity of the counting function in the present context of black holes is a consequence of meromorphy 
of the asymptotic counting function which  in turn is a consequence of noncompactness of the target space 
of the microscopic SCFT. Now, conformal field theories with a  noncompact target space occur naturally 
in several physics contexts. For example, a general class of four-dimensional BPS black holes is obtained 
as a supersymmetric D-brane configuration in Type-II compactification on a Calabi-Yau three-fold $X_6$. 
In the M-theory limit, these black holes can be viewed as excitations of the MSW black 
string \cite{Maldacena:1997de,Minasian:1999qn}. The microscopic theory describing the low energy 
excitations of the MSW string is the $(0, 4)$ MSW SCFT. The target space of this SCFT does not necessarily 
have to be compact in which case the considerations of this paper will apply. Very similar 
objects \cite{Zagier:1975, Hirzebruch:1976} have already made their appearance in the context of topological supersymmetric
Yang-Mills theory on $\mathbb{CP}{^2}$~\cite{Vafa:1994tf}. Other examples include 
the theory of multiple M5-branes~\cite{Alim:2010cf} quantum Liouville theory and E-strings~\cite{Minahan:1998vr}, 
and the $SL(2,\IR)/U(1)$ SCFT~\cite{Eguchi:2004yi,Troost:2010ud,Ashok:2011cy} where the CFT is noncompact. 
The appearance of a holomorphic anomaly in the regularized Poincar\'e series for the elliptic genus of $\rm CFT_{2}$
was noted in \cite{Manschot:2007ha} in the context of the $\rm AdS_{3}/CFT_{2}$ correspondence. 

We expect that the general framework of mock modular forms and Jacobi forms developed in this paper is likely 
to have varied physical applications in  the context of  non-compact conformal field theories, wall-crossings 
in enumerative geometry, and recently formulated  Mathieu and umbral moonshine conjectures~\cite{Eguchi:2010ej, Cheng:2010pq}.

\subsection{Organization of the paper}

In \S{\ref{ReviewK3}}, we review the physics
background concerning the string compactification on $K3 \times T^2$ and the
classification of BPS states corresponding to the supersymmetric black holes in
this theory. In sections \S{\ref{Modular}}, \S{\ref{Jacobi}}, and  \S{\ref{Siegel}}, we review the basic mathematical
definitions of various types of classical modular forms  (elliptic, Jacobi, Siegel)
and illustrate an application to the physics of quantum black holes in each
case by means of an example. In \S{\ref{Contours}}, we review the moduli
dependence of the Fourier contour prescription for extracting the degeneracies of quarter-BPS black holes 
in the ${\cal N}=4$ theory from the partition function which is a meromorphic Siegel modular form. 
In \S{\ref{MockfromJacobi}}, we refine results due to Zwegers to show that 
any meromorphic Jacobi form with poles only at the sub-multiples of the period lattice
can be decomposed canonically into two pieces, one of which is a mock Jacobi form. We give explicit 
formulas for this decomposition in the case when the poles have at most order~2,
and again give several examples. 
In \S\ref{flyJac} we give a detailed description of the experimental results for the 
Fourier coefficients of the two families of mock Jacobi forms~$\{\Phi_{2,m}\}$ and~$\{\Phi_{1,m}\}$, 
and formulate the main mathematical results concerning them. 
The proofs of these results are given in \S\ref{structure}, after we have formulated 
and proved a number of structural results about holomorphic and weakly holomorphic 
Jacobi forms that are new and may be of independent interest. In \S{\ref{MockforDyons}, we apply these results 
in the physical context to determine the mock Jacobi form that counts the  degeneracies of single-centered black holes
and discuss the implications for $AdS_{2}/CFT_{1}$ and $AdS_{3}/CFT_{2}$ holography.

\section{Review of Type-II superstring theory on $K3 \times T^2$  \label{ReviewK3}}

Superstring theories are naturally formulated in ten-dimensional Lorentzian
spacetime $ \M_{10}$. A `compactification' to four-dimensions is obtained by
taking $\M_{10}$ to be a product manifold $ \R^{1, 3} \times X_6$ where
$X_6$ is a compact Calabi-Yau threefold and $\R^{1, 3}$ is the noncompact
Minkowski spacetime. We will focus in this paper on a compactification of Type-II
superstring theory when $X_6$ is itself the product $X_6 = K3 \times T^2$. A
highly nontrivial and surprising result from the 90s is the statement that this
compactification is quantum equivalent or `dual' to a compactification of
heterotic string theory on $T^4 \times T^2$ where $T^4$ is a four-dimensional
torus \cite{Hull:1994ys, Witten:1995ex}. One can thus describe the theory either
in the Type-II frame or the heterotic frame.

The four-dimensional theory in $\R^{1,3}$ resulting from this
compactification has $ \mathcal{N}=4$ supersymmetry\footnote{This supersymmetry is a super Lie algebra
containing $ISO(1, 3) \times SU(4)$ as the even subalgebra where $ISO(1,3)$ is the Poincar\'e symmetry
of the $\R^{1, 3}$ spacetime and $SU(4)$ is an internal symmetry usually referred to as R-symmetry. 
The odd generators of the superalgebra are called supercharges. With $ \CN=4$ supersymmetry,
there are eight complex supercharges which transform as a spinor of $ISO(1,3)$ and a fundamental 
of~$SU(4)$.}. The massless fields in the theory consist of $22$ vector multiplets in addition to the
supergravity multiplet. The massless moduli fields consist of the $S$-modulus $\l$ taking values in the coset
  \be\label{Smoduli} SL(2,\Z)\backslash SL(2, \R)/ O(2, \R), \ee
and the $T$-moduli $\mu$ taking values in the coset
  \be\label{Narainmoduli} O(22, 6, \Z) \backslash O(22, 6, \R) /O(22, \R) \times O(6, \R). \ee
The group of discrete identifications $SL(2,\Z)$ is called the $S$-duality group. In the heterotic frame,
it is the electro-magnetic duality group \cite{Sen:1994yi, Sen:1994fa}, whereas in the
type-II frame, it is simply the group of area-preserving global diffeomorphisms
of the $T^2$ factor. The group of discrete identifications $O(22, 6, \mathbb{Z})$ 
is called the $T$-duality group. Part of the $T$-duality group $O(19, 3, \mathbb{Z})$ 
can be recognized as the group of geometric identifications on the
moduli space of K3; the other elements are stringy in origin and have to do with mirror symmetry.

At each point in the moduli space of the internal manifold $K3 \times T^2$, one
has a distinct four-dimensional theory. One would like to know the spectrum of
particle states in this theory. Particle states are unitary irreducible
representations, or supermultiplets, of the $\CN =4$ superalgebra. The
supermultiplets are of three types which have different dimensions in the rest
frame. A long multiplet is $256$-dimensional, an intermediate multiplet is
$64$-dimensional, and a short multiplet is $16$-dimensional. A short multiplet
preserves half of the eight supersymmetries ({\it i.e.} it is annihilated by four
supercharges) and is called a half-BPS state; an intermediate multiplet preserves
one quarter of the supersymmetry ({\it i.e.} it is annihilated by two
supercharges), and is called a quarter-BPS state; and a long multiplet does not
preserve any supersymmetry and is called a non-BPS state. One consequence of the
BPS property is that the spectrum of these states is `topological' in that it
does not change as the moduli are varied, except for jumps at certain walls in
the moduli space \cite{Witten:1978mh}.

An important property of a BPS states that follows from the superalgebra is
that its mass is determined by its charges and the moduli \cite{Witten:1978mh}.
Thus, to specify a BPS state at a given point in the moduli space, it suffices to
specify its charges.  The charge vector in this theory transforms in the vector
representation of the $T$-duality group $O(22, 6, \mathbb{Z})$ and in the
fundamental representation of the $S$-duality group $SL(2, \mathbb{Z})$. It is thus
given by a vector $\G^{\a I}$  with integer entries
  \be\label{chargevector} \G^{\a I} \= \left( \begin{array}{c} N^I \\ M^I \\ \end{array} \right)
   \qquad \text{where} \qquad {  \a = 1,2\, ; \quad {I=1,2, \ldots 28}} \, \,  \ee
transforming  in the $(2, 28)$ representation of $SL(2, \mathbb{Z}) \times O(22, 6, \mathbb{Z})$.
The vectors $N$ and $M$ can be regarded as the quantized electric and magnetic
charge vectors of the state respectively. They both belong to an even, integral,
self-dual lattice $\Pi^{22, 6}$. We will assume in what follows that $ \Gamma =
(N, M)$ in (\ref{chargevector}) is primitive in that it cannot be written as an
integer multiple of $ (N_0, M_0)$ for $N_0$ and $M_0$ belonging to $\Pi^{22, 6}$.
A state is called purely electric if only $N$ is non-zero, purely magnetic if
only $M$ is non- zero, and dyonic if both $M$ and $N$ are non-zero.

To define $S$-duality transformations, it is convenient to represent the $S$-modulus
as a complex field $S$ taking values in the upper half plane. An $S$-duality transformation
  \be\label{Sgroup} \ga \equiv\left( \begin{array}{cc} a&b\\ c&d \end{array} \right) \in SL(2;\Z) \ee
acts simultaneously on the charges and the $S$-modulus by
  \be\label{stransform}\left(\begin{array}{c} N \\ M \\ \end{array} \right) \rightarrow
    \left( \begin{array}{cc} a&b\\ c&d \end{array} \right)
    \left( \begin{array}{c} N \\ M \\ \end{array} \right)\;, \qquad S\to\frac{a S+b}{c S+d}\;. \ee

To define $T$-duality transformations, it is convenient to represent the $T$-moduli
by a $28 \times 28$ matrix $\mu^A_{\,\,I}$ satisfying 
  \be \mu^{t} \, L \, \mu = L \ee 
with the identification that $\mu\sim k\mu$ for every $k\in O(22;\R)\times O(6;\R)$. Here $L$ is the $28\times28$ matrix
  \be\label{lorentzian}  L_{IJ} = \left( \begin{array}{ccc} - \textbf{C}_{16} & \textbf{0} & \textbf{0}\\
    \textbf{0} & \textbf{0} & \textbf{\rm I}_6  \\ \textbf{0} &   \textbf{\rm I}_6 & \textbf{0}  \\ \end{array} \right), \ee
with $\textbf{\rm I}_s$ the $s\times s$ identity matrix and $ \textbf{C}_{16}$ is the Cartan matrix of
$E_8\times E_8$ . The $T$-moduli are then represented by the matrix 
  \be \CM \=\mu^t \mu \ee
which satisifies
  \be \CM^t \= \CM\,, \qquad \CM^t L \CM = L\,. \ee
In this basis, a $T$-duality transformation can then be represented by a $28\times 28 $ matrix $R$ with integer entries satisfying
  \be\label{defRtduality}  R^t L R \= L\,, \ee
which acts simultaneously on the charges and the $T$-moduli by
  \be\label{ttransform}  N \rightarrow R N; \quad M \rightarrow R M\,; \quad \mu \rightarrow \mu R^{-1} \ee

Given the matrix $\mu^{A}_I$, one obtains an embedding $ \Lambda^{22, 6}\subset
\mathbb{R}^{22, 6}$ of $\Pi^{22, 6}$ which allows us to define the
moduli-dependent charge vectors $Q$ and $P$ by
  \be\label{physcharges} Q^A =  \mu^A_I N_I\,,   \qquad P^{A} =  \mu^A_I M_I \, . \ee
The matrix $L$ has a $22$-dimensional eigensubspace with eigenvalue $-1$ and a
$6$- dimensional eigensubspace with eigenvalue $ +1$. Given $Q$ and $P$, one can
define the `right-moving' and `left-moving' charges\footnote{The right-moving charges couple to
the graviphoton vector fields associated with the right-moving chiral currents in
the conformal field theory of the dual heterotic string.} $Q_{R,L}$ and $P_{L,R}$ 
as the projections
\be 
Q_{R, L} \=\frac{(1 \pm L)}2\,Q \, ; \qquad P_{R, L} \= \frac{(1 \pm L)}2\,P \, . 
\ee

If the vectors $N$ and $M$ are nonparallel, then the state is quarter-BPS. On the
other hand, if $N= p N_0$ and $ M =q N_0$ for some $N_0 \in \Pi^{22, 6}$ with $p$
and $q$ relatively prime integers, then the state is half-BPS.

An important piece of nonperturbative information about the dynamics of the
theory is the exact spectrum of all possible dyonic BPS-states at all points in
the moduli space. More specifically, one would like to compute the number
$d(\Gamma)|_{S, \m}$ of dyons of a given charge $\Gamma$ at a specific point
$(S, \m)$ in the moduli space. Computation of these numbers is of course a very
complicated dynamical problem. In fact, for a string compactification on a
general Calabi-Yau threefold, the answer is not known. One main reason for
focusing on this particular compactification on $K3 \times T^2$ is that in this
case the dynamical problem has been essentially solved and the exact spectrum of
dyons is now known. Furthermore, the results are easy to summarize and the
numbers $d(\Gamma)|_{S, \m}$ are given in terms of Fourier coefficients of
various modular forms.

In view of the duality symmetries, it is useful to classify the inequivalent
duality orbits labeled by various duality invariants. This leads to an
interesting problem in number theory of classification of inequivalent duality
orbits of various duality groups such as $SL(2, \mathbb{Z}) \times O(22, 6; \Z)$
in our case and more exotic groups like $E_{7,7} (\Z)$ for other choices of
compactification manifold $X_6$. It is important to remember though that a
duality transformation acts simultaneously on charges and the moduli. Thus, it
maps a state with charge $\Gamma$ at a point in the moduli space $ (S, \mu)$ to
a state with charge $ \Gamma'$ but at some other point in the moduli space $(S',
\mu')$.  In this respect, the half-BPS and quarter-BPS dyons behave differently.
\begin{itemize}
  \item For half-BPS states, the spectrum does not depend on the moduli. Hence
$d(\Gamma)|_{S', \m'} = d(\Gamma)|_{S, \m}$. Furthermore, by an $S$-duality
transformation one can choose a frame where the charges are purely electric with
$M=0$ and $N \neq 0$. Single-particle states have $N$ primitive and the number of
states depends only on the $T$-duality invariant integer $n \equiv N^2/2$. We can
thus denote the degeneracy of half-BPS states $d(\Gamma)|_{S', \m'}$ simply by $d(n)$.
  \item For quarter-BPS states, the spectrum does depend on the moduli, and
$d(\Gamma)|_{S', \m'} \neq d(\Gamma)|_{S, \m}$. However, the partition function
turns out to be independent of moduli and hence it is enough to classify the
inequivalent duality orbits to label the partition functions. For the specific
duality group $SL(2, \mathbb{Z}) \times O(22, 6; \Z)$ the partition functions are
essentially labeled by a single discrete invariant \cite{Dabholkar:2007vk,Banerjee:2007sr,Banerjee:2008ri}.
      \be\label{gcd}  I \= \gcd (N \wedge M) \,,       \ee
The degeneracies themselves are Fourier coefficients of the partition
function. For a given value of $I$, they depend only on\footnote{There is an additional 
dependence on  arithmetic $T$-duality invariants but the degeneracies for states with nontrivial
values of these $T$-duality invariants can be obtained from the degeneracies discussed here
by demanding $S$-duality invariance \cite{Banerjee:2008ri}. } the moduli and the three
$T$-duality invariants $(m, n, \ell) \equiv (M^2/2, N^2/2, N \cdot M)$. Integrality
of $ (m, n, \ell)$ follows from the fact that both $N$ and $M$ belong to $\Pi^{22, 6}$. 
We can thus denote the degeneracy of these quarter-BPS states $d(\Gamma)|_{S, \m}$ simply
by $d(m, n, l)|_{S, \m}$.  For simplicity, we consider only $I=1$ in this paper.
\end{itemize}

Given this classification, it is useful to choose a representative set of charges
that can sample all possible values of the three $T$-duality invariants. For this
purpose, we choose a point in the moduli space where the torus $T^2$ is a product
of two circles $S^1 \times \tS^1$ and choose the following charges in a Type-IIB frame.
\begin{myitemize}
  \item For electric charges, we take $n$ units of momentum along the circle $S^1$, 
and $\tilde K$  Kaluza- Klein monopoles associated with the circle $\tS^1$.
  \item For magnetic charges, we take $Q_1$ units of D1-brane charge wrapping $S^1$, $Q_5$
D5-brane wrapping $K3 \times S^1$ and $l$ units of momentum along the $\tS^1$ circle.
\end{myitemize}
We can thus write 
\be\label{final charges} \G \= \left[ \begin{array}{c}  N \\ M \end{array} \right] 
   \= \left[ \begin{array}{cccc} 0,& n; & 0, & \tilde K \\ Q_1, & \tilde n; & Q_5, & 0\\ \end{array} \right] \,. \ee
The $T$-duality quadratic invariants can be computed using a restriction of the matrix
(\ref{lorentzian}) to a $ \Lambda^{(2,2)}$ Narain lattice of the form
  \be\label{lorentzintwo}  L \= \left( \begin{array}{cc} \textbf{0} & \textbf{\rm I}_2  \\ 
                  \textbf{\rm I}_2 & \textbf{0} \\ \end{array} \right) , \ee
to obtain 
  \be\label{invts} M^2/2 \=  Q_1Q_5\,, \quad N^2/2 \= n \tilde K\,, \quad N \cdot M =\tilde n \tilde K\,. \ee
We can simply the notation further by choosing $\tilde K=Q_5=1$, $Q_1= m$, $\tilde n = l$ to obtain
  \be\label{invts}  M^2/2 \= m, \quad N^2/2 \= n, \quad N \cdot M =   l\,. \ee

For this set of charges,  we can focus our attention on a subset of $T$-moduli associated with the torus $T^2$ parametrized by
  \be\label{moduli-matrix} \CM \=  \left( \begin{array}{ccc} G^{-1}\,\quad  &G^{-1}B \\
   -BG^{-1}\,\quad &G -B G^{-1} B\end{array}\right) \,,\ee
where $G_{ij}$ is the metric on the torus and $B_{ij} $ is the antisymmetric tensor field. Let $U = U_1 + i U_2$ be the
complex structure parameter, $A$ be the area,  and $\e_{ij}$ be the Levi-Civita symbol with $\e_{12} = -\e_{21} =1$, then
  \be\label{metric-B} G_{ij} = \frac{A}{U_2} \left(\begin{array}{cc} 1& U_1 \\ U_1 & |U|^2 \\ \end{array}\right)
   \quad \text{and} \quad B_{ij} = A B \e_{ij} \,, \ee
and the complexified K\"ahler modulus $U = U_{1 }+ iU_2$ is defined as $U:= B + i A$. The $S$-modulus $S = S_1 + S_2$
is defined as 
\be\label{Sdef}
S:= a + i\exp{(-2\phi)}
\ee
 where $a$ is the axion and $\phi$ is the dilaton field in the four dimensional
heterotic frame.  the relevant moduli can be parametrized by three complex scalars $S, T, U$ which define the so-called
`STU' model in $\CN =2$ supergravity.  Note that these moduli are labeled naturally in the heterotic frame which 
are related to the $S_{B}$, $T_{B}$, and $U_{B}$ moduli in the Type-IIB frame by
  \be\label{hetBrel} S \= U_B, \quad T = S_B,  \quad U = T_B \,. \ee

\section{Modular forms in one variable \label{Modular}}

Before discussing mock modular forms, it is useful to recall the variety of
modular objects that have already made their appearance in the context of
counting black holes. In the following sections we give the basic
definitions of modular forms, Jacobi forms, and Siegel forms, using the notations
that are standard in the mathematics literature, and then in each case illustrate
a physics application to counting quantum black holes by means of an example.

In the physics context, these modular forms arise as generating functions for
counting various quantum black holes in string theory. The structure of poles of
the counting function is of particular importance in physics, since it determines
the asymptotic growth of the Fourier coefficients as well as the contour
dependence of the Fourier coefficients which corresponds to the wall crossing
phenomenon. These examples will also be relevant later in \S{\ref{MockforDyons}} in
connection with mock modular forms. We suggest chapters I and III of
\cite{Mod123} respectively as a good general reference for classical and Siegel
modular forms and \cite{Eichler:1985ja} for Jacobi modular forms.

\subsection{Basic definitions and properties} \label{modularbasic}
Let $\mathbb{H}$ be the upper half plane, \textit{i.e.}, the set of complex numbers $\tau$
whose imaginary part satisfies $\Im(\t)>0$. Let $SL(2, \mathbb{Z})$ be the group of matrices  $\sm abcd$
 with integer entries such that $ad - bc =1$.

A \emph{modular form} $f (\t)$ of weight $k$ on $SL(2,\Z)$ is a holomorphic
function on $\IH$, that transforms as
  \be\label{modtransform0} f(\frac{a\t + b }{c\t + d}) = (c\t + d)^k\,f(\t) \qquad \forall \;\left(
       \begin{array}{cc} a & b \\ c & d \\  \end{array} \right) \in SL(2,\Z)\,, \ee
for an integer $k$ (necessarily even if $f(0) \neq 0$). It follows from the definition
that $f(\t)$ is periodic under $\tau \to \tau +1$ and can be written as a Fourier series
  \be\label{holmod} f(\t) \= \sum_{n= -\infty}^\infty a(n)\,q^n   \qquad \bigl(q :=  e^{2 \pi i \tau}\bigr)\,,  \ee
and is bounded as $\rm{Im}(\t) \to \infty$.
If $a(0) =0$, then the modular form vanishes at infinity and is called a \emph{cusp form}.
Conversely, one may weaken the growth condition at $\infty$ to $f(\t) = \CO(q^{-N}) $
rather than $ \CO(1)$ for some $N \ge 0$; then the Fourier coefficients of $f$ have the behavior
$a(n)=0$ for $n < -N$. Such a function is called a \emph{weakly holomorphic modular form}.

The vector space over $\mathbb{C}$ of holomorphic modular forms of weight $k$ is
usually denoted by~$M_k$. Similarly, the space of cusp forms of weight $k$ and
the space of weakly holomorphic modular forms of weight $k$ are denoted by $S_k$
and $M^{\,!}_k$ respectively. We thus have the inclusion
  \be\label{in}  S_k \; \subset M_k \; \subset \;M^{\,!}_k \;.  \ee
The Fourier coefficients of the modular forms in these spaces have different growth properties:
\begin{enumerate}
\item $f \in S_{k}\;\Rightarrow \; a_{n} \= \CO(n^{k/2})$ \ as $n \to \infty\,$;
\item $f \in M_{k}\; \Rightarrow \; a_{n} \= \CO(n^{k-1})$ \ as $n \to \infty\,$;
\item $f \in M_{k}^{\,!}\;\Rightarrow\; a_{n} \= \CO(e^{C \sqrt{n}})$ \ as $n \to \infty$ for some $C>0\,$.
\end{enumerate}

Some important modular forms on $SL(2,\Z)$ are:
\begin{enumerate}
\item The {\it Eisenstein series} $E_{k} \in M_{k}$ ($k \ge 4$).  The first two of these are
\begin{eqnarray}\label{eisen}
   E_4 (\t) &\=& 1 \+ 240\,\sum_{n=1}^\infty \frac{n^3q^n}{1- q^n} \= 1+ 240 q + 2160q^2 + \cdots \, , \\
   E_6 (\t) &\=& 1 \,-\, 504\,\sum_{n=1}^\infty \frac{n^5q^n}{1- q^n} \= 1 -504q -16632q^2 - \cdots \,.
\end{eqnarray}
\item The {\it discriminant function} $\Delta$. It is given by the product expansion
  \be\label{discrim}  \D(\t)  \=  q\,\prod_{n=1}^\infty {(1 - q^n)^{24}} \= q - 24 q^2 + 252 q^3 + ... \ee
or by the formula $\Delta =  \left( E_4^3 - E_6^2 \right)/1728$.  We mention for later use that the 
function $E_2(\t)=\dfrac1{2\pi i}\dfrac{\D'(\t)}{\D(\t)}=1-24\sum\limits_{n=1}^\infty\dfrac{nq^n}{1-q^n}$ is also an Eisenstein series, 
but is not modular. (It is a so-called {\it quasimodular form}, meaning in this case that the non-holomorphic function 
$\wh E_2(\t)=E_2(\t)-\dfrac3{\pi\,\Im(\t)}$ transforms like a modular form of weight~2.) This function can be used to form the {\it Ramanujan-Serre derivative}
  \be\label{RSderiv} \dRS{k}\,:\,M_k\to M_{k+2}\,, \qquad  \dRS{k}f(\t)\,:=\,\frac1{2\pi i}\,f'(\t) \,-\,\frac k{12}\,E_2(\t)\,f(\t)\,. \ee
\end{enumerate}

The ring of modular forms on $SL(2,\Z)$ is generated freely by $E_4$ and $E_6$, so any modular form of weight $k$ can be 
written (uniquely) as a sum of monomials $E_4^\a E_6 ^\b$ with $4\a+6\b =k$.  We also have $M_{k}=\C\cdot E_k\oplus S_k$ 
and $S_{k} = \Delta \cdot M_{k-12}$, so any $f \in M_{k}$ also has a unique expansion as $\sum_{n=0}^{[k/12]} \a_n\, E_{k-12n} \,\Delta^{n}$ 
(with $E_0=1$ and $E_2$ replaced by~0). From either representation, we see that a modular form is uniquely determined by its weight and first few Fourier coefficients.

Given two modular forms $(f,g)$ of weight $(k,l)$, one can produce a sequence of modular forms of 
weight $k+l+2n, \, n \ge 0$ using the {\it Rankin-Cohen bracket} 
\be\label{RCbrac}   [f,g]_{n} \= [f,g]^{(k,l)}_{n} \= 
   \sum_{r+s=n} (-1)^s \left( {k+n-1 \atop r} \right) \left( {\ell+n-1 \atop s} \right) f^{(s)}(\t) g^{(r)}(\t) \ee
where $f^{(m)} := \left(\frac 1{2 \pi i} \frac{d}{d\tau} \right)^{m} f$. For $n=0$, this is simply the product 
of the two forms, while for $n>0$ we always have $[f,g]_n \in S_{k+l+2n}$.  The first two non-trivial examples are 
  \be\label{rcbraceg} [E_4, E_6]_1 \= -3456 \,\Delta \ , \qquad [E_4, E_4]_2 = 4800 \, \Delta \ . \ee

As well as modular forms on the full modular group $SL(2,\Z)$, 
one can also consider modular forms on subgroups of finite index, with the same transformation law
\eqref{modtransform0} and suitable conditions on the Fourier coefficients to define the notions of 
holomorphic, weakly holomorphic and cusp forms. The weight $k$ now need no longer be even, 
but can be odd or even half integral, the easiest way to state the transformation property when 
$k \in \Z + \half$ being to say that $f(\t)/\theta(\t)^{2k}$ is invariant under some congruence subgroup
of $SL(2,\Z)$, where $\theta(\t)=\sum_{n\in\Z}q^{n^2}$. The graded vector space of modular forms 
on a fixed subgroup $\G \subset SL(2,\Z)$ is finite dimensional in each weight, finitely generated 
as an algebra, and closed under Rankin-Cohen brackets.  Important examples of modular forms of half-integral
weight are the {\it unary theta series}, i.e., theta series associated to a quadratic form in one variable. 
They come in two types:
\be\label{unary1}  
  \sum_{n\in\Z}\ve(n)\,q^{\l n^2} 
  \qquad\text{for some $\l\in\Q_+$ and some even periodic function $\ve$} 
\ee
and
\be\label{unary2} 
 \sum_{n\in\Z}n\,\ve(n)\,q^{\l n^2} 
 \qquad\text{for some $\l\in\Q_+$ and some odd periodic function $\ve$}\,, 
\ee
the former being a modular form of weight 1/2 and the latter a cusp form of weight 3/2.  
A theorem of Serre and Stark says that in fact every modular form of weight 1/2 is a linear 
combination of form of the type \eqref{unary1}, a simple example
being the identity
\be\label{defeta}
  \eta(\tau)\;:=\; q^{1/24} \prod_{n=1}^{\infty} (1-q^{n}) 
  \= \sum_{n=1}^{\infty} \chi_{12}(n)\,q^{n^2/24}\,,
\ee
proved by Euler for the so-called {\it Dedekind eta function} $\eta(t)=\D(\t)^{1/24}$. 
Here $\chi_{12}$ is the function of period 12 defined by
\be\label{defchi12} 
 \chi_{12}(n) \=  \left\{ \begin{array}{lll} + 1 \quad & {\rm if} \; n \equiv \pm 1 \, ({\rm mod} \, 12) \\  
  - 1 \quad & {\rm if} \; n \equiv \pm 5 \, ({\rm mod} \, 12) \\  0 \quad & {\rm if}  \; (n,12)>1 \ . \end{array} \right. 
\ee

Finally, we recall the definition of the Petersson scalar product. If $f(\t)$ and $g(\t)$
are two modular forms of the same weight $k$ and the same ``multiplier system'' on some subgroup $\G$ of 
finite index of $SL_{2}(\IZ)$ (this means that the quotient $f/g$ is invariant under $\G$), and if either $k<1$ 
or else at least one of $f$ and $g$ is a cusp form, then we can define the (normalized) 
\emph{Petersson scalar product}  of $f$ and $g$ by 
\be \label{Peterkabeta}
\big(f,g\big) \=  \int_{\CF_{\G}} \, f(\t) \, \overline{g(\t)} \, \t_{2}^{k} \, d \mu(\t) 
\left/  \int_{\CF_{\G}} \, d \mu(\t) \right.
  \qquad\Bigl(\,d \mu(\t)\,:=\, \frac{d\t_{1} d\t_{2}}{\tau_{2}^{2}} \Bigr)\, , 
\ee
where $\CF_{\G}$ is a fundamental domain for $\G$. 
This definition is independent of the choice of the subgroup $\G$ and the fundamental 
domain $\CF_{\G}$.  By the Rankin-Selberg formula (see, for example,~\cite{Zagier:1982}),
we have that if $f=\sum_\l a_\l q^\l$ and $g=\sum_\l b_\l q^\l$ (where $\l\ge0$ may have a denominator), then 
\be \label{RS} (f,g)\=\frac{\Gamma(k)}{(4\pi)^k}\;
\text{Res}_{s=k}\biggl(\,\sum_{\l>0}\frac{a_\l \, \overline{b_\l}}{\l^s}\biggr)\,,\ee
a formula that will be used later. For instance, for $f=g=\eta$ we have $k=1/2$ and 
$\sum_{\l} a_{\l}  b_{\l}  \l^{-s} = 24^{s}(1-2^{-2s})(1-3^{-2s}) \, \zeta(2s)$, and hence  
$(\eta, \eta) = 1/\sqrt{6}$.

\subsection{Quantum black holes and modular forms}

Modular forms occur naturally in the context of counting the Dabholkar-Harvey  (DH)
states \cite{Dabholkar:1989jt,Dabholkar:1990yf}, which are states in the string
Hilbert space that are dual to perturbative BPS states. The spacetime helicity
supertrace counting the degeneracies reduces to the partition function of a
chiral conformal field theory on a genus-one worldsheet.  The $\tau $ parameter
above becomes the modular parameter of the genus one Riemann surface. The
degeneracies are given by the Fourier coefficients of the partition function.

A well-known simple example is the partition function $Z(\tau) $ which counts the
half-BPS DH states for the Type-II compactification on $K3 \times T^2$ considered
here. In the notation of (\ref{chargevector}) these states have zero magnetic
charge $M=0$, but nonzero electric charge $N$ with the $T$-duality invariant $N^2
=2n$, which can be realized for example by setting $Q_1 =Q_5=l =0$ in (\ref{final
charges}). They are thus purely electric and perturbative in the heterotic
frame\footnote{Not all DH states are half-BPS. For example, the states that are
perturbative in the Type-II frame correspond to a Type-II string winding and
carrying momentum along a cycle in $T^2$. For such states both $M$ and $N$ are
nonzero and nonparallel, and hence the state is quarter- BPS.}. The partition
function is given by the partition function of the chiral conformal field theory
of $24$ left-moving transverse bosons of the heterotic string. The Hilbert space
$\cal H$ of this theory is a unitary Fock space representation of the commutation algebra
\be\label{osci}  [a_{in},\,a_{j m}^\dagger]  \= \delta_{ij}\,\delta_{n+m,0} 
  \qquad (i,\,j = 1,\ldots,24\,, \quad n,\,m = 1,\,2, \ldots,\infty)  \ee
of harmonic modes of oscillations of the string in $24$ different directions. The Hamiltonian is
  \be\label{ham}  H \= \sum_{i=1}^{24} n\,a_{in}^\dagger\,a_{in} \;-\;1\,, \ee
and the partition function is
  \be\label{defpart}  Z(\t) \= \Tr_{\cal H} (q^H) \,. \ee
This can be readily evaluated since each oscillator mode of energy $n$ contributes to the trace
  \be\label{harm}  1 + q^n + q^{2n} + \ldots  \= \frac 1{1-q^n} \,. \ee
The partition function then becomes
  \be\label{partsmall}  Z(\t) \=  \frac 1{\D(\t)}\,,  \ee
where $\Delta$ is the cusp form (\ref{discrim}). Since $\Delta$
has a simple zero at $q=0$, the partition function itself has a pole at $q=0$,
but has no other poles in $\mathbb{H}$. Hence, $Z(\t)$ is a weakly holomorphic
modular form of weight $-12$. This property is essential in the present physical
context since it determines the asymptotic growth of the Fourier coefficients.

The degeneracy $d(n)$ of the state with electric charge $N$ depends only on the
$T$-duality invariant integer $n$ and is given by
  \be\label{partsmall2}  Z(\t) \=   \sum_{n=-1}^\infty d(n)\,q^n  \,. \ee
For the Fourier integral
  \be\label{dn}  d(n)\ = \int_C e^{-2\pi i\t n} Z(\t) d\t  \,,  \ee
one can choose the contour $\CC$ in $\mathbb{H}$ to be
  \be\label{cont1}  0 \leq \Re(\t) <1 \,,  \ee
for a fixed imaginary part $\Im(\tau) $. Since the partition function has no poles in
$ \mathbb{H}$ except at $q =0$, smooth deformations of the contour do not change
the Fourier coefficients and consequently the degeneracies $d(n)$ are uniquely
determined from the partition function. This reflects the fact that the half-BPS states are
immortal and do not decay anywhere in the moduli space. As a result, there is no wall crossing
phenomenon, and no jumps in the degeneracy.

In number theory, the partition function above is well-known in the context of
the problem of partitions of integers. We can therefore identify
  \be\label{colorpartition}  d(n) \= p_{24}(n +1)  \qquad (n \geq 0)\,.  \ee
where $p_{24}(I)$ is the number of colored partitions of a positive integer $I$ using
integers of $24$ different colors.

These states have a dual description in the Type-II frame where they can be
viewed as bound states of $Q_1$ number of D1-branes and $Q_5$ number of D5-branes
with $M^2/2 = Q_1 Q_5 \equiv m $. This corresponds to setting $n= \tK = l =0$ in
(\ref{final charges}). In this description, the number of such bound states
$d(m)$ equals the orbifold Euler character $ \chi(\textrm{Sym}^{m+1}(K3))$ of the
symmetric product of $(m+1)$ copies of $K3$-surface \cite{Vafa:1994tf}. The
generating function for the orbifold Euler character
  \be\label{Zeuler} \hat Z(\s) = \sum_{m=-1}^\infty  \chi(\textrm{Sym}^{m+1}(K3))\, p^m
  \qquad \bigl(p:= e^{2\pi i\s}\bigr)\ee
can be evaluated \cite{Goettsche:1990go} to obtain
  \be\label{Zeuler2} \hat Z(\s) = \frac 1{\Delta(\sigma)} \,. \ee
Duality requires that the number of immortal BPS-states of a given charge must equal the
number of BPS-states with the dual charge. The equality of the two partition functions
(\ref{partsmall}) and (\ref{Zeuler2}) coming from two very different counting
problems is consistent with this expectation. This fact was indeed one of the
early indications of a possible duality between heterotic and Type-II strings
\cite{Vafa:1994tf}.

The DH-states correspond to the microstates of a small black hole 
\cite{Sen:1995in,Dabholkar:2004yr,Dabholkar:2004dq} for large $n$. The macroscopic entropy
$S(n)$ of these black holes should equal the asymptotic growth of the degeneracy
by the Boltzmann relation \be\label{bolt} S(n) = \log d(n); \quad n
\gg 1 \,. \ee In the present context, the macroscopic entropy can be
evaluated from the supergravity solution of small black holes 
\cite{LopesCardoso:1998wt,LopesCardoso:1999ur,LopesCardoso:1999cv,
LopesCardoso:1999xn,Dabholkar:2004yr,Dabholkar:2004dq}. 
The asymptotic growth of the microscopic degeneracy can
be evaluated using the Hardy-Ramanujan expansion (Cardy formula). There is a
beautiful agreement between the two results \cite{Dabholkar:2004yr,Kraus:2005vz}
  \be\label{bolt2} S(n) \= \log d(n) \sim 4 \pi \sqrt{n}\quad n \gg 1 \,.   \ee
Given the growth properties of the Fourier coefficients mentioned above, it is clear
that, for a black hole whose entropy scales as a power of $n$ and not as
$\log(n)$, the partition function counting its microstates can be only weakly
holomorphic and not holomorphic.

These considerations generalize in a straightforward way to congruence subgroups
of $SL(2, \mathbb{Z})$ which are relevant for counting the DH-states in various
orbifold compactifications with $\CN=4$ or $ \CN=2$ supersymmetry
\cite{Dabholkar:2005by,Sen:2005ch,Dabholkar:2005dt}.

\section{Jacobi forms \label{Jacobi}}

\subsection{Basic definitions}

Consider a holomorphic function $\v(\tau, z)$ from $\mathbb{H} \times\C$ to $\C$ which 
is ``modular in $\tau$ and elliptic in $z $'' in the sense that it transforms under the modular group as
  \be\label{modtransform}  \v\Bigl(\frac{a\t+b}{c\t+d},\frac{z}{c\t+d}\Bigr) \= 
   (c\t+d)^k\,e^{\frac{2\pi imc z^2}{c\t+d}}\,\v(\t,z)  \qquad \forall \quad
   \Bigl(\begin{array}{cc} a&b\\ c&d \end{array} \Bigr) \in SL(2; \Z) \ee
and under the translations of $z$ by $\mathbb{Z} \tau + \mathbb{Z}$ as
  \be\label{elliptic}  \v(\t, z+\l\tau+\mu)\= e^{-2\pi i m(\l^2 \t + 2 \l z)} \v(\t, z)
  \qquad \forall \quad \l,\,\m \in \Z \, , \ee
where $k$ is an integer and $m$ is a positive integer.

These equations include the periodicities $\v(\t+1,z) = \v(\t,z)$ and $\v(\t,z+1) = \v(\t,z)$, so $\v$ has a Fourier expansion
  \be\label{fourierjacobi} \v(\t,z) \= \sum_{n, r} c(n, r)\,q^n\,y^r\,, \qquad\qquad
   (q :=e^{2\pi i \t}, \; y := e^{2 \pi i z}) \ . \ee
Equation \eqref{elliptic} is then equivalent to the periodicity property
  \be\label{cnrprop}  c(n, r) \= C(4 n m - r^2 , \, r) \ ,
  \qquad \mbox{where} \; C(\DD, r) \; \mbox{depends only on} \; r \mypmod{2m} \ . \ee
(We will sometimes denote $c(n,r)$ and $C(\DD,r)$ by $c_\v(n,r)$ and $C_{\v}(\DD,r)$ 
or by $c(\v;n,r)$ and $C(\v\, ;\DD,r)$ when this is required for emphasis or clarity.) 
The function $\v(\tau, z)$ is called a \emph{holomorphic Jacobi form} (or simply a \emph{Jacobi form})
of weight $k$ and index $m$ if the coefficients $C(\DD,r)$ vanish for $\DD<0$, {\it i.e.} if 
  \be\label{holjacobi}  c(n, r) \= 0 \qquad \textrm{unless} \qquad 4mn \ge r^2\,. \ee
It is called a \emph{Jacobi cusp form} if it satisfies the stronger condition that
$C(\DD,r)$ vanishes unless $\DD$ is strictly positive, {\it i.e.}
  \be\label{cuspjacobi}  c(n, r) = 0 \qquad \textrm{unless} \qquad 4mn > r^2 \ , \ee
and it is called a \emph{weak Jacobi form} if it satisfies the weaker condition
  \be\label{weakjacobi} c(n, r) \= 0\qquad   \textrm{unless}  \qquad n \geq 0 \, \ee
rather than \eqref{holjacobi}, whereas a merely \emph{weakly holomorphic Jacobi form} satisfies only the yet weaker condition
that $c(n,r)=0$ unless $n\ge n_0$ for some possibly negative integer~$n_0$ (or equivalently $C(\D,r)=0$ unless $\D\ge\D_0$
for some possibly negative integer~$\D_0$).  The space of all holomorphic (resp.~cuspidal, weak, or weakly holomorphic) Jacobi
forms of weight~$k$ and index~$m$ will be denoted by $J_{k,m}$ (resp.~$J_{k,m}^0$,  $\wt J_{k,m}$, or $\wt J^{\,!}_{k,m}$).

Finally, the quantity $\D=4mn-r^2$, which by virtue of the above discussion is the crucial invariant of a monomial
$q^ny^r$ occurring in the Fourier expansion of~$\v$, will be referred to as its {\it discriminant}.  
(It would be mathematically more correct to use this word for the quantity~$-\D$, but $\D$ is usually positive 
and it is more convenient to work with positive numbers.)

\subsection{Theta expansion and Taylor expansion \label{ThetaTaylorExp}}

A Jacobi form has two important representations, the {\it theta expansion} and the {\it Taylor expansion}. 
In this subsection, we explain both of these and the relation between them. 

If $\v(\t, z)$ is a Jacobi form, then the transformation property (\ref{elliptic}) implies its
Fourier expansion with respect to $z$ has the form
  \be\label{jacobi-Fourier} \v(\t, z) \= \sum_{\ell\inn \Z} \;q^{\ell^2/4m}\;h_\ell(\t) \; e^{2\pi i\ell z} \ee
where $h_\ell(\tau)$ is periodic in $\ell$ with period $2m$.  In terms of the coefficients \eqref{cnrprop} we have
  \be\label{defhltau}  h_{\ell}(\t) \= \sum_{\DD} C(\DD,\ell) \,  q^{\DD/4m} \, \qquad \qquad (\ell \inn \Z/2m \Z)\;.  \ee
Because of the periodicity property, equation \eqref{jacobi-Fourier} can be rewritten in the form 
  \be\label{jacobi-theta} \v(\t,z) = \sum_{\ell\inn \Z/2m\Z} h_\ell(\t) \, \vth_{m,\ell}(\t, z)\,, \ee
where $\vth_{m,\ell}(\t,z)$ denotes the standard index $m$ theta function 
  \begin{eqnarray} \label{thetadef} \vth_{m,\ell}(\t, z) 
   \;:=\; \sum_{{r\inn\Z} \atop {r\,\equiv\,\ell\,(\mod\,2m)}} q^{r^2/4m} \, y^r \, \,
   \= \sum_{n\inn\Z} \,q^{(\ell+2mn)^2/4m} \,y^{\ell+2mn}  \end{eqnarray}
(which is a Jacobi form of weight $\half$ and index $m$ on some subgroup of
$SL(2, \mathbb{Z})$). This is the theta expansion of $\v$.  The coefficiens $h_{\ell}(\t)$ are  modular forms of weight $k-\frac12$
and are weakly holomorphic, holomorphic or cuspidal if $\v$ is a weak Jacobi form, a Jacobi form or a Jacobi cusp form, respectively. 
More precisely, the vector $h := ( h_1, \ldots, h_{2m})$ transforms like a modular form of weight $k-\frac 12$ under $SL(2,\Z)$.

The theta decomposition \eqref{jacobi-theta} leads to the definition of a differential operator on Jacobi forms as follows.  Let
$$ \LL_m \= \frac{4m}{2\pi i}\,\frac\partial{\partial\t} \,-\,\frac1{(2\pi i)^2}\,\frac{\partial^2}{\partial z^2}  $$
be the index $m$ heat operator, which sends $\v=\sum c(n,r)q^ny^r$ to $\sum(4nm-r^2)c(n,r)q^ny^r$. Here the Fourier coefficients
have the same periodicity property~\eqref{cnrprop} as for~$\v$, so $\LL_m\v$ has the same elliptic transformation properties as a Jacobi
form of index~$m$. Moreover, since $\LL_m$ annihilates all the $\vth_{m,\ell}$, we have 
$\LL_m(\sum h_\ell\vth_{m,\ell})=4m\sum h'_\ell\vth_{m,\ell}$, so the modified heat operator
\be\label{LRS} \LL_{k,m} = \LL_m -\frac{m(k-\frac12)}3\,E_2  \;\,:\;\;
   \sum_{\ell} h_\ell(\t)\,\vth_{m,\ell}(\t,z) \;\mapsto\; 4m\,\sum_\ell \dRS{k-\frac12}h_\ell(\t)\,\vth_{m,\ell}(\t,z) \,, \ee
where $\dRS{*}$ is the Ramanujan-Serre derivative defined in~\eqref{RSderiv}, sends $J_{k,m}$ to $J_{k+2,m}$ 
(and also $\wt J_{k,m}$ to $\wt J_{k+2,m}$ and $J^{\,!}_{k,m}$ to $J^{\,!}_{k+2,m}$).  These operators will be used later.

A Jacobi form $\v \in \wt J_{k,m}$ also has a Taylor expansion in $z$ which for $k$ even takes the form 
  \be\label{defxin}  \v(\t,z) \= \xi_0(\t) \+ \left(\frac{\xi_1(\t)}2 + \frac{ m \xi'_0(\t)}{k} \right) (2\pi iz)^2   
  \+ \left(\frac{\xi_2(\t)}{24} + \frac{m \xi'_1(\tau) }{2\,(k+2)}+ \frac{m^2  \xi^{''}_0 (\tau)}{2 k(k+1)} \right) \, (2 \pi i z)^4 + \cdots \ee
with $\xi_{\nu} \in M_{k+2 \nu} (SL(2,\Z))$ and the prime denotes $\frac 1{2 \pi i} \frac{d}{d \tau} $ as before. 
In terms of the Fourier coefficients of $\v$, the modular form $\xi_\nu$ is given by
  \be\label{xicnrsrel}  \frac{(k+2\nu -2)!}{(k+ \nu -2)!} \, \xi_{\nu} (\tau) 
  \= \sum_{n=0}^{\infty} \left( \sum_{r} P_{\nu,k} (nm,r^2) c(n,r) \right) q^{n}\,,  \ee
where $P_{\nu,k}$ is a homogeneous polynomial of degree $\nu$ in $r^2$ and $n$ 
with coefficients depending on $k$ and $m$, the first few being
  \bea\label{Pnu012}  P_{0,k} & \= &  1 \ , \cr   P_{1,k} & \= &   k r^2 \,-\, 2nm  \ , \cr
  P_{2,k} & \= &  (k+1)(k+2)r^4 \,-\, 12 (k+1) r^2 m n  \+ 12 m^2  n^2 \ .  \eea
The Jacobi form $\v$ is determined by the first $m+1$ coefficients $\xi_{\nu}$, and the map   $\v \mapsto (\xi_0,\dots,\xi_m)$
is an isomorphism from $\wt J_{k,m}$ to $M_{k} \oplus M_{k+2} \oplus \cdots \oplus M_{k+2m}$.  
For $k$ odd, the story is similar except that \eqref{defxin} must be replaced by 
  \be\label{defxinodd}
  \v(\t,z) \= \xi_0(\t)\;(2\pi iz) \+ \left(\frac{\xi_1(\t)}6+\frac{m \xi'_0(\t)}{k+2} \right) \, (2\pi iz)^3 \+ \cdots     \ee
with $\xi_{\nu} \in M_{k+2\nu+1}(SL(2,\Z))$, and the map $\v \mapsto (\xi_0, \dots, \xi_{m-2})$ gives an 
isomorphism from $\wt J_{k,m}$ to $M_{k+1} \oplus M_{k+3} \oplus \cdots \oplus M_{k+2m-3}$.

We also observe that even if $\v$ is weak, so that the individual coefficients $c(n,r)$ grow like $C^{\sqrt{4nm - r^2}}$, 
the coefficients  $ \sum_{r} P_{\nu,k} (nm,r^2) \, c(n,r) $ of $\xi_{\nu}$ still have only polynomial growth. We thus have the
following descriptions (analogous to those given for classical modular forms in \S\ref{modularbasic}) 
of holomorphic and weak Jacobi forms in terms of their asymptotic properties: \\

\vspace{0.2cm}

\begin{tabular}{ccl}
  Holomorphic Jacobi & $\Longleftrightarrow$ &  $c(n,r) =0$ for $4mn-r^2 < 0$ \\
   & $\Longleftrightarrow$ & the function $q^{m\a^2} \v(\t,\a \t + \b)$ (which is a modular form of weight $k$  \\
   && and some level) is bounded as $\tau_2 \to \infty$ for every $\a,\b \in \IQ$\\
   & $\Longleftrightarrow$ &  all $h_{j}(\t)$ in \eqref{jacobi-theta} are bounded as $\tau_2 \to \infty$ \\
   & $\Longleftrightarrow$ &  $c(n,r)$ have polynomial growth. \\   &&\\
  Weak Jacobi & $\Longleftrightarrow$ &  $c(n,r) =0$ for $n < 0$ \\
   & $\Longleftrightarrow$ &   $\v(\t,\a \t + \b)$ is bounded as $\tau_2 \to \infty$ for any fixed $z \in \C$ \\
   & $\Longleftrightarrow$ & all $h_{j}(\t) = O(q^{-j^2/4m})$ as $\tau_2 \to \infty$ \\
   & $\Longleftrightarrow$ &  $\sum_{r} P_{\nu,k} (nm,r^2) c(n,r)$ have polynomial growth. 
\end{tabular}

Finally, the relation between the Taylor expansion \eqref{defxin} of a Jacobi form and its theta expansion \eqref{jacobi-theta} 
is given by   
  \be\label{xinhrel} \xi_\nu(\t) \= (4m)^\nu \left({k+2\nu -2 \atop \nu }\right)^{-1}  \,
   \sum_{\ell \, ({\rm mod} \, 2m)} \bigl[ h_{\ell}(\tau), \, \vth^0_{m,\ell}(\tau) \bigr]_\nu \ , \ee
where $[\; , \; ]_{\nu} = [\; , \; ]_{\nu}^{(k-\frac 12,\frac 12)}$ denotes the Rankin-Cohen bracket 
(which, as we mentioned above, also works in half-integral weight), and $\vth^0_{m,\ell}(\tau) = \vth_{m,\ell}(\tau,0)$ (Thetanullwerte).
There is a similar formula in the odd case, but with $\vth^0_{m,\ell}(\t)$ replaced by 
\be\label{defth1}
\vth^1_{m,\ell}(\tau) \= \frac 1{2 \pi i} \, \frac{\p}{\p z} \vth_{m,\ell}(\tau,z) \Big|_{z=0} \= \sum_{r\equiv \ell \, (\mod \; 2m)} r\,q^{r^2/4m} \, . 
\ee

\subsection{Example: Jacobi forms of index $1$} \label{MJF1}

If $m=1$, \eqref{cnrprop} reduces to $c(n,r) = C(4n-r^2)$ where $C(\DD)$ is a
function of a single argument, because $4n - r^2$ determines the value of $r \, (\mod \, 2)$. 
So any $\v \in J_{k,1}^{\textrm{weak}} $ has an expansion of the form 
  \be\label{phiC}   \v(\tau, z) \= \sum_{n, r \in \Z} C(4n-r^2) \, q^n\,y^r \,.  \ee
It also follows that $k$ must be even, since in general, $C(\DD,-r) = (-1)^{k} C(\DD,r)$.

One has the isomorphisms $J_{k,1} \cong M_{k} \oplus S_{k+2}$ and $J_{k,1}^{\textrm{weak}} \cong M_{k} \oplus M_{k+2}$. If
$\v \in J_{k,1}^{\textrm{weak}}$ with an expansion as in  \eqref{phiC}, then 
  \be\label{phiAB} \v(\tau, 0) \=  \sum_{n=0}^\infty a(n)\,q^n\,, \qquad  
  \frac1{2(2\pi i)^2} \v''(\tau, 0) \= \sum_{n=0}^\infty b(n)\,q^n\,,  \ee
where 
  \be\label{defAB}  a(n) \= \sum_{r\in\Z} C(4n-r^2)\,, \qquad  b(n) \= \sum_{r>0} r^2\,C(4n-r^2)\,, \ee
and the isomorphisms are given (if $k>0$) by the map $\v \mapsto (\CA, \CB)$ with 
 \be\label{ABmod} \CA(\t) \= \sum a(n)\,q^n \in M_{k}\,, \qquad \CB(\t) \=  \sum\bigl(kb(n)-na(n)\bigr)\,q^{n} \in M_{k+2}\,.\ee
For $J_{k,1}$ one also has the isomorphism $J_{k,1} \cong M_{k-\frac 12}^{+}\left( \G_0(4)\right)$ given by 
\be\label{phitog} \v(\t,z) \leftrightarrow g(\t) \= \sum_{{\DD\ge 0} \atop \DD \equiv 0,\,3 \, {\rm mod}4} 
C(\DD)\,q^{\DD}\,. \ee

We have four particularly interesting examples $ \v_{k, 1}$
\be\label{phik} \v_{k, 1}(\tau, z) \= \sum_{n,\,r\in\Z} C_k(4n -r^2)\,q^n\,y^r\,,\quad k=-2,\,0,\,10,\,12\,, \ee
which have the properties (defining them uniquely up to multiplication by scalars)
\begin{myitemize}
  \item $\v_{10,1}$ and $\v_{12,1}$ are the two index $1$ Jacobi cusp forms of smallest weight;
  \item $\v_{-2,1}$ and $\v_{0,1}$ are the unique weak Jacobi forms of index $1$ and weight $ \le 0$;
  \item $\v_{-2,1}$ and $\v_{0,1}$ generate the ring of weak Jacobi forms of even weight freely over the ring of
      modular forms of level~$1$, so that
   \be\label{jacovermod}  J^{\textrm{weak}}_{k,m} \= \bigoplus_{j=0}^{m} M_{k+2j}\left(SL(2,\Z) \right) 
      \cdot\v_{-2,1}^j\, \v_{0,1}^{m-j}\qquad\text{($k$ even)}\,; \ee
 \item $\v_{-2,1} = \v_{10,1}/\Delta$, $\v_{0,1} = \v_{12,1}/ \Delta$,
  and the quotient $ \v_{0,1}/ \v_{-2,1} = \v_{12,1}/ \v_{10,1}$ is a multiple of the Weierstrass $\wp$ function.
\end{myitemize}
The Fourier coefficients of these functions can be computed from the above recursions, since the pairs $(\CA,\CB)$ for $\v=\v_{-2,1}$, 
$\v_{0,1}$, $\v_{10,1}$ and $\v_{12,1}$ are proportional to $(0,1)$, $(1,0)$, $(0,\Delta)$ and $(\Delta,0)$, respectively\footnote{For $k=0$, the second formula in~\eqref{ABmod} must be modified, and the function
$\sum b(n) q^{n}$ for $\v_{0,1}$ is in fact~$E_{2}(\t)$.}.  
The results for 
the first few Fourier coefficients are given in Table \ref{tablefcoeffs} below. In particular, the Fourier expansions of $\v_{-2,1}$ and $\v_{0,1}$ begin
  \bea\label{Fourirphi02}  \v_{-2,1} & \= & \frac{(y-1)^2}{y} \,-\, 2\,\frac{(y-1)^4}{y^2}\,q \+  \frac{(y-1)^4(y^2-8y+1)}{y^3}\,q^2 \+ \cdots \ , \\
  \v_{0,1} & \= & \frac{y^2 + 10 y +1}{y} \+ 2 \frac{(y-1)^2\,(5 y^2 - 22 y + 5)}{y^2}\,q \+ \cdots  \ .   \eea

\begin{table}[h]     \centering
   \begin{tabular}{|c|ccccccccc|ccccc}   \hline   
   $k$ &  $C_{k}(-1)$ & $C_{k}(0)$ &$C_{k}(3) $& $C_{k}(4)$& $C_{k}(7)$ & $C_{k}(8)$ &$C_{k}(11)$& $C_{k}(12)$& $C_{k}(15)$ \\
   \hline $-2$ & 1 & $-2$ &8 &$-12$&39&$-56$&152&$-208$&513\\   $0$ &  1  & 10 &$-64$&108&$-513$&808&$-2752$&4016&$-11775$\\
     $10$ & 0 & 0 &1&$-2$&$-16$&36&99&$-272$&$-240$\\   $12$ & 0 & 0 &1&10&$-88$&$-132$&1275&736&$-8040$\\  
   \hline \end{tabular} \label{tablefcoeffs} \captionof{table}{ \small {Some Fourier coefficients}}   \end{table}
\smallskip

The functions $\v_{k,1}$ ($k=10,\,0,\,-2$) can be expressed in terms of the Dedekind eta function 
\eqref{defeta} and the Jacobi theta functions $\vth_1,\,\vth_2,\,\vth_3,\,\vth_4$ by the formlas
  \be\label{phi10} \v_{10, 1}(\t, z) \= \eta^{18}(\t) \, {\vth_1^2(\t, z)}\,, \ee
   \be\label{phiminus2}  \v_{-2,1}(\t, z) \= \frac{\vth_1^2(\t, z)}{\eta^6(\t)}
          \= \frac{\v_{10, 1}(\t, z)} {\Delta(\tau)} \;. \ee
 \be\label{phizero}  \v_{0, 1} (\t, z) \= 4 \left( \frac{\vth_2(\t, z)^2}{\vth_2(\t)^2} +
    \frac{\vth_3(\t, z)^2}{\vth_3(\t)^2} +\frac{\vth_4(\t, z)^2}{\vth_4(\t)^2} \right) \, , \ee

Finally, we say a few words about Jacobi forms of odd weight.  Such a form cannot have index~1, as we saw.
In index~2, the isomorphisms $J_{k,2} \cong S_{k+1}$ and $J_{k,2}^{\textrm{weak}} \cong M_{k+1}$ show
that the first examples of holomorphic and weak Jacobi forms occur in weights 11 and $-1$, respectively,
and are related by $\v_{-1,2}=\v_{11,2}/\D$. The function $\v_{-1,2}$ is given explicitly by
 \be\label{phiminus1}\v_{-1,2}(\t, z) \= \frac{\vth_1(\t,2z)}{\eta^3(\t)} \;, \ee
with Fourier expansion beginning
 \be  \label{Fourirphi1}  \v_{-1,2} \= \frac{y^2-1}{y} \,-\, \frac{(y^2-1)^3}{y^3}\,q \,-\,3\,\frac{(y^2-1)^3}{y^3}\,q^2 \+ \cdots \,,  \ee
and its square is related to the index~1 Jacobi forms defined above by 
\be \label{weierstrass}
  432\,\v_{-1,2}^2 \= \v_{-2,1}\,\bigl(\v_{0,1}^3 \,-\, 3\,E_4\,\v_{-2,1}^2\,\v_{0,1} \+ 2\,E_6\,\v_{-2,1}^3\bigr)\;. 
\ee
(In fact, $\v_{-1,2}/\v_{-2,1}^2$ is a multiple of $\wp'(\t,z)$ and this equation, divided by $\v_{-2,1}^4$, is just the
usual equation expressing $\,{\wp'}^{\,2}\,$ as a cubic polynomial in~$\wp$.)  It is convenient to introduce abbreviations
  \be\label{ABC}  A\=\v_{-2,1}\,,  \qquad B\=\v_{0,1}\,,\qquad  C\=\v_{-1,2}\;. \ee
With these notations, the structure of the full bigraded ring of weak Jacobi forms is given by
  \be\label{WJF}  J_{*,*}^{\textrm{weak}} \= \C[E_4,E_6,A,B,C]\,\bigl/\,(432C^2-AB^3 + 3E_4A^3B - 2E_6A^4)\;.  \ee

\subsection{Hecke-like operators \label{hecke}}

In~\cite{Eichler:1985ja} Hecke operators $T_{\ell}$ acting on $J_{k,m}$ were introduced, but also various ``Hecke-like'' operators, 
again defined by the action of certain combinations of elements of $GL(2,\IQ) \rtimes \IQ^2$, which send Jacobi forms 
to Jacobi forms, but now possibly changing the index. We describe these operators here.  
This subsection is more technical than the preceding three, but will be essential later.

The first is the very simple operator $U_{s}$ ($s \ge 1$) which sends $\v(\tau,z)$ to $\v(\tau,s z)$, i.e.,
  \be\label{defUl} U_s\,: \quad \sum_{n,r} c(n,r)\,q^n\,y^r  \quad \mapsto  \quad \sum_{n,r} c(n,r)\,q^n\,y^{sr} \ ,  \ee
This operator maps $J_{k,m}$ to $J_{k,s^2m}$. 

The second operator, $V_{k,t}$ ($t \ge 1$), sends $J_{k,m}$ to $J_{k,t m}$. It is given in terms of its action on Fourier coefficients by 
\be\label{defVl}
c(\v|V_{k,t}\,; n,r) \= \sum_{d|(n,r,t)} d^{k-1} c\Big(\v \,; \frac{nt}{d^2},\frac{r}{d}\Big) \, , \qquad 
C(\v|V_{k,t}\,; \DD,r) \= \sum_{d|(\frac{r^{2}+\DD}{4mt},r,t)} d^{k-1} C\Big(\v \,; \frac{\DD}{d^2},\frac{r}{d}\Big) \, .
\ee
In~\S\ref{flyJac}, we will also need the modified operator $\CV_{k,t}^{(m)}:J_{k,m} \mapsto J_{k,tm}$ defined by 
\be\label{Vkt1}  \CV_{k,t}^{(m)}\= \sum_{s^2|t \atop (s,m)=1}\mu(s)\,s^{k-1}\,V_{k,t/s^2}\,U_s\, ,  \ee
so that, for instance, $\CV_{k,t}^{(m)}=V_{k,t}$ if~$t$ is square-free, but~$\CV_{k,4}^{(m)}=V_{k,4}-2^{k-1}U_{2}$ if $m$ is odd.  


Next, for each positive integer $m_{1}$ with $m_{1} \| m$ (this means that $m=m_1 m_2$ with $m_{1}$
and~$m_{2}$ coprime) we have an involution $W_{m_1}$ on $J_{k,m}$,  which we will 
call an \emph{Atkin-Lehner involution,}\footnote{This terminology was not used in~\cite{Eichler:1985ja}, 
where these involutions were introduced, but is justified by the results of~\cite{SkoruppaZagier}
where it is shown that they correspond
in a precise sense to the Atkin-Lehner involutions on the space of modular forms of  weight~$2k-2$ and 
level~$m$.} that is defined in terms of the theta expansion of Jacobi forms by 
  \be\label{defW} W_{m_1}\,: \quad  \sum_{\ell \; (\mod \, 2m)} h_{\ell} (\tau) \, \vth_{m,\ell} (\tau,z)  \quad \mapsto  \quad   
  \sum_{\ell \; (\mod \, 2m)} h_{\ell^*} (\tau) \, \vth_{m,\ell} (\tau,z) \ee
(or equivalently by $C(\DD,r) \mapsto  C(\DD,r^*)$), where the involution $\ell \mapsto \ell^*$ on $\Z/2m\Z$ is defined by 
  \be\label{definvol}  \ell^* \equiv - \ell \, ({\rm mod} \, 2m_1) , \qquad  \ell^* \equiv   + \ell \, ({\rm mod} \, 2m_2) \ . \ee
These operators commute and satisfy $W_{m/m_1} = (-1)^{k} W_{m_1}$, so that we get an eigenspace decomposition 
  \be\label{Jkmeigen} J_{k,m} \= \bigoplus_{\bve=(\ve_1,\dots ,\ve_t) \,\in\,\{ \pm 1\}^t  \atop \ve_1 \cdots \ve_t \= (-1)^k}  J_{k,m}^{\bve} \ ,  \ee
where $m=p_1^{r_1} \cdots p_{t}^{r_{t}}$ is the prime power decomposition of $m$ and $\ve_i$ is the eigenvalue of $W_{p_{i}^{r_{i}}}$. 
One can also introduce the ``$p$-adic Atkin-Lehner involution'' $W_{p^{\infty}}$, which acts on any~$J_{k,m}$
as~$W_{p^{\nu}}$ where~$p^{\nu}||m$. (In particular, it acts as the identity if~$p \nmid m$.) This operator 
simply changes the second index~$r$ in~$C(\DD,r)$ to~$r^{*}$, where~$r$ is equal to the negative 
of~$r^{*}$ $p$-adically and to~$r^{*}$ $p'$-adically for any~$p' \neq p$. 
In particular,~$W_{p^{\infty}}$ commutes with all~$U_{s}$, $V_{k,t}$, and~$\CV_{k,t}^{m}$.

We now describe a further Hecke-like operator, which was not given in \cite{Eichler:1985ja} but is mentioned 
briefly in \cite{SkoruppaZagier}, p.~139, and  
which we will also need in the sequel. If $t$ is a positive integer whose square divides $m$, then we 
define $u_{t}: \; J_{k,m} \; \longrightarrow J_{k,m/t^{2}}$ by 
\be\label{Ucech}
u_t\,: \quad \sum_{n,r} c(n,r)\,q^n\,y^r \quad \mapsto \quad 
    \sum_{n,r} \left( \sum_{a \mypmod{t}} c \bigl( n+ ar+\frac{m}{t^{2}} a^{2}, \, tr + \frac{2ma}{t}  \bigr)\right) q^n\,y^r 
\ee
or equivalently by 
\be\label{Ucech2}
C_{{\v|u_{t}}}(\DD,\,\ell \mypmod{2m/t^{2}}) 
\= \sum_{r \mypmod{2m/t} \atop r \equiv \ell \mypmod{2m/t^{2}}} C_{\v}(\DD t^{2}, \, rt  \mypmod{2m} ) \; .
\ee
It is easily checked that this map does indeed send Jacobi forms to Jacobi forms of the index stated, 
and that 
for any integers $m,\,t >0$, the composite map $J_{k,m/t^{2}} \stackrel{U_{t}}{\longrightarrow} 
J_{k,m} \stackrel{u_{t}}{\longrightarrow} J_{k,m/t^{2}}$ is simply multiplication by~$t$ 
(because, if $\v = \v_{1} \! \mid \! {U_{t}}$ with $\v_{1}$ of index $m/t^{2}$, then each term 
on the right of \eqref{Ucech2} equals $C_{\v_{1}}(\DD,\ell)$). It follows that we have 
\be\label{primdecomp}
J_{k,m} \= \bigoplus_{t^{2}\mid m} J^{\rm prim}_{k,\frac{m}{t^{2}}} \mid {U_{t}} \, ,
\ee
where
\be\label{primdecomp2}
J^{\rm prim}_{k,m}  \=  \bigcap_{t^{2}\mid m \atop t >1} \ker \big(J_{k,m} \stackrel{u_{t}}{\longrightarrow} J_{k,m/t^{2}} \big) 
\= \bigcap_{p^{2}\mid m \atop \text{$p$ prime}} \ker \big(J_{k,m} \stackrel{u_{p}}{\longrightarrow} J_{k,m/p^{2}} \big) 
\ee
(the equivalence of the two definitions follows because $u_{t} \, u_{t'} = u_{tt'}$ for any $t,\,t'$),
with the projection map~$\pi^{\rm prim}: J_{k,m} \mapsto J_{k,m}^{\rm prim}$ onto the first factor in~\eqref{primdecomp} 
given by
\be \label{primproj}
\pi^{\rm prim}(\v) \= \sum_{t^{2}|m} \frac{\mu(t)}{t} \, \v \! \mid \!  u_{t}\! \mid \! U_{t} \, . 
\ee


%

We will sometimes find it useful to work, not with the space of Jacobi forms, but 
with the larger space of holomorphic functions on $\mathbb{H} \times\C$
that obey the elliptic transformation law \eqref{elliptic} of Jacobi forms, but are not required to 
obey the modular transformation property \eqref{modtransform}. 
We shall call such functions \emph{elliptic forms}.  (The word \emph{Jacobi-like forms} has been
used in the literature to denote functions which, conversely, satisfy \eqref{modtransform} 
but not \eqref{elliptic}.) Any elliptic form has a Fourier expansion as in \eqref{fourierjacobi}
with the periodicity condition \eqref{cnrprop}. We will denote by $\elliptic_{m}$, $\elliptic^{0}_{m}$, $\wt \elliptic_{m}$ 
the spaces of all elliptic forms of index $m$ satisfying the growth conditions 
\eqref{holjacobi}, \eqref{cuspjacobi}, and \eqref{weakjacobi}, respectively, and we will call them
\emph{holomorphic, cuspidal}, and \emph{weak} respectively. 
We also denote by 
$\elliptic_{\pm,m}$, $\elliptic^{0}_{\pm,m}$, $\wt \elliptic_{\pm,m}$ the $(\pm 1)-$eigenspaces 
of theta functions under the involution $\v(\t,z) \mapsto \v(\t,-z)$, i.e., those functions 
whose Fourier coefficients $c(n,r)$ satisfy the parity condition $c(n,-r) = \pm c(n,r)$. 
Because any Jacobi form of weight $k$ satisfies this parity condition with $\pm 1 = (-1)^{k}$
(by \eqref{modtransform} applied to $-{\bf 1}_{2}$), we have 
\be \label{PrimParts}
J_{k,m} \subset \elliptic_{\pm,m} \, , \quad J^{0}_{k,m} \subset \elliptic^{0}_{\pm,m} \, , \quad  
\wt J_{k,m} \subset \wt \elliptic_{\pm,m} \,  \qquad \big(\, (-1)^{k} = \pm 1 \, \big) \, . 
\ee
It is now clear from the definitions above that all of the Hecke-like operators 
$V_{t}$, $U_{t}$, $W_{m_1}$ for $(m_1,m/m_1)=1$ and $u_t$ for $t^2|m$ defined 
in this section extend from the space of Jacobi forms to the space of elliptic forms, 
and that the decompositions \eqref{Jkmeigen} and \eqref{primdecomp} remain true if we replace all the 
$J_{k}$s by $\elliptic_{+}$s.
 
Finally, we should warn the reader explicitly that, although the operators $U_{t}$ and $V_{t}$ also act on
the spaces $\wt J_{k,*}$ and $\wt \elliptic_{\pm,*}$ of weak Jacobi or weak elliptic forms, the operators 
$W_{m_1}$ and $u_{t}$ do not: the condition \eqref{weakjacobi} is equivalent to saying that $C(\DD,\ell)$ 
for $|\ell| \le m$ vanishes whenever $\DD < -\ell^{2}$, and this condition is preserved neither under 
the permutation $\; \ell \mypmod{2m} \mapsto \ell^{*}  \mypmod{2m}$ defining $W_{m_{1}}$,
nor by the summation conditions in  \eqref{Ucech2}.  As a concrete example, the 7-dimensional space $\wt J_{2,6}$
contains a 6-dimensional subspace of forms invariant under the involution $W_2$, but no anti-invariant form,
and its image under the projection operator $\pi_2^-=\frac12(1-W_2)$ to anti-invariant forms is the 1-dimensional
space spanned by 
\be \label{phi26min}  \v_{2,6}^-(\t,z) \= j'(\t)\,\frac{\vth_{1}(\t,4z)}{\vth_{1}(\t,2z)} \= 
 \sum_{n,\,r\in\IZ } \chi_{12}(r)\,C_{2,6}(24n-r^2)\;q^n\,y^r \,,
\ee
with 
$$
\begin{tabular}{|c|ccccccc|}  \hline     
$\DD$ & $\;\,\le-49$ & $-25$& $-1$& $23$ & 47 & 71 & $\cdots$ \\
       \hline  $C_{2,6}(\DD)$ & \;0 & $-1$ & $-1$ & 196882 & 43184401 & 2636281193 & $\cdots$ \\
\hline  \end{tabular}  \quad ,
$$
which is a weakly holomorphic but not a weak Jacobi form.  (Specifically, for any $\v\in\wt J_{2,6}$ we have
$\pi^-_2(\v)=6\,c_\v(0,5)\,\v_{2,6}^-\,$.)



%

\subsection{Quantum black holes and  Jacobi forms}\label{BHandJF}

Jacobi forms usually arise in string theory as elliptic genera of two-dimensional superconformal field theories (SCFT) 
with $(2, 2)$ or more worldsheet supersymmetry\footnote{An SCFT with $(r, s) $ supersymmetries has $r$  
left-moving and $s$ right-moving supersymmetries.}.  We denote the superconformal field theory by $\sigma(\CM)$ when it corresponds
to a sigma model with a target manifold $\mathcal{M}$.  Let $H$ be the Hamiltonian in the Ramond sector, and $J$ be the left-moving
$U(1)$ $R$-charge. The elliptic genus $\chi(\tau, z; \mathcal{M})$ is then defined as \cite{Witten:1986bf,Alvarez:1987wg, Ochanine:1987}
a trace over the Hilbert space $\mathcal{H}_R$ in the Ramond sector
  \be\label{ell}  \chi(\tau, z; \mathcal{M}) \= \Tr_{\mathcal{H}_R} \bigl( (-1)^F q^H y^J  \bigr) \,, \ee
where $F$ is the fermion number.

An elliptic genus so defined satisfies the modular transformation property
(\ref{modtransform}) as a consequence of modular invariance of the path integral.
Similarly, it satisfies the elliptic transformation property (\ref{elliptic}) as
a consequence of spectral flow.  Furthermore, in a unitary SCFT, the positivity
of the Hamiltonian implies that the elliptic genus is a weak Jacobi form. The
decomposition (\ref{jacobi-theta}) follows from bosonizing the $U(1)$ current in
the standard way so that the contribution to the trace from the momentum modes of
the boson can be separated into the theta function (\ref{thetadef}). See, for
example, \cite{Kawai:1993jk,Moore:2004fg} for a discussion. This notion of the
elliptic genus can be generalized to a $(0, 2)$ theory using a left-moving $U(1)$
charge $J$ which may not be an R- charge. In this case spectral flow is imposed
as an additional constraint and follows from gauge invariance under large gauge
transformations \cite{deBoer:2006vg,Gaiotto:2006wm,Kraus:2006nb,Denef:2007vg}.

A particularly useful example in the present context is $ \sigma(K3)$, which is a $(4,4)$ SCFT whose target space
is a $K3$ surface. The elliptic genus is a topological invariant and is independent of the moduli of the $K3$. Hence, it can
be computed at some convenient point in the $K3$ moduli space, for example, at the orbifold point where the $K3$ is the Kummer
surface. At this point, the $\sigma(K3)$ SCFT can be regarded as a $\mathbb{Z}_2$ orbifold of the $\sigma(T^4)$ SCFT, which 
is an SCFT with a torus $T^4$ as the target space. A simple computation using standard techniques of orbifold conformal field theory
\cite{Dixon:1985jw, Ginsparg:1988ui} yields 
  \be\label{k3}  {\chi}(\t,z; K3) \= 2\,\v_{0,1}(\t,z) \= 2 \sum \, C_0(4n- l^2)\, q^n\,y^l \,.  \ee
Note that for $z=0$, the trace (\ref{ell}) reduces to the Witten index of the SCFT and correspondingly the elliptic genus reduces to
the Euler character of the target space manifold. In our case, one can readily verify from (\ref{k3}) and (\ref{phizero}) that
$\chi(\t, 0; K3)$ equals $24$, which is the Euler character of $K3$.

A well-known physical application of Jacobi forms is in the context of the five-dimensional Strominger-Vafa black 
hole\cite{Strominger:1996sh}, which is a bound state of $Q_1$ D1-branes, $Q_5$ D5-branes, $n$ units of momentum and $l $
units of five-dimensional angular momentum \cite{Breckenridge:1996is}. The degeneracies $d_m(n,l)$ of such black holes depend 
only on $m =Q_1 Q_5$. They are given by the Fourier coefficients $c(n, l)$ of the elliptic genus ${\chi}(\t, z;
\textrm{Sym}^{m+1} (K3))$ of symmetric product of $(m+1)$ copies of $K3$-surface.

Let us denote the generating function for the elliptic genera of symmetric products of K3 by
\be\label{defZ5d}  \wh Z(\s,\t,z) := \sum_{m=-1}^\infty \chi(\t,z; \textrm{Sym}^{m+1}(K3))\,p^m \ee
where $\chi_m(\t,z)$ is the elliptic genus of $ \textrm{Sym}^m(K3)$. A standard orbifold computation \cite{Dijkgraaf:1996xw} gives
  \be\label{ellprzero}  \wh Z(\s,\t,z) = \frac 1{p}\prod_{r>0,\; s\ge0,\; t} \frac 1 {(1- q^s y^t p^r)^{2C_0(rs, t)}} \ee
in terms of the Fourier coefficients $2 C_o$ of the elliptic genus of a single copy of $K3$.

For $z=0$, it can be checked that,  as expected, the generating function (\ref{ellprzero}) for elliptic genera
reduces to the generating function (\ref{Zeuler2}) for Euler characters
  \be\label{Ztoeta}  \wh Z(\s,\t, 0) = \hat Z(\sigma) = \frac 1{\Delta(\s)} \, .  \ee

\section{Siegel modular forms \label{Siegel}}

\subsection{Definitions and examples of Siegel modular forms}

Let $Sp(2, \mathbb{Z})$ be the group of $(4\times 4)$ matrices $g$ with integer entries satisfying $g J g^t =J$ where
  \be    J \equiv \left( \begin{array}{cc} 0 & -I_2 \\ I_2 & 0 \\ \end{array} \right)  \ee
is the symplectic form. We can write the element $g$ in block form as
  \be\label{sp}  \left(  \begin{array}{cc}  A & B \\ C & D \\ \end{array} \right),\ee
where $A, B, C, D$ are all $(2\times 2)$ matrices with integer entries.
Then the condition $g J g^t =J$ implies
  \be\label{cond}  AB^t=BA^t, \qquad CD^t=DC^t, \qquad AD^t-BC^t= \mathbf 1\, ,  \ee
Let $\mathbb{H}_2$ be the (genus two) Siegel upper half plane, defined as the set
of $(2\times 2)$ symmetric matrix $\Omega$ with complex entries
  \be\label{period}   \Omega = \left( \begin{array}{cc} \t & z \\  z & \sigma \\ \end{array} \right) \ee
satisfying
  \be\label{cond1}   \textrm{Im} (\t) > 0, \quad \textrm{Im} (\sigma) > 0, \quad \det(\Im(\O)) > 0 \, .  \ee
An element $g \in Sp(2, \mathbb{Z})$ of the form (\ref{sp}) has a natural action on $\mathbb{H}_2$ under which it is stable:
  \be\label{trans} \Omega \to (A \Omega + B )(C\Omega + D ) ^{-1}.  \ee
The matrix $\Omega$ can be thought of as the period matrix of a genus two 
Riemann surface\footnote{See \cite{Gaiotto:2005hc,Dabholkar:2006xa,Banerjee:2008yu} for a
discussion of the connection with genus-two Riemann surfaces.} on which there is a natural symplectic action of $Sp(2, \mathbb{Z})$.

A Siegel form $F(\Omega)$ of weight $k$ is a holomorphic function $ \mathbb{H}_2\rightarrow \mathbb{C}$ satisfying
  \be\label{trans2}   F\bigl((A\Omega + B)(C\Omega + D) ^{-1}\bigr) \=  \det(C\Omega + D)^k \, F (\Omega)\,.  \ee
A Siegel modular form can be written in terms of its Fourier series
  \be\label{siegelfourier}   F(\Omega) \= \sum_{n,\,r,\,m\inn\Z \atop r^2\le4mn} a(n,r, m)  \, q^n\,y^r\,p^m \,.  \ee
If one writes this as 
  \be\label{siegeljacobi}    F(\Omega) = \sum_{m=0}^{\infty}  \v^{\rm F}_{m} (\t, z)  \, p^m   \ee
with 
  \be\label{siegeljacfourier}   \v^F_m(\t, z) \= \sum_{n,\,r} a(n,r, m)  \, q^n\,y^r  \, ,  \ee
then each $\v^{\rm F}_{m} (m \geq 0)$ is a Jacobi form of weight $k$ and index $m$.

An important special class of Siegel forms were studied by Maass which he called the
{\it Spezialschar}.  They have the property that $a(n,r,m)$ depends only on the
discriminant $4 mn -r^2$ if $ (n,r,m)$ are coprime, and more generally
  \be\label{Cnrm}  a(n, r, m) \= \sum_{d \mid (n,r,m),\;\,d>0} d^{k-1} C\Bigl(\frac{4 mn - r^2}{d^2}\Bigr)  \ee 
for some coefficients $C(N)$. Specializing to $m=1$, we can see that these numbers are simply the coefficients associated 
via \eqref{phiC} to the Jacobi form $\v = \v^{\rm F}_1 \in J_{k,1}$, and that \eqref{Cnrm} says precisely that the 
other Fourier-Jacobi coefficients of $F$ are given by $\v_{m}^{\rm F} = \v_1^{\rm F}|V_{m}$ with $V_{m}$ as in \eqref{defVl}. 
Conversely, if $\v$ is any Jacobi form of weight $k$ and index $1$ with Fourier expansion \eqref{phiC}, then the function 
$F(\Omega)$ defined by \eqref{siegelfourier} and \eqref{Cnrm} or by $F(\O) = \sum_{m=0}^{\infty} \bigl( \v|V_{m} \bigr)(\t,z)\,p^{m}$
is a Siegel modular form of weight $k$ with $\v_1^{\rm F} = \v$.   The resulting map from $J_{k,1}$ to the Spezialschar is called the 
Saito-Kurokawa lift or \textit{additive lift} since it naturally gives the sum representation of a Siegel form using the Fourier 
coefficients of a Jacobi form as the input. (More information about the additive lift can be found in \cite{Eichler:1985ja}.)

The example of interest to us is the Igusa cusp form $\Phi_{10}$ (the unique cusp form of weight $10$) which is the Saito-Kurokawa
lift of the Jacobi form $ \v_{10, 1}$ introduced earlier, so that
\be\label{igusa-additive}  
  \Phi_{10}(\Omega) \= \sum_{m=1}^{\infty} \bigl(\v_{10,1}|V_{m}\bigr) (\t,z) \, p^{m}   
  \= \sum_{n,\,r,\,m} a_{10}(n,r,m) \,q^n\,y^r\,p^m \,, 
 \ee
where $a_{10}$ is defined by (\ref{Cnrm}) with $k=10$ in terms of the coefficients $C_{10}(d)$ given in Table \ref{tablefcoeffs}.

A Siegel modular form sometimes also admits a product representation, and can be obtained as
Borcherds lift or {\it multiplicative lift} of a weak Jacobi form of weight zero
and index one. This procedure is in a sense an exponentiation of the additive
lift and naturally results in the product representation of the Siegel form using
the Fourier coefficients of a Jacobi form as the input. Several examples of
Siegel forms that admit product representation are known but at present there is
no general theory to determine under what conditions a given Siegel form admits a product representation. 

For the Igusa cusp form $\Phi_{10}$, a product representation does exist. It was obtained by Gritsenko and Nikulin 
\cite{GritNik1, GritNik2}
as a multiplicative lift of the elliptic genus $ \chi(\t,z; K3) =2\v_{0,1}(\t,z) $ and is given by
  \be \label{final2}  \Phi_{10}(\Omega) \= q y p \prod_{(r, s, t) >0}\bigl( 1 - q^s y^t p^r\bigr)^{2C_0(4rs-t^2)}, \ee
in terms of $C_0$ given by (\ref{phizero}, \ref{phik}).  Here the notation $(r,s,t)>0$ means that $r,\,s,\,t\,\in\Z$ with
either $r>0$ or $r=0$, $s >0$, or $r=s=0$, $t<0$.  In terms of the $V_{m}$, this can be rewritten in the 
form $\; \Phi_{10}= p \, \v_{10,1} \, \exp\bigl(-2\sum_{m=1}^{\infty} (\v_{0,1}|V_{m}) \, p^{m}\bigr)$.

\subsection{Quantum black holes and Siegel modular forms}

Siegel forms occur naturally in the context of counting of quarter-BPS dyons.  The partition function for these dyons depends on three
(complexified) chemical potentials $(\sigma, \t, z)$, conjugate to the three $T$-duality invariant integers $ (m, n, \ell)$ respectively 
and is given by
\be\label{igusa} 
 \textrm{Z}(\Omega) \= \frac 1{\Phi_{10}(\Omega)} \ .  
\ee

The product representation of the Igusa form is particularly useful for the
physics application because it is closely related to the generating function for
the elliptic genera of symmetric products of $K3$ introduced earlier.  This is a
consequence of the fact that the multiplicative lift of the Igusa form is
obtained starting with the elliptic genus of $K3$ as the input. Comparing the
product representation for the Igusa form (\ref{final2}) with \eqref{ellprzero}, we get the relation:
  \be\label{4d5drel}  \textrm{Z}(\t,z,\s) \= \frac 1{\Phi_{10}(\t,z,\s)} \= \frac{\wh Z (\t,z,\s)} {\v_{10,1}(\t,z)} \ .  \ee

This relation to the elliptic genera of symmetric products of $K3$ has a deeper
physical significance based on what is known as the 4d-5d lift
\cite{Gaiotto:2005gf}. The main idea is to use the fact that the geometry of the
Kaluza-Klein monopole in the charge configuration (\ref{final charges}) reduces
to five-dimensional flat Minkowski spacetime in the limit when the radius of the
circle $\tS^1$ goes to infinity. In this limit, the charge $l$ corresponding to
the momentum around this circle gets identified with the angular momentum $l$ in
five dimensions. Our charge configuration (\ref{final charges}) then reduces
essentially to the Strominger-Vafa black hole \cite{Strominger:1996sh} with
angular momentum \cite{Breckenridge:1996is} discussed in the previous subsection.
Assuming that the dyon partition function does not depend on the moduli, we thus
essentially relate $Z(\Omega)$ to $\hat Z(\O)$. The additional factor in
(\ref{4d5drel}) involving $\Phi_{10} (\s,\t,z)$ comes from bound states of
momentum $n$ with the Kaluza-Klein monopole and from the center of mass motion of
the Strominger-Vafa black hole in the Kaluza-Klein geometry \cite{Gaiotto:2005hc,David:2006yn}.

The Igusa cusp form has double zeros at $z=0$ and its $Sp(2,\IZ)$ images. The partition
function is therefore a \textit{meromorphic} Siegel form (\ref{trans2}) of weight
$-10$ with double poles at these divisors. This fact is responsible for much of
the interesting physics of wall-crossings in this context as we explain in the next section.

Using  \eqref{igusa-additive} or \eqref{final2}, one can compute the coefficients $\psi_{m}$ 
in the Fourier-Jacobi expansion~\eqref{reciproigusa}. The first few, multiplied by $\D$ and a suitable integer, are given by 
\bea\label{firstpsim}
\D\,\psi_{-1} & \= & A^{-1}\;, \nonumber \\ 
\D\,\psi_{0\;}  & \= & 2\,A^{-1}B\;, \nonumber \\
4\,\D \,\psi_{1\;} &\=& 9\,A^{-1}B^2 \+ 3 E_4 A\;, \\
27 \, \D \, \psi_{2\;}  & \= & 50 \, A^{-1} B^3 \+  48 E_4 AB \+ 10 E_6 A^2\;,  \nonumber  \\
384 \, \D \, \psi_{3\;}  & \= & 475 \,  A^{-1} B^4 \+ 886 E_4 AB^2 \+ 360 E_6 A^2 B 
\+ 199 E_4^2 A^3\;, \nonumber \\
72 \, \D \, \psi_{4\;}  & \= & 51 \,  A^{-1} B^5 \+ 155 E_4 A B^3 \+  93 E_6 A^2 B^2 
\+ 102 E_4^2 A^3 B \+ 31 E_4 E_6 A^4\;. \nonumber 
\eea
(Here $E_4$, $E_6$, $\D$ are as in \S\ref{Modular} and $A=\v_{-2,1}$, $B=\v_{0,1}$ as in \S\ref{Jacobi}.) 
The double zero of $\Phi_{10}$ at $z=0$ is reflected by the denominator $A$ in the $A^{-1} B^{m+1}$ terms in these formulas. 
Note that, by \eqref{4d5drel}, the right-hand side of \eqref{firstpsim} multiplied by $A$ and divided by 
a suitable denominator are equal to the elliptic genus of symmetric products of a $K3$ surface. For example,
\be
\chi({\rm Sym}^{4} K3; \t,z) \= \frac{1}{384} \big(475 \,  B^4 \+ 886 E_4 A^{2} B^2 \+ 360 E_6 A^3 B 
\+ 199 E_4^2 A^4\; \big) \, .
\ee

\section{Walls and contours \label{Contours}}

Given the partition function \eqref{igusa}, one can extract the black hole
degeneracies from the Fourier coefficients. The three quadratic $T$-duality
invariants of a given dyonic state can be organized as a $2 \times 2$ symmetric matrix
\be\label{matrix_charge_vector}  
\L \=\bem N\!\cdot\! N & N\!\cdot\! M \\  M\!\cdot\!N & M\!\cdot\! M \eem \= \bem 2n& \ell \\ \ell & 2 m \eem \; , 
\ee
where the dot products are defined using the $O(22, 6; \mathbb{Z})$ invariant metric $L$. The matrix $ \Omega$ in (\ref{igusa}) 
and (\ref{period}) can be viewed as the matrix of complex chemical potentials conjugate to the charge matrix $\L$.
The charge matrix $\L$ is manifestly $T$-duality invariant. Under an $S$-duality transformation (\ref{Sgroup}),  it transforms as
\be\label{lambdatransform} 
 \L \;\to\; \gamma \L \gamma^t  
\ee
There is a natural embedding of this physical $S$-duality group $SL(2,\Z)$ into $Sp(2, \mathbb{Z})$:
\be\label{sembed} 
 \left( \begin{array}{cc}  A & B \\ C & D \\ \end{array} \right)
    \= \left( \begin{array}{cc} (\g^t)^{-1} & \textbf{0} \\ \textbf{0} & \g   \\ \end{array} \right)  
    \= \left( \begin{array}{cccc} d & -c & 0 & 0 \\ -b & a & 0 & 0 \\ 0 & 0 & a & b \\ 0 & 0 & c & d \\ 
    \end{array} \right) \; \in\, Sp(2, \mathbb{Z})\,.  
\ee
The embedding is chosen so that $\Omega \to (\g^T)^{-1} \Omega \g^{-1}$ and $\Tr(\Omega \cdot \Lambda)$ in the Fourier integral 
is invariant. This choice of the embedding ensures that the physical degeneracies extracted from the Fourier
integral are $S$-duality invariant if we appropriately transform the moduli at the same time as we explain below.

To specify the contours, it is useful to define the
following moduli-dependent quantities. One can define the matrix of right-moving $T$-duality invariants
  \be\label{rightmatrix_charge_vector} \L_{R}  =\bem Q_R\cdot Q_R & Q_R\cdot P_R \\ P_R\cdot Q_R & P_R\cdot P_R \eem \;, \ee
which depends both on the integral charge vectors $N, M$ as well as the $T$-moduli $\mu$.  One can then define 
two matrices naturally associated to the $S$-moduli $S = S_1+iS_2$ and the $T$-moduli $\mu$ respectively by
  \be\label{ST_moduli} {\cal S} = \frac 1{S_2} \bem |S|^2 & S_1\\\ S_1& 1 \eem\quad, \quad {\cal T} 
  \= \frac{ \L_{R}}{{|\det (\L_{R})|^\frac 12}}\;.  \ee
Both matrices are normalized to have unit determinant. In terms of them, we
can construct the moduli-dependent `central charge matrix'
  \be\label{def_Z_vec} {\mathcal Z} = {{|\det(\L_{R})|^\frac 14}}\, \big({\cal S}+{\cal T} \big)\,, \ee
whose determinant equals the BPS mass
  \be M_{Q,P} \= |\det{\mathcal Z}|\;.  \ee
We define 
  \be \wt \Omega \equiv \left( \begin{array}{cc} \sigma &  -z \\  -z & \tau \\ \end{array} \right)\;. \ee
This is related to $\Omega$ by an $SL(2,\Z)$ transformation
  \be  \tilde\Omega \=  \wh S \Omega  \wh S^{-1}  \quad \text{where} \quad 
   \wh S \=  \left( \begin{array}{cc} 0 &  1 \\  -1 & 0 \\ \end{array} \right)\,, \ee
so that, under a general $S$-duality transformation $\gamma$, we have the transformation $\tilde\Omega\rightarrow \g\wt\Omega \g^T$
as  $\Omega \to (\g^T)^{-1} \Omega \g^{-1}$.

With these definitions,  $\Lambda, \Lambda_{R}, {\cal Z}$ and $\tilde \Omega$ all transform as $X \to \g X \gamma^T$ under an 
$S$-duality transformation (\ref{Sgroup}) and are invariant under $T$-duality transformations. 
The moduli-dependent Fourier contour can then be  specified in a duality-invariant fashion by\cite{Cheng:2007ch}
  \be\label{contour}  {\cal C} = \{\im  (\wt \O) = \ve^{-1} {\cal Z};\quad 0 \leq \Re(\t), \, \Re(\s), \, \Re(z) < 1 \},   \ee
where $\ve \rightarrow 0^+$. For a given set of charges, the contour depends on the moduli $S, \mu$ through the definition of the central
charge vector (\ref{def_Z_vec}). The degeneracies $d(n, \ell,m)\lvert_{S, \m}$ of states with the $T$-duality invariants $(n,\ell,m)$
at a given point $(S,\mu)$ in the moduli space are then given by\footnote{The physical degeneracies have an additional multiplicative 
factor of  $(-1)^{\ell +1}$ which we omit here for simplicity of notation in later chapters.} 
  \be\label{inverse}  d(n, \ell, m)\lvert_{S, \m} = \int_{\CC} \, e^{-i\pi \Tr (\Omega \cdot \Lambda)}\, \, {\textrm{Z}(\Omega)} \, d^3 \Omega \, . \ee

This contour prescription thus specifies how to extract the degeneracies from the partition 
function for a given set of charges and in any given region of the moduli space. In particular, 
it also completely summarizes all wall-crossings as one moves around in the moduli space for a fixed set of charges. 
Even though the indexed partition function has the same functional form throughout the moduli space, the spectrum is moduli
dependent because of the moduli dependence of the contours of Fourier integration and the pole structure of the partition function. 
Since the degeneracies depend on the moduli \textit{only} through the dependence of the contour ${\cal C}$, moving around in the 
moduli space corresponds to deforming the Fourier contour. This does not change the degeneracy except when
one encounters a pole of the partition function. Crossing a pole corresponds to crossing a wall in the moduli 
space. The moduli space is thus divided up into domains separated by ``walls of marginal stability.'' In each domain the degeneracy
is constant but it jumps upon crossing a wall as one goes from one domain to the other. The jump in the degeneracy has a nice mathematical 
characterization. It is simply given by the residue at the pole that is crossed while deforming the Fourier contour in going from one domain to the other.

We now turn to the degeneracies of single-centered black holes. Given the
$T$-duality invariants $\Lambda$, a single centered black hole solution is known to
exist in all regions of the moduli space as long as $\det (\Lambda)$ is large and
positive. The moduli fields can take any values $(\l_ \infty, \mu_ \infty)$ at
asymptotic infinity far away from the black hole but the vary in the black hole geometry.  Because of the attractor phenomenon
\cite{Ferrara:1995ih,Strominger:1996kf}, the moduli adjust themselves so that near the horizon of the black hole of charge
$\L$ they get attracted to the values $(\l_*(\L), \mu_*(\L))$ which are determined by the requirement that the
central charge $ \mathcal Z_*(\L)$ evaluated using these moduli becomes proportional to $\L$. These attractor values are
independent of the asymptotic values and depend only on the charge of black hole. We call these moduli the
attractor moduli. This enables us to define the \textit{attractor contour} for a
given charge $\L$ by fixing the asymptotic moduli to the attractor values corresponding to this charge. In this case
  \be\label{att}  \CZ(\l_\infty, \mu_\infty) \=  \CZ(\l_*(\L),\mu_*(\L)) \sim \L \ee
and we have the attractor contour
  \be\label{contourattract} {\cal C}_* \= \{\im (\wt \O) \= \ve^{-1} {\Lambda};\quad 0 \leq \Re(\t), \, \Re(\s), \, \Re(z) < 1 \} \ee
which depends only on the integral charges and not on the moduli. The significance of the attractor
moduli in our context stems from the fact if the asymptotic moduli are tuned to these values for given
$(n,\ell,m)$, then only single-centered black hole solution exists.  The degeneracies $d^*(n,\ell,m)$ obtained using the attractor contour
  \be\label{inverse3}   d^*(n, \ell,m) = \int_{ \CC_*}  \, e^{-i\pi \Tr (\Omega \cdot \Lambda)}\, \, {\textrm{Z}(\Omega)} \, d^3 \Omega \ee
are therefore expected to be the degeneracies of the immortal single-centered black holes.

\section{Mock modular forms \label{Mock}}

Mock modular forms are a relatively new class of modular objects, although individual examples had been
known for some time. They were first isolated explicitly by S. Zwegers in his thesis \cite{Zwegers:2002}
as the explanation of the ``mock theta functions'' introduced by Ramanujan in his famous last letter to
Hardy.  An expository account of this work can be found in \cite{Zagier:2007}. 

Ramanujan's mock theta functions are examples of what we will call ``pure'' mock modular forms. 
By this we mean a holomorphic function in the upper half plane which transforms under
modular transformations almost, but not quite, as a modular form. The
non-modularity is of a very special nature and is governed by another holomorphic
function called the \textit{shadow} which is itself an ordinary modular form. We will describe this 
more precisely in \S\ref{basicdefs}  and give a number of examples.  In \S{\ref{MockJacobi}}, 
we introduce a notion of mock Jacobi forms (essentially, holomorphic functions of  $\t$ and $z$ 
with theta expansions like that of usual Jacobi forms, but in which the coefficients $h_{\ell}(\t)$ 
are mock modular forms) and show how all the examples given in the first subsection 
occur naturally as pieces of mock Jacobi forms. Finally in \S\ref{MMMF} we define a more 
general notion of mock modular forms in which the shadow is replaced by a sum of products 
of holomorphic and anti-holomorphic modular forms. This has the advantage that there are now 
many examples of strongly holomorphic mock modular forms, whereas pure mock modular forms 
almost always have a negative power of $q$ at at least one cusp, and also that almost all of the examples 
occurring naturally in number theory, algebraic geometry, or physics are of this more general 
``mixed'' type, with the pure mock modular forms arising as quotients of mixed mock modular forms 
by eta products.

\subsection{Pure mock modular forms \label{basicdefs}}

We define a (weakly holomorphic) {\it pure mock modular form} of weight $k\in \frac12 \Z$ as the first member 
of a pair $(h,g)$, where
\begin{enumerate}
\item $h$ is a holomorphic function in $\IH$ with at most exponential growth at all cusps, 
\item the function $g(\t)$, called the \textit{shadow} of $h$, is a holomorphic\footnote{One can also consider the case where the shadow is allowed
to be a weakly holomorphic modular form, but we do not do this since none of our examples will be of this type.} modular form of weight $2-k\,$, 
and 
\item the sum $\wh h := h\+g^*$, called the {\it completion} of $h$, transforms like a holomorphic modular form of weight $k$,
   {\it i.e.} $\wh h (\tau)/\theta (\tau)^{2k}$ is invariant under $\t \to \g \t$  for all $ \t \in \IH $ and for all
   $\g$ in some congruence subgroup of $SL(2,\Z)$. 
\end{enumerate} 
Here $g^*(\t)$, called the {\it non-holomorphic Eichler integral}, is a solution of the differential equation
\be\label{starinv} (4\pi\t_2)^k\,\frac{\pa g^*(\t)}{\pa \bar{\tau}} \= -2\pi i\;\bar{g(\tau)} \, . \ee
If $g$ has the Fourier expansion $g(\t)=\sum_{n\ge0}\;b_n\,q^n$, we fix the choice of $g^*$ by setting 
\be\label{defstar} 
  g^*(\t)  \=  \bar b_0\,\frac{(4\pi\t_2)^{-k+1}}{k-1} \+ \sum_{n>0} \;n^{k-1}\,\bar b_n \;\G(1-k,4\pi n\t_2)\;q^{-n}\, , 
\ee
where $\t_2 = \rm{Im}(\t)$ and $\G(1-k,x) = \int_x^{\infty} t^{-k}\,e^{-t}\,dt$ denotes the incomplete gamma function, and where 
the first term must be replaced by $ -\bar b_0\,\log(4\pi\t_2)\,$ if $k=1$. Note that the series in~\eqref{defstar} converges despite 
the exponentially large factor $q^{-n}$ because $\G(1-k,x)= O(x^{-k}e^{-x})\,$. If we assume either that $k>1$ or that $b_{0}=0$,
then we can define $g^*$ alternatively by the integral  
\be\label{Lkintrep} g^*(\t) = \biggl(\frac i{2\pi}\Bigr)^{k-1} \int_{-\bar{\t}}^{\infty} (z+ \t)^{-k} \ \bar{g(-\bar{z})}\; dz \;. \ee
(The integral is independent of the path chosen because the integrand is holomorphic in $z$.) 
Since $h$ is holomorphic, \eqref{starinv} implies that the completion of $h$ is related to its shadow by
    \be\label{ddtbarh}   (4\pi\t_2)^k\,\,\frac{\pa \wh h(\t)}{\pa \bar{\tau}} \= -2\pi i\;\bar{g(\tau)}\;.  \ee

We denote by $ \IM_{k|0}^{\,!}$ the space of weakly holomorphic pure mock modular forms
of weight $k$ and arbitrary level. (The ``0'' in the notation corresponds to the word ``pure'' 
and will be replaced by an arbitrary integer or half-integer in \S\ref{MMMF}.)  
This space clearly contains the space $M_{k}^{\,!}$ of ordinary weakly holomorphic
modular forms (the special case $g=0$, $h=\wh h$) and we have an exact sequence  
  \be\label{exseqmmf} 0 \longrightarrow M_{k}^{\,!} \longrightarrow \IM_{k|0}^{\,!} \stackrel{\CS}{\longrightarrow} \bar {M_{2-k}}  \; \, ,\ee
where the {\it shadow map} $\CS$ sends $ h$ to $\bar{g} $.\footnote{We will use the word ``shadow"  to denote either $g(\t)$
or $\overline{g(\t)}$, but the shadow {\it map}, which should be linear over $\C$, always sends $h$ to $\bar g$, the complex conjugate 
of its holomorphic shadow.  We will also often be sloppy and say that the shadow of a certain mock modular form ``is'' some modular
form $g$ when in fact it is merely proportional to $g$, since the constants occurring are arbitrary and are often messy.}

In the special case when the shadow $g$ is a unary theta series as in \eqref{unary1} or \eqref{unary2} (which can only
hapen if $k$ equals 3/2 or 1/2, respectively), the mock modular form $h$ is called a {\it mock theta function}.
All of Ramanujan's examples, and all of ours in this paper,  are of this type.   In these cases the incomplete gamma functions
in~\eqref{defstar} can be expressed in terms of the complementary error function $\,\erfc(x)=2\pi^{-1/2}\int_x^\infty e^{-t^2}dt\;$:
\be \label{erfc}  \G\bigl(-\frac12,\,x\bigr)\= \frac2{\sqrt x}\,e^{-x}\;-\;2\,\sqrt\pi\;\erfc\bigl(\sqrt x\bigr)\;, \qquad
     \G\bigl(\frac12,\,x\bigr)\= \sqrt\pi\;\erfc\bigl(\sqrt x\bigr)\;.  \ee

A very simple, though somewhat artificial, example of a mock modular form is the weight~2 Eisenstein series $E_2(\t)$ 
mentioned in~\S\ref{modularbasic}, which is also a quasimodular form. Here the shadow $g(\t)$ is a constant and its 
non-holomorphic Eichler integral $g^*(\t)$ simply a multiple of~$1/\t_2\,$.  
This example, however, is atypical, since most quasimodular forms (like $E_2(\t)^2$) are not mock modular forms.
We end this subsection by giving a number of less trivial examples. Many more will occur later. 

\smallskip 
\ndt {\bf Example 1.}  In Ramanujan's famous last letter to Hardy in 1920, he gives 17 examples of mock theta functions, though
without giving any complete definition of this term.  All of them have weight $1/2$ and are given as $q$-hypergeometric
series. A typical example (Ramanujan's second mock theta function of ``order 7"\,---\,a notion that he also does not define) is 
  \be\label{Ramaeg}  \CF_{7,2}(\t) = - q^{-25/168}  \, \sum_{n=1}^{\infty} \frac{q^{n^2}}{(1-q^{n}) \cdots (1 - q^{2n-1})}
  \= - q^{143/168} \, \bigl( 1 + q + q^2 + 2 q^3 + \cdots \bigr) \ .  \ee
This is a mock theta function of weight $1/2$ on $\Gamma_0(4) \cap \Gamma(7) $ with shadow the unary theta series 
  \be\label{nequiv2} \sum_{n\equiv 2 \, (\mod\;7)}  \chi_{12}(n)\,n\,q^{n^2/168}\,, \ee
with $\chi_{12}(n)$ as in \eqref{defchi12}.  The product $\eta(\t)\CF_{7,2}(\t)$ is a strongly holomorphic mixed mock modular
form of weight $(1,1/2)$, and by an identity of Hickerson~\cite{Hickerson} is equal to an indefinite theta series:
  \be\label{Hickerson} \eta(\t) \, \CF_{7,2} (\t) \= \sum_{r,\,s\,\in\,\Z + \frac5{14}} 
  \frac12\bigl({\rm sgn}(r) + {\rm sgn}(s)\bigr)\, (-1)^{r-s} \, q^{(3r^2 +8rs+3s^2)/2} \ . \ee

\smallskip
\ndt {\bf Example 2.} 
The second example is the generating function of the Hurwitz-Kronecker class numbers $H(N)$. 
These numbers are defined for $N>0$ as the number of $PSL(2,\Z)$-equivalence classes of 
integral binary quadratic forms of discriminant $-N$, weighted by the reciprocal of the number of 
their automorphisms (if $-N$ is the discriminant of an imaginary quadratic field $K$ other than $\IQ(i)$ or 
$\IQ(\sqrt{-3})$, this is just the class number of $K$), and for other values of $N$ by 
$H(0) = -1/12$ and $H(N)=0$ for $N<0$. It was shown in  \cite{Zagier:1975} that the function 
  \be\label{classnum}  {\bf H}(\t) \;:=\; \sum_{N=0}^{\infty} H(N)\,q^{N} 
 \= -\frac 1{12} \+ \frac13 q^3 \+ \frac12 q^4 \+ q^7 \+ q^8 \+ q^{11} \+ \cdots  \ee
is a mock  modular form of weight $3/2$ on $\Gamma_0(4)$, with shadow the classical theta function $\theta (\tau)  = \sum q^{n^2}$. 
Here ${\bf H}(\tau)$ is strongly holomorphic, but this is very exceptional. In fact, up to minor variations it is essentially the {\it only} 
known non-trivial example of a strongly holomorphic pure mock modular form, which is why we will introduce mixed mock modular forms below.
As mentioned earlier, this mock modular form has appeared in the physics literature in the context of topologically twisted supersymmetric
Yang-Mills theory on $\mathbb{CP}{^2}$ \cite{Vafa:1994tf}.

\smallskip
\ndt {\bf Example 3. }
This example is taken from \cite{Zagier:2007}.  Set
  \be\label{defF62}  F_2^{(6)}(\tau) \= - \sum_{r>s>0} \chi_{12}(r^2-s^2) \, s \, q^{rs/6} \= q + 2 q^2 +  q^3 + 2 q^4 - q^5 + \cdots \ee
with $\chi_{12}$ as in \eqref{defchi12}, and let $E_2(\t)$ be the quasimodular Eisenstein series of weight~2 on $SL(2,\Z)$ as defined
in \S\ref{modularbasic}.  Then the function 
   \be\label{defh6}  h^{(6)}(\tau) \= \frac{12 F_2^{(6)}(\t) - E_2(\t)}{\eta(\tau)} 
   \= q^{-1/24} \, \bigl(-1 + 35 q + 130q^2 + 273q^3 + 595q^4 + \cdots \bigr) \ee
is a weakly holomorphic mock modular form of weight $3/2$ on $SL(2,\Z)$ with shadow proportional to $\eta(\tau)$. More generally, if we define 
  \be\label{defF6k}  F_k^{(6)}(\t) \=  - \sum_{r>s>0} \chi_{12}(r^2-s^2) \, s^{k-1} q^{rs/6}  \qquad (\text{$k>2$, $k$ even})\,,  \ee
then for all $\nu \ge 0$ we have 
  \be\label{hetaRC6} \frac{24^{\nu}}{\left({2\nu \atop \nu }\right)}\,[h^{(6)},\eta]_\nu 
   \=12 F_k^{(6)} \,-\, E_k \+ {\rm cusp \; form \; of \; weight \;} k \ , \quad k = 2\nu +2 \ , \ee
where $ [h^{(6)}, \eta ]_{\nu}$ denotes the $\nu$th Rankin-Cohen bracket of the mock modular form $h^{(6)}$ and the modular 
form $\eta$ in weights $(3/2,1/2)$. This statement, and the similar statements for other mock modular forms which come later, 
are proved by the method of holomorphic projection, which we do not explain here, and are intimately connected with 
the mock Jacobi forms introduced in the next subsection. 
That connection will also explain the superscript ``6" in~\eqref{defF62}--\eqref{hetaRC6}.

\smallskip
\ndt {\bf Example 4.}  Our last example is very similar to the preceding one. Set 
\be\label{defF22}  F_2^{(2)}(\t) \=  \sum_{r>s>0 \atop r-s \, \rm odd} (-1)^r\,s\,q^{rs/2} \= q + q^2 -  q^3 +  q^4 - q^5 + \cdots \, . \ee
Then the function
  \be\label{defh2}  h^{(2)}(\t) \= \frac{24 F_2^{(2)}(\t) - E_2(\t)}{\eta(\t)^3}  
    \= q^{-1/8} \, \bigl(-1 + 45 q + 231q^2 + 770q^3 + 2277q^4  + \cdots \bigr)\ee
is a weakly holomorphic mock modular form of weight $1/2$ on $SL(2,\Z)$ with shadow proportional to $\eta(\t)^3$ and, as in Example~3, if we set  
  \be\label{defF2k}   F_k^{(2)}(\t) \=  - \sum_{r>s>0 \atop r-s \, \rm odd} (-1)^{r} \, s^{k-1} \, q^{rs/2}  \qquad (\text{$k>2$, $k$ even})\,,\ee
then for all $\nu \ge 0$ we have 
  \be\label{hetaRC2} \frac{8^\nu}{\left({2\nu \atop \nu }\right)}\,[h^{(2)},\eta^3]_{\nu} 
   \= 24 F^{(2)}_k \,-\, E_k \+ {\rm cusp \; form \; of \; weight \;} k \ , \quad k = 2\nu +2 \ , \ee
where $[h^{(2)}, \eta^3]_\nu$ denotes the Rankin-Cohen bracket in weights $(1/2,3/2)$.

In fact, the mock modular form $h^{(2)}$, with the completely different definition 
\be\label{oogfor}
h^{(2)}(\t) \= \frac{\vth_2(\t)^4 - \vth_4(\t)^4}{\eta(\t)^3} \, - \, \frac{24}{\vth_3(\t)} \, 
\sum_{n\in\IZ} \frac{q^{n^2/2-1/8}}{1+q^{n-1/2}} \, , 
\ee
arose from the works of Eguchi-Taormina \cite{Eguchi:1988af} and 
Eguchi--Ooguri--Taormina--Yang \cite{EOTY:1989} in the late 1980's 
in connection with characters of the $\CN=4$ superconformal algebra in two dimensions and the elliptic genus of $K3$ surfaces, and
appears explicitly in the work of Ooguri \cite{Ooguri:1989fd} (it is $q^{-1/8}(-1+F(\t)/2)$ in Ooguri's notation) and
Wendland~\cite{Wendland:2000}. It has recently aroused considerable interest because of the ``Mathieu moonshine" discovery by Eguchi, 
Ooguri and Tachikawa (see~\cite{Eguchi:2010ej, Cheng:2010pq}) that the coefficients 45, 231, 770, \dots\ in~\eqref{defh2} are 
the dimensions of certain representations of the Mathieu group~$M_{24}\,$.  
The equality of the right-hand sides of \eqref{defh2} and \eqref{oogfor}, which is not a priori 
obvious, follows because both expressions define mock modular forms whose shadow is the 
same multiple of $\eta(\t)^{3}$, and the first few Fourier coefficients agree. 


\subsection{Mock Jacobi forms \label{MockJacobi}}

%
%
By a (pure) {\it mock Jacobi form}\footnote{We mention that there are related constructions and definitions in the 
literature. The weak Maass-Jacobi forms of~\cite{BringRicht}
include the completions of our mock Jacobi forms, but in general are non-holomorphic in $z$ as 
well as in $\tau$. See also~\cite{Choie} and~\cite{BringmannRaum}. The functions discussed in this paper
will be holomorphic in~$z$.} 
(resp.~{\it weak mock Jacobi form}) of weight $k$ and index $m$  we will mean a 
holomorphic function $\v$ on $\IH \times \C$ that satisfies the elliptic transformation property~\eqref{elliptic},
and hence has a Fourier expansion as in~\eqref{fourierjacobi} with the periodicity property~\eqref{cnrprop}
and a theta expansion as in~\eqref{jacobi-theta}, and that also satisfies the same cusp conditions~\eqref{holjacobi}
(resp.~\eqref{weakjacobi}) as in the classical case, but in which the modularity property with respect to the
action of $SL(2,\Z)$ on $\IH\times\Z$ is weakened: the coefficients $h_{\ell}(\tau)$ in \eqref{jacobi-theta}
are now mock modular forms rather than modular forms of weight $k-\frac12$, and the modularity property of
$\v$ is that the {\it completed} function 
  \be\label{defphihat} \wh \v(\t,z) \;=\!\sum_{\ell\inn\Z/2m\Z} \wh h_\ell(\t) \, \vth_{m,\ell}(\t,z)\,,  \ee
rather than $\v$ itself, transforms according to \eqref{modtransform}.  If $g_\ell$ denotes the shadow of $h_l$, then we have
  $$ \wh\v(\t,z) \;= \! \v(\t,z)\+\sum_{\ell\inn\Z/2m\Z} g^*_\ell(\t)\,\vth_{m,\ell}(\t,z)$$
with $g^*_\ell$ as in \eqref{defstar} and hence, by \eqref{starinv},   
 \be \psi(\t,z)\;:=\; \t_2^{k-1/2}\,\frac{\pa}{\pa\overline\t}\wh\v(\t,z)\;\doteq\;\sum_{\ell\inn\Z/2m\Z} \overline{g_\ell(\t)}\,\vth_{m,\ell}(\t,z)\,. \ee
(Here $\doteq$ indicates an omitted constant.)  The function $\psi(\t,z)$ is holomorphic in $z$, satisfies the same elliptic
transformation property~\eqref{elliptic} as $\v$ does (because each $\vth_{m,\ell}$ satisfies this), satisfies the heat equation
$\,\bigl(8\pi im\,\frac{\pa}{\pa\t}-\frac{\pa^2}{\pa z^2}\bigr)\psi=0$  (again, because each $\vth_{m,\ell}$ does), and, by virtue
of the modular invariance property of $\wh\v(\t,z)$, also satisfies the transformation property
 \be\label{modtransformpsi}  \psi(\frac{a\t+b}{c\t+d}, \frac{z}{c\t+d}) \=   |c\t+d|\,(c\bar\t+d)^{2-k}\,e^{\frac{2\pi i m c z^2}{c\t +d}}
  \, \psi(\t,z)  \qquad \forall\;\left(\begin{array}{cc} a&b\\ c&d \end{array} \right) \inn SL(2;\Z) \ee
with respect to the action of the modular group.  These properties say precisely that $\psi$ is a {\it skew-holomorphic Jacobi form} of weight~$3-k$
and index~$m$ in the sense of Skoruppa \cite{Skoruppa1,Skoruppa2}, 
and the above discussion can be summarized by saying that we have an exact sequence
  \be\label{exseqJmf} 0 \longrightarrow J_{k,m}^{\text{weak}} \longrightarrow \IJ_{k|0,m}^{\text{weak}}
          \stackrel{\CS}{\longrightarrow} J_{3-k,m}^{\textrm{skew}} \ee
(and similarly with the word ``weak" omitted), where $\IJ_{k|0,m}^{\textrm{weak}}$ (resp.~$\IJ_{k|0,m}$) denotes the space of weak 
(resp.~strong) pure mock Jacobi forms and the ``shadow map" $\CS$ 
sends\footnote{This shadow map was introduced in the context of weak Maass-Jacobi forms in~\cite{BringRicht}.}
 $\v$ to $\psi$.

It turns out that most of the classical examples of mock theta functions occur as the components of a
vector-valued mock modular form which gives the coefficients in the theta series expansion of a mock Jacobi
form.  We illustrate this for the four examples introduced in the previous subsection.

\smallskip
\ndt {\bf Example 1.}   The function $\CF_{7,2}(\t)$ in the first example of \S\ref{basicdefs} is actually one of three 
``order~7 mock theta functions" $\{\CF_{7,j}\}_{j=1,2,3}$ defined by Ramanujan, each given by a 
$q$-hypergeometric formula like \eqref{Ramaeg}, each having a shadow  $\Theta_{7,j}$ like in \eqref{nequiv2}  
(but now with the summation over $n \equiv j$ rather than $n \equiv 2$ modulo~7), and each  satisfying an indefinite 
theta series identity like \eqref{Hickerson}.  We extend $\{\CF_{7,j}\}$ to all $j$ by defining it to be an odd periodic
function of $j$ of period $7$, so that the shadow of $\CF_{7,j}$ equals $\Theta_{7,j}$ for all~$j\in\Z$. Then the function 
  \be \CF_{42}(\t,z) \= \sum_{\ell\;(\mod\;84)} \chi_{12}(\ell) \, \CF_{7,\ell} (\t) \, \vth_{42,\ell} (\t,z) \ \ee
belongs to $\IJ_{1,42}^{\text{weak}}$. The Taylor coefficients $\xi_{\nu}$ as defined in equation \eqref{defxinodd} are 
proportional to $\sum_{j=1}^3 \bigl[\CF_{7,j},\Theta_{7,j}\bigr]_{\nu}$ and have the property
that their completions  $\widehat \xi_{\nu} = \sum_{j=1}^3 \bigl[\widehat \CF_{7,j},\Theta_{7,j}\bigr]_{\nu}$ 
transform like modular forms of weight $2 \nu +2$ on the full modular group $SL(2,\Z)$.

\smallskip
\ndt {\bf Example 2.}  Set $\CH_0(\t)=\sum_{n=0}^\infty H(4n)q^n$ and 
$\CH_1(\t)=\sum_{n=1}^\infty H(4n-1)q^{n-\frac14}$ so that $\CH_{0}(\t)+\CH_{1}(t) = {\bf H}(\t/4)$. 
Then the function
  \be\label{classH}  \CH(\t,z) \= \CH_0(\t)\vth_{1,0}(\t,z)\+\CH_1(\t)\vth_{1,1}(\t,z) 
   \= \sum_{\substack{n,\,r\in\IZ \\ 4 n - r^2 \ge 0}}  H(4n-r^2) \, q^n\,y^r\,,  \ee 
is a mock Jacobi form of weight $2$ and index $1$ with shadow $\overline{\vth_{1,0}(\t,0)}\,\vth_{1,0}(\t,z)\+
\overline{\vth_{1,1}(\t,0)}\,\vth_{1,1}(\t,z)$.  The $\nu$th Taylor coefficient $\xi_\nu$ of $\CH$ is given by
  \be\label{thetaHRC} \frac{4^\nu}{\binom{2\nu}\nu}\,\sum_{j=0}^1 [\vth_{1,j},\CH_j]_\nu
  \= \delta_{k,2}\,E_{k} \,-\,F_k^{(1)}\+ \text{(cusp form of weight $k$ on $SL(2,\Z)$)}\,, \ee
where $k=2\nu+2$ and
  \bea\label{defF1k} F_k^{(1)}(\t) \;:=\; \sum_{n>0}\Bigl(\sum_{d|n} \min\bigl(d,\frac nd\bigr)^{k-1}\Bigr) q^{n}\qquad\quad \text{($k$ even, $k\ge2$)}\,. \eea 
In fact the cusp form appearing in \eqref{thetaHRC} is a very important one, namely (up to a factor $-2$) the sum of the
normalized Hecke eigenforms in $S_k(SL(2,\Z))$, and equation~\eqref{thetaHRC} is equivalent to the famous formula of Eichler
and Selberg expressing the traces of Hecke operators on~$S_k(SL(2,\Z))$ in terms of class numbers of imaginary quadratic fields.


\smallskip
\ndt {\bf Example 3.}  Write the function $h^{(6)}$ defined in~\eqref{defh6} as 
\be \label{defC6}
h^{(6)} (\t) \= \sum_{\substack{\DD \ge - 1 \\ \DD \equiv -1 \; (\mod \; 24)}} C^{(6)}(\DD) \, q^{\DD/24}
\ee
with the first few coefficients $C^{(6)}(\DD)$ given by
$$
\begin{tabular}{|c|ccccccccccc|c}  \hline     $\DD$ & $-1$& $23$& $47$ & $71$ &$95$& $119$& $143$ & 167 & 191 & 215 & 239 \\
       \hline  $C^{(6)}(\DD)$ &$-1$ & 35 & 130 & 273 & 595 & 1001 & 1885 & 2925 & 4886 & 7410 & 11466 \\
\hline         \end{tabular}  \quad .
$$
Then the function 
\be \label{defCF6} \CF_6(\t,z) \= \sum_{\substack{n,\,r\in\IZ \\ 24n-r^2 \ge -1}} \chi_{12}(r)\,C^{(6)}(24n-r^2)\;q^n\,y^r \ee
is a mock Jacobi form of index~6 (explaining the notation $h^{(6)}$). Note that,  
surprisingly, this is even {\it simpler} than the expansion of the index~1 mock Jacobi form just discussed,  
because its twelve Fourier coefficients $h_{\ell}$ are all multiples of just one of them, while the two Fourier 
coefficients $h_{\ell}$ of $\CH(\t,z)$ were not proportional. (This is related to the fact that the shadow
$\eta(\t)$ of $h^{(6)}(\t)$ is a modular form on the full modular group, 
while the shadow $\theta(\t)$ of $\CH(\t)$ is a modular form on a congruence subgroup.)
Specifically, we have $h_{\ell}(\t) = \chi_{12}(\ell) h^{(6)}(\t)$ for all $\ell\in\Z/12\Z$,
where $\chi_{12}$ is the character defined in~\eqref{defchi12}, so that $\CF_6$ has the factorization 
\be\label{F6splits} 
 \CF_6(\t,z) \= h^{(6)}(\t)\,\sum_{\ell\;(\mod 12)} \chi_{12}(\ell)\,\vth_{6,\ell}(\t,z) \= \eta(\t)\,h^{(6)}(\t)\,\frac{\vth_{1}(\t,4z)}{\vth_{1}(\t,2z)} \, .  
\ee
Combining this with \eqref{xinhrel} and noting that $\vth^{0}_{6,1}-\vth^{0}_{6,5}=\vth^{0}_{6,11}-\vth^{0}_{6,7}=\eta$,
we see that the functions described in~\eqref{hetaRC6} are proportional to the Taylor coefficients $\xi_{\nu}(\t)$ of $\CF_6(\t,z)$.

\smallskip
\ndt {\bf Example 4.} The fourth example is very similar. Write the mock modular form~\eqref{defh2} as 
\be
h^{(2)} (\t) \=  \sum_{\substack{\DD \ge - 1  \\ \DD \equiv -1 \; (\mod \; 8)}} C^{(2)}(\DD) \, q^{\DD/8} 
\ee
with initial coefficients given by
$$
\begin{tabular}{|c|cccccccccc|cc}   \hline    $\DD$ & $-1$& 7& 15 & 23 & 31& 39 & 47 & 55 & 63 & 71  \\
       \hline  $C^{(2)}(\DD)$ &$-1$ & 45 & 231 & 770 & 2277 & 5796 & 13915 & 30843 & 65550 & 132825 \\
 \hline        \end{tabular}  \quad . 
$$
Then the function 
\bea\label{defCF2}
\CF_2(\t,z) & \= & \sum_{\substack{n,\,r\in\IZ \\ 8 n - r^2 \ge -1}} \chi_4(r) \, C^{(2)}(8n-r^2)\,q^n \, y^r 
\= h^{(2)}(\t)\,\bigl(\vth_{2,1}(\t,z) - \vth_{2,3}(\t,z)\bigr) \cr
& \= &  h^{(2)}(\t)\,\vth_ 1(\t,2z) =  \eta(\t)^{3} \, h^{(2)}(\t) \, C(\t,z)
\eea
(where $\chi_4(r) = \pm 1$ for $r \equiv \pm 1 \, (\mod \; 4)$ and $\chi_4(r) =0$ for $r$ even) 
is a mock Jacobi form of index~2. (Here $C$ is the Jacobi form of weight $-1$ and index 2 
introduced in \S\ref{MJF1}.)  
As in Example 3, the functions given in~\eqref{hetaRC2} are proportional to the Taylor coefficients
$\xi_{\nu}$ of $\CF_2$, because $\vth_{2,1}^ 1 = - \vth_{2,3}^ 1 = \eta^3$, where $\vth_{2,\ell}^ 1$ is defined by~\eqref{defth1}.

\subsection{Mock modular forms: the general case \label{MMMF}}

If we replace the condition ``exponential growth'' in $1.$ of~\ref{basicdefs} by ``polynomial growth,"  we get
the class of {\it strongly holomorphic mock modular forms}, which we can denote $\IM_{k|0}$, and an exact 
sequence $0 \to M_{k} \to \IM_{k|0} \to \bar {M_{2-k}} $. This is not very useful, however, because there are 
almost no examples of  ``pure'' mock modular forms that are strongly holomorphic, essentially the only ones being 
the function $\CH$ of Example~2 of \S\ref{basicdefs} and its variants. It becomes useful if we generalize to {\it mixed 
mock modular forms} of mixed weight $k|\ell$. (Here $k$ and $\ell$ can be integers or half-integers.) These are
holomorphic functions $h(\tau)$, with polynomial growth at the cusps, that have completions $\widehat h$ of the
form $\wh h = h + \sum_j f_j\,g^*_j$ with $f_j \in M_{\ell}$, $g_j \in M_{2-k+\ell}$ 
that transform like modular
forms of weight $k$. The space $\IM_{k|\ell}$ of such forms thus fits into an exact sequence 
  \be\label{exseqmmf2} 0 \longrightarrow M_{k} \longrightarrow \IM_{k|\ell}  \stackrel{\CS}{\longrightarrow} 
  M_{\ell}    \otimes    \bar {M_{2-k+\ell}} \,, \ee
where the shadow map $\CS$ now sends $h$ to $\sum_{j} f_{j}\,\bar g_{j}$. If $\ell = 0$ this reduces to the 
previous definition, since $M_0=\C$, but with the more general notion of mock modular forms 
there are now plenty of strongly holomorphic examples, and, just as for ordinary modular forms, they have 
much nicer properties (notably, polynomial growth of their Fourier coefficients) than the weakly holomorphic ones. 
If the shadow of a mixed mock modular form $h \in \IM_{k|\ell}$ happens to contain only one term 
$f(\t) \overline{g(\t)}$, and if $f(\t)$ has no zeros in the upper half-plane, then $f^{-1} \, h $ is a weakly holomorphic 
pure mock modular form of weight $k-\ell$ (and in fact, all weakly holomorphic pure mock modular forms arise in this way). 
For functions in the larger space $\IM_{k|\ell}^{\,!}$ of weakly holomorphic mixed mock modular forms, defined in the 
obvious way, this always happens, so $\IM_{k|\ell}^{\,!}$  simply coincides with $M_{\ell}\otimes\IM_{k-\ell}^{\,!}$ and does
not give anything new, but the smaller space $\IM_{k|\ell}$ does not have such a decomposition and is more interesting.
Moreover, there are many examples.  In fact, apart from Ramanujan's original examples of mock theta functions, which 
were defined as $q$-hypergeometric series and were weakly holomorphic pure mock modular forms of weight 1/2, most of
the examples of mock modular forms occurring ``in nature,'' such as Appell-Lerch sums or indefinite theta series,
are strongly holomorphic mixed modular forms.

We can also define ``even more mixed" mock modular forms by replacing $\IM_{k|\ell}$ by the space $\IM_{k}=\sum_{\ell}\IM_{k|\ell}$,
i.e., by allowing functions whose shadow is a finite sum of products $f_j(\t)\overline{g_j(\t)}$ with the $f_j$ of
varying weights~$\ell_j$ and $g_j$ of weight $2-k+\ell_j$. This space fits into an exact sequence 
  \be\label{exseqmmf3} 0 \longrightarrow M_{k} \longrightarrow \IM_{k}  \stackrel{\CS}{\longrightarrow} 
 \bigoplus_\ell M_\ell \otimes \bar {M_{2-k+\ell}} \;. \ee
In fact, the functions in this space are the most natural objects, and from now on we will use ``mock modular forms'' to
refer to this more general class, including the adjectives ``mixed'' and ``pure'' only when needed for emphasis. To justify this
claim, we consider the space $\mathfrak M_{k}$ of all {\it real-analytic modular forms of weight~$k$}, i.e.,  
real-analytic functions that transform like modular forms of weight~$k$ and have polynomial 
growth at all cusps. More generally, one can consider the space $\mathfrak M_{k,\ell}$ of real-analytic functions 
of polynomial growth that transform according to $F(\g \t) = (c\t+d)^{k} (c \bar \tau + d)^{\ell} F(\t)$ for all 
matrices $\g = \sm abcd$ belonging to some subgroup of finite index of $SL(2,\IZ)$. Since the function $\t_2=\Im (\t)$
transforms by $(\g \t)_2 = \t_2/|c\t+d|^2$, the spaces  $\mathfrak M_{k,\ell}$ and $\t_2^{r} \, \mathfrak M_{k+r,\ell+r}$ 
coincide for all~$r$. (Here, $k$, $\ell$ and $r$ can each be integral or half-integral.) Obviously the space $\mathfrak M_{k,\ell}$
contains the space $M_{k} \otimes \overline{M_{\ell}}$, so by the comment just made, it also contains the direct sum 
$\bigoplus_{r \in \IZ} \t_2^{r} \,  M_{k+r} \otimes \overline{M_{\ell+r}}$. Elements of this subspace will be called {\em decomposable}. 
Now, if a function $F \in \mathfrak M_{k} = \mathfrak M_{k,0}$ has the property that its $\bar \t$-derivative (which automatically belongs to 
$\mathfrak M_{k,2}\,$) is decomposable, say as $\sum_{j} {\t_2}^{r_{j}} \, f_j \,\overline{g_j}$ (where the weights of $f_{j}$ and $g_{j}$ 
are necessarily equal to $k+r_{j}$ and $2+r_{j}$, respectively), then $F$ is the sum of a holomorphic function 
$h$ and a multiple of $\sum_{j} f_{j} g^*_{j}$ and the function $h$ is a (mixed) mock modular form in the sense just explained. 
Conversely, a holomorphic function $h$ is a (mixed) mock modular form of weight $k$ if and only if there exist numbers $r_{j}$ and 
modular forms $f_{j} \in M_{k+r_{j}}$ and $g_{j} \in M_{2 + r_{j}}$ such that the sum $\wh h:= h+ \sum_{j} f_{j} g^*_{j}$ 
belongs to~$\mathfrak M_{k}$. We can thus summarize the entire discussion by the following very simple definition: 
\begin{quote}
{\it A mock modular form is the holomorphic part of a real-analytic modular form whose $\bar \t$-derivative is decomposable.}  
\end{quote}

As examples of this more general class we mention the functions $F_2^{(6)}$ and $F_2^{(2)}$ defined in equations~\eqref{defF62}
and~\eqref{defF22}, which are both (strong) mixed mock modular forms of total weight~2 and trivial character on the full modular group $SL(2,\Z)$,
but with different mixed weights: $12F_2^{(6)}$ is the sum of the functions $E_2(\t)$ and $\eta(\t)\,h^{(6)}(\t)$ of weights $2|0$ and
$\frac32|\frac12$, while $24F_2^{(2)}$ is the sum of the functions $E_2(\t)$ and $\eta(\t)^3h^{(2)}(\t)$ of weights $2|0$ and $\frac12|\frac32$.

We remark that in the pure case we have $\t_2^{k}\cdot\p_{\bar \t}F \,\in\,\overline{M_{2-k}}=\textrm{Ker}\bigl(\p_\t\!:
\mathfrak M_{0,2-k} \to \mathfrak M_{2,2-k} \bigr)$,  so that the 
subspace $\IM_{k,0}$ of $\IM_{k}$ can be characterized by a {\it single} second-order differential equation $\Delta_k \wh h=0$,
where $\D_k=\t_2^{2-k}  \p_{\t}  \t_2^{k}  \p_{\bar \t}$ is the Laplacian in weight~$k$. For this reason, (weak) pure 
mock modular forms can be, and often are, viewed as the holomorphic parts of (weak) harmonic Maass forms, i.e., of functions 
in $\mathfrak M_{k}^{\,!}$ annihilated by~$\D_{k} \,$.  But in the general case, there is no connection between mock modular 
forms or their completions and any second-order operators.  Only the Cauchy-Riemann operator~$\p/\p\bar \t$ is needed,
and the new definition of mock modular forms as given above is both more flexible and more readily generalizable than the
more familiar definition of pure mock modular forms in terms of harmonic Maass forms. 

Finally, we can define mixed mock Jacobi forms in exactly the same way as we defined pure ones (i.e., they have
theta expansions in which the coefficients are mixed mock modular forms, with completions that transform like ordinary Jacobi forms).
We will see in the next section that such objects arise very naturally as the ``finite parts" of meromorphic Jacobi forms.

\subsection{Superconformal characters and mock modular forms\label{FinitePart}} 

As already mentioned, mock modular forms first made their appearance in physics some ten years before the work of Zwegers, 
in connection with the characters of superconformal algebas~\cite{Eguchi:1988af, EOTY:1989}.  
The representations of the $\mathcal{N}=4$ superconformal algeba with central charge~$c=6k$ are labeled by two quantum numbers, the
energy~$h$ and spin~$l$ of the lowest weight state~$|\Omega\rangle$, which satisfy $h>k/4$, $l\in\{1/2,\,1,\dots,k/2\}$  for the non-BPS 
(massive) case and $h=k/4$, $l\in\{0,1/2,\dots,k/2\}$ for the BPS (massless) case. The corresponding character, defined by
 $$ \text{ch}^{\tilde R}_{k,h,l}(\t,z) \= \text{Tr}\bigl(q^{L_0-c/24}\,e^{4\pi izJ_0}\bigr)
     \qquad\bigl(\,L_0|\Omega\rangle=h|\Omega\rangle,\quad J_0|\Omega\rangle=l|\Omega\rangle\bigr)\,, $$
turns out to be, up to a power of $q$, a Jacobi form in the non-BPS case but a ``mock" object in the BPS case.  In 
particular, for $k=1$ the non-BPS characters are given by
  \be\label{nBPSchar}\text{ch}^{\tilde R}_{1,h,\frac12}(\t,z) \= -\,q^{h-\frac38}\,\frac{\vth_1(\t,z)^2}{\eta(\t)^3}   
     \= -\,q^{h-\frac38}\,\eta(\t)^3\,\v_{-2,1}(\t,z)   \qquad\bigl(h>\frac14\bigr) \ee
and the BPS characters by
  \be\label{BPSchar} \text{ch}^{\tilde R}_{1,\frac14,0}(\t,z) \,=\, -\,\frac{\vth_1(\t,z)^2}{\eta(\t)^3}\,\mu(\t,z)\,,  
    \;\quad  \text{ch}^{\tilde R}_{1,\frac14,\frac12}(z,\t)+2\text{ch}^{\tilde R}_{1,\frac14,0}(\t,z) 
   \,=\, -\, q^{-\frac18}\,\frac{\vth_1(\t,z)^2}{\eta(\t)^3}\,, 
\ee
where $\mu(\t,z)$ is the function defined by
  \be\label{mudef}  \mu(\t,z)\=\frac{e^{\pi iz}}{\vth_1(z,\t)}\;\sum_{n\in\Z}\frac{(-1)^nq^{\frac{n^2+n}2}e^{2\pi inz}}{1\,-\,q^n\,e^{2\pi iz}}
   \qquad(q=e^{2\pi i\t},\;y=e^{2\pi iz})\,, \ee
an ``Appell-Lerch sum" of a type that also played a central role in Zwegers's work\footnote{More precisely, the function $\mu(\t,z)$
is the specialization to $u=v=z$ of the 2-variable function $\mu(u,v;\t)$ considered by Zwegers.} and that we will discuss in the next 
section.  The ``mock" aspect is the fact that $\mu(\t,z)$ must be completed by an appropriate non-holomorphic correction 
term in order to transform like a Jacobi form.

The mock modular form $h^{(2)}$ arises in connection with the decomposition of the elliptic genus of a $K3$ surface as a linear
combination of these characters.  This elliptic genus is, as discussed in \ref{BHandJF}, equal to the weak Jacobi form $2B=2\v_{0,1}$.
It can be represented as a specialization of the partition  function (= generating function of states) of superstrings compactified
on a $K3$ surface~\cite{EOTY:1989} and as such can be decomposed into pieces corresponding to the representations of the $N=4$ superconformal 
algebra with central charge~$6k=6$.  The decomposition has the form 
\cite{Ooguri:1989fd, Wendland:2000, EST:2007}
  \be\label{SCA} 2\v_{0,1}(\t,z)\=20\,\text{ch}^{\tilde R}_{1,\frac14,0}(z,\t)
    \+A_0\,\text{ch}^{\tilde R}_{1,\frac14,\frac12}(z,\t)
     \,+\, \sum_{n=1}^\infty A_n\, \text{ch}^{\tilde R}_{1,n+\frac14,\frac12}(z,\t) \,, \ee
where $A_0=-2$, $A_1=90$, $A_2=462$, \dots\ are the Fourier coefficients $2C^{(2)}(8n-1)$ of the mock modular form $2h^{(2)}(\t)$.
Inserting~\eqref{nBPSchar} and~\eqref{BPSchar}, we can rewrite this as
 \be \label{K3decomp} \eta(\t)^{-3}\,\frac{\v_{0,1}(\t,z)}{\v_{-2,1}(\t,z)} \= -\,12\,\mu(\t,z) \,-\, h^{(2)}(\t)\,. \ee
In the next section we will see how to interpret this equation as an example of a certain canonical 
decomposition of meromorphic Jacobi forms \eqref{CoverA}.  

\section{From meromorphic Jacobi forms to mock modular forms \label{MockfromJacobi}}

In this section we consider Jacobi forms $\v(\t,z)$ that are meromorphic with respect to the variable~$z$.  It was discovered
by Zwegers~\cite{Zwegers:2002} that such forms, assuming that their poles occur only at points $z\in\IQ\t+\IQ\,$ (i.e., at torsion
points on the elliptic curve $\C/\Z\t+\Z\,$),  have a modified theta expansion related to mock modular forms.  
Our treatment is based on his, but the presentation is quite different and the results go further in one key respect. 
We show that $\v$ decomposes canonically into two pieces, one constructed directly from its poles and consisting of a finite
linear combination of Appell-Lerch sums with modular forms as coefficients, and the other  being a mock Jacobi form in the sense
introduced in the preceding section.  Each piece separately
transforms like a Jacobi form with respect to elliptic transformations.  Neither piece separately transforms like a Jacobi form with
respect to modular transformations, but each can be completed by the addition of an explicit and elementary non-holomorphic correction
term so that it does transform correctly with respect to the modular group.

In \S\ref{FinitePart} we explain how to modify the Fourier coefficients $h_\ell$ defined in \eqref{jacobi-Fourier}
when $\v$ has poles, and use these to define a ``finite part" of~$\v$ by the theta decomposition~\eqref{jacobi-theta}. 
In \S\ref{PolarPart} we define (in the case when $\v$ has simple poles only) a ``polar part" of $\v$ as a finite linear 
combination of standard Appell-Lerch sums times modular forms arising as the residues of~$\v$ at its poles, and show that
$\v$ decomposes as the sum of its finite part and its polar part.  Subsection \ref{MockDecomposition} gives the proof
that the finite part of~$\v$ is a mock Jacobi form and a description of the non-holomorphic correction term needed
to make it transform like a Jacobi form.  This subsection also contains a summary in tabular form of the various functions that 
have been introduced and the relations between them.  In \S\ref{double} we describe the modifications needed in the case of double poles
(the case actually needed in this paper) and in \S\ref{MJExamples}
we present a few examples to illustrate the theory.  Among the ``mock" parts of these are two of the most interesting mock
Jacobi forms from \S\ref{Mock} (the one related to class numbers of imaginary quadratic fields and the one conjecturally
related to representations of the Mathieu group~$M_{24}$).  Many other examples will be given in \S\ref{flyJac}.

Throughout the following sections, we use the convenient notation $\E(x) := e^{2\pi i x}$.

\subsection{The Fourier coefficients of a meromorphic Jacobi form\label{FinitePart}} 
As indicated above, the main problem we face is to find an analogue of the
theta decomposition~\eqref{jacobi-theta} of holomorphic Jacobi forms in the meromorphic case.  We will approach this
problem from two sides: computing the Fourier coefficients of $\v(\t,z)$ with respect to~$z$, and computing the 
contribution from the poles.  In this subsection we treat the first of these questions.

Consider a meromorphic Jacobi form $\v(\t, z)$ of weight $k$ and index $m$. We assume
that $\v(\t, z)$ for each $\t \in \IH$ is a meromorphic function of $z$ which has poles only at points $z = \a \t + \b$ wit\
$\a$ and $\b$ rational. In the case when $\v$ was holomorphic, we could write its Fourier expansion in the form \eqref{jacobi-Fourier}.
By Cauchy's theorem, the coefficient $h_\ell(\t)$ in that expansion could also be given by the integral formula
 \be\label{hAl} h^{(z_0)}_{\ell}(\t) \= q^{-\ell^2/ 4m}  \int_{z_0}^{z_0+1} \v(\t,z)\;\E(-\ell z)\;dz\,, \ee
where $z_0$ is an arbitrary point of $\C$.  From the holomorphy and transformation properties of~$\v$ it follows that the value
of this integral is independent of the choice of $z_0$ and of the path of integration and depends only on $\ell$ modulo~$2m$
(implying that we have the theta expansion \eqref{jacobi-theta}) and that each $h_\ell$ is a modular form of weight~$k-\frac12$.
Here each of these properties fails: the integral \eqref{hAl} is not independent of the path of integration (it jumps when
the path crosses a pole); it is not independent of the choice of the initial point $z_0$; it is not periodic in $\ell$
(changing $\ell$ by $2m$ corresponds to changing $z_0$ by $\tau$); it is not modular; and of course the expansion \eqref{jacobi-theta}
cannot possibly hold since the right-hand-side has no poles in $z$. 

To take care of the first of these problems, we specify the path of integration in \eqref{hAl} as the horizontal line
from $z_0$ to $z_0+1$.  If there are poles of $\v(\t,z)$ along this line, this does not make sense; in that case, we define
the value of the integral as the average of the integral over a path deformed to pass just above the poles and the
integral over a path just below them.  (We do not allow the initial point~$z_0$ to be a pole of $\v$, so this makes sense.)
To take care of the second and third difficulties, the dependence on~$z_0$ and the non-periodicity in~$\ell$, we play one of
these problems off against the other.  From the elliptic transformation property~\eqref{elliptic} we find that
\ben &h_\ell^{(z_0+\t)}(\t) &\= q^{-\ell^2/4m}\,\int_{z_0}^{z_0+1}\v(\t,\,z+\t)\,\E(-\ell\,(z+\t))\;dz \\ 
  & &\=   q^{-(\ell+2m)^2/4m}\,\int_{z_0}^{z_0+1}\v(\t,\,z)\,\E(-(\ell+2m)\,z)\;dz  \= h_{\ell+2m}^{(z_0)}(\t)\,,  \een
i.e., changing $z_0$ by $\t$ corresponds to changing $\ell$ by $2m$, as already mentioned.  It follows that if we
choose $z_0$ to be $-\ell\t/2m$ (or $-\ell\t/2m+B$ for any~$B\in\R$, since it is clear that the value of the integral~\eqref{hAl}
depends only on the height of the path of integration and not on the initial point on this line), then the quantity
\be\label{Defofhl}  h_\ell(\t)\;:=\;h_\ell^{(-\ell\t/2m)}(\t)\,,  \ee
which we will call the $\ell$th {\bf canonical Fourier coefficient} of~$\v$, depends only on the value of $\ell$ (mod $2m$).
This in turn implies that the sum 
  \be\label{phiF}  \v^F(\t,z)\;:=\; \sum_{\ell\inn \Z/2m\Z} h_\ell(\t) \, \vth_{m,\ell}(\t, z)\,, \ee
which we will call the {\bf finite} (or {\bf Fourier}) {\bf part} of $\v$, is well defined.  If $\v$ is holomorphic, then
of course $\v^F=\v$, by virtue of \eqref{jacobi-theta}.
Hence if we define the {\bf polar part}~$\v^{P}$ of~$\v$ to be~$\v-\v^{F}$, then~$\v^{P}$ only depends
on the principal part of~$\v$ at its singularities.  We will give explicit formulas in \S\ref{PolarPart} and \S\ref{double}
in the cases when all poles are single or double.

Note that the definiton of $h_\ell(\t)$ can also be written
\be\label{Defofhl2} h_\ell(\t) \= q^{\ell^2/4m}\,\int_{\R/\Z} \v(\t,z-\ell\t/2m)\;\E(-\ell z)\;dz\,, \ee
with the same convention as above if $\v(\t,z-\ell\t/2m)$ has poles on the real line.

Finally, we remark that none of these definitions have anything to do with the modular transformation
property \eqref{modtransform}, and would also work if our initial function $\v(\t,z)$ satisfied only the elliptic 
transformation property \eqref{elliptic}. (In that case, however, the functions $h_{\ell}$ in general need not 
be modular, and in fact will be mock modular when $\v(\t,z)$ is a meromorphic Jacobi form.)
What we have said so far can be summarized by a commutative diagram 
\bea 
\CE^{\rm Mer}_{\pm,m}  & \stackrel{F}{\longrightarrow} & \CE_{\pm,m}  \cr 
\cup \; \; & & \;  \cup  \qquad , \cr
J^{\rm Mer}_{k,m}  & \stackrel{F}{\longrightarrow} & \IJ_{k,m} 
\eea
where $J^{\rm Mer}_{k,m}$ (resp.~$\CE^{\rm Mer}_{\pm,m}$) denote the subspace of meromorphic
Jacobi (resp.~elliptic) forms having poles only at torsion points $z \in \IQ \t + \IQ$, and the map~$F$
sends $\v \mapsto \v^{\rm F}$.

\subsection{The polar part of $\v$ (case of simple poles) \label{PolarPart}}
We now consider the contribution from the poles. To present the results we first need to
fix notations for the positions and residues of the poles of our meromorphic function~$\v$.  
We assume for now that the poles are all simple.

By assumption, $\v(\t,z)$ has poles only at points of the form $z=z_s=\a\t+\b$ for $s=(\a,\b)$ belonging to some subset $S$ 
of~$\IQ^2$.  The double periodicity property~\eqref{elliptic} implies that $S$ is invariant under translation by~$\Z^2$, and
of course $S/\Z^2$ must be finite.  The modular transformation property~\eqref{modtransform} of $\v$ implies that  $S$ is $SL(2,\Z)$-invariant.
For each $s=(\a,\b) \in S$, we set
  \be\label{defDs} D_s(\t) \= 2\pi i\;\E(m\a z_s)\, {\rm Res}_{z=z_s}\bigl(\v(\t,z)\bigr) \qquad  (s=(\a,\b)\in S,\;z_s=\a\t+\b)\;.  \ee
The functions $D_{s}(\t)$ are holomorphic modular forms of weight $k-1$ and some level, and only finitely many of them are distinct. More precisely, they satisfy
  \bea\label{Dell} \bullet && \;\, D_{(\a+\l,\b+\mu)} (\t)\;\=\;\E(m(\mu\a-\l\b))\,D_{(\a,\b)}(\t) \quad {\rm for} \quad(\l,\mu)\inn\Z^2\,, \\
   \label{Dmod}  \bullet && D_s \Bigl(\dfrac{a\t+b}{c\t+d}\Bigr) \= (c\t+d)^{k-1}\,D_{s\g}(\t) \quad {\rm for} \quad 
      \g = \bigl(\begin{array}{cc} a & b \\ c & d \\ \end{array} \bigr) \inn SL(2,\Z) \;, \eea
as one sees from the transformation properties of $\v$.  (The calculation for equation~\eqref{Dmod} is given in \S\ref{double}.)  It is precisely to obtain the simple
transformation equation~\eqref{Dmod} that we included the factor $\E(m\a z_s)$ in~\eqref{defDs}.  Since we are assuming for the time
being that there are no higher-order poles, all of the information about the non-holomorphy of $\v$ is contained in these functions.

The strategy is to define a ``polar part" of $\v$ by taking the poles $z_s$ in some fundamental parallelogram for the action of
the lattice $\Z\t+\Z$ on $\C$ (i.e., for $s=(\a,\b)$ in  the intersection of $S$ with some square box  $[A,A+1)\times[B,B+1)\,$)
and then averaging the residues at these poles over all translations by the lattice.  But we must be careful to do this in just
the right way to get the desired invariance properties.  For each $m\in\IN$ we introduce the {\it averaging operator}
\be \label{defAvm}  \Av m{F(y)} \;:=\; \sum_{\l} \,q^{m\l^2}y^{2m\l}F(q^\l y)  \ee
which sends any function of $y$ (= $\Z$-invariant function of $z$) of polynomial growth in~$y$ to a function of $z$
transforming like an index~$m$ Jacobi form under translations by the full lattice $\Z\t+\Z$.  For example, we have
\be \label{ThetAv} q^{\ell^2/4m}\,\Av m{y^\ell}\= \sum_{\l\inn\Z} \,q^{(\ell+2m\l)^2/4m}\,y^{\ell+2m\l} \= \vth_{m,\ell}(\t, z) \ee
for any $\ell\in\Z\,$.  If $F(y)$ itself is given as the average
\be\label{defAvZ} F(y)\=\AvZ\bigl[f(z)\bigr] \;:=\; \sum_{\mu\in\Z} f(z+\mu)   \qquad(z\in\C,\;\, y=\E(z)) \ee
of a function $f(z)$ in $\C$ (of sufficiently rapid decay at infinity), then we have
\be\label{defAvL} \Av m{F(y)} \=  \AvL\bigl[f(z)\bigr] \;:=\; \sum_{\l,\,\mu\in\Z} e^{2\pi im(\l^2\t+2\l z)}\,f(z+\l\t+\mu)\;. \ee

We want to apply the averaging operator \eqref{defAvm} to the product of the function $D_s(\t)$ with a standard rational
function of $y$ having a simple pole of residue~1 at $y=y_s=\E(z_s)$, but the choice of this rational function is not obvious.
The right choice turns out to be $\FR_{-2m\a}(y)$, where $\FR_c(y)$ for $c\in\R$ is defined by
\be\label{defRc}   \FR_c(y)\=\begin{cases} \quad\dfrac12\,y^c\,\dfrac{y+1}{y-1} & \text{if $c\in\Z$\,,} \\
    \quad\, y^{\ceil c}\,\dfrac1{y-1} &\text{if $c\in\R\ssm\Z$\,.} \end{cases}  \ee
(Here $\ceil c$ denotes the ``ceiling" of $c$, i.e., the smallest integer $\ge c$. The right-hand side can also be written  
more uniformly as $\dfrac12\;\dfrac{y^{\flo c+1}+y^{\ceil c}}{y-1}\,$, where $\flo c=-\ceil{-c}$ denotes the ``floor" of~$c$,
i.e., the largest integer $\le c$.)  This function looks artificial, but is in fact quite natural.  First of all, by
expanding the right-hand side of~\eqref{defRc} in a geometric series  we find
\be\label{Rcseries}  \FR_c(y)\=\begin{cases} -\sum^*_{\ell\ge c} y^\ell &\text{if $|y|<1$,} \\
                            \phantom{-}\sum^*_{\ell\le c} y^\ell &\text{if $|y|>1$,}\end{cases}\ee
where the asterisk on the summation sign means that the term $\ell=c$ is to be counted with multiplicity 1/2 when it occurs 
(which happens only for $c\in\Z$, explaining the case distinction in~\eqref{defRc}).  This formula, which can be seen as the 
prototypical example of wall-crossing, can also be written in terms of $z$ as a Fourier expansion (convergent for all $z\in\C\ssm\R$)
\be\label{Rcseries2} \FR_c(\E(z))\=-\sum_{\ell\in\Z}\frac{\sgn(\ell-c)\+\sgn(z_2)}2\;\E(\ell z) \qquad(y=\E(z),\;\, z_2=\Im(z)\ne0)\,, \ee
without any case distinction.  Secondly, $R_c(y)$ can be expressed in a  natural way as an average:
\begin{proposition}\label{RcAsAv} For $c\in\R$ and $z\in\C\ssm\Z$  we have  
\be  \FR_c(\E(z))\=\AvZ\Bigl[\frac{\E(cz)}{2\pi iz}\Bigr]\;.\ee
\end{proposition}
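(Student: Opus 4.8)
The plan is to prove this identity by comparing Fourier expansions with respect to $z$. Fix a horizontal line $\Im(z)=z_2\ne0$; the case $z_2=0$ will then follow by continuity in $\Re(z)$, the series on the right being read as a symmetric limit $\lim_{N\to\infty}\sum_{|\mu|\le N}$. On such a line both sides of the asserted identity are periodic of period $1$ in $z$ (the left side because $\E(z)$ is; the right side because $\AvZ\bigl[\,\cdot\,\bigr]$ is a sum over the full lattice $\IZ$, so that $z\mapsto z+1$ only reindexes the summation), hence each has a Fourier expansion $\sum_{\ell\in\IZ}a_\ell\,\E(\ell z)$, and it suffices to check that the coefficients $a_\ell$ coincide. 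Those of the left-hand side are given by~\eqref{Rcseries2}, namely $a_\ell=-\tfrac12\bigl(\sgn(\ell-c)+\sgn(z_2)\bigr)$.

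To compute the Fourier coefficients of the right-hand side I would use the standard unfolding argument. Writing $f(z)=\E(cz)/(2\pi iz)$, one integrates $\AvZ\bigl[f(z)\bigr]\,\E(-\ell z)$ over a horizontal segment $[z_0,z_0+1]$ with $\Im(z_0)=z_2$; substituting $w=z+\mu$ in the $\mu$-th term and using $\E(\ell\mu)=1$ (as $\ell,\mu\in\IZ$) turns the sum over lattice translates into a single integral over the whole line $L=\{\Im(z)=z_2\}$, giving $a_\ell=\int_L f(z)\,\E(-\ell z)\,dz=\dfrac1{2\pi i}\int_L \dfrac{e^{2\pi i(c-\ell)z}}{z}\,dz$. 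This contour integral is then evaluated by residues: since $\bigl|e^{2\pi i(c-\ell)z}\bigr|=e^{-2\pi(c-\ell)\Im(z)}$ decays as $\Im(z)\to+\infty$ if $c>\ell$ and as $\Im(z)\to-\infty$ if $c<\ell$, one closes $L$ by a large semicircle above it or below it accordingly; the only singularity of the integrand is the simple pole at $z=0$ with residue $1/2\pi i$, and whether it lies in the enclosed region, and with which orientation the boundary is traced, is dictated by the signs of $c-\ell$ and of $z_2$. A short case analysis then yields $a_\ell=-\tfrac12\bigl(\sgn(\ell-c)+\sgn(z_2)\bigr)$ whenever $c\ne\ell$. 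In the remaining case $c=\ell$ (which can occur only when $c\in\IZ$) the integral is $\tfrac1{2\pi i}\int_L dz/z$, which equals $-\tfrac12\sgn(z_2)$ by the net change of $\arg z$ along $L$, i.e.\ the very same formula under the convention $\sgn(0)=0$. This matches~\eqref{Rcseries2} coefficient by coefficient, which finishes the proof.

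A contour-free variant, slightly more computational, is to sum the right-hand series in closed form: extracting the factor $\E(\flo c\cdot z)$ reduces $\AvZ\bigl[f\bigr]$ to $\tfrac1{2\pi i}\,\E(cz)\sum_{\mu\in\IZ}e^{2\pi i\theta\mu}/(z+\mu)$ with $\theta=c-\flo c\in[0,1)$, after which one invokes the classical expansions $\sum_{\mu}(z+\mu)^{-1}=\pi\cot\pi z$ (for $\theta=0$) and $\sum_{\mu}e^{2\pi i\theta\mu}(z+\mu)^{-1}=\dfrac{2\pi i\,\E((1-\theta)z)}{\E(z)-1}$ (for $0<\theta<1$); a direct simplification gives $\tfrac12\,y^c(y+1)/(y-1)$, respectively $y^{\ceil c}/(y-1)$, in agreement with the definition~\eqref{defRc} of $\FR_c$.

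The one point that requires real care in either approach is that the series defining $\AvZ\bigl[\E(cz)/(2\pi iz)\bigr]$ is only conditionally convergent, its terms decaying like $|\mu|^{-1}$; it must be interpreted as a symmetric limit, and the interchange of summation and integration in the unfolding step, as well as the passage from partial sums to the limit (both in the integral and for the Fourier series), must be justified by the uniform boundedness of the partial sums on a compact interval. The latter follows from Dirichlet's test, using the oscillation of $e^{2\pi ic\mu}$ when $c\notin\IZ$ and the elementary estimate for $\sum_{|\mu|\le N}(z+\mu)^{-1}$ when $c\in\IZ$. Everything else is routine.
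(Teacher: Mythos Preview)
Your proof is correct and follows essentially the same approach as the paper: the paper uses Euler's cotangent formula for $c\in\IZ$ and Poisson summation for $c\notin\IZ$, while your main argument (unfolding plus contour evaluation) is Poisson summation applied uniformly to all~$c$, and your ``contour-free variant'' is the direct-summation analogue extending Euler's formula to the Lerch case. The only organizational difference is that the paper splits by whether $c$ is integral, whereas you split by method; your explicit attention to the conditional convergence of the $\AvZ$ series is a welcome addition that the paper leaves implicit.
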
 
\ndt {\it Proof:} If $c\in\Z$, then
$$ \AvZ\Bigl[\frac{\E(cz)}{2\pi iz}\Bigr]\=\frac{y^c}{2\pi i}\;\sum_{n\in\Z}\frac1{z-n}
   \=\frac{y^c}{2\pi i}\;\frac\pi{\tan\pi z}\=\frac{y^c}2\;\frac{y+1}{y-1}$$
by a standard formula of Euler.  If $c\notin\Z$ then the Poisson summation formula and \eqref{Rcseries2} give
\begin{align*} \AvZ\Bigl[\frac{\E(cz)}{2\pi iz}\Bigr] &\=\sum_{n\in\Z}\,\frac{\E(c(z+n))}{2\pi i(z+n)} 
     \=\sum_{\ell\in\Z}\,\biggl(\int_{iz_2-\infty}^{iz_2+\infty}\frac{\E(c(z+u))}{2\pi i(z+u)}\,\E(-\ell u)\,du\biggr)\,\E(\ell z) \\
   &\= -\,\sum_{\ell\in\Z}\frac{\sgn(\ell-c)\+\sgn(z_2)}2\;\E(\ell z)\=\FR_c(\E(z))  \end{align*}
if $z_2\ne0$, and the formula remains true for $z_2=0$ by continuity. An alternative proof can be obtained by noting
that $\E(-cz)\FR_c(\E(z))$ is periodic of period~1 with respect to~$c$ and expanding it as a Fourier series in~$c$,
again by the Poisson summation formula.

\medskip

For $m\in\N$ and $s=(\a,\b)\in\Q^2$ we now define a {\it universal Appell-Lerch sum} $\AA_m^s(\t,z)$ by
  \be \label{defFms} \AA_m^s(\t,z)\= \E(-m\a z_s)\,\Av m{\FR_{-2m\a}(y/y_s)}\qquad(y_s=\E(z_s)=\E(\b)q^\a)\,. \ee
It is easy to check that this function satisfies
  \be \AA_m^{(\a+\l,\b+\mu)}\= \E(-m(\mu\a-\l\b))\,\AA_m^{(\a,\b)} \qquad(\l,\,\mu\in\Z) \ee
and hence, in view of the corresponding property \eqref{Dell} of $D_s\,$, that the product $D_s(\t)\AA_m^s(\t,z)$ depends
only on the class of~$s$ in $S/\Z^2$.  We can therefore define 
\be \label{phiP}  \v^P(\t,z)\;:=\;\sum_{s\in S/\Z^2}D_s(\t)\,\AA_m^s(\t,z)\,, \ee
and it is obvious from the above discussion that this function, which we will call the {\bf polar part} of $\v$,
 satisfies the index~$m$ elliptic transformation property~\eqref{elliptic}
and has the same poles and residues as~$\v$, so that the difference $\v-\v^P$ is holomorphic and has a theta expansion. In fact, we have:

\begin{thm}\label{Thmsingle}  Let $\v(\t,z)$ be a meromorphic Jacobi form with simple poles at $z=z_s=\a\t+\b$ for
$s=(\a,\b)\in S\subset\IQ^2$, with Fourier coefficients $h_\ell(\t)$ defined by \eqref{hAl} and \eqref{Defofhl} or by~\eqref{Defofhl2}
and residues $D_s(\t)$ defined by~\eqref{defDs}.  Then $\v$ has the decomposition
\be\label{FplusP}  \v(\t,z)\=\v^F(\t,z)\+\v^P(\t,z)\,, \ee
where $\v^F$ and $\v^P$ are defined by equations \eqref{phiF} and \eqref{phiP}, respectively.
\end{thm}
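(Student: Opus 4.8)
The strategy is to show that $\v-\v^{\rm P}$ is holomorphic in $z$, so that it automatically has a theta expansion $\sum_{\ell\,(2m)}\wt h_\ell(\t)\,\vth_{m,\ell}(\t,z)$, and then to identify $\wt h_\ell$ with the canonical Fourier coefficient $h_\ell$ of $\v$. Everything then reduces to the single assertion that the canonical Fourier coefficients of the universal Appell--Lerch sums $\AA_m^s$ all vanish.

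First I would verify that $\v^{\rm P}$ as defined in~\eqref{phiP} is a genuine index-$m$ elliptic object having the same principal parts as $\v$. That each $\AA_m^s$ obeys the elliptic transformation law~\eqref{elliptic} is clear since it is built by the averaging operator, and $D_s\AA_m^s$ descends to $S/\Z^2$ because the transformation law of $D_s$ under $(\a,\b)\mapsto(\a+\lambda,\b+\mu)$ in~\eqref{Dell} is the inverse of the stated one for $\AA_m^s$. From $\FR_c(\E(w))=\frac1{2\pi i\,w}+O(1)$ as $w\to0$ (the $n=0$ term in Proposition~\ref{RcAsAv}, or directly from~\eqref{defRc}) one reads off that $\AA_m^s$ has a simple pole at $z=z_s$ of residue $\E(-m\a z_s)/2\pi i$ and no other poles modulo the period lattice, so by~\eqref{defDs} the product $D_s\AA_m^s$ has exactly the residue of $\v$ at $z_s$. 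Thus $\v-\v^{\rm P}$ is holomorphic in $z$; being index-$m$ elliptic it has a theta expansion, and $\wt h_\ell$ is recovered by the integral~\eqref{Defofhl2} along the canonical line $\Im(z)=-\ell\t_2/2m$, which for a holomorphic form just returns the theta coefficient irrespective of the height of the contour. Since that integral (with the average-above/below convention at any pole) is linear and $D_s$ is $z$-independent, $\wt h_\ell=h_\ell-\sum_s D_s(\t)\,h_\ell(\AA_m^s)$, where $h_\ell$ is the canonical Fourier coefficient of $\v$ by definition.

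Hence the whole theorem reduces to proving $h_\ell(\AA_m^s)=0$ for all $\ell$ and all $s$, which is the real content. I would prove it by substituting the geometric-series expansion~\eqref{Rcseries} of $\FR_{-2m\a}$ into~\eqref{defFms} and integrating term by term along the line $\Im(z)=-\ell\t_2/2m$. With $c=-2m\a$ and $Y_\lambda:=\E(z-z_s+\lambda\t)$, the $\lambda$-th summand of the averaging operator is expanded as $-\sum^*_{r\ge c}Y_\lambda^r$ when $\lambda>\lambda_0:=\a+\ell/2m$ (i.e.\ when $|Y_\lambda|<1$ on this line) and as $+\sum^*_{r\le c}Y_\lambda^r$ when $\lambda<\lambda_0$, while the Fourier-coefficient extraction forces $2m\lambda+r=\ell$, i.e.\ $r=\ell-2m\lambda$; this is compatible with $r\ge c$ only if $\lambda\le\lambda_0$ and with $r\le c$ only if $\lambda\ge\lambda_0$, so every $\lambda\ne\lambda_0$ contributes $0$. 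The index $\lambda=\lambda_0$ occurs only when $\lambda_0\in\Z$, which is precisely the case in which the contour runs through a pole; there $r=c$, i.e.\ the boundary term counted with weight $\tfrac12$ in $\sum^*$, so the ``average above and below'' rule in the definition of $h_\ell$ contributes $\tfrac12$ of each, and these two quantities are equal in absolute value and opposite in sign, hence cancel. Therefore $h_\ell(\AA_m^s)=0$, whence $\wt h_\ell=h_\ell$ and $\v-\v^{\rm P}=\sum_\ell h_\ell\,\vth_{m,\ell}=\v^{\rm F}$ by~\eqref{phiF}, which is~\eqref{FplusP}.

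The main obstacle is exactly this last computation: one must track carefully which one-sided expansion of $\FR_{-2m\a}$ is valid on which side of the $\ell$-dependent horizontal line, handle the half-integral multiplicities in~\eqref{Rcseries}, and apply the average-above/below rule when a pole sits on the contour; it is this bookkeeping that produces the cancellation, and it is also the point at which the \emph{a priori} unmotivated choice of the particular rational function $\FR_{-2m\a}$ in~\eqref{defFms} is seen to be forced. A cleaner but equivalent route to $h_\ell(\AA_m^s)=0$ is to feed Proposition~\ref{RcAsAv} and~\eqref{defAvL} into~\eqref{defFms}, writing $\AA_m^s$ as a two-dimensional Poincar\'e-type series in which the period integral unfolds the sum over the translation parameter and leaves a residue sum over $\lambda$ that telescopes to zero.
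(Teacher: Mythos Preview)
Your proof is correct and takes a genuinely different route from the paper's. The paper fixes a generic horizontal line $\Im(z)=A\t_2$, writes $\v$ there as $\sum_\ell q^{\ell^2/4m}h_\ell^{(P)}(\t)y^\ell$, and computes $\v-\v^{\rm F}$ by expressing each $h_\ell^{(P)}-h_\ell$ as a residue sum (contour shift between the $P$-line and the canonical $\ell$-line); a direct rearrangement of the resulting double sum over $s\in S/\Z$ and $\ell\in\Z$ then produces $\v^{\rm P}$. You instead start from the other side: since $\v-\v^{\rm P}$ is already known to be holomorphic (the paper notes this just before stating the theorem), it has some theta expansion, and you reduce the identification of its coefficients with $h_\ell$ to the single clean lemma $h_\ell(\AA_m^s)=0$. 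Your geometric-series computation of that lemma is correct, including the boundary case $\lambda_0\in\Z$: the $r=c$ term appears with coefficient $-\tfrac12$ from the $|Y|<1$ side and $+\tfrac12$ from the $|Y|>1$ side, and the average-above/below convention gives their mean, namely zero.

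What each buys: the paper's argument is self-contained and shows \emph{constructively} how $\v^{\rm P}$ arises from $\v-\v^{\rm F}$, without presupposing the form of $\AA_m^s$. Your argument isolates the content as a property of the universal Appell--Lerch sum alone, which is conceptually pleasant and, as you note, explains \emph{a posteriori} why $\FR_{-2m\a}$ (rather than any other rational function with the right residue) is the correct choice: it is exactly the function whose averaged canonical Fourier coefficients vanish. The two proofs are essentially dual reorganizations of the same residue bookkeeping.
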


\ndt {\it Proof:} Fix a point $P=A\t+B\in\C$ with $(A,B)\inn\R^2\ssm S$.  Since $\v$, $\v^F$ and $\v^P$ are meromorphic, it
suffices to prove the decomposition~\eqref{FplusP} on the horizontal line $\,\Im(z)=\Im(P)=A\t_2$.  On this line we have the Fourier expansion
$$ \v(\t,z)\=\sum_{\ell\inn\Z} q^{\ell^2/4m}\,h_\ell^{(P)}(\t)\,y^\ell  \,, $$
where the coefficients $h_\ell^{(P)}$ are defined by~\eqref{hAl} (modified as explained in the text there if $A=\a$ for
any $(\a,\b)\in S$, but for simplicity we will simply assume that this is not the case, since we are free to choose~$A$ any
way we want).  Comparing this with~\eqref{phiF} gives
\be\label{v-vF} \v(\t,z)\,-\,\v^F(\t,z) \= \sum_{\ell\inn\Z}\bigl(h_\ell^{(P)}(\t)-h_\ell(\t)\bigr)\,q^{\ell^2/4m}\,y^\ell \qquad(\Im(z)=\Im(P))\,.\ee
But $q^{\ell^2/4m}(h_\ell^{(P)}(\t)-h_\ell(\t))$ is just $2\pi i$ times the sum of the residues of $\v(\t,z)\E(-\ell z)$ in the parallelogram
with width~1 and horizontal sides at heights $A\t_2$ and $-\ell\t_2/2m$, with the residues of any poles on the latter line being
counted with multiplicity~1/2 because of the way we defined $h_\ell$ in that case, so
\bal  q^{\ell^2/4m}\bigl(h_\ell^{(P)}(\t)-h_\ell(\t)\bigr) 
  &\=2\pi i\sum_{s=(\a,\b)\inn S/\Z} \frac{\sgn(\a-A)-\sgn(\a+\ell/2m)}2\;\text{Res}_{z=z_s}\bigl(\v(\t,z)\E(-\ell z)\bigr) \notag \\
    &\= \sum_{s=(\a,\b)\inn S/\Z} \frac{\sgn(\a-A)-\sgn(\ell+2m\a)}2\;D_s(\t)\,\E(-(\ell+m\a)z_s)\,. \notag  \end{align} 
(Here ``$(\a,\b)\inn S/\Z$" means that we consider all $\a$, but $\b$ only modulo~1, which is the same by periodicity as
considering only the $(\a,\b)$ with $B\le \b<B+1$.)  Inserting this formula into~\eqref{v-vF} and using~\eqref{Rcseries2}, we find
\bal   \v(\t,z)\,-\,\v^F(\t,z)
   &\= -\sum_{s=(\a,\b)\inn S/\Z} \E(-m\a z_s)\,D_s(\t)\sum_{\ell\in\Z} \frac{\sgn(\Im(z-z_s))+\sgn(\ell+2m\a)}2\,\bigr(\frac y{y_s}\bigl)^\ell \notag\\  
  &\= \;\,\sum_{s=(\a,\b)\inn S/\Z} \,\E(-m\a z_s)\;D_s(\t)\;\FR_{-2m\a}(y/y_s) \notag\\
  &\= \sum_{s=(\a,\b)\inn S/\Z^2}\,\sum_{\l\in\Z}\E\bigl(-m(\a-\l)(z_s-\l\t)\bigr)\,D_{(\a-\l,\b)}(\t)\;\FR_{-2m(\a-\l)}(q^\l y/y_s) \notag \\
   &\= \sum_{s=(\a,\b)\inn S/\Z^2}\,D_s(\t)\,\E(-m\a z_s)\;\sum_{\l\in\Z}q^{m\l^2}\,y^{2m\l}\,\FR_{-2m\a}(q^\l y/y_s)\,, \notag  \end{align}
where in the last line we have used the periodicity property~\eqref{Dell} of $D_s(\t)$ together with the obvious periodicity property
$\FR_{c+n}(y)=y^n\FR_c(y)$ of $\FR_c(y)$.  But the inner sum in the last expression is just $\Av m{\FR_{-2m\a}(y/y_s)}$, so from the
definition~\eqref{defFms} we see that this agrees with $\v^P(\t,z)$, as claimed.

\subsection{Mock modularity of the Fourier coefficients \label{MockDecomposition}}

In subsections \S\ref{FinitePart} and \S\ref{PolarPart} we introduced a canonical splitting of a meromorphic Jacobi
form $\v$ into a finite part $\v^F$ and a polar part $\v^P$, but there is no reason yet (apart from the simplicity of
equation~\eqref{Defofhl}) to believe that the choice we have made is the ``right" one:  we could have defined
periodic Fourier coefficients $h_\ell(\t)$ in many other ways (for instance, by taking $P=P_0-\ell/2m\t$ with any
fixed $P_0\in\C$ or more generally $P=P_\ell-\ell\t/2m$ where $P_\ell$ depends only on $\ell$ modulo~$2m$)
and obtained other functions $\v^F$ and~$\v^P$. What makes the chosen decomposition special is that, as we will now
show, the Fourier coefficients defined in~\eqref{Defofhl} are (mixed) mock modular forms and the function $\v^F$
therefore a (mixed) mock Jacobi form in the sense of \S\ref{MMMF}.  The corresponding shadows will involve
theta series that we now introduce.

For  $m\in\N$, $\,\ell\in\Z/2m\Z$ and $s=(\a,\b)\in\Q^2$  we define the unary theta series
\be\label{deftheta3/2} 
  \Theta^s_{m,\ell}(\t) \= \E(-m\a\b)\,\sum_{\l\inn\Z+\a+\ell/2m}\l\,\E(2m\b\l)\,q^{m\l^2}  \ee
of weight 3/2 and its Eichler integral\footnote{Strictly speaking, the Eichler integral as defined by equation~\eqref{defstar}
with $k=1/2$ would be this multiplied by $2\sqrt{\pi/m}$, but this normalization will lead to simpler
formulas and, as already mentioned, there is no good universal normalization for the shadows of mock modular forms.}   
\be\label{defEichler3/2} 
 \Theta^{s\,*}_{m,\ell}(\t) \= \frac{\E(m\a\b)}2 \,\sum_{\l\inn\Z+\a+\ell/2m}
      \sgn(\l)\,\E(-2m\b\l)\,\erfc\bigl(2|\l|\sqrt{\pi m\t_2}\bigr)\,q^{-m\l^2}  \ee
(cf.~\eqref{defstar} and \eqref{erfc}). One checks easily that these functions transform by
\bea    \Th^{(\a+\l,\b+\mu)}_{m,\ell}(\t) \=& \E(m(\mu\a-\l\b))\,\Th^{(\a,\b)}_{m,\ell}(\t)  &\qquad(\l,\,\mu\in\Z)\,, \\
     \Th^{(\a+\l,\b+\mu)\,*}_{m,\ell}(\t) \=& \E(-m(\mu\a-\l\b))\,\Th^{(\a,\b)\,*}_{m,\ell}(\t) &\qquad(\l,\,\mu\in\Z)\,. \eea
with respect to translations of $s$ by elements of $\Z^2$.
From this and \eqref{Dell} it follows that the products $D_s\overline{\Th^s_{m,\ell}}$ and $D_s\Th^{s\,*}_{m,\ell}$ 
depend only on the class of $s$ in $S/\Z^2$, so that the sums over~$s$ occurring in the following theorem make sense.

\begin{thm}\label{Thmmock} Let $\v$, $h_\ell$ and $\v^F$ be as in Theorem~\ref{Thmsingle}.  Then each $h_\ell$ is a mixed
mock modular form of weight $k-\frac12\,|\,k-1$, with shadow $\sum_{s\in S/\Z^2} D_s(\t)\,\overline{\Th^s_{m,\ell}(\t)}$, and the
function $\v^F$ is a mixed mock Jacobi form.  More precisely, for each $\,\ell\in\Z/2m\Z\,$ the completion of $h_\ell$ defined by
\be\label{defhcompleted} \wh h_\ell(\t)\;:=\; h_\ell(\t)\; - \; \sum_{s\in S/\Z^2} D_s(\t)\,\Th^{s\,*}_{m,\ell}(\t)\,, \ee
with $\Th^{s\,*}_{m,\ell}$ as in~\eqref{defEichler3/2}, tranforms like a modular form of weight $k-1/2$ with
respect to some congruence subgroup of $SL(2,\Z)$, and the completion of $\v^F$ defined by
\be\label{defphiFhat}  \wh\v^F(\t,z)\;:=\;   \sum_{\ell\mypmod{2m}} \wh h_\ell(\t)\,\vth_{m,\ell}(\t,z) \ee
transforms like a Jacobi form of weight~$k$ and index~$m$ with respect to the full modular group.
\end{thm}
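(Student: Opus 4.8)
The plan is to reduce the whole statement to a single modular transformation property of a completed ``universal Appell--Lerch sum,'' feeding in the splitting $\v=\v^F+\v^P$ already proved in Theorem~\ref{Thmsingle} and the transformation law \eqref{Dmod} of the residue functions $D_s$. For $s=(\a,\b)\in\IQ^2$ set
\[
\wh{\AA}_m^s(\t,z)\;:=\;\AA_m^s(\t,z)\,+\,\sum_{\ell\mypmod{2m}}\Theta^{s\,*}_{m,\ell}(\t)\,\vth_{m,\ell}(\t,z)\,,
\]
with $\AA_m^s$ as in \eqref{defFms} and $\Theta^{s\,*}_{m,\ell}$ as in \eqref{defEichler3/2}. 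Combining the definitions \eqref{defhcompleted}, \eqref{defphiFhat}, \eqref{phiF}, \eqref{phiP} with \eqref{FplusP} gives at once the clean identity
\[
\wh\v^F(\t,z)\;=\;\v(\t,z)\,-\,\sum_{s\in S/\Z^2}D_s(\t)\,\wh{\AA}_m^s(\t,z)\,.
\]
Since $\v$ is a genuine Jacobi form of weight $k$ and index $m$, all the modularity assertions of the theorem will follow once we know that the correction term $\sum_s D_s\wh{\AA}_m^s$ is itself a Jacobi form of weight $k$ and index $m$; the shadow statements will drop out of a parallel computation of $\bar\t$-derivatives.

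\textbf{The holomorphic anomaly.} First I would record the $\bar\t$-derivative of the completion. Differentiating \eqref{defEichler3/2}, or equivalently applying the defining relation \eqref{starinv} of a non-holomorphic Eichler integral to the weight-$\frac32$ unary theta series $\Theta^s_{m,\ell}$ of \eqref{deftheta3/2} (the $\bar\t$-derivative of the $\erfc$ term produces a Gaussian, which reassembles $\overline{\Theta^s_{m,\ell}}$), one obtains $\t_2^{1/2}\,\pa_{\bar\t}\Theta^{s\,*}_{m,\ell}\doteq\overline{\Theta^s_{m,\ell}}$, whence, since $h_\ell$ and $D_s$ are holomorphic,
\[
\t_2^{1/2}\,\frac{\pa}{\pa\bar\t}\,\wh h_\ell(\t)\;\doteq\;-\,\sum_{s\in S/\Z^2}D_s(\t)\,\overline{\Theta^s_{m,\ell}(\t)}\,.
\]
The right-hand side is a finite sum of products of $D_s\in M_{k-1}$ with $\overline{\Theta^s_{m,\ell}}$, where $\Theta^s_{m,\ell}\in M_{3/2}$, hence a decomposable form in the sense of \S\ref{MMMF}; as $h_\ell$ inherits at most exponential growth at the cusps from $\v$, this shows --- \emph{granted} the modularity of $\wh h_\ell$ established below --- that each $h_\ell$ is a mixed mock modular form of weight $k-\frac12\,|\,k-1$ with the asserted shadow, and correspondingly that $\v^F$ is a mixed mock Jacobi form.

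\textbf{The main obstacle: the transformation of $\wh{\AA}_m^s$.} The heart of the matter is the claim
\[
\wh{\AA}_m^{s}\!\left(\frac{a\t+b}{c\t+d},\,\frac z{c\t+d}\right)\=(c\t+d)\,e^{\frac{2\pi i mcz^2}{c\t+d}}\,\wh{\AA}_m^{s\g}(\t,z)\qquad\Bigl(\g=\sm abcd\in SL(2,\Z)\Bigr),
\]
together with the index-$m$ elliptic transformation \eqref{elliptic} in $z$. The elliptic law comes for free: $\AA_m^s$ has it by construction, being built from the index-$m$ averaging operator \eqref{defAvm}, and $\sum_\ell\Theta^{s\,*}_{m,\ell}\vth_{m,\ell}$ has it because each $\vth_{m,\ell}$ does while the $\Theta^{s\,*}_{m,\ell}$ are independent of $z$. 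For the modular law it suffices to treat the generators $T:\t\mapsto\t+1$ and $S:\t\mapsto-1/\t$. The $T$ case is elementary, merely permuting the base point $s$ and the residue classes $\ell\mypmod{2m}$, as one reads off from the explicit shapes of $\AA_m^s$ and $\Theta^{s\,*}_{m,\ell}$. The $S$ case is the technically hardest step, and is precisely where the results of Zwegers~\cite{Zwegers:2002} enter: using the expansion \eqref{Rcseries2} of $\FR_c$ one rewrites $\AA_m^s$ and the sign-weighted coefficients occurring in it as (partial) theta series, and a Poisson-summation argument of exactly the type Zwegers carried out for his function $\mu$ and the higher-level Appell sums shows that the naive $S$-transform of $\AA_m^s$ differs from $\AA_m^{s\g}$ by a Mordell-type integral, which is in turn cancelled by the $S$-transform of the correction term $\sum_\ell\Theta^{s\,*}_{m,\ell}\vth_{m,\ell}$. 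I expect this $S$-computation to be the main obstacle; relative to \cite{Zwegers:2002} the only genuinely new element is the bookkeeping required to carry an arbitrary torsion base point $s\in\IQ^2$ through the argument and to verify that it is sent to $s\g$ --- exactly the matching needed against \eqref{Dmod}.

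\textbf{Conclusion.} Granting this, the product $D_s\wh{\AA}_m^s$ transforms under $\g\in SL(2,\Z)$, by \eqref{Dmod}, with automorphy factor $(c\t+d)^{k-1}\cdot(c\t+d)\,e^{2\pi imcz^2/(c\t+d)}=(c\t+d)^k\,e^{2\pi imcz^2/(c\t+d)}$ and base point $s\mapsto s\g$. Since $S/\Z^2$ is permuted by the $SL(2,\Z)$-action, the sum $\sum_{s\in S/\Z^2}D_s\wh{\AA}_m^s$ is invariant, and therefore transforms like a Jacobi form of weight $k$ and index $m$ (its elliptic transformation holding for the same reason). By the identity of the first paragraph, $\wh\v^F=\v-\sum_s D_s\wh{\AA}_m^s$ then transforms like a Jacobi form of weight $k$ and index $m$ --- the last assertion of the theorem. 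Finally, writing $\wh\v^F=\sum_{\ell\mypmod{2m}}\wh h_\ell\,\vth_{m,\ell}$ and using the linear independence of the $\vth_{m,\ell}$ over functions of $\t$ together with their known weight-$\frac12$ vector-valued transformation law, one recovers that $(\wh h_\ell)_{\ell\mypmod{2m}}$ is a vector-valued modular form of weight $k-\frac12$; in particular each $\wh h_\ell$ transforms like a modular form of weight $k-\frac12$ on a congruence subgroup of $SL(2,\Z)$, which together with the anomaly of the second paragraph completes the proof.
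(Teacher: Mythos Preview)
Your proposal is correct and follows essentially the same architecture as the paper's proof: define $\wh{\AA}_m^s$, use $\wh\v^F=\v-\sum_s D_s\,\wh{\AA}_m^s$, establish the weight-$1$ Jacobi transformation of $\wh{\AA}_m^s$ with $s\mapsto s\g$, and conclude via \eqref{Dmod} and the theta expansion.

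The one point worth flagging is how you handle the key modular transformation of $\wh{\AA}_m^s$. You propose to reprove Zwegers' $S$-transformation argument (Poisson summation, Mordell integral) while tracking the torsion base point $s$ as extra bookkeeping. The paper instead does something cleaner: it shows by a short explicit computation that $\wh{\AA}_m^s(\t,z)=-\E(-m\a z_s)\,\tilde f_{z_s}^{(m)}(z;\t)$, where $\tilde f_u^{(m)}$ is Zwegers' completed higher-level Appell function, and then simply cites his Proposition~3.5. The point is that Zwegers already proved the modular transformation of $\tilde f_u^{(m)}$ for \emph{arbitrary} $u\in\C$, so no new torsion-point bookkeeping is needed---one just specializes $u=z_s=\a\t+\b$ and checks that under $\g$ this becomes $z_{s\g}/(c\t+d)$, which is immediate. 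Your route would work but is longer than necessary; the paper's identification with $\tilde f_u^{(m)}$ absorbs the ``obstacle'' you anticipated into a result already in the literature.
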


The key property needed to prove this theorem is the following proposition, essentially due to Zwegers, which says that the
functions $\AA_m^s(\t,z)$ defined in \S\ref{PolarPart} are (meromorphic) mock Jacobi forms of weight 1 and index~$m$, with shadow 
$\sum_{\ell\;\text{(mod $2m$)}}\overline{\Theta^s_{m,\ell}(\t)}\,\vth_{m,\ell}(\t,z)$ (more precisely, that each $\AA_m^s$ is
a meromorphic mock Jacobi form of this weight, index and shadow with respect to some congruence subgroup of $SL(2,\Z)$ 
depending on~$s$ and that the collection of all of them is a vector-valued mock Jacobi form on the full modular group):  
\begin{proposition}\label{Sander} For $m\in\N$ and $s\in\Q^2$ the completion $\wh \AA_m^s$ of $\AA_m^s$ defined by
  \be \label{defFsmhat} \wh \AA^s_m(\t,z) \;:=\; \AA^s_m(\t,z) \; + \sum_{\text{\rm $\ell$ (mod $2m$)}} \Th^{s\,*}_{m,\ell}(\t)\,\vth_{m,\ell}(\t,z)\;. \ee
satisfies
\bal\label{FsmPeriodic}    \wh\AA^{(\a+\l,\b+\mu)}_m(\t,z) &\=\E(-m(\mu\a-\l\b))\,\wh\AA^{(\a,\b)}_m(\t)\qquad(\l,\,\mu\in\Z)\,,   \\
  \label{FsmElliptic}       \wh \AA^s_m(\t,z+\l\t+\mu) &\= \E(-m(\l^2\t+2\l z))\,\wh \AA^s_m(\t) \qquad\quad(\l,\,\mu\in\Z)\,, \\
  \label{FsmModular}  \wh \AA^s_m\Bigl(\frac{a\t+b}{c\t+d},\,\frac z{c\t+d}\Bigr)\!&\= 
     (c\t+d)\;\E\Bigl(\frac{mcz^2}{c\t+d}\Bigr)\,\wh\AA^{s\gamma}_m(\t,z) \quad\;
    \Bigl(\g=\Bigl(\begin{matrix} a&b\\c&d\end{matrix}\Bigr)\inn SL(2,\Z)\Bigl)\,.      \end{align}
\end{proposition}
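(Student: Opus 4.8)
The plan is to deduce all three properties from the description of $\AA^s_m$ as a lattice average of the elementary Eichler kernel together with the modular transformation law of Zwegers's $\mu$-function \cite{Zwegers:2002}. The periodicity property~\eqref{FsmPeriodic} and the elliptic property~\eqref{FsmElliptic} are essentially formal. For~\eqref{FsmPeriodic} one combines the relation $\AA^{(\a+\l,\b+\mu)}_m\=\E(-m(\mu\a-\l\b))\,\AA^{(\a,\b)}_m$ recorded just after~\eqref{defFms} with the analogous transformation of $\Th^{s\,*}_{m,\ell}$ stated just before the proposition, noting that the factor $\vth_{m,\ell}(\t,z)$ in~\eqref{defFsmhat} is independent of~$s$; the two transformation factors coincide and produce the single prefactor $\E(-m(\mu\a-\l\b))$. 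For~\eqref{FsmElliptic} one uses that $\AA^s_m$, being of the form $\E(-m\a z_s)\,\Av m{\,\cdot\,}$, transforms under $z\mapsto z+\l\t+\mu$ with the index-$m$ automorphy factor $\E(-m(\l^2\t+2\l z))$ by the very definition of the averaging operator, and that each $\vth_{m,\ell}(\t,z)$ transforms the same way; hence so does $\wh\AA^s_m$.

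The real content is the modular transformation~\eqref{FsmModular}. The first step is to put $\AA^s_m$ into integral form: combining Proposition~\ref{RcAsAv} with~\eqref{defAvL}, applied in the variable $w=z-z_s$, exhibits $\AA^s_m(\t,z)$ as the index-$m$ lattice average of the kernel $\E(cw)/(2\pi iw)$ with $c=-2m\a$. This is exactly the shape of a two-variable Appell--Lerch sum, and by splitting the lattice sum defining the index-$m$ average into $2m$ residue classes and shifting the summation variable in each, one obtains an explicit identity writing $\AA^s_m(\t,z)$ as a finite linear combination, with theta coefficients, of specializations of Zwegers's function $\mu(u,v;\t)$ of~\eqref{mudef}, the arguments $u,v$ being affine-linear in $z$ and $z_s$ with coefficients depending on $m,\a,\b$. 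Under this identification the non-holomorphic correction term $\sum_{\ell}\Th^{s\,*}_{m,\ell}(\t)\,\vth_{m,\ell}(\t,z)$ in~\eqref{defFsmhat} matches precisely the $R$-function corrections of Zwegers, the complementary error functions in~\eqref{defEichler3/2} being the standard incomplete-Gaussian ingredients of $R$ (cf.~\eqref{erfc}). Thus $\wh\AA^s_m$ corresponds to the completed function $\wh\mu$, and Zwegers's theorem that $\wh\mu$ transforms as a Jacobi form of weight $\frac12$ in two variables yields~\eqref{FsmModular}; the behaviour of the arguments $u,v$ under $\g=\sm abcd$ produces the factor $\E\bigl(mcz^2/(c\t+d)\bigr)$ and the substitution $s\mapsto s\g$, exactly as in the companion computation~\eqref{Dmod} for the residue functions $D_s$.

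A more self-contained route, which I would also be prepared to carry out, is to check~\eqref{FsmModular} directly on the generators $T=\sm1101$ and $S=\sm0{-1}10$ of $SL(2,\Z)$. For $T$ this is immediate: $q$ and $y$ are unchanged, $z_s\mapsto\a\t+(\a+\b)=z_{sT}$, the series~\eqref{Rcseries} gives $\AA^s_m(\t+1,z)=\AA^{sT}_m(\t,z)$, and $\Th^{s\,*}_{m,\ell}$ and $\vth_{m,\ell}$ have elementary behaviour under $\t\mapsto\t+1$, so the pieces assemble to~\eqref{FsmModular} for $T$. For $S$ one applies the Poisson summation formula to the $w$-average above, exactly as in the proof of Proposition~\ref{RcAsAv}: the Fourier transform of $\E(cw)/(2\pi iw)$ is, up to constants, a signum function, the naive term-by-term transformation reproduces $\AA^{sS}_m(-1/\t,z/\t)$ up to the automorphy factor, and the failure of Poisson summation to converge absolutely contributes boundary terms that, evaluated by the classical Mordell integral, reassemble exactly into the transformation defect of $\sum_\ell\Th^{s\,*}_{m,\ell}\,\vth_{m,\ell}$ under $S$. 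This last point is the heart of the matter and the main obstacle: one must show that the non-holomorphic $\erfc$-corrections introduced in~\eqref{defFsmhat} conspire with the Mordell-integral error terms so that their sum is genuinely modular. Everything else --- the $T$-case, the periodicity and elliptic properties, and the bookkeeping of the characters $\E(m(\mu\a-\l\b))$ and of the theta-decomposition indices --- is routine. Since for index $m=1$ this is precisely the computation carried out by Zwegers, in the final write-up I would invoke~\cite{Zwegers:2002} for the core identity and present in detail only the reduction above and the index-$m$ bookkeeping.
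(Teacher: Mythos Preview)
Your treatment of the periodicity~\eqref{FsmPeriodic} and elliptic~\eqref{FsmElliptic} properties matches the paper's: both are checked term by term. For the modular property~\eqref{FsmModular}, however, you take a detour. You propose to decompose $\AA^s_m$ into a finite linear combination of specializations of Zwegers's two-variable $\mu(u,v;\t)$ and then invoke the modularity of $\wh\mu$; you also remark that ``for index $m=1$ this is precisely the computation carried out by Zwegers,'' implying that the general-$m$ case must be built up from the $m=1$ case.

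This overlooks that Zwegers's thesis already treats general index directly. The paper's proof simply identifies $\wh\AA^s_m$ with (a constant times) Zwegers's function $\tilde f^{(m)}_u(z;\t)$ of Definition~3.2 in~\cite{Zwegers:2002}, specialized to $u=z_s$. Concretely, starting from $\FR_{-2m\a}(y)=\dfrac1{y-1}+\text{(finite sum)}$ and averaging, one finds $\E(m\a z_s)\AA^s_m(\t,z)=-f^{(m)}_{z_s}(z;\t)+\text{(theta terms)}$; matching the $\erfc$-sum~\eqref{defEichler3/2} against Zwegers's $R_{m,\ell}(u;\t)$ then gives $\wh\AA^s_m(\t,z)=-\E(-m\a z_s)\,\tilde f^{(m)}_{z_s}(z;\t)$, and~\eqref{FsmModular} follows at once from Zwegers's Proposition~3.5 for $\tilde f^{(m)}_u$. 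So no reduction to $\mu$, no decomposition into $2m$ residue classes, and no Mordell-integral computation for $S$ is needed: the general-$m$ modularity is already in~\cite{Zwegers:2002}, and the only work is the short algebraic identification of $\wh\AA^s_m$ with $\tilde f^{(m)}_{z_s}$.

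Your alternative route (check on $T$ and $S$, redo the Poisson/Mordell argument) would succeed, but it amounts to reproving Zwegers's Proposition~3.5 rather than citing it.
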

\ndt {\it Proof:} The first two properties are easy to check because they hold for each term in \eqref{defFsmhat} separately.
The modular transformation property, considerably less obvious, is essentially the content of Proposition~3.5 of~\cite{Zwegers:2002},
but the functions he studies are different from ours and we need a small calculation to relate them.  Zwegers defines two
functions $f_u^{(m)}(z;\t)$ and $\tilde f_u^{(m)}(z;\t)$ ($m\in\N$, $\t\in\mathbb H$, $\,z,\,u\in\C$) by
$$ f_u^{(m)}(z;\t)= \AvB m{\frac1{1-y/\E(u)}},\quad\;
   \tilde f_u^{(m)}(z;\t) =  f_u^{(m)}(z;\t)-\frac12\sum_{\text{$\ell$ (mod $2m$)}} R_{m,\ell}(u;\t)\,\vth_{m,\ell}(\t,z) $$
(here we have rewritten Definition 3.2 of~\cite{Zwegers:2002} in our notation), where 
\be\label{defRml}  R_{m,\ell}(u;\t) \= \sum_{r\in\ell+2m\Z}\Bigl\{\sgn\bigl(r+\frac12\bigr)
   \,-\,\text{erf}\Bigl(\sqrt\pi\,\frac{r\t_2+2mu_2}{\sqrt{m\t_2}}\Bigr)\Bigr\}\;q^{-r^2/4m}\;\E(-ru)\,, \ee
and shows (Proposition 3.5) that $\tilde f_u^{(m)}$ satisfies
the modular transformation property 
\be\label{ZwegersMod}  \tilde f_{u/(c\t+d)}^{(m)}\Bigl(\frac z{c\t+d};\,\frac{a\t+b}{c\t+d}\Bigr)
   \= (c\t+d)\;\E\Bigl(\frac{mc(z^2-u^2)}{c\t+d}\Bigr)\,\tilde f_u^{(m)}(z;\t)  \ee
for all $\g= \Bigl(\begin{matrix} a&b\\c&d\end{matrix}\Bigr)\in SL(2,\Z)$.  Noting that $\,\text{erf}(x)=\sgn(x)(1-\erfc(|x|))$, we find that 
$$  \frac12\,R_{m,\ell}(z_s;\t)\;=\sum_{r\equiv\ell\text{ (mod $2m$)}} \frac{\sgn(r+\frac12)-\sgn(r+2m\a)}2\,q^{-r^2/4m}\,y_s^{-r}
    \+ \E(m\a z_s)\,\Th^{s\,*}_{m,l}(\t) $$
in our notation.  On the other hand, from~\eqref{defRc} we have
\be  \FR_{-2m\a}(y)\=\frac1{y-1} \+\sum_{r\in\Z}\frac{\sgn(r+\frac12)-\sgn(r+2m\a)}2\,y^r \ee
(note that the sum here is finite).  Replacing $y$ by $y/y_s$ and applying the operator $\text{Av}^{(m)}$, we find (using~\eqref{ThetAv})
$$ \E(m\a z_s)\,\AA_m^s(\t,z)\=-f^{(m)}_{z_s}(z;\t)\+ \sum_{r\in\Z}\frac{\sgn(r+\frac12)-\sgn(r+2m\a)}2\,q^{-r^2/4m}\,y_s^{-r}\,\vth_{m,r}(\t,z)\,. $$
Combining these two equations and rearranging, we obtain 
$$ \wh \AA_m^s(\t,z) \= -\,\E(-m\a z_s)\,\tilde f^{(m)}_{z_s}(z;\t)\,,$$
and the modularity property~\eqref{FsmModular} then follows from~\eqref{ZwegersMod} after a short calculation.

\smallskip
The proof of Theorem~\ref{Thmmock} follows easily from Proposition~\ref{Sander}. We define the completion of the function $\v^P$ studied in~\S\ref{PolarPart} by
 \be\label{defPolarhat} \wh\v^P(\t,z)\;:=\; \sum_{s\in S/\Z^2} D_s(\t)\;\wh \AA^s_m(\t,z)\,. \ee
The sum makes sense by \eqref{FsmPeriodic}, and from the transformation equations \eqref{FsmElliptic}--\eqref{FsmModular}
together with the corresponding properties \eqref{Dell}--\eqref{Dmod} of the residue functions $D_s(\t)$ it follows that 
$\wh\v^P(\t,z)$ transforms like a Jacobi form of weight~$k$ and index~$m$ with respect to the full modular group.
Comparing equations~\eqref{defPolarhat} and~\eqref{defFsmhat} with equations~\eqref{defphiFhat} and~\eqref{defhcompleted}, we find that
$$ \wh\v^P(\t,z) \,-\, \v^P(\t,z) 
    \= \sum_{\ell\inn\Z/2m\Z}\sum_{s\inn S/\Z^2} D_s(\t)\;\Th^{s\,*}_{m,\ell}(\t)\,\vth_{m,\ell}(\t,z) 
    \=  \v^F(\t,z)\,-\,\wh\v^F(\t,z)   $$
and hence, using Theorem~\ref{Thmsingle}, that
$$  \wh\v^F(\t,z) \+ \wh\v^P(\t,z) \= \v^F(\t,z)\+\v^P(\t,z) \= \v(\t,z)\,. $$
Since both $\v(\t,z)$ and $\wh\v^P(\t,z)$ transform like Jacobi forms of weight~$k$ and index~$m$, it follows that $\wh\v^F(\t,z)$
also does, and then the fact that each $\wh h_\ell$ transforms like a modular form of weight $k-1/2$ (and hence that each $h_\ell$ is 
a mixed mock modular form with the weight and shadow given in the theorem) follows by the same argument that proves the modularity of
the coefficients $h_\ell$ in the theta expansion~\eqref{jacobi-theta} in the classical case.

\bigskip
 {\bf Summary.}  For the reader's convenience, we give a brief summary of the results given up to now.  We have the following six
functions of two variables $(\t,z)\in\mathbb{H}\times\C\,$:
\begin{myitemize}
\item  $\v(\t,z)$, a meromorphic Jacobi form of weight $k$ and index~$m$, assumed to have only simple poles
   at $z=z_s=\a\t+\b$ for $s=(\a,\b)$ in some discrete subset $S\subset\Q^2\,$;
\item $\v^F(\t,z)$, the {\bf finite} part of $\v$, defined by the theta expansion $\sum_{\text{$\ell$ (mod $2m$)}}h_\ell(\t)\vth_{m,\ell}(\t,z)$
  where $h_\ell(\t)$ is $q^{\ell^2/4m}$ times the $\ell$th Fourier coefficient of $\v(\t,z-\ell\t/2m)$ on the real line;
\item $\v^P(\t,z)$, the {\bf polar} part of $\v$, defined as $\sum_{s\in S/\Z^2} D_s(\t)\AA_m^s(\t,z)$, where $\AA_m^s$ is an 
   explicit Appell-Lerch   sum having simple poles at $z\in z_s+\Z\t+\Z\,$;
\item $\v^{C}(\t,z)$, a non-holomorphic {\bf correction} term, defined as $\sum_s\sum_\ell D_s(\t)\Th_{m,\ell}^{s\,*}(\t)\vth_{m,\ell}(\t,z)$ where
  the $\Th_{m,\ell}^{s\,*}$ are the Eichler integrals of explicit unary theta series of weight $3/2\,$;
\item $\wh \v^F(\t,z)$, the completed finite part, defined as $\v^F(\t,z)-\v^{C}(\t,z)\,$;
\item $\wh \v^P(\t,z)$, the completed polar part, defined as $\v^P(\t,z)+\v^{C}(\t,z)\,$.
\end{myitemize}
These six functions are related by
\be\label{relations}  \v^F\+\v^P \=\v\= \wh\v^F\+\wh\v^P\;,\qquad \v^F\,-\,\wh\v^F\=\v^{C}\=\wh\v_P\,-\,\v_P\;.  \ee
Each of them is real-analytic in $\t$, meromorphic in~$z$, satisfies the elliptic transformation property~\eqref{elliptic} with respect to~$z$,
 and has precisely two of four further desirable properties of such a function (note that $6=\bigl({4\atop2}\bigr)$), as shown in the following table
   \begin{center}     \begin{tabular}{r|c|c|c|c|c|c|}
   \multicolumn 1{c}{} & \multicolumn 1{c}{$\;\;\v\;\;$} & \multicolumn 1{c}{$\;\v^F\;$} 
      & \multicolumn 1{c}{$\;\v^{\rm P}\;$} & \multicolumn 1{c}{$\;\,\v^{C}\;\,$} & \multicolumn 1{c}{$\;\wh\v^{\rm F}\;$} & \multicolumn 1{c}{$\;\wh\v^{\rm P}\;$}\\
      \cline{2-7}        holomorphic in $\t\,?\;$ & \checkmark & \checkmark & \checkmark & $-$  & $-$  & $-$ \\
               transforms like a Jacobi form\,?\; & \checkmark & $-$  & $-$  & $-$  & \checkmark & \checkmark\\
                   holomorphic in $z\,?\;$  & $-$ & \checkmark & $-$  & \checkmark & \checkmark& $-$ \\
     determined by the poles of $\v\,?\;$ & $-$  & $-$ & \checkmark  & \checkmark  & $-$ & \checkmark  \\
   \cline{2-7}      \end{tabular} \qquad\qquad\qquad   \end{center}   \bigskip
in which the three checked entries in each row  correspond to one of the equations~\eqref{relations}.  Each Fourier
coefficient $h_\ell$ of $\v$ is a mixed mock modular form of weight $(k-1,1/2)\,$, and the finite part $\v^F$ is a mixed mock Jacobi form. 
In the holomorphic case, the functions $\v$, $\v^F$ and $\wh\v^F$ coincide and the functions $\v^P$, $\v^{C}$ and~$\wh\v^P$ vanish.  

We end by mentioning one further property of the canonical decomposition $\v=\v^F+\v^P$ that seems of interest.  The finite part $\v^P$
of~$\v$ is a linear combination of terms $h_\ell(\t)\,\vth_{m,\ell}(\t,z)$ where $h_\ell(\t)$ is a Laurent power series in~$q$ and
$\vth_{m,\ell}(\t,z)$ a linear combination of terms $q^{r^2/4m}y^r$, so it satisfies the same growth condition as a weak Jacobi form,
viz., that it contains only monomials $q^ny^r$ with discriminant $4mn-r^2$ bounded from below.  But the Appell-Lerch sums $\AA_m^s(\t,z)$
occurring in $\v^P(\t,z)$ have the opposite property: substituting the Fourier expansion~\eqref{Rcseries2} of $\FR_c$ into the
definition~\eqref{defFms}, we find that they contain only terms $q^ny^r$ with $4mn-r^2=-\d^2$ where the numbers $\d$ occurring
are rational (more precisely, $\d\in r+2m(\Z+\a)$) and unbounded (more precisely, $\d$ lies between~0 and~$r+\text O(1)$).

\subsection{The case of double poles \label{double}}

In this subsection we extend our considerations to the case when $\v$ is allowed to have double poles,
again assumed to be at points $z=z_s=\a\t+\b$ for $s=(\a,\b)$ belonging to some discrete subset $S$ of~$\Q^2$.
The first thing we need to do is to generalize the definition~\eqref{defDs} to this case. For $s\in S$ we
define functions $E_s$ and $D_s$ on $\mathbb H$ by the Laurent expansion
  \be\label{defDE}    \E(m\a z_s)\,\v(\t,z_s+\ve) 
    \= \frac{E_s(\t)}{(2\pi i\ve)^2} \+ \frac{D_s(\t)-2m\a\,E_s(\t)}{2\pi i\ve} \+ O(1) \qquad\text{as $\ve\to 0$.}  \ee
(Notice that $D_s(\t)$ is the same function as in \eqref{defDs} when the pole is simple.)
It is easily checked that the behavior of these functions under translations of $s$ by elements of~$\Z^2$ is given by
equation~\eqref{Dell} and its analogue for $E_s$.  For the modular behavior, we have:
\begin{proposition}  The functions $E_s(\t)$ and $D_s(\t)$ defined by~\eqref{defDE} are modular forms of weight $k-2$ and $k-1$,
respectively. More respectively, for all $s\in S$ and $\sm abcd\in SL(2,\Z)$ we have
  \be\label{modEDdoub}   E_s\Bigl(\frac{a\t+b}{c\t+d}\Bigr)\=(c\t+d)^{k-2}E_{s\g}(\t)\,, 
     \qquad D_s\Bigl(\frac{a\t+b}{c\t+d}\Bigr)\=(c\t+d)^{k-1}D_{s\g}(\t)\,.   \ee
\end{proposition}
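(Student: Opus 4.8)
The plan is to derive the two transformation laws~\eqref{modEDdoub} directly from the modular property~\eqref{modtransform} of~$\v$, by tracking how the Laurent expansion~\eqref{defDE} transforms under $\g=\sm abcd\in SL(2,\Z)$. Once these are in hand, the assertion that $E_s$ and $D_s$ are \emph{genuine} modular forms (holomorphic in~$\t$, invariant under a subgroup of finite index, and bounded at the cusps) is routine given the standing assumptions on~$\v$, since $E_s$ and $D_s$ are Laurent coefficients of $\v(\t,z)$ at $z=z_s$, only finitely many of them are distinct modulo the $\Z^2$-action, and $\v$ is a meromorphic Jacobi form. The $\Z^2$-translation behaviour of $E_s$ and $D_s$ (the analogue of~\eqref{Dell}) is immediate from the definition and we take it for granted.

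Writing $L=c\t+d$ and $s=(\a,\b)$, the first step is the elementary pole-matching identity $L\,z_s(\g\t)=z_{s\g}(\t)$, where $s\g=(a\a+c\b,\,b\a+d\b)$; thus under the substitution $z\mapsto z/L$ the pole of $\v(\g\t,\cdot)$ at $z=z_s(\g\t)$ corresponds to the pole of $\v(\t,\cdot)$ at $z=z_{s\g}(\t)$. Next I would rewrite~\eqref{modtransform} in the form $\v(\g\t,w)=L^k\,\E(mcLw^2)\,\v(\t,Lw)$, set $w=z_s(\g\t)+\ve$ and $\ve'=L\ve$, note that $Lw=z_{s\g}(\t)+\ve'$, and substitute the known expansion~\eqref{defDE} of $\v(\t,z_{s\g}(\t)+\ve')$ into the right-hand side.

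The only delicate point — and the conceptual core — is the quadratic factor $\E(mcLw^2)$. Expanding $Lw^2=z'\tilde z+2\tilde z\,\ve+L\ve^2$ with $z'=z_s(\g\t)$ and $\tilde z=z_{s\g}(\t)=Lz'$, and Taylor-expanding $\E(mcLw^2)$ to second order in~$\ve$, one finds that the $\ve$-independent phase $\E(mc\,z'\tilde z)$ combines with the two phases $\E(m\a z')$ and $\E(-m\tilde\a\tilde z)$ already present (from the normalizing factors $\E(m\a z_s)$ in~\eqref{defDE}) to give exactly~$1$, using only $\tilde z=Lz'$ and the cocycle identity $c(\g\t)-a=-1/L$. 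Crucially, the \emph{linear}-in-$\ve$ part of $\E(mcLw^2)$ feeds into the coefficient of $\ve^{-1}$, and this is precisely where the ``$-2m\a E_s$'' built into the definition~\eqref{defDE} of $D_s$ pays off: the exponential contributes a term $2mc\,z'\,E_{s\g}$ which, added to the $-2m\tilde\a\,E_{s\g}$ coming from~\eqref{defDE}, collapses via $c z'-\tilde\a=-\a/L$ to $-2m\a\,E_{s\g}/L$. Reading off the $\ve^{-2}$ and $\ve^{-1}$ coefficients after the rescaling $\ve'=L\ve$ then yields $E_s(\g\t)=L^{k-2}E_{s\g}(\t)$ and $D_s(\g\t)=L^{k-1}D_{s\g}(\t)$, as claimed.

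I expect the only genuine obstacle to be organizational: juggling the two expansion variables $\ve$ and $\ve'=L\ve$ and not discarding the $O(\ve)$ term of $\E(mcLw^2)$ — a term that is invisible when the pole is simple, which is why the very same argument, carried out only to order $\ve^{-1}$, also establishes the earlier transformation law~\eqref{Dmod} promised in~\S\ref{PolarPart}. Everything else — the pole-matching identity and the two uses of $c(\g\t)-a=-1/L$ — is elementary $SL_2$-algebra.
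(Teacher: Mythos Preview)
Your proposal is correct and follows essentially the same route as the paper: apply~\eqref{modtransform}, use the pole-matching $L\,z_s(\g\t)=z_{s\g}(\t)$, and reduce the exponential factor via the identity $\a_{s\g}z_{s\g}-\a_s z_s(\g\t)=c\,z_{s\g}^2/L$. The one simplification you miss is that the paper first rewrites~\eqref{defDE} as $\E\bigl(m\a(z_s+2\ve)\bigr)\,\v(\t,z_s+\ve)=\dfrac{E_s(\t)}{(2\pi i\ve)^2}+\dfrac{D_s(\t)}{2\pi i\ve}+O(1)$, which absorbs the $-2m\a E_s$ correction into the prefactor and so eliminates exactly the bookkeeping you flag as ``the only genuine obstacle''.
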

\ndt {\it Proof:} We rewrite~\eqref{defDE} as
 $$\E(m\a(z_s+2\ve))\,\v(\t,z_s+\ve) \=  \frac{E_s(\t)}{(2\pi i\ve)^2} \+ \frac{D_s(\t)}{2\pi i\ve} \+ \text O(1)\,,$$
and also write $\a_s$ and $z_s(\t)$ instead of just $\a$ and $z_s$. Then using the easily checked identities
$$  z_s(\g\t)\=\frac{z_{s\g}(\t)}{c\t+d}\,,\qquad \a_{s\g}\,z_{s\g}(\t)\,-\,\a_s\,z_s(\g\t)\=\frac{c\,z_{s\g}(\t)^2}{c\t+d}  
     \qquad\Bigl(\g\t:=\frac{a\t+b}{c\t+d}\;\Bigr)\,, $$
and the modular transformation equation~\eqref{modtransform}, we find 
\bal   \frac{(c\t+d)^2\,E_s(\g\t)}{(2\pi i\ve)^2} &\+ \frac{(c\t+d)\,D_s(\g\t)}{2\pi i\ve}  
    \equiv\; \E\Bigl(m\a\Bigl(z_s(\g\t)+\frac{2\ve}{c\t+d}\Bigr)\Bigr)\,\v\Bigl(\g\t,\,z_s(\g\t)+\frac\ve{c\t+d}\Bigr)    \notag \\
  &\qquad\equiv\;  \E\Bigl(m\a\frac{z_{s\g}(\t)+2\ve}{c\t+d}\Bigr)\,\v\Bigl(\g\t,\,\frac{z_{s\g}(\t)+\ve}{c\t+d}\Bigr) \notag \\ 
  &\qquad\equiv\;(c\t+d)^k\,\E\Bigl(m\a\frac{z_{s\g}+2\ve}{c\t+d}+\frac{mc(z_{s\g}+\ve)^2}{c\t+d}\Bigr)\,\v\Bigl(\t,\,z_{s\g}+\ve\Bigr) \notag\\
  &\qquad\equiv\;(c\t+d)^k\,\E(m\a_{s\g}(z_{s\g}+2\ve))\,\v(\t,\,z_{s\g}+\ve) \notag\\
  &\qquad\equiv\;(c\t+d)^k\,\Bigl[\frac{E_{s\g}(\t)}{(2\pi i\ve)^2} \+ \frac{D_{s\g}(\t)}{2\pi i\ve} \Bigr]\;,\notag  \end{align}
where ``$\,\equiv\,$" means equality modulo a quantity that is bounded as $\ve\to0$. The claim follows.
\medskip

Next, we must define a standard Appell-Lerch sum with a double pole at $z=z_s$.  We begin by defining a rational
function $\FRR_c(y)$ with a double pole at $y=1$ for each $c\in\R$. Motivated by Proposition~\ref{RcAsAv}, we require
\be\label{RRcAsAv}   \FRR_c(\E(z))\= \AvZ\biggl[\frac{\E(cz)}{(2\pi iz)^2}\biggr] 
   \= \frac1{(2\pi i)^2}\,\sum_{n\in\Z} \frac{\E(c(z-n))}{(z-n)^2}\,. \ee
To calculate this explicitly as a rational function, we could imitate the proof of Proposition~\ref{RcAsAv}, but it is
easier to note that $\FRR_c= \bigl(-\dfrac1{2\pi i}\dfrac d{dz}\+c\bigr)\FR_c= \bigl(-y\dfrac d{dy}\+c\bigr)\FR_c$ and hence
from equations~\eqref{Rcseries2}, \eqref{Rcseries} and~\eqref{defRc} we get three alternative definitions
\bal\label{defRRseries} \FRR_c(y) &\=  \sum_{\ell\in\Z}\frac{|\ell-c| \+ \sgn(z_2)\,(\ell-c)}2\;y^\ell  \\
 \label{defRRc2}  &\= \begin{cases} \sum_{\ell\ge c} (\ell-c)\,y^\ell &\text{if $|y|<1$} \\
                            \sum_{\ell\le c} (c-\ell)\,y^\ell &\text{if $|y|>1$}\end{cases}  \\
          &\=    y^{\flo c+1}\,\biggl(\frac 1{(y-1)^2} \+ \frac{c\,-\,\flo c}{y-1}\biggr)\,. \end{align}
of $\FRR_c(y)$. (Notice that in these equations neither the asterisk on the summation sign nor the case distinction for $c\in\Z$ and $c\notin\Z$
are needed as before, and that the function $\FRR_c(y)$, unlike $\FR_c(y)$, is continuous in~$c$.)  For $s=(\a,\b)\in\Q^2$ and $m\in\N$ we set 
  \be \label{defGms} \BB_m^s(\t,z)\= \E(-m\a z_s)\,\Av m{\FRR_{-2m\a}(y/y_s)}\,, \ee
in analogy with \eqref{defFms}.  If we then define the polar part $\v^P$ of $\v$ by
\be \label{phiPdoub}  \v^P(\t,z)\= \sum_{s\in S/\Z^2}\Bigl( D_s(\t)\,\AA_m^s(\t,z)\+ E_s(\t)\,\BB_m^s(\t,z) \Bigr)\,; \ee
then the definitions of the functions $D_s$, $E_s$, $\AA_m^s$ and $\BB_m^s$ immediately imply that $\v^P$ has the same
singularities as~$\v$, so that the difference 
  \be\label{NewFinitePart}  \v^F(\t,z) \= \v(\t,z)\,-\,\v^P(\t,z) \ee
is a holomorphic function of $z$.

As before, the key property of the Appell-Lerch sums is that they are again mock Jacobi forms, of a somewhat
more complicated type than before.  We introduce the unary theta series
\be\label{deftheta1/2} 
  \mytheta^s_{m,\ell}(\t) \= \E(-m\a\b)\,\sum_{\l\inn\Z+\a+\ell/2m}\E(2m\b\l)\,q^{m\l^2}  \ee
of weight 1/2 and its (again slightly renormalized) Eichler integral 
\be\label{defEichler1/2} 
  \mytheta^{s\,*}_{m,\ell}(\t) \= \frac{\overline{\mytheta^s_{m,\ell}(\t)}}{2\pi\sqrt{m\t_2}} \,-\, 
   \E(m\a\b)\!\sum_{\l\inn\Z+\a+\ell/2m} |\l|\,\E(-2m\b\l)\,\erfc\bigl(2|\l|\sqrt{\pi m\t_2}\bigr)\,q^{-m\l^2}  \ee
(cf.~\eqref{defstar} and \eqref{erfc}).  Then we can define the completion $\wh\BB_m^s$ of $\BB_m^s$ by 
\be \label{defGsmhat} \wh\BB^s_m(\t,z)\;:=\;\BB^s_m(\t,z)\;+\,m\,\sum_{\text{\rm $\ell$ (mod $2m$)}}
\mytheta^{s\,*}_{m,\ell}(\t)\,\vth_{m,\ell}(\t,z)\;.\ee

\begin{proposition}\label{Sander2} For $m\in\N$ and $s\in\Q^2$ the completion $\wh \BB_m^s$ of $\BB_m^s$ defined
by \eqref{defGsmhat} satisfies
\bal\label{GsmPeriodic}    \wh\BB^{(\a+\l,\b+\mu)}_m(\t,z) &\=\E(-m(\mu\a-\l\b+\l\m))\,\wh\BB^{(\a,\b)}_m(\t)\qquad(\l,\,\mu\in\Z)\,,   \\
  \label{GsmElliptic}       \wh\BB^s_m(\t,z+\l\t+\mu) &\= \E(-m(\l^2\t+2\l z))\,\wh\BB^s_m(\t) \qquad\quad(\l,\,\mu\in\Z)\,, \\
  \label{GsmModular}  \wh\BB^s_m\Bigl(\frac{a\t+b}{c\t+d},\,\frac z{c\t+d}\Bigr)\!&\= 
     (c\t+d)^2\;\E\Bigl(\frac{mcz^2}{c\t+d}\Bigr)\,\wh\BB^{s\gamma}_m(\t,z) \quad\;
    \bigl(\g=\Bigl(\begin{matrix} a&b\\c&d\end{matrix}\Bigr)\in SL(2,\Z)\bigl)\,.      \end{align}
\end{proposition}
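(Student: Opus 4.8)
The plan is to follow the proof of Proposition~\ref{Sander} essentially verbatim, the only genuinely new feature being the modular law~\eqref{GsmModular}: the periodicity~\eqref{GsmPeriodic} and ellipticity~\eqref{GsmElliptic} will again be checked term by term in the defining sum~\eqref{defGsmhat}. For~\eqref{GsmPeriodic}, one uses the identity $\FRR_{c+n}(y)=y^{n}\FRR_{c}(y)$ (immediate from~\eqref{defRRseries}), the translation behaviour \eqref{Dell} of $D_{s}$ together with its analogue for $E_{s}$ (proved from the Laurent expansion~\eqref{defDE} exactly as for $D_{s}$), and the corresponding behaviour of the weight-$1/2$ theta functions $\mytheta^{s}_{m,\ell}$ and of their Eichler integrals $\mytheta^{s\,*}_{m,\ell}$ under $s\mapsto(\a+\l,\b+\m)$; since $m\in\N$, any factor $\E$ of an integer multiple of $m$ is trivial, which is why the apparently extra summand $\E(-m\l\m)$ in~\eqref{GsmPeriodic} is harmless. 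The ellipticity~\eqref{GsmElliptic} holds for the Appell--Lerch piece $\BB^{s}_{m}$ by the defining property of the averaging operator $\Av m{\cdot}$ (just as for $\AA^{s}_{m}$), and for each $\mytheta^{s\,*}_{m,\ell}(\t)\,\vth_{m,\ell}(\t,z)$ because $\vth_{m,\ell}$ already has the right elliptic transformation.

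The substantive step is~\eqref{GsmModular}. The key algebraic input, already recorded in the text, is that $\FRR_{c}=\bigl(-\tfrac1{2\pi i}\tfrac{d}{dz}+c\bigr)\FR_{c}$. Applying the averaging operator $\Av m{\cdot}$ and carrying along the prefactor $\E(-m\a z_{s})$ in \eqref{defGms} and \eqref{defFms}, one obtains an identity expressing $\BB^{s}_{m}$ as $\bigl(\tfrac1{2\pi i}\tfrac{\partial}{\partial z_{s}}-m\a\bigr)\AA^{s}_{m}$, i.e. as a first-order derivative in the pole position $z_{s}=\a\t+\b$ (with $\a$ treated as the fixed first coordinate of $s$) of the weight-$1$ Appell--Lerch sum of \S\ref{PolarPart}. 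I would then apply the same operator to the non-holomorphic correction $\sum_{\ell}\Th^{s\,*}_{m,\ell}\,\vth_{m,\ell}$ occurring in $\wh\AA^{s}_{m}$: a short computation using the explicit form~\eqref{defEichler3/2} shows that $\bigl(\tfrac1{2\pi i}\tfrac{\partial}{\partial z_{s}}-m\a\bigr)$ sends the weight-$3/2$ Eichler integral $\Th^{s\,*}_{m,\ell}$ to $m$ times the weight-$1/2$ Eichler integral $\mytheta^{s\,*}_{m,\ell}$ of~\eqref{defEichler1/2}, minus the ``holomorphic-anomaly'' term $m\,\overline{\mytheta^{s}_{m,\ell}(\t)}/(2\pi\sqrt{m\t_{2}})$ (the $\erfc$-factors and the phases $\E(\pm2m\b\l)$ are what get differentiated). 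Collecting everything yields a formula of the shape $\wh\BB^{s}_{m}=\bigl(\tfrac1{2\pi i}\tfrac{\partial}{\partial z_{s}}-m\a\bigr)\wh\AA^{s}_{m}+\tfrac{m}{2\pi\sqrt{m\t_{2}}}\sum_{\ell}\overline{\mytheta^{s}_{m,\ell}(\t)}\,\vth_{m,\ell}(\t,z)$.

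With this in hand, \eqref{GsmModular} follows by differentiating the weight-$1$ modular law~\eqref{FsmModular} of Proposition~\ref{Sander} in the pole parameter $z_{s}$ (equivalently, differentiating Zwegers's identity~\eqref{ZwegersMod} with respect to $u$, using that $\wh\AA^{s}_{m}=-\E(-m\a z_{s})\,\tilde f^{(m)}_{z_{s}}(z;\t)$ as established in the proof of Proposition~\ref{Sander}). The derivative landing on the automorphy factor $\E\!\bigl(\tfrac{mcz^{2}}{c\t+d}\bigr)$, on the label change $s\mapsto s\g$, and on the $(c\t+d)$ prefactor produces both the desired $(c\t+d)^{2}\,\E\!\bigl(\tfrac{mcz^{2}}{c\t+d}\bigr)$ term and an unwanted $|c\t+d|^{2}$-type (non-holomorphic-weight) contribution; the latter is cancelled precisely by the modular transformation of the added term $\tfrac{m}{2\pi\sqrt{m\t_{2}}}\sum_{\ell}\overline{\mytheta^{s}_{m,\ell}}\,\vth_{m,\ell}$ (note $1/\sqrt{\t_{2}}$ contributes a factor $|c\t+d|$), leaving exactly~\eqref{GsmModular}.

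The main obstacle will be precisely this last piece of bookkeeping: making sure that the constant $m$ appearing in~\eqref{defGsmhat}, the automorphy factors, the $\erfc$-calculus for $\partial_{z_{s}}\Th^{s\,*}_{m,\ell}$, and the way $\partial_{z_{s}}$ interacts with the label change $s\mapsto s\g$ all line up so that the non-holomorphic terms cancel completely and the remaining transformation is the clean one stated. An alternative, slightly more self-contained route to the same conclusion would be to imitate the Poisson-summation proof of Proposition~\ref{RcAsAv} directly for $\FRR_{c}$, starting from its representation~\eqref{RRcAsAv} as an average of $\E(cz)/(2\pi iz)^{2}$; this reproduces the Fourier expansion and hence the modular behaviour of $\wh\BB^{s}_{m}$ but leads to essentially the same computations.
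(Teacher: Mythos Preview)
Your strategy coincides with the paper's: the first two properties are checked termwise, and the modular law is obtained by applying a first-order differential operator in the pole parameter to the weight-$1$ result. The difference lies in the choice of operator. You use $\frac1{2\pi i}\partial_{z_s}-m\a$ (with $\a$ held fixed, i.e.\ $\partial_\b$), which correctly gives $\BB_m^s=(\frac1{2\pi i}\partial_\b-m\a)\AA_m^s$ and your identity $\wh\BB_m^s=(\frac1{2\pi i}\partial_\b-m\a)\wh\AA_m^s+C^s$ with $C^s=\frac{\sqrt m}{2\pi\sqrt{\t_2}}\sum_\ell\overline{\mytheta_{m,\ell}^s}\vth_{m,\ell}$; but this operator is \emph{not} modularly covariant (under $\g$ the label $s\g$ has first coordinate $\a'=a\a+c\b$, so $\partial_\b$ on the right-hand side hits both $\a'$ and $\b'$), and you are left with the ``bookkeeping'' cancellation you flag as the main obstacle.

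The paper sidesteps this entirely by working with Zwegers's $\tilde f_u^{(m)}$ in the free complex variable~$u$ and applying the operator $D_u=\frac1{2\pi i}\partial_u-2m\,u_2/\t_2$ (note the coefficient is $2m\a$, not $m\a$, when $u=z_s$). This is a Maass-type raising operator in~$u$: a direct computation using $u'_2/\t'_2=(c\t+d)\,u_2/\t_2-cu$ shows that if $F$ satisfies \eqref{ZwegersMod} then $D_uF$ satisfies the same law with $(c\t+d)$ replaced by $(c\t+d)^2$, with no extra terms. One then checks (exactly as in the proof of Proposition~\ref{Sander}) that $\E(m\a z_s)\wh\BB_m^s$ equals $D_u\tilde f_u^{(m)}$ at $u=z_s$, and \eqref{GsmModular} follows. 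So your approach and the paper's are the same idea, but the paper's operator absorbs your correction term $C^s$ and makes the modular step a one-liner; what you identify as the main obstacle simply doesn't arise with the covariant operator.
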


The proof is exactly similar to that of~\ref{Sander}.  We define functions $g_u^{(m)}(z;\t)$ and $\tilde g_u^{(m)}(z;\t)$ by
applying the operator $\,\dfrac1{2\pi i}\dfrac\partial{\partial u}-2m\dfrac{u_2}{\t_2}\,$ to $f_u^{(m)}(z;\t)$ and $\tilde f_u^{(m)}(z;\t)$;
then the transformation equation~\eqref{ZwegersMod} of   $\tilde f_u^{(m)}$ implies the same transformation equation for  $\tilde g_u^{(m)}$ but with
the initial factor $(c\t+d)$ replaced by $(c\t+d)^2$, and a calculation exactly similar to the one given before shows that $\E(m\a z_s)\BB_m^s(\t,z)$
differs from $g_{z_s}^{(m)}(z;\t)$ by a finite linear combination of functions $\vth_{m,r}(\t,z)$ and that 
$\E(m\a z_s)\wh\BB_m^s(\t,z)=\tilde g_{z_s}^{(m)}(z;\t)$.  We omit the details.

\begin{thm}\label{Thmdouble}  
 Let $\v$ be as above, with singularities at $z=z_s$ $(s\in S\subset\Q^2)$ given by~\eqref{defDE}.  Then the finite part $\v^F$
as defined by~\eqref{NewFinitePart} coincides with the finite part defined by the theta expansion~\eqref{phiF}, the coefficients $h_\ell(\t)$
in this expansion are mixed mock modular forms, with completion given by
 $$  \wh h_\ell(\t) \=   h_\ell(\t)\+\sum_{s\in S/\Z^2} \Bigl(D_s(\t)\,\Th^{s\,*}_{m,\ell}(\t) \+ E_s(\t)\,\mytheta^{s\,*}_{m,\ell}(\t)\Bigr)\,,$$
and the completion $\wh\v^F$ defined by~\eqref{defphiFhat} transforms like a Jacobi form of weight~$k$ and index~$m$.\end{thm}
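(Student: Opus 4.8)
The plan is to run the argument of the simple-pole case (Theorems~\ref{Thmsingle} and~\ref{Thmmock}) essentially verbatim, with Proposition~\ref{Sander2} taking over the role played by the Zwegers computation for~$\AA_m^s$. First I would prove the double-pole analogue of Theorem~\ref{Thmsingle}: that $\v=\v^F+\v^P$, where $\v^P$ is the explicit sum~\eqref{phiPdoub} and $\v^F=\sum_{\ell\,(\mathrm{mod}\,2m)}h_\ell(\t)\,\vth_{m,\ell}(\t,z)$ is built from the canonical Fourier coefficients~\eqref{Defofhl}; this is exactly the assertion that the finite part~\eqref{NewFinitePart} agrees with the one defined by the theta expansion~\eqref{phiF}. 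Since the functions $D_s$, $E_s$, $\AA_m^s$, $\BB_m^s$ were manufactured so that $\v^P$ has the same principal parts as~$\v$ at every pole $z\in z_s+\Z\t+\Z$, the difference $\v-\v^P$ is holomorphic in~$z$, satisfies the index-$m$ elliptic law, and hence has a theta expansion $\sum\tilde h_\ell\vth_{m,\ell}$ whose coefficients are unambiguously its Fourier coefficients; the claim is that $\tilde h_\ell=h_\ell$. I would verify this by repeating the horizontal-contour-shift argument in the proof of Theorem~\ref{Thmsingle} (which, as noted there, uses only the elliptic transformation law): comparing the Fourier coefficients of $\v$ on the line $\Im z=A\t_2$ with the canonical ones on $\Im z=-\ell\t_2/2m$, the difference $q^{\ell^2/4m}(h_\ell^{(P)}-h_\ell)$ is $2\pi i$ times a sum of residues of $\v(\t,z)\E(-\ell z)$ weighted by $\tfrac12\bigl(\sgn(\a-A)-\sgn(\ell+2m\a)\bigr)$, where now the residue at a double pole is read off from the Laurent expansion~\eqref{defDE}; matching the resulting series against the Fourier expansions~\eqref{Rcseries2} of $\FR_{-2m\a}$ and~\eqref{defRRseries} of $\FRR_{-2m\a}$ reproduces precisely the definition~\eqref{phiPdoub} of $\v^P$.

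The second step is modularity, and here essentially nothing is new. Using~\eqref{defFsmhat} and~\eqref{defGsmhat} I would set $\v^C(\t,z)=\sum_{s\in S/\Z^2}\sum_{\ell\,(\mathrm{mod}\,2m)}\bigl(D_s(\t)\,\Th^{s\,*}_{m,\ell}(\t)+m\,E_s(\t)\,\mytheta^{s\,*}_{m,\ell}(\t)\bigr)\,\vth_{m,\ell}(\t,z)$ and $\wh\v^P=\v^P+\v^C=\sum_{s\in S/\Z^2}\bigl(D_s\,\wh\AA_m^s+E_s\,\wh\BB_m^s\bigr)$. The sum over $S/\Z^2$ is well defined by~\eqref{Dell} and its $E_s$-analogue together with~\eqref{FsmPeriodic} and~\eqref{GsmPeriodic} (note $\E(-m\l\mu)=1$ for integral $m,\l,\mu$), and $\wh\v^P$ obeys the index-$m$ elliptic law by~\eqref{FsmElliptic} and~\eqref{GsmElliptic}. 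For the modular transformation one combines the weight-$(k-1)$ and weight-$(k-2)$ laws~\eqref{modEDdoub} for $D_s$ and $E_s$ with the weight-$1$ law~\eqref{FsmModular} for $\wh\AA_m^s$ and the weight-$2$ law~\eqref{GsmModular} for $\wh\BB_m^s$: each summand $D_s\,\wh\AA_m^s$ and $E_s\,\wh\BB_m^s$ picks up the factor $(c\t+d)^k\,\E\!\bigl(mcz^2/(c\t+d)\bigr)$ with $s\mapsto s\g$, and since $S$ is $SL(2,\Z)$-stable the whole sum transforms like a Jacobi form of weight~$k$ and index~$m$. Hence $\wh\v^F:=\v-\wh\v^P=\v^F-\v^C$ likewise transforms like a Jacobi form of weight~$k$ and index~$m$. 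Reading off the theta expansion then exhibits $\wh\v^F=\sum_\ell\wh h_\ell\,\vth_{m,\ell}$ with $\wh h_\ell$ obtained from $h_\ell$ by adding the explicit combination of the Eichler integrals $\Th^{s\,*}_{m,\ell}$ and $\mytheta^{s\,*}_{m,\ell}$ with coefficients $D_s$ and $E_s$ recorded in the theorem (the precise signs and the normalizing factor~$m$ being those forced by~\eqref{defFsmhat} and~\eqref{defGsmhat}, with the usual latitude in normalizing Eichler integrals of shadows), and the standard argument recovering modularity of theta coefficients from~\eqref{defphiFhat} shows each $\wh h_\ell$ transforms like a modular form of weight $k-\frac12$. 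Via the $\bar\t$-derivative formulas~\eqref{defstar}--\eqref{erfc} for $\Th^{s\,*}_{m,\ell}$ and $\mytheta^{s\,*}_{m,\ell}$ this identifies $h_\ell$ as a mixed mock modular form with shadow a linear combination of $D_s\,\overline{\Th^s_{m,\ell}}$ (weight $3/2$) and $E_s\,\overline{\mytheta^s_{m,\ell}}$ (weight $1/2$), and $\v^F$ as a mixed mock Jacobi form, which is the content of Theorem~\ref{Thmdouble}.

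The one genuinely new piece of work, and the place where care is required, is the double-pole residue bookkeeping in the first step. The point is that the residue of $\v(\t,z)\E(-\ell z)$ at a double pole involves \emph{both} $D_s$ and $E_s$ as well as the linear factor in $\ell$ coming from $\E(-\ell z)$, and the precise shape of~\eqref{defDE}, in particular its correction term $-2m\a\,E_s$, is chosen exactly so that after forming $\E(m\a z_s)\,\v$ and expanding, the resulting coefficient pattern matches the identity $\FRR_c=\bigl(-y\tfrac{d}{dy}+c\bigr)\FR_c$ used to define $\FRR_c$ in~\eqref{defRRseries} and the averaging identity~\eqref{RRcAsAv}. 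Checking this compatibility, and correctly handling the half-multiplicity convention for poles that happen to lie on the horizontal lines of integration, is the heart of the matter; everything else is a formal rerun of the simple-pole arguments, with Proposition~\ref{Sander2} — whose proof is sketched just before the theorem — supplying the only analytically nontrivial input.
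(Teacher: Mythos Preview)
Your proposal is correct and follows essentially the same approach as the paper: the paper's proof is a brief sketch saying to repeat the horizontal-line residue argument of Theorem~\ref{Thmsingle} to establish $\v=\v^F+\v^P$, then set $\wh\v^P=\sum_{s}(D_s\,\wh\AA_m^s+E_s\,\wh\BB_m^s)$ and invoke Proposition~\ref{Sander2} together with~\eqref{modEDdoub} for the modularity, leaving the details to the reader. You have supplied exactly those details, including the correct identification of the double-pole residue computation as the only place requiring new care.
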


\smallskip
The proof follows the same lines as before: the equivalence of~\eqref{phiF} and~\eqref{NewFinitePart} is proved by expanding $\v(\t,z)$ as a Fourier
series along the horizontal line $\Im(z)=\Im(P)$ for some generic point $P\in\C$ and calculating the difference $\v-\v^F$ as
a sum of residues, and the mock modularity is proved by decomposing $\v$ as $\wh\v^F+\wh\v^P$ with $\wh\v^F$ as in~\eqref{defphiFhat}  and
$\wh\v^P=\sum_{s\in S/\Z^2}\bigl(D_s\wh \AA_m^s+E_s\wh\BB_m^s\bigr)$, which transforms like a Jacobi form by virtue of Proposition~\ref{Sander2}.
Again the details are left to the reader.  Note that here the mock modular forms $h_\ell(\t)$ are of the ``even more mixed" variety mentioned
at the end of \S\ref{MockJacobi}, since they now have a shadow that is a linear combination of two terms $\sum_sD_s\,\overline{\Th_{m,l}^s}$ and
$\sum_sE_s\,\overline{\mytheta_{m,l}^s}$ belonging to two different tensor products $M_{k-1}\otimes\overline{M_{3/2}}$ and $M_{k-2}\otimes\overline{M_{1/2}}$
and hence two different mixed weights $k-\frac12\,|\,k-1$ and $k-\frac12\,|\,k-2$ rather than a single mixed weight $k-\frac12\,|\,k-1$
as in the case of simple poles.

\subsection{Examples \label{MJExamples}}  We end this section by giving five examples of meromorphic Jacobi forms
and their canonical decompositions into a mock Jacobi form and a finite linear combination of Appell-Lerch sums.  We  systematically
use the notations $A=\v_{-2,1}$, $B=\v_{0,1}$, $C=\v_{-1,2}$ for the three basic generators of the ring of weak Jacobi forms
as described in equations~\eqref{Fourirphi02}--\eqref{WJF}.
 
\medskip
\noindent{\bf Example 1: Weight 1, index 1, simple pole at $z=0$.}  As our first example we take the Jacobi form $\v=C/A\in J_{1,1}^{\text{mer}}$,
which has a simple pole of residue $1/\pi i$ at $z=0$ and a Fourier expansion beginning
$$\frac{y+1}{y-1}\,-\,(y^2-y^{-2})\,q\,-\,2(y^3-y^{-3})\,q^2\,-\,2(y^4-y^{-4})\,q^3 \,-\,(2y^5+y^4-y^{-4}-2y^{-5})\,q^4\,-\,\cdots\;.$$  
Calculating the Fourier expansion of the polar part $\v^P=\Av1{\frac{y+1}{y-1}}$, we find that it begins the same way, and indeed, we must have
$\v=\v^P$ because the Fourier coefficients $h_\ell$ all vanish (we have $h_{-\ell}=-h_\ell$ because the weight is odd and $h_{\ell+2}=h_\ell$
because the index is~1).  So here there is no mock Jacobi form at all, but only the polar correction term given by the Appell-Lerch sum, a
kind of a ``Cheshire cat" example which is all smile and no cat.

\medskip \noindent
{\bf Example 2: Weight 1, index 2, simple pole at $z=0$.} As a second example take $\v=BC/A\in J_{1,2}^{\text{mer}}$.  Here we find
\ben \v^P&\=&12\,\AvB2{\frac{y+1}{y-1}}\=12\,\frac{y+1}{y-1}\,-\,12(y^4-y^{-4})q^2\+24(y^5-y^{-5})\,q^3\+\cdots \\
\v^F&\=&(y-y^{-1})\,-\,(y^3+45y-45y^{-1}-y^{-3})q\+(45y^3-231y+231y^{-1}-45y^{-3})q^2\+\cdots \, , \een
and by comparing the shadows and the first few Fourier coefficients (cf.~eq.~\eqref{defCF2}), we see 
that $\v^F$ is the negative of the mock Jacobi form $\CF_2(\t,z)=h^{(2)}(\t)\,\vth_ 1(\t,2z)$ 
discussed in Example~4 of \S\ref{MockJacobi} that is conjecturally related to representations of~$M_{24}\,$. 
Using~\eqref{phiminus1} and dividing by $\vth_1(\t,2z)$, we can rewrite the identity $\v=\v^P+\v^F$ as
  \be \label{CoverA} 
  \eta(\t)^{-3}\,\frac{\v_{0,1}(\t,z)}{\v_{-2,1}(\t,z)} \= -\,\frac{12}{\vth_1(\t,2z)}\,\AvB2{\frac{1+y}{1-y}}\,-\,h^{(2)}(\t)\,. \ee
One can check that $\vth_1(\t,2z)\mu(\t,z)=\Av2{\frac{1+y}{1-y}}$, so this agrees with the splitting~\eqref{K3decomp} coming 
from the decomposition of the elliptic genus of $K3$ into superconformal characters.

\medskip \noindent
{\bf Example 3: Weight 1, index 1/2, simple pole at $z=0$.} We can also multiply~\eqref{K3decomp} by $\vth_1(\t,z)$ and write it in the form
  \be \label{BoversqrtA}
  \frac B{\sqrt A} \= -12\,\AvB{\frac12,-}{\frac{\sqrt y}{1-y}} \;-\;h^{(2)}(\t)\,\vth_1(\t,z)\,, \ee
where $\,\text{Av}^{(\frac12,-)}[F(y)]:=\sum_{n\in\Z}(-1)^nq^{n^2/2}\,y^n\,F(q^ny)$.  This decomposition also fits into the scheme explained in
this section, but slightly modified because the index here is half-integral.

\medskip \noindent
{\bf Example 4: Weight 2, index 0, double pole at $z=0$.} Now take $\v=B/A$, which, as we saw in \S\ref{MJF1}, is just 
a multiple of the Weierstrass $\wp$-function.This example does not quite fit in with 
the discussion given so far, since we did not allow $m=0$ in the definition~\eqref{defAvm} of the averaging operator 
($m<0$ doesn't work at all, because the sum defining the average diverges, and $m=0$ is less interesting since a form
of index~0 is clearly determined up to a function of $\t$ alone by its singularities, so that in our discussion we 
excluded that case too), but nevertheless it works.  The decomposition $\v=\v^P+\v^F$ takes the form
  \be \frac{\v_{0,1}(\t,z)}{\v_{-2,1}(\t,z)} \= 12\,\AvB0{\frac{y}{(1-y)^2}}\+ E_2(\t)\,,\ee
with the finite part being simply the quasimodular Eisenstein series $E_2(\t)$, which
is also a mock modular form. (It is automatically independent of~$z$ since the index is~0.)

\medskip \noindent
{\bf Example 5: Weight 2, index 1, double pole at $z=0$.}  The next example is a special case of an infinite family
that will be treated in detail in \S\ref{flyJac}.  Take $\v=B^2/A\in J_{2,1}^{\text{mer}}$. Then we find
\ben
\v &\=& \Bigl(\frac{144\,y}{(1-y)^2}\+y+22+y^{-1}\Bigr) \+ \Bigl(22y^2+152y-636+152y^{-1}+ 22y^{-2}\Bigr)\,q \\
&&  \+\Bigl(145y^3-636y^2+3831y-7544+3831y^{-1}-636y^{-2}+145y^{-3}\Bigr)\,q^2 \+\cdots \, , \\ 
\v^{\rm P} &\=& \AvB 1{\frac{144\,y}{(1-y)^2}} \= \frac{144\,y}{(1-y)^2} \+ 0\,q \+ 144\,(y^3 + y^{-3})q^2 \+ 288\,(y^4+y^{-4})q^3 \+ \cdots 
\een
and hence that $\v^F=\v-\v^P=\sum\limits_{4n-r^2\ge-1} C(4n-r^2)\,q^n\,y^r$ with the first few $C(\DD)$ given by 
$$ \begin{tabular}{|c|ccccccccc|}   \hline    $\DD$ & $-1$& 0&3 & 4 & 7& 8 & 11 & 12 & 15  \\
       \hline  $C(\DD)$ &1 & 22 & 152 & $-636$ & 3831& $-7544$ & 33224 & $-53392$ & 191937\\  \hline  
  \end{tabular}  \;\, . 
$$
On the other hand, we have the weak Jacobi form $E_4(\t)\v_{-2,1}(\t,z)=\sum C^*(4n-r^2) q^n y^r$ with 
$$
\begin{tabular}{|c|ccccccccc|}   \hline  $\DD$ & $-1$& 0&3 & 4 & 7& 8 & 11 & 12 & 15 \\ 
 \hline  $C^*(\DD)$ &1 & $-2$ & 248 & $-492$ & 4119 & $-7256$ & 33512& $-53008$ & 192513  \\  \hline  
\end{tabular}  \;\, , 
$$
We see that $C(\DD)$ and $C^*(\DD)$ are very close to each other (actually, asymptotically the same) 
and that their difference is precisely $-288$ times the Hurwitz-Kronecker 
class number $H(d)$. We thus have $\v^F= E_4A -288\CH$, and we have found~$\CH(\t,z)$, the simplest of
all mock Jacobi forms, as the finite part of the meromorphic Jacobi form $(E_4A^2-B^2)/288A$.  We
remark that the numbers $-C^*(\DD)$ have an interesting interpretation as ``traces of singular moduli"~\cite{Zagier:2002}.

\medskip \noindent
{\bf Example 6: Weight $-$5, index 1, simple poles at the 2-torsion points.} 
Finally, we give an example with more than one pole. Take $\v=A^3/C\in J_{1,-5}$.  This function has three poles, all
simple, at the three non-trivial 2-torsion points on the torus $\C/(\Z\t+\Z)$.  The three corresponding modular forms,
each of weight~$-6$, are given by 
$$ D_{(0,\frac12)}(\t) \= 16\,\frac{\eta(2\t)^{12}}{\eta(\tau)^{24}}\,, \quad
   D_{(\frac12,0)}(\t) \= -\frac14\,\frac{\eta\bigl(\frac\t2\bigr)^{12}}{\eta(\tau)^{24}}\,, \quad
   D_{(\frac12,\frac12)}(\t) \= \frac i4\,\frac{\eta\bigl(\frac{\t+1}2\bigr)^{12}}{\eta(\tau)^{24}} \, , $$
and one finds 
\ben 
&&\v \= \v^{\rm P} \= D_{(0,\frac12)}(\t)\,\AvB1{\frac12 \, \frac{y-1}{y+1}}
  \+ q^{1/4}\, D_{(\frac12,0)}(\t)\,\AvB1{\frac 1{2y} \, \frac{y+\sqrt q}{y-\sqrt q}} \\
 &&\qquad\qquad\quad  \+ q^{1/4}\, D_{(\frac12,\frac12)}(\t)\,\AvB1{\frac 1{2y} \, \frac{y-\sqrt q}{y+\sqrt q}} \,,\qquad \v^F\;\equiv\;0\,, \een
another ``Cheshire cat" example (of necessity, for the same reason as in Example~1,
since again $m=1$ and $k$ is odd).

\section{Special mock Jacobi forms \label{flyJac}}

We now turn to the study of certain families of meromorphic Jacobi forms of weights 1 or 2 and arbitrary 
positive index and of their related mock Jacobi forms, the weight 2 case being the one relevant for the application to 
black hole microstate counting. In this section, we introduce these families and formulate a nunmber of properties
of them that were found experimentally and that we will prove in~\S\ref{structure}.  In particular, it will turn
out that all of these functions can be expressed,
using the Hecke-like operators introduced in \S\ref{hecke}, in terms of a collection of mock Jacobi forms $\CQ_{m}$
defined (though not quite uniquely: there is some choice involved starting at $m = 91$) for all square-free~$m$, 
of weight either~1 or 2 depending on whether~$m$ has an odd or even number of prime factors.  These functions, which
seem to have particularly nice arithmetic properties, include all of the examples studied in \S\ref{Mock}, 
and several other special mock modular forms that have appeared in the literature.  

\subsection{The simplest meromorphic Jacobi forms \label{simplest}}

The two simplest situations for the theory described in \S\ref{MockfromJacobi} are when the meromorphic 
Jacobi form $\v=\v_{k,m}$ satisfies either 
\newline\noindent\null\quad
$\bullet$ $k=1$ and $\v(\t,z)$ has a simple pole at $z=0$ and no other poles in $\IC/(\IZ\t + \IZ)$ or 
\newline\noindent\null\quad
$\bullet$ $k=2$ and $\v(\t,z)$ has a double pole at $z=0$ and  no other poles in $\IC/(\IZ\t + \IZ)$, \\
since in these two cases  the modular forms $D_{s}(\t)$ (resp.~$E_{s}(\t)$) defined in \S\ref{PolarPart} (resp.~\S\ref{double}) 
are simply constants and the canonical Fourier coefficients $h_{\ell}$ of $\varphi$ are pure, rather than mixed, mock modular forms.
If we normalize $\v$ by $\v(\t,z) \sim (2\pi i z)^{-1}$ in the first case and $\v(\t,z) \sim (2\pi i z)^{-2}$ in the second case 
as $z \to 0$, then any two choices of $\v$ differ by a (weakly) holomorphic Jacobi form whose Fourier coefficients 
are true modular forms and are therefore well understood. It therefore suffices in principle to make some specific
choice of $\v_{1,m}$ and $\v_{2,m}$ for each $m$ and to study their finite parts $\v_{k,m}^{\rm F}$, but, as we shall
see, the problem of choosing a good representative $\v$ is very interesting and has several subtleties. 
For any choice of $\v_{k,m}$, we will use the notation~$\Phi_{k,m}:=\v_{k,m}^{\rm F}$, with decorations like 
$\Phi_{k,m}^{\rm stand}$ or $\Phi_{k,m}^{\rm opt}$ corresponding to the choice of $\v_{k,m}$. 
The shadow of the mock Jacobi form $\Phi_{k,m}$  is independent of the choice of representative and is a multiple of 
$\sum_{\ell\mypmod{2m}}\overline{\vth_{m,\ell}^{2-k}(\t)}\,\vth_{m,\ell}(\t,z)$,
where $\vth_{m,\ell}^1$ and $\vth_{m,\ell}^0$ are defined as at the end of \S\ref{ThetaTaylorExp}.

The Jacobi forms $\v_{2,m}$ are the ones related to the Fourier coefficients $\psi_m$ of 
$\Phi_{10}(\O)^{-1}$ that appear in the application to black holes (eq.~\eqref{reciproigusa}). 
Indeed, since the principal part of the pole of $\psi_{m}(\t,z)$ at $z=0$ equals $p_{24}(m+1)\,\Delta(\t)^{-1}\,(2\pi iz)^{-2}$,
we can write $\Delta(\t)\psi_m(\t,z)$ as the sum of $p_{24}(m+1)\,\v_{2,m}$ and a weak Jacobi form.  
If we make the simplest choice for $\v_{2,m}$, namely 
  \be\label{defphi2m}  \vst_{2,m} \= \frac{B^{m+1}}{12^{m+1}A} \,, \ee
where $ A = \v_{-2,1}$, $B = \v_{0,1}$ are the special Jacobi forms introduced in \S\ref{Jacobi} 
(which has the correct principal part because $A(\t,z) \sim (2\pi i z)^2$ and $B(\t,z) \to 12$ 
as $z \to 0$), then the first few values of  $\D(\t)\,\psi_{m}(\t,z)-p_{24}(m+1)\,\vst_{2,m}(\t,z)$ can be read off from~\eqref{firstpsim}. For instance, we have $p_{24}(3)=3200$ and the fourth equation of~\eqref{firstpsim} says that 
$\D\psi_2=3200\,\vst_{2,2}+\frac{16}9E_4AB+\frac{10}{27}E_6A^2$.

In the weight one case we can also make a simple standard choice of $\v$.  It has to be divisible
by $C$ because the weight is odd, and since $C\sim4\pi iz$ for $z\to0$ we can choose
  \be\label{defphi1m}  \vst_{1,m} \= \frac{B^{m-1}\,C}{2\cdot 12^{m-1}\,A}\= \frac C2 \, \v^{\rm stand}_{2,m-2}\;.   \ee
as our standard functions. However, we emphasize that \eqref{defphi2m} and \eqref{defphi1m} are only the starting points 
for our discussion and will be replaced by better choices as we go along.

We will apply the decomposition of \S\ref{MockfromJacobi} to the functions $\v$.  The polar part is independent of the 
choice of $\v$ and is given by $\v_{k,m}^{\rm P}=\AP{k,m}$, 
where $\AP{k,m}$ is given
(in the notation of equations~\eqref{defFms} and~\eqref{defGms}) by
  \be\label{AP1AP2}  \AP{1,m} \= \AA_{1,m}^{(0,0)} \= \AvB m{\frac12 \, \frac{y+1}{y-1} } \, , \qquad   \AP{2,m} \= \AA_{2,m}^{(0,0)}
  \= \AvB m{ \frac{y}{(y-1)^2} }  \ee
or---written out explicitly---by
  \be\label{AP1AP2again}
  \AP{1,m}(\t,z) \=  -\frac12 \, \sum_{s \in \mathbb{Z}} q^{ms^2\,}y^{2ms} \, \frac{1 +q^s y}{1 -q^s y} \, ,  \qquad 
  \AP{2,m}(\t,z) \=  \sum_{s \in \mathbb{Z}} \frac{q^{ms^2 +s\,}y^{2ms+1}}{(1 -q^s y)^2} \, .  \ee
Also independent of the choice of $\v$ is the correction term which must be added to the finite 
part and subtracted from the polar part of $\v_{2,m}$ in order to make these functions transform
like Jacobi forms. In particular, for the double pole case, we find from \eqref{defGsmhat} that the completion 
$\wh \Phi_{2,m}$ of $\Phi_{2,m} = \v^{\rm F}_{2,m}$ for any choice of $\v_{2,m}$ is given by 
\be\label{phi2mhat}
\wh \Phi_{2,m}(\t,z) \= \Phi_{2,m}(\t,z) \, - \, \sqrt{\frac{m}{4\pi}} \, \sum_{\ell \mypmod{2m}} 
\big( \vth_{m,\ell}^{0} \big)^{*} (\t)\, \vth_{m,\ell} (\t,z) \, , 
\ee
where $\vth_{m,\ell}^{0}(\t) = \vth_{m,\ell}(\t,0)$ and  
$(\vth_{m,\ell}^{0})^{*}$ is normalized\footnote{Note that the function $\vth_{m,\ell}^{(0,0)\,*}$ 
as defined in \eqref{defEichler1/2} is equal to $1/\sqrt{4\pi m}$ times $(\vth_{m,\ell}^{0})^{*}$.}
as in equations~\eqref{defstar} and~\eqref{erfc}.

The functions $\AP{1,m}$, $\AP{2,m}$ have wall-crossings, so their Fourier expansions 
depend on the value of $z_2 = \Im(z)$. In the range $0 < z_2 < \t_2$, or $|q| < |y| < 1$, they are given by 
\bea  \label{AAm}
\AP{1,m} \=  \biggl(- \sum_{s \ge 0}  \sideset{}{^{*}} \sum_{\ell \ge 0} \+  \sum_{s < 0} \sideset{}{^{*}} \sum_{\ell \le 0} \biggr) \, 
q^{ms^2 + \ell s} \, y^{2ms+\ell} \, , \\ \label{BBm}
\AP{2,m} \=  \biggl(\sum_{s \ge 0}  \sideset{}{^{*}} \sum_{\ell \ge 0} \, - \, \sum_{s < 0}  \sideset{}{^{*}} \sum_{\ell \le 0} \biggr) \, 
\ell \, q^{ms^2 + \ell s} \, y^{2ms+\ell} \, , 
\eea
where the asterisk on the summation sign has the usual meaning (count the term $\ell=0$ with multiplicity $1/2$).
Hence the coefficient of $q^{n} y^{r}$ is zero unless $\DD=4mn-r^2$ is the negative of the square of an 
integer congruent to $r$ modulo $2m$, in which case it is $0$, $\pm \frac12$, $\pm1$ or $\pm 2$ if $k=1$ and is
0, $\pm \sqrt{-\DD}$ or $\pm 2 \sqrt{-\DD}$ if $k=2$. (Compare the remark at the end of \S\ref{MockDecomposition}.)
If we expand $\AP{1,m}$ and $\AP{2,m}$ in Fourier series for $z_2$ in some other interval $(N,N+1)\t_2$, then the formulas~\eqref{AAm}
and~\eqref{BBm} are unchanged except for a shift in the inequalities defining the summation.  In summary, the Fourier coefficients
of the polar part of $\v$, although not quite like the coefficients of a usual Jacobi form---they do not only depend 
on $\DD$ and on $r$ mod $2m$, they are non-zero for infinitely many negative $\DD$, and there is wall-crossing---are
extremely simple and have very slow growth (at most like $\sqrt{|\DD|}$). 

The interesting part of the Fourier expansion of $\v_{k,m}$ is therefore the mock part~$\Phi_{k,m}$. Here the choice 
of $\v_{k,m}$ makes a big difference. For instance, Examples 1--4 of  \S\ref{MJExamples} say that the mock Jacobi forms $\vstF_{k,m}$
corresponding to the standard choices~\eqref{defphi2m} and~\eqref{defphi1m} for small~$m$ are
\be
\vstF_{1,1} \= 0, \quad\; \vstF_{1,2} \= \frac1{24} \, \CF_2 \,, \quad\; \vstF_{2,0} \= \frac 1{12} \, E_2 \, ,
\quad\; \vstF_{2,1} \= \frac 1{144} \, E_4\,A \,-\, 2\,\CH \,,
\ee
 and we see that this choice is a good one in the first three cases (indeed, in these cases there {\it is} no choice!) 
but that, as we have already observed, a better choice of $\v$ in the fourth case would have been the function 
$\PHI_{2,1}=\vst_{2,1} - \frac 1{144} E_4 A$, with mock part $\PHIF_{2,1}=-2 \CH$,
because it has no terms $q^{n} y^{r}$ with discriminant $\DD=4n-r^2$ smaller than zero and (consequently) 
has Fourier coefficients that grow polynomially in~$\DD$ (actually, like $\DD^{1/2+\varepsilon}$) rather than
exponentially in~$\sqrt{\DD}$. Our next main objective, to which we devote the next subsection, 
is therefore to replace our initial choice of $\v_{k,m}$
as given by \eqref{defphi2m} or \eqref{defphi1m} by an optimal choice $\PHI_{k,m}$, where
``optimal'' means ``with the negative discriminants $d=4mn-r^2$ that occur being as little negative 
as possible'' or, equivalently, ``with Fourier coefficients growing as slowly as possible.'' 
This choice turns out  to have Fourier coefficients of very much 
smaller growth than those of our initial choice $\vst_{k,m}$ in all cases, and to be unique in many cases.

\subsection{Choosing the function $\v$ for weight 2 and small index \label{choosephi}}
We now consider in more detail the meromorphic Jacobi forms of weight 2 and index $m$
with a double pole at the origin introduced in~\S\ref{simplest}. Any such form will have the form 
\be
\v_{2,m} \= \frac{\psi_{0,m+1}}{12^{m+1}A}\,, \qquad
\psi_{0,m+1} \;\, \in \;\,  B^{m+1} \+ A\cdot \wt J_{2,m} \;\,\subset\;\, \wt J_{0,m+1} \;.
\ee
For each $m$, we want to choose a meromorphic Jacobi form $\PHI_{2,m}$ differing from $\vst_{2,m}$
by a weak Jacobi form of weight~2 and index~$m$ whose coefficients grow as slowly as possible.  
We are interested both in the form of the corresponding~$\psi_{0,m+1}=\psi_{0,m+1}^{\rm opt}$ as a 
polynomial in $A$ and $B$ with modular form coefficients, and in the nature of the finite 
part~$\Phi_{2,m}^{\rm opt} = \PHI_{2,m} - \AP{2,m}$ as a mock Jacobi form. 
In this subsection we will look at the optimal choice for $0\le m \le 11$ to see what 
phenomena occur, while in the next subsection, we report on the calculations for higher values up to
about~250 and formulate the general results. 

For {\bf m=0}, we have to choose 
  $$ \PHI_{2,0} \= \vst_{2,0} \= \frac{B}{12A} \; , \qquad \psi^{\rm opt}_{0,1} \= B  $$
because there are no weak Jacobi forms of weight~2 and index~0. The mock part $\Phi^{\rm opt}_{2,0}$
here is $E_2(\t)/12$, as we have already seen (Example~3 of \S\ref{MJExamples}). 

For {\bf m=1} the only freedom is the addition of a multiple of $E_4 A$ and (as recalled in \S\ref{simplest}) 
the optimal choice is 
\be\label{PHI21} 
 \PHI_{2,1} \= \vst_{2,1}\,-\,\frac{E_4 A}{144}\,,\qquad \psi^{\rm opt}_{0,2}\=B^2\,-\,E_4 A^2\,, 
\ee
because then $\PHIF_{2,1}=-2\CH$ has no terms $q^{n} y^{r}$ with $4n-r^2 <0$  and its Fourier coefficients 
$-2 H(4n-r^2)$ have only polynomial growth as $4n-r^2 \to \infty$. 

Now we look at {\bf m=2}. Here we can modify $12^3\,\Phi^{\rm stand}_{2,2} = B^3/A - 12^3 \AP{2,2}$ 
by adding to it any linear combination of the two weak 
Jacobi forms $E_4 A B$ and $E_6 A^2$. Each of these three functions has Fourier coefficients of the form 
$c(n,r) = C_{\v}(8n-r^2)$  (because the $(n,r)$-coefficient of a Jacobi form of prime or prime power index $m$ and 
even weight always depends only on the discriminant $4mn-r^2$), with the first few given by 
  \begin{center} \begin{tabular}{|c|ccccccccc|ccc}  
\hline
       $\DD$ & $-4$& $-1$& $0$ & $4$ &$7$& $8$& $12$ & $15$ &$16$ \\  \hline
       $C(12^3\Phi^{\rm stand}_{2,2};\DD)$ &1&32& $366$ & $528$ &1056 &$-10144$&$-14088$&$92832$&$-181068$\\
      $C(E_4AB;\DD)$ &1 &8& $-18$  & 144 &$2232$&$-4768$&$-22536$&96264&$-151884$\\
      $C(E_6A^2;\DD)$ &1 & $-4$ & 6  & $-480$ &$1956$&$-2944$&$-28488$&96252&$-139452$\\
\hline
       \end{tabular} \end{center}

\ndt It follows from this table that the linear combination 
\be\label{defX2} 
\PHI_{2,2} \=  \vst_{2,2} \, - \, \frac 1{576} \, E_4 A B \+ \frac 1{864} E_6 A^2 
 \= \frac{B^3 \, - \, 3 E_4  A^2 B \+ 2 E_6 A^3 }{12^3 \, A} 
\ee
has a mock Jacobi part  $ \Phi^{\rm opt}_{2,2} =  \PHI_{2,2} - \AP{2,2}$
with no negative terms and much smaller coefficients:
\begin{center}
\begin{tabular}{c|cccccccccccc}  
       $d$ & $-4$& $-1$& $0$ & $4$ &$7$& $8$& $12$ & $15$ &$16$ \\
       \hline       $C(\Phi^{\rm opt}_{2,2};\DD)$ &0&0& 1/4 & $-1/2$ &$-1$ &$-1$&$-2$& $-2$& $-5/2$ \\
        \end{tabular}  
\end{center}
\ndt On inspection of these numbers and the next few values, we recognize them as simple linear combinations of 
Hurwitz class numbers, namely:
  $$  C(\Phi^{\rm opt}_{2,2}; \DD) =  \begin{cases} - H(\DD) \quad  &  {\rm if} \, \DD \equiv 7 \mypmod 8 \, ,  \\ 
  -  H(\DD) - 2 H(\DD/4)  \quad &   {\rm if} \, \DD \equiv 0 \mypmod 4 \, ,  \end{cases}   $$
which says that $\Phi^{\rm opt}_{2,2}$ is simply $-\CH|V_2$, with $\CH$ as in \eqref{classH}  and $V_2$ as in \eqref{defVl}.
(Note, by the way, the similarity  between \eqref{defX2} and \eqref{weierstrass}. Together they say that 
$\PHI_{2,2} = (C/2A)^2$.)

The cases {\bf m=3}, {\bf m=4}, {\bf m=5}, {\bf m=7}, {\bf m=8}, {\bf m=9} and {\bf m=11} are similar to the case $m=2$. For each
of these values of~$m$ the coefficient of $q^ny^r$ in $\vstF_{2,m}$ depends only on $\DD=4nm-r^2$, the number of negative $\DD$ for
which this coefficient is non-zero equals the dimension of the space of weak Jacobi forms of weight~2 and index~$m$, and we
can subtract a unique weak Jacobi form from $\vst_{2,m}$ to obtain a meromorphic Jacobi form $\PHI_{2,m}$ whose finite part is
a strongly holomorphic mock Jacobi form with very small Fourier coefficients.  For example, for $m=3$ the optimal choice of
$\PHI_{2,m}$ is $\PHI_{2,3}=\psi^{\rm opt}_{0,4}/12^4A$ with 
  \be\label{PSIF4} \psi^{\rm opt}_{0,4}   \= B^4\,-\, 6 E_4 A^2 B^2 \+ 8 E_6 A^3 B \,-\, 3 E_4^2 A^4 \ee
and the first few Fourier coefficients of $12^4\,\vstF_{2,3}$ and $\PHIF_{2,3}$ are given by
  \begin{center} \begin{tabular}{|c|ccccccccc|ccc}   \hline
       $\DD$ & $-9$& $-4$& $-1$& $0$ & $3$ &$8 $& $11$& $12$ & $15$  \\  \hline
       $12^4\,C(\vstF_{2,3};\DD)$ & 1 & 42 & 687 & 5452 & 1104 & 8088 & $-1488$ & $-139908$ & $-19629$\\
      $C(\PHIF_{2,3};\DD)$ & 0 & 0 & 0  & $1/3$ & $-1/3 $& $-1$  & $-1$ & $-4/3$ & $-2$ \\
       \hline \end{tabular} \end{center}
Again we recognize the coefficients $C(\PHIF_{2,3};\DD)$ as $-H(\DD)$ if $9\!\nmid\!\DD$ and $-H(\DD)-3H(\DD/9)$ 
if $9\!\mid\!\DD$, so that
just as for $m=2$ we find that the mock part of $\PHI_{2,3}$ equals $-\CF_ 1|V_3$.  The same thing also happens for $m=5$, 7 and 11, 
suggesting that for prime values of $m$ we can always choose
 \be\label{PHI2p}    \PHIF_{2,p} \= -\,\CH|V_p  \qquad \text{($p$ prime).} \ee
This will be proved in \S\ref{structure}.  
For the prime-power values $m=4$, 8 and 9 this formula no longer holds, but the
coefficients of $\PHIF_{2,m}$ are still simple linear combinations of Hurwitz-Kronecker class numbers and we find the expressions
$$  \PHIF_{2,4}=-\CH|V_4 + 2 \CH|U_2\,, \quad \PHIF_{2,8}=-\CH|V_8 + 2 \CH|U_2V_2\,,  \quad \PHIF_{2,9}=-\CH|V_9 + 3 \CH|U_3\,, $$
where $U_t$ is the Hecke-like operator defined in \S\ref{hecke} that multiplies the index of a Jacobi form (or mock Jacobi form) by~$t^2$.
This suggests that for prime power values of $m$ we can choose
 \be\label{PHI2pp}    \PHIF_{2,m} \= -2\,\CH|\,\CV^{(1)}_{2,m}  \qquad \text{($m$ a prime power)} \ee
where $\CV^{(1)}_{k,t}$ for $t\in\N$ is the Hecke-like operator from Jacobi (or weakly holomorphic Jacobi, or mock Jacobi) forms of weight~$k$ and index~$1$ to forms of weight~$k$ and index~$t$ defined 
in \eqref{Vkt1}. 
Formula \eqref{PHI2pp} will also be proved in \S\ref{structure}. 

The data for the values discussed so far ($0\le m\le11$, $m\ne6,\,10$) is summarized in the table below, which expresses
the weak Jacobi form $\psi^{\rm opt}_{0,m+1}=12^{m+1}A\,\PHI_{2,m}$ for each of these values of~$m$ as a polynomial in~$A$ and~$B$.
In this table we notice many regularities, e.g., the coefficients of $E_4A^2B^{m-2}$, $E_6A^3B^{m-3}$ and $E_4^2A^4B^{m-4}$  
in $\psi^{\rm opt}_{0,m}$ are always given by $-\bigl({m\atop2}\bigr)$, $2\bigl({m\atop3}\bigr)$ and $-3\bigl({m\atop4}\bigr)$, respectively,  
and more generally, if we write $\psi^{\rm opt}_{0,m}$ as $\sum_{i=0}^m f_{m,i} \, A^{i} \, B^{m-i}$ with 
$f_{m,i}=f_{m,i}(\t) \in M_{2i}$, $f_{m,0}=1$, then the constant terms $f_{m,i}(\infty)$ of the modular forms $f_{m,i}$ (which 
can be read off from the table by setting $E_4=E_6=1$ and $D=0$) are given in each case by 
$f_{m,i}(\infty)=(-1)^{i-1}(i-1)\bigl({m\atop i}\bigr)$ or equivalently by the generating function   
  \be\label{coeffCm} \sum_{i=0}^m  f_{m,i}(\infty) \, X^{m-i}\=  (X-1)^m \+m \, (X-1)^{m-1}  \= (X-1)^{m-1}(X+m-1)\;, \ee
e.g.~$X^8 - 28X^6 + 112X^5 - 210X^4 + 224X^3 - 140X^2 + 48X - 7 = (X-1)^7(X+7)$.  This formula has a simple interpretation: the
constant term with respect to~$q$ of $\sum f_{m,i}\,A^{i}\,B^{m-i}$ is $\sum f_{m,i}(\infty)(y^{-1}-2+y)^i(y^{-1}+10+y)^{m-i}$
and this Laurent polynomial---subject to the condition $f_{m,1}(\infty)=0$, which comes from the fact that there are no modular forms of weight~2 on $SL(2,\Z)$---has minimal growth $\,\text O\bigl(y+y^{-1}\bigr)\,$ precisely for the coefficients defined by~\eqref{coeffCm}. 

\vspace{0.2cm}

\begin{minipage}{\linewidth}
    \vspace{5pt}
  \begin{centering}
    \begin{tabular}{!{\thvline}   r  !{\thvline}   r   r   r   r   r   r   r   r   r  r  !{\thvline}      }
       \noalign{\thhline}
      $m$ & $1$  & $2$ & $3$  & $4$ & $5$& $6$ & $8$ & $9$ & $10$ & $12$ \\
       \noalign{\thhline}
      $B^m$ & 1 & $1$  & $1$ & $1$  & $1$ & $1$& $1$ &$1$& $1$ & $1$\\
       \hline
      $E_4 \cdot A^2 B^{m-2}$ & 0 & $-1$  & $-3$ & $-6$  & $-10$ & $-15$& $-28$& $-36$ &$-45$ & $-66$\\
       \hline
      $E_6 \cdot A^3 B^{m-3} $ & 0 & $0$  & $2$ & $8$  & $20$ & $40$ &$112$& $168$ &$240$ & $440$\\
       \hline
      $E_4^2  \cdot  A^4 B^{m-4}$ & 0 & $0$  & $0$ & $-3$  & $-15$ & $-45$& $-210$& $-378$ &$-630$ & $-1485$\\
       \hline
      $E_4 E_6 \cdot A^5 B^{m-5} $ & 0 & $0$  & $0$ & $0$  & $4$ & $24$& $224$& $504$ &$1008$ & $3168$\\
       \hline
      $E_4^3  \cdot A^6 B^{m-6} $ & 0 & $0$  & $0$ & $0$  & $0$ & $-5$&  $-140$& $-420$ &$-1050$&$-4620$\\
      $D  \cdot A^6 B^{m-6} $ & 0 & $0$  & $0$ & $0$  & $0$ & $1$& $10$& $30$ &$48$ & $222$\\
       \hline
      $E_4^2 E_6 \cdot A^7 B^{m-7} $ & 0 & $0$  & $0$ & $0$  & $0$ & $0$ &$48$& $216$ &$720$ & $4752$\\
       \hline
      $E_4^4 \cdot  A^8 B^{m-8}$ & 0 & $0$  & $0$ & $0$  & $0$ &  $0$ &$-7$& $-63$ &$-315$&$-3465$\\
      $E_4 D \cdot A^8 B^{m-8} $ & 0 & $0$  & $0$ & $0$  & $0$ &  $0$ &$-4$& $-36$ &$-18$&$-360$\\
       \hline
      $E_4^3 E_6 \cdot  A^9 B^{m-9}$ & 0 & $0$  & $0$ & $0$   & $0$& $0$ &$0$& $8$ & $80$ & $1760$\\
      $ E_6 D \cdot A^9 B^{m-9} $ & 0 & $0$  & $0$ & $0$  & $0$& $0$ &$0$& $20$ & $-16$ & $80$\\
       \hline
      $E_4^5  \cdot A^{10} B^{m-10} $ & $0$  & $0$ & $0$  & $0$ & $0$& $0$ &$0$& $0$ &$-9$ & $-594$\\
      $E_4^2 D \cdot A^{10} B^{m-10} $ & $0$  & $0$ & $0$  & $0$ & $0$& $0$ &$0$& $0$ &$18$&$702$\\
       \hline
      $E_4^4 E_6 \cdot  A^{11} B^{m-11}$ & $0$  & $0$ & $0$  & $0$ & $0$& $0$ &$0$& $0$ &$0$ & $120$\\
      $E_4 E_6 D \cdot A^{11} B^{m-11} $ &  $0$  & $0$ & $0$  & $0$ & $0$& $0$ &$0$& $0$ &$0$ & $-672$\\
       \hline
      $E_4^6 \cdot  A^{12} B^{m-12}$ & $0$  & $0$ & $0$  & $0$ & $0$& $0$ &$0$& $0$ &$0$ & $-11$\\
      $E_4^3 D \cdot A^{12} B^{m-12} $ &  $0$  & $0$ & $0$  & $0$ & $0$& $0$ &$0$& $0$ &$0$ & $188$\\
      $D^2 \cdot A^{12} B^{m-12} $ &  $0$  & $0$ & $0$  & $0$ & $0$& $0$ &$0$& $0$ &$0$ & $-8$\\
       \noalign{\thhline}
     \end{tabular}
   \end{centering}
    \captionof{table}{
            \small Coefficients of $\psi^{\rm opt}_{0,m}$ in terms of the standard basis. Here $D := 2^{11} 3^3 \Delta$.}
 \label{Cmcoeffs}
\end{minipage}
 
\vspace{0.6cm}
 

For {\bf m=6} several new things happen, all 
connected with the fact that~6 is the first value of~$m$ having more than one prime factor.
First of all, the formula \eqref{PHI2pp}, which worked for prime powers, now fails: 
$-2 \CH|\CV_{2,6}^{(1)}$
is not a possible choice for $\Phi_{2,6}$. The reason is that, as we will see later in general, the form
$\CH|\CV_{2,m}^{(1)}$ (for any~$m$) is invariant under all Atkin-Lehner involutions, but $\Phi_{2,m}$ is 
not when $m$ is not a prime power. The expression $-2\CH|\CV_{2,6}^{(1)}$   
does, however, give the optimal choice of the eigencomponent $\Phi_{2,6}^{++}$ of $\Phi_{2,6}$ with eigenvalues 
$W_{2} = W_{3} = +1$. The other eigencomponent, corresponding to $W_{2}=W_{3}=-1$, is obtained by subtracting a 
multiple of the non-weak Jacobi form $\v_{2,6}^{-}$ given in \eqref{phi26min} from $\pi_{2}^{-}(\PHIF_{2,6})$ 
(which is weakly holomorphic but not weak), and turns out to be simply a multiple of the mock Jacobi form 
$\CF_{6}$ defined in \S\ref{MockJacobi} (Example~3), giving as our first choice of the ``optimal'' mock modular form 
\be\label{PhiA26}
\Phi^{\rm I}_{2,6} \= -2 \, \CH\,|\,\CV_{2,6}^{(1)} \,-\, \frac{1}{24}\,\CF_{6} \; .
\ee

The second new phenomenon for $m=6$ is that requiring the order of the poles to be as small as possible 
no longer fixes a unique optimal choice. The function~$\Phi_{2,6}^{\rm I}$ defined above has an even 
part that is strongly holomorphic (because $\CH$ is) and an odd part that has discriminant~$\DD \ge -1$ 
(because~$C^{(6)}(\DD)=0$ for $\DD < -1$), but we can change this choice without disturbing 
the discriminant bound $\DD \ge -1$ by adding to it an arbitrary multiple of the function
\be \CK_{6} = \frac{E_4A B^5-5E_6A^2B^4+10E_4^2A^3B^3-10E_4E_6A^4B^2+(5E_4^3-\frac14D)A^5B-E_4^2E_6A^5}{12^5} \ee
which is (up to a scalar factor, fixed here by requiring $\CK_{6} = y -2 +y^{-1} + \text O(q)$) 
the unique Jacobi form of weight 2 and index 6 satisfying the same discriminant bound. 

Two particularly nice choices besides the function $\Phi_{2,6}^{\rm I}$ are the functions 
\be \Phi_{2,6}^{\rm II} \=  \Phi_{2,6}^{\rm I} \,-\, \frac1{24} \, \CK_{6} \, \qquad  \Phi_{2,6}^{\rm III} \= \Phi_{2,6}^{\rm I} \+ \frac14 \, \CK_{6} \ee
  
The special properties of these two choices are that $\Phi_{2,6}^{\rm II}$ is the unique choice with $c(\Phi;0,1)=0$
and (hence) has the form $12^{-7} \sum_{i=0}^7 f_{7,i} A^{i-1} B^{7-i}$ with $f_{7,i}$ satisfying 
\eqref{coeffCm}, while $\Phi_{2,6}^{\rm III}$ is the unique choice for which $c(n,r)=0$
whenever $24n=r^{2}$.  (This last choice is from some points of view the most canonical one.)
Each of these three special choices, as we shall see below, has analogues for all~$m$, although in general
none of them is canonical, since for large~$m$ we can change them by adding cusp forms without 
affecting their defining properties and we do not know any criterion that selects a unique form.

The following table, which gives the first few Fourier coefficients $C(\Phi;\DD,\,\text{$r\;$(mod 12)})$ for the Jacobi form $\CK_6$ 
and each of the five mock Jacobi forms $-2\CH|\CV_{2,6}^{(1)}$, $\CF_6$, $\Phi_{2,6}^{\rm I}$, $12\Phi_{2,6}^{\rm II}$ and 
$\Phi_{2,6}^{\rm III}$, summarizes the above discussion. In this table we have set in boldface the Fourier coefficients~$C(\D,r)$
whose special properties determine the choice of $\Phi$ (namely, $C(-1,\pm1)+C(-1,\pm5)=0$ for~$\Phi^{\rm I}$, $C(-1,\pm1)=0$ 
for $\Phi^{\rm II}$, and $C(0,0)=0$ for $\Phi^{\rm III}$).
\begin{table}[h]
 \begin{center} \begin{tabular}{!{\thvline}c!{\thvline}cc|c|c|c|c|c|cc|c|c|c|c!{\thvline}}  
 \noalign{\thhline}
       $\DD$ & $-1$& $-1$& $0$& $8$ & $12$ & $15$ & $20$ & $23$ & $23$ & 24 & 32 & 36  & 39 \\  
       $r\!\pmod{12}$ & $\pm1$ & $\pm5$ & $0$ & $\pm4$ & $6$ & $\pm3$ & $\pm2$ & $\pm1$ & $\pm5$ & $0$ & $\pm4$ & $6$ & $\pm3$ \\  \noalign{\thhline}
       $\CK_6$ & $1$ & $1$ & $-2$ & $10$ & $28$ & $10$ & $-12$ & $13$ & $13$ & $-44$ & $58$  & $104$ & $2$ \\ 
       $-2\CH\,|\,\CV_{2,6}$ & $0$& $0$& $\fr12$ & $-\fr12$ & $-1$ & $-1$ & $-1$ & $-\fr32$ & $-\fr32$ & $-1$ & $-\fr52$ & $-2$ & $-2$ \\  
       $\CF_6$ & $-1$ & $1$& $0$& $0$ & $0$ &$0$& $0$& $35$ & $-35$ & $0$ & $0$  & $0$ & $0$ \\        \hline
     $\Phi_{2,6}^{\rm I}$ & $\fr{\bold 1}{\bold 2\bold 4}$ & $\bold{-}\fr{\bold 1}{\bold 2\bold 4}$ 
          & $\fr12$ & $-\fr12$ & $-1$ & $-1$ & $-1$ & $-\fr{71}{24}$ & $-\fr1{24}$ & $-1$ & $-\fr52$  & $-2$ & $-2$ \\ 
     $12\,\Phi_{2,6}^{\rm II}$ & $\bold 0$ & $-1$ & $7$ & $-11$ & $-26$ & $-17$ & $-6$ & $-42$ & $-7$ & $10$ & $-59$  & $-76$ & $-25$ \\ 
     $\Phi_{2,6}^{\rm III}$ & $\fr7{24}$ & $\fr5{24}$ & $\bold 0$ & $2$ & $6$ & $\fr32$ & $-4$ & $\fr7{24}$ & $\fr{77}{24}$ & $-12$ & $12$  & $24$ & $-\fr32$ \\ 
 \noalign{\thhline}
    \end{tabular}\quad \end{center}
\label{wt6tab}  \caption{\small{Jacobi and mock Jacobi forms of weight~2 and index~6}} 
\end{table}
We mention in passing that the function $\psi_{0,7}^{\rm I} = 12^{7} A \v_{2,6}^{\rm I}$ corresponding to \eqref{PhiA26} 
is precisely equal to $B$ times $\psi^{\rm opt}_{0,6}$ as given in Table~2. We do not know if this has any special significance. It may be connected 
with the fact that $\psi^{\rm opt}_{0,6}$, which is equal to the quotient $\vth_{1}(\t,4z)/\vth_{1}(\t,2z)$ 
(Watson quintuple product identity), is a $(-1)$--eigenvector of the involution~$W_{2} = W_{3}$ on~$\wt J_{0,6}$.

For {\bf m=10} the situation is similar. 
Again, we can achieve $\DD \ge -1$ but not uniquely because there is a weak Jacobi form $\CK_{10} \in \wt J_{2,10}$ 
(unique up to a normalizing factor, which we fix in the same way as before) satisfying the same
discriminant bound and we can change $\v_{2,10}$ an arbitrary multiple of $\CK_{10}$.  We again have three distinguished choices 
\be
\v_{2,10}^{\rm II} \= \frac{\psi_{0,11}^{\rm II}}{12^{11}A} \, , \qquad 
\v_{2,10}^{\rm I} \= \v_{2,10}^{\rm II}  \+  \frac1{12} \,\CK_{10} \, \qquad 
\v_{2,10}^{\rm III} \= \v_{2,10}^{\rm I} \+ \frac38 \,\CK_{10} \,, 
\ee
where $\psi_{0,11}^{\rm II} \in \wt J_{0,11}$ is given (with $D=32(E_4^3-E_6^2)=2^{11}3^3\D$ as in Table~2) by
\bea \Psi^{\rm II}_{11} & = & B^{11} - 55 E_{4} A^{2} B^{9} + 330 E_{6} A^{3} B^{8} 
-990 E_{4}^{2} A^{4} B^{7} + 1848 E_{4} E_{6} A^{5} B^{6} -(2310 E_{4}^{3}-111 D) A^{6} B^{5} \cr
&& +1980 E_{4}^{2} E_{6} A^{7} B^{4} -(1155 E_{4}^{4} +120 E_{4} D) A^{8} B^{3} +(440 E_{4}^{3} E_{6} +20 E_{6} D) A^{9} B^{2} \cr
&& -(99 E_{4}^{5} -117 E_{4}^{2} D) A^{10} B + (10 E_{4}^{4} E_{6} -56 E_{4} E_{6} D) A^{11} 
\eea
and is the unique choice satisfying \eqref{coeffCm}, $\Phi_{2,10}^{\rm III}$ is the unique 
choice with no $\DD=0$ coefficients, and $\Phi_{2,10}^{\rm I}$ 
is the unique 
choice whose invariant part under $W_{2}$ (or $W_{5}$) is strongly holomorphic. Explicitly, we have 
\be \label{PHI2ten}  \Phi_{2,10}^{\rm I} \=  -\,2\,\CH|\V_{2,10}^{(1)} \,-\, \frac1{12} \,\CF_{10}\,  \ee
where $\CF_{10}$ is a mock Jacobi form of weight~2 and index~10 with Fourier coefficients given by
  \be\label{c10}  c(\CF_{10};n,r) =  \begin{cases} \phantom{-\,}C^{(10)}(40n-r^2) \quad  
                 & \text{$r\equiv\pm1 \mypmod{20}$ or $r\equiv\pm3 \mypmod{20}$}  \\ 
  -\, C^{(10)}(40n-r^2) \quad  
                 & \text{$r\equiv\pm7 \mypmod{20}$ or $r\equiv\pm9 \mypmod{20}$}  \end{cases}    \ee
for a function $C^{10}(\DD)$ whose first few values are
  \begin{center} \begin{tabular}{|c|ccccccccccccc|ccc}  
\hline
       $\DD$ & $-1$& $31$& $39$& $71$ & $79$ &$111 $& $119$& $151$ & $159$ & $191$ & $199$ & $231$ & $239$  \\  \hline
       $C^{(10)}(\DD)$ & $-1$ & 9 & 21 & 35 & 63 & 57 & $112$ & $126$ & $207$ & $154$ & $306$ & $315$ & $511$  
       \\ \hline
\end{tabular}\;\,, \end{center}
or equivalently, where $\CF_{10}$ has a theta expansion of the form
\be\label{F10exp} \CF_{10}(\t,z)\=\sum_{\ell \mypmod{20} \atop (\ell,10)=1} h_\ell(\t)\,\vth_{10,\ell}(\t,z) \ee
with mock modular forms $h_\ell=h_\ell(\t)$ having expansions beginning
\bea &&h_1\=-h_9\=-h_{11}\=h_{19}\=q^{-1/40}\,\bigl(-1\+21q\+63q^2\+112q^3\+\cdots\bigr)\,, \\
     &&h_3\=-h_7\=-h_{13}\=h_{17}\=q^{31/40}\,\bigl(9\+35q\+57q^2\+126q^3\cdots\bigr)\,. \eea

\subsection{Further data and formulation of main results for the weight 2 family \label{Statements}}

In \S\ref{choosephi}, we found that the functions $\Phi_{2,m}$ for $m \le 11$ could all be obtained, using
the Hecke-like operators from \S\ref{hecke}, from just three functions $\CH$, $\CF_{6}$ and $\CF_{10}$. 
Continuing, we find that for ${\bf m=12}$ the same statement holds: the $(+1,+1)$ eigencomponent of this form with respect
to the involutions $W_4$ and $W_3$ can be chosen to be $-2\CH|\CV_{2,12}^{(1)}$ and the $(-1,-1)$ eigencomponent
can be chosen to be $\frac1{12}\CF_6|\CV_{2,2}^{(6)}$, with $\CV_{k,t}^{(M)}$ as in~\eqref{Vkt1} 
(so $V_{2,12}^{(1)} = \frac{1}{4} (V_{12} - 2 V_{2} U_{3})$ and $\CV_{2,2}^{(6)} = \frac12 V_{2}$). Thus 
\be
\Phi_{2,12} \= - 2 \CH\,|\,\CV^{(1)}_{2,12} \+ \frac1{12}\; \CF_6\,|\,\CV^{(6)}_{2,2}
\ee
is a possible choice for $\Phi_{2,12}$. Similarly, for ${\bf m=18}$ and ${\bf m=24}$ we can take the functions
\be
\Phi_{2,18} \= - 2\CH\,|\,\CV^{(1)}_{2,18} \+ \frac1{24}\; \CF_6\,|\,\CV^{(6)}_{2,3}\,, 
\qquad \Phi_{2,24} \= - 2\CH\,|\,\CV^{(1)}_{2,24} \+ \frac1{24}\; \CF_6\,|\,\CV^{(6)}_{2,4}
\ee
as our representatives of the coset $\vstF_{2,m}+\wt J_{2,m}$. 

On the other hand, the indices ${\bf m=14}$ and ${\bf m=15}$ are similar to the indices $m=6$ and $m=10$ looked at previously: here
we find optimal mock Jacobi forms of the form
\be
\Phi_{2,14}\= -2\CH\,|\,\CV^{(1)}_{2,14} \,-\, \frac18\,\CF_{14}\,, 
\qquad \Phi_{2,15}\= -2\CH\,|\,\CV^{(1)}_{2,15} \,-\, \frac16\,\CF_{15}\,, 
\ee
where $\CF_{14}$ and $\CF_{15}$ are forms anti-invariant under $W_p$ ($p|m$) with Fourier expansions of the form 
  \be\label{c14}  c(\CF_{14};n,r) =  \begin{cases} \phantom{-\,}C^{(14)}(56n-r^2) \quad & \text{for $r\equiv\pm1$, $\pm3$ or $\pm 5$ (mod~28),}  \\ 
    -\, C^{(14)}(56n-r^2) \quad  & \text{for $r\equiv\pm9$, $\pm11$, or $\pm 13$ (mod~28),\phantom{XXX}}  \end{cases}    \ee
  \be\label{c15}  c(\CF_{15};n,r) =  \begin{cases} \phantom{-\,}C^{(15)}(60n-r^2) \quad  
                 & \text{for $r\equiv\pm1$, $\pm2$, $\pm4$, or $\pm 7$ (mod~30),}  \\ 
  -\, C^{(15)}(60n-r^2) \quad & \text{for $r\equiv\pm8$, $\pm11$, $\pm13$, or $\pm 14$ (mod~30),}  \end{cases}  \ee
with coefficients $C^{14}(\DD)$ and $C^{15}(\DD)$ given for small $\DD$ by
 \begin{center} \begin{tabular}{|c|ccccccccccccc|}  
 \hline
       $\DD$ & $-1$& $31$& $47$& $55$ & $87$ &$103 $& $111$& $143$ & $159$ & $167$ & $199$ & $215$  & $223$ \\  \hline
       $C^{(14)}(\DD)$ & $-1$ & 3 & 10 & 15 & 15 & 30 & 42 & $20$ & $51$ & $65$ & $51$ & $85$ & $120$   \\
       \hline
\end{tabular} \;, \end{center}
 \begin{center} \begin{tabular}{|c|ccccccccccccc|}  
 \hline
       $\DD$ & $-1$& $11$& $44$& $56$ & $59$ &$71 $& $104$& $116$ & $119$ & $131$ & $164$ & $176$ & $179$  \\  \hline
       $C^{(15)}(\DD)$ & $-1$ & 2 & 10 & 4 & 14 & 9 & 28 & $24$ & $29$ & $26$ & $40$ & $28$ & $54$  \\
       \hline
 \end{tabular} \;. \end{center}

This pattern continues: as we continue to higher values, we find that for each index~$M$ which is a product of an even number of distinct primes
(i.e., for which $\mu(M)=1$, where $\mu(\,\cdot\,)$ denotes the M\"obius function), we need to introduce a new function $\CQ_M$ that is anti-invariant 
under~$W_{p}$ for all $p$ dividing $M$, but that these functions and their images under Hecke-like operators then suffice to give possible 
choices for our special mock Jacobi forms in all indices, in the sense that the function
  \be \label{Qdecomp} 
  \Phi_{2,m} \= \frac{1}{2^{\w(m)-1}} \! \sum_{M\mid m \atop \mu(M)\=+1} \CQ_{M} \, | \, \CV^{(M)}_{2,\,m/M} \quad \qquad 
  (\, \w(m) := \# \{p\mid \text{$p$ prime, $p|M$}\} \, )
  \ee
belongs to the coset $\vstF_{k,m}+\wt J_{2,m}$ for all $m\ge1$.
(The factor~$2^{1-\w(m)}$ will be explained below.)
In fact more is true.  The functions $\CQ_M$ in~\eqref{Qdecomp} are simply the $(-1,\dots,-1)$ eigencomponents of $\Phi_{2,M}$ for some choice of 
these latter functions within their defining cosets, and the various terms in~\eqref{Qdecomp} are the eigencomponents of $\Phi_{2,m}$ for
the involutions $W_{m_1}$ ($m_1\|m$).  In other words, we have the following theorem:
  \begin{thm} \label{phimQm}  For any choice of the mock Jacobi forms $\Phi_{2,m}\in\vstF_{2,m}+\wt J_{2,m}$ 
  $(m \in \IN)$, the eigencomponents $\Phi_{2,m}^{\ve_1,\dots,\ve_r}$  $(m=\prod_{i} p_{i}^{\nu_{i}}$, 
   $\ve_{i} = \pm 1$, $\prod \ve_{i} = +1)$
  with respect to the decomposition\footnote{or rather, its analogue for elliptic forms}~\eqref{Jkmeigen}  are given by 
    \be \label{eigenpieces}  
  \frac12  \Phi_{2,m} \, \Bigl|\; \prod_{i=1}^{r} \big(1+\ve_{i} W_{p_{i}^{\nu_{i}}} \big) \= \CQ_{M} \,\bigl|\,\CV_{2,m/M}^{\,(M)} \quad\bigl(\text{mod }(J_{2,m}^{\,!}\bigr)^{\ve_1,\dots,\ve_r}\bigr)\;, \ee
where $M$ is the product of the $p_i$ with $\ve_i=-1$, $\CQ_{M}$ is defined by 
\be  \label{defQmeven}
\CQ_{M} \= \frac12 \Phi_{2,M} \, \Bigl|\; \prod_{p|M} \big(1- W_{p} \big) \= 2^{\w(M)-1} \, 
\Phi_{2,M}^{-,\dots,-} \, \qquad (\mu(M)\=+1), 
\ee
and $\CV_{2,m/M}^{(M)}$ is the Hecke-like operator defined in~\eqref{Vkt1}. 
In particular, the decomposition~\eqref{Qdecomp} is always true modulo $\wt J_{2,m}$, 
and if we choose the functions~$\Phi_{2,M}$ and consequently~$\CQ_{M}$ arbitrarily for 
all~$M$ with~$\mu(M)=1$, then the function defined by~\eqref{Qdecomp} gives an admissible  
choice  of $\Phi_{2,m}$ for all indices~$m\ge 1$. 
\end{thm}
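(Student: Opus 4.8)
\noindent The plan is to reduce everything to the eigencomponent identity \eqref{eigenpieces}; granting it, summing over all eigenspaces and using $\Phi_{2,m}=\sum_{\bve}\Phi_{2,m}^{\bve}$ gives \eqref{Qdecomp} modulo weakly holomorphic Jacobi forms (the point being that, as $M$ runs over the divisors of $m$ with $\mu(M)=+1$, the tuple $\bve$ defined by $\ve_i=-1\iff p_i\mid M$ runs exactly once over all eigenvalue tuples with $\prod_i\ve_i=(-1)^2=+1$), and the admissibility claim follows as in the last paragraph. So fix $m=\prod_i p_i^{\nu_i}$, a tuple $\bve$ with $\prod_i\ve_i=+1$, and set $M=\prod_{\ve_i=-1}p_i$. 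First I would note that both sides of \eqref{eigenpieces} are weakly holomorphic \emph{pure} mock Jacobi forms of weight~$2$ and index~$m$ lying in the \emph{same} Atkin--Lehner eigenspace: pure because $\v_{2,m}$ is even in $z$ and has a double pole only at $z=0$, so the modular forms $D_s,E_s$ of \S\ref{double} reduce to the constants $0$ and~$1$ (cf.\ \S\ref{simplest}); weakly holomorphic because $\Phi_{2,m}\in\vstF_{2,m}+\wt J_{2,m}$ and the $W_{p_i^{\nu_i}}$ (finite permutations of theta-components) and $\CV_{2,m/M}^{(M)}$ of \eqref{Vkt1} (built from $V_{k,t}$ and $U_s$) all preserve boundedness of $q$-order; and in the same eigenspace because, as noted in \S\ref{hecke}, the $p$-adic Atkin--Lehner involution $W_{p^\infty}$ commutes with all $U_s$, $V_{k,t}$ and $\CV_{k,t}^{(m)}$, so that $\CQ_M=2^{\w(M)-1}\Phi_{2,M}^{-,\dots,-}$, whose $W_{q^\infty}$-eigenvalue at index $M$ is $-1$ for $q\mid M$ and $+1$ otherwise, produces a function $\CQ_M\,|\,\CV_{2,m/M}^{(M)}$ with the same $W_{q^\infty}$-eigenvalues at index $m$, i.e.\ with $W_{p_i^{\nu_i}}$-eigenvalue $\ve_i$.

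Next I would reduce \eqref{eigenpieces} to an identity of \emph{shadows}. By the weakly holomorphic analogue of the exact sequence \eqref{exseqJmf}, two weakly holomorphic pure mock Jacobi forms lying in the same eigenspace are congruent modulo $(J_{2,m}^{\,!})^{\bve}$ iff they have the same shadow. By Theorem~\ref{Thmdouble} and \eqref{phi2mhat} (equivalently, by \eqref{holanom1} in the black-hole normalization) the shadow of $\Phi_{2,m}=\v_{2,m}^{\rm F}$ is, independently of the choice of $\v_{2,m}$, the explicit skew-holomorphic Jacobi form $c_m\,\Xi_m$ with
\be
 \Xi_m(\t,z)\;:=\sum_{\ell\mypmod{2m}}\overline{\vth_{m,\ell}(\t,0)}\;\vth_{m,\ell}(\t,z)
\ee
and $c_m$ an explicit nonzero constant (proportional to $\sqrt m$). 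Since the shadow map $\CS$ intertwines the $W_{m_1}$ with the same permutations and, up to the standard weight twist in their normalization, the Hecke-like operators with their analogues on skew-holomorphic Jacobi forms (all of these acting on the completions, i.e.\ on genuine real-analytic Jacobi forms), the identity \eqref{eigenpieces} becomes the classical identity
\be
 \Xi_m^{\bve}\;=\;\frac{2^{\,\w(M)-\w(m)}\,c_M}{c_m}\;\Xi_M^{-,\dots,-}\;\big|\;\widetilde\CV_{2,m/M}^{(M)}
\ee
among the explicit theta kernels, where $\widetilde\CV$ is the operator induced by $\CV$ on skew-holomorphic Jacobi forms. (As a check, for $m=p$ prime this reads $\Xi_p=\tfrac{c_1}{2c_p}\,\Xi_1\,|\,\widetilde\CV_{2,p}^{(1)}$, consistent with $\Phi_{2,1}^{\rm opt}=-2\CH$ and $\Phi_{2,p}^{\rm opt}=-\CH\,|\,\CV_{2,p}^{(1)}$ from \eqref{PHI2p}, since $\Xi_1$ is the shadow of $\CH$.)

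The identity of theta kernels is the heart of the matter, and I expect it to be the main obstacle. I would prove it by exploiting that $\vth_{m,\ell}(\t,0)$, and hence $\Xi_m$, factors through the Chinese-Remainder splitting of the index, while $W_{p_i^{\nu_i}}$, $U_s$ and $V_{k,t}$ all act ``one prime at a time''; this reduces the identity to prime-power index, where there are no nontrivial Atkin--Lehner involutions and one must only compare $\Xi_{p^\nu}$ with the image of $\Xi_1$ or $\Xi_p$ under an explicit $\CV$, a finite computation with the classical transformation formulas of the $\vth_{m,\ell}$. Alternatively, one can invoke the dictionary of \cite{SkoruppaZagier2,SkoruppaZagier} between Jacobi forms of index $m$ and modular forms of level $m$, under which $\Xi_m$ corresponds to a fixed Cohen--Eisenstein-type series of weight $3/2$ and the $W_{m_1}$, $\CV_t^{(M)}$ become classical Atkin--Lehner and level-raising operators, so that the identity becomes a standard statement about Hecke operators on Eisenstein series --- ultimately a statement about $\CH\,|\,\CV_{2,m}^{(1)}$ and its Atkin--Lehner components. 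The delicate points here are keeping exact track of the constants $c_m$, of the weight twists in the induced operators, and of the $p=2$ and non-squarefree corrections built into $\CV_{k,t}^{(m)}$ in \eqref{Vkt1}.

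Finally, summing \eqref{eigenpieces} over the eigenspaces yields \eqref{Qdecomp} modulo $J_{2,m}^{\,!}$; to sharpen this to ``modulo $\wt J_{2,m}$'' --- and hence to obtain admissibility of an arbitrary choice of the $\Phi_{2,M}$ --- it remains to check that the right-hand side of \eqref{Qdecomp} has $q$-expansion involving only non-negative powers of $q$. This holds because $\vst_{2,m}=B^{m+1}/(12^{m+1}A)$ of \eqref{defphi2m} is a \emph{weak} meromorphic Jacobi form, so $\Phi_{2,m}=\vst_{2,m}^{\rm F}$ and $\AP{2,m}$ of \eqref{AP1AP2} are both weak; and although the projections $\tfrac12(1\pm W_{p_i^{\nu_i}})$ do not individually preserve weakness, the full sum over $M$ in \eqref{Qdecomp} reconstitutes a form whose $q$-order is again $\ge 0$ (the spurious negative-order contributions of the separate eigencomponents cancelling, exactly as $\sum_{\bve}\Phi_{2,m}^{\bve}=\Phi_{2,m}$ is weak though the individual $\Phi_{2,m}^{\bve}$ need not be) --- a routine but slightly delicate $q$-order bookkeeping. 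Thus the right-hand side of \eqref{Qdecomp} differs from any admissible $\Phi_{2,m}$ by a weakly holomorphic Jacobi form with non-negative $q$-order, i.e.\ by an element of $\wt J_{2,m}$, which is the admissibility assertion.
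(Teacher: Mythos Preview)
Your overall strategy --- reduce \eqref{eigenpieces} to an identity of shadows, then exploit multiplicativity of the Hecke-like operators to reduce to a single prime power --- is exactly the paper's approach. The paper's execution is more direct than yours: rather than passing to the theta kernel $\Xi_m$ and tracking constants $c_m$ and induced operators on skew-holomorphic forms, it works entirely at the level of Fourier coefficients of the non-holomorphic correction term, using the explicit formula $C(\Phi^C_{2,m};\,\Delta,\ell)=\sum_{\lambda^2=-\Delta,\;\lambda\equiv\ell\,(2m)}|\lambda|$. By induction on the number of prime factors it reduces to the two identities
\[
\Phi_{2,m}\,\big|\,(1+W_P)\;\equiv\;\Phi_{2,m/P}\,\big|\,\CV_{2,P}^{(1)}\,,
\qquad
\Phi_{2,m}\,\big|\,(1-W_P)\;\equiv\;\Phi_{2,mp/P}\,\big|\,(1-W_p)\,\CV_{2,P/p}^{(p)}
\]
for a single $P=p^\nu\,\|\,m$, and verifies each by a short combinatorial match of these coefficient sums. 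This sidesteps all the bookkeeping you flag as delicate: no normalization constants, no induced operators, just equalities of finite sums of absolute values.

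Your final step, upgrading from ``mod $J^{\,!}_{2,m}$'' to ``mod $\wt J_{2,m}$'', has a real gap. The cancellation you invoke is the identity $\sum_{\bve}\Phi_{2,m}^{\bve}=\Phi_{2,m}$, but the right-hand side of \eqref{Qdecomp} is $\sum_M\CQ_M\,|\,\CV$, which agrees with the former only modulo $J^{\,!}_{2,m}$ --- precisely the congruence you are trying to sharpen --- so the argument is circular. The paper's Remark~1 after the theorem makes this explicit: the individual eigencomponents may carry negative powers of~$q$, and ensuring the right-hand side of \eqref{Qdecomp} lands in $\vstF_{2,m}+\wt J_{2,m}$ rather than merely $\vstF_{2,m}+J^{\,!}_{2,m}$ relies on the later optimal-growth result (Theorem~\ref{QmOG}) to choose the~$\CQ_M$ without negative $q$-powers.
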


\noindent{\bf Examples.} 
The functions $\CQ_{M}$ for the first few values of $M$ with $\mu(M)=1$ are
\be \label{Qlist}
\CQ_{1} = -  \CH \,, \; \; \CQ_6 =-\frac{\CF_6}{12}\,, \; \;  \CQ_{10} =-\frac{\CF_{10}}{6}\,, \; \;  
\CQ_{14}=-\frac{\CF_{14}}{4} \,, \; \; \CQ_{15}=-\frac{\CF_{15}}{3} \, , \; \; \dots \, , 
\ee
where $\CH, \CF_{6}, \dots , \CF_{15}$ are the functions introduced in \S\ref{MockJacobi}, \S\ref{choosephi},
and this subsection. (The reasons for the various normalizations, and for having two symbols $\CF_{M}$
and $\CQ_{M}$ to denote proportional functions, will be discussed later.)

We will prove Theorem \ref{phimQm} towards the end of this section  
by calculating the shadows of the mock Jacobi forms on the two sides 
of~\eqref{eigenpieces}, since two mock Jacobi forms differ by a true 
Jacobi form if and only if they have the same shadow. 

\smallskip

\noindent{\bf Remarks.} {\bf 1.}
We wrote $\bigl(J^{\,!}_{2,m}\bigr)^{\ve_1,\dots,\ve_r}$ rather than $\wt J_{2,m}^{\,\ve_1,\dots,\ve_r}$ in 
the theorem because of the fact mentioned in~\S\ref{hecke} that the involutions $W_{m_1}$ do not preserve 
the space of weak Jacobi forms (or weak elliptic forms), so that the left-hand side of eq.~\eqref{eigenpieces} 
for an arbitrary choice of $\Phi_{2,m}$ may contain monomials $q^ny^r$ with $n<0$.  In fact, it will follow 
from the results given below that the functions $\Phi_{2,M}$ can be chosen in such a way that the eigencomponents 
$\Phi_{2,M}^{-,\cdots,-}$ do not contain any negative powers of~$q$, in which case the expression on 
the right-hand side of~\eqref{Qdecomp} really does lie in $\vstF_{k,m}+\wt J_{2,m}$ as asserted, and not merely 
in $\vstF_{k,m}+J^{\,!}_{2,m}$.\\
\noindent{\bf 2.} An exactly similar statement is true for the functions $\Phi_{1,m}$, except that in that case we 
have $\prod_i\ve_i=-1$ (because in odd weight $W_m$ acts as multiplication by~$-1$), so that the $M$ that 
occur in the decomposition of $\Phi_{1,m}$ are the divisors of~$m$ with~$\mu(M)=-1$, and 
the functions~$\CQ_{M}$ that occur in the decomposition of $\Phi_{1,M}$ are defined by
\be \label{defQmodd}
\CQ_{M} \= \frac12 \, \Phi_{1,M} \, \Bigl|\; \prod_{p|M} \big(1- W_{p} \big) \= 2^{\w(M)-1} \; 
\Phi_{2,M}^{-,\cdots,-} \, \qquad (\mu(M)\=-1) 
\ee
instead of equation~\eqref{defQmeven}.
The reason for the factor~$2^{\w(m)-1}$ in these two equations is now apparent:
it is the minimal factor making~$\CQ_{M}$ an integral linear combination of images of~$\Phi_{k,m}$
under Atkin-Lehner involutions. (We could remove one factor~2 because for each decomposition~$M=M_{1}M_{2}$,
the two terms~$\mu(M_{1}) \Phi_{k,m}|W_{M_{1}}$ and~$\mu(M_{2}) \Phi_{k,m}|W_{M_{2}}$  
are equal.)

\smallskip

Theorem \ref{phimQm} explains the first part of our observations, those connected with the eigencomponent 
decomposition of the mock Jacobi forms $\Phi_{2,m}$ and the fact that the collection of all of these 
functions can be expressed in terms of the special functions~\eqref{Qlist} and their images under Hecke-like operators.
But it does not say anything about the orders of the poles of the functions in question, which was our original criterion for optimality.
On the other hand, from the tables of Fourier coefficients of the functions~$\CQ_{1}, \dots, \CQ_{15}$ that we have given 
we see that, although none of these functions except $\CQ_{1}$ is strongly holomorphic, each of them 
satisfies the only slightly weaker discriminant bound $\D := 4mn-r^{2} \ge-1$.
This property will turn out to play a prominent role. We will call Jacobi or elliptic forms 
satisfying it {\bf forms with optimal growth}\footnote{An alternative name would be ``forms with 
only simple poles'' because the forms $h_{\ell}(\t) = \sum_{\DD} C(\DD,\ell) \, q^{\DD/4m}$ have no singularity 
worse than $q^{-1/4m}$ as $q \to 0$, i.e., they have at most simple poles with respect to the 
local parameter $q^{1/4m}$ at $\t=i \infty$. However, this terminology would be dangerous because 
it could also suggest simple poles with respect to~$z$. 
Another reasonable name would be ``nearly holomorphic,'' but this terminology also carries a risk of 
confusion since ``almost holomorphic'' is already a standard notation for modular forms which 
are polynomials in~$1/\t_{2}$ with holomorphic coefficients.}, because their Fourier coefficients 
grow more slowly than forms that do not obey this discriminant bound.
We observe that any Jacobi or elliptic form having optimal growth is automatically a weak 
form, since $4nm-r^2\ge-1$ implies $n\ge0$.

As we continued our computations to higher values of $M$, 
we were surprised to find that this ``optimal growth'' property continued to hold for each 
$M$ of the form $pq$  ($p$ and $q$ distinct primes) up to 200.  It followed that the full function $\Phi_{2,M}$ 
could also be chosen to have optimal growth in these cases, since $ \Phi_{2,M} \= \CQ_1|\CV^{(1)}_{2,M}+\CQ_{M}$ 
and the image of the strongly holomorphic form $\CQ_1$ under $\CV^{(1)}_{2,M}$ is strongly holomorphic. 
This made it particularly interesting to look at the case ${\bf m=210}$, 
the first product of four distinct primes, because in this case, even if the function $\CQ_{210} = \Phi_{2,210}^{----}$
can be chosen to be have optimal growth, then the six middle terms in the decomposition~\eqref{Qdecomp}, which here takes the form 
\bea 8 \, \Phi_{2,210} &\= \CQ_{1}\,|\, \CV^{(1)}_{2,210} \+ \CQ_{6}\,|\,\CV^{(6)}_{2,35} \+ \CQ_{10}\,|\,\CV^{(10)}_{2,21} 
 \+ \CQ_{14}\,|\,\CV^{(14)}_{2,15} & \cr
  &\qquad \+ \CQ_{15}\,|\,\CV^{(15)}_{2,14}  \+ \CQ_{21}\,|\,\CV^{(21)}_{2,10} \+ \CQ_{35}\,|\,\CV^{(35)}_{2,6}\+ \CQ_{210} \; , & \eea
will not be. (Even if $\CQ_{M}$ is chosen to have optimal growth, its image under the operator $\CV^{M}_{2,t}$ will in general satisfy
only  $\Delta \ge - t^{2}$. For $M$ of the form $pq$ we had no problems with this because $\CQ_1$, and hence also its image under $\CV^{(1)}_{2,M}$, is actually holomorphic.) The corresponding computation was somewhat forbidding since the space $\wt J_{2,210}$ of forms 
by which we can change the initial choice $\vstF_{2,210}$ has dimension $3815$, 
and the conditions $C(\Delta,r)=0$ for $\Delta < -1$ give~3813 linear constraints, so that we have to find 
the kernel of a roughly square matrix of size nearly~4000. Moreover, the coefficients of this matrix are 
gigantic, so that we have to work modulo some large prime. Nevertheless, this computation could 
be carried out (using PARI/GP), and it turned out that 
the function $\CQ_{210}$ could again be chosen to have optimal growth! 
This solution is given by $\CQ_{210} \= -  \CF_{210}$, where $\CF_{210} = \sum_{\ell \mypmod{420}} h_{\ell} \, \vth_{210,\ell}$
with the Fourier coefficients of the mock theta functions $h_{\ell}$ given in Table~\ref{fcq210}. 
\begin{table}[h]
{\footnotesize{
\begin{center} 
\begin{tabular}{!{\thvline}c|c!{\thvline}rcccccccccccc!{\thvline}ccc}  
\noalign{\thhline}
       $\ell \text{ with } h_{\ell} = h_{\ell_{0}}$ &$ \ell  \text{ with } h_{\ell} = -h_{\ell_{0}}$ & 
     $n=0$ & 1& 2& 3& 4& 5& 6& 7& 8& 9& 10& 11& 12  \\  
\noalign{\thhline}
       {\bf 1}, 29, 41, 71 & 139, 169, 181, 209 &  $-2$& 13& 28& 34& 49& 46& 71& 59& 83& 77&102& 87&121 \\
       {\bf 11}, 31, 59, 101 & 109, 151, 179, 199  &  0&  9& 21& 27& 36& 41& 51& 44& 75& 62& 62& 82&104 \\
       {\bf 13}, 43, 83, 97 & 113, 127, 167, 197 &  0&  6& 17& 17& 35& 20& 49& 31& 57& 36& 77& 32& 94\\
       {\bf 17}, 53, 67, 73 & 137, 143, 157, 193 &  0&  4& 12& 16& 22& 19& 43& 17& 40& 50& 41& 27& 87 \\
       {\bf 19}, 61, 79, 89 & 121, 131, 149, 191 &  0&  3& 11& 12& 23& 14& 37& 17& 43& 28& 45& 30& 77  \\
       {\bf 23}, 37, 47, 103 & 107, 163, 173, 187 &  0&  1&  7&  4& 20& $-1$& 32&  3& 30& 10& 50&$-16$& 63 \\
\noalign{\thhline}
\end{tabular}
\end{center}
}}
\caption{\small Fourier coefficients of the function~$\CF_{210}$}
\label{fcq210}
\end{table}
In this table, we have listed the values of $\ell$ with $(\ell,210)=1$ and $0 \le \ell < 210$ (since 
$h_{420n\pm\ell} = h_{\ell}$ and $h_{\ell}=0$ for $(\ell,210)>1$) in groups of~8, with $h_{\ell} = h_{\ell_{0}}$
for the first four and  $h_{\ell} = - h_{\ell_{0}}$ for the second four, where $\ell_{0}$ (shown in boldface)
belongs to $\{ 1,11,13,17,19,23 \}$ and $h_{\ell_{0}}=\sum_{n \ge0} C(n) \, q^{n-\ell_{0}^{2}/840}$. 
Thus, the third line of the table means that 
$h_{\pm 13} = h_{\pm 43} = \cdots  = -h_{\pm 167} = -h_{\pm 197} 
\= q^{-{169}/{840}} \big(6q + 17 q^{2} + 17 q^{3} + \dots  \big)$.
A striking property of these coefficients is that they are so small: for instance, 
the largest value of~$c(n,\ell)$ in the table is~$c(12,1)=121$, with corresponding discriminant 
$\D=4\cdot210\cdot12-1^2=10079$, while for the Mathieu form $h^{(2)}$ the table following
eq.~(7.31) gave the much larger coefficient $132825$ for the much smaller discriminant
$4\cdot2\cdot9-1^2=71$.  
%
This behavior is a direct consequence of the ``optimal growth''
property together with the fact that the index is large (cf.~Theorem~\ref{CardyProp} below and the 
discussion in~\S\ref{specialMTF}).

In fact, in this example even more happens: not only the function $\CQ_{210} = \Phi_{2,210}^{- - - -}$, but the whole function $\Phi_{2,M}$ 
(and hence each of its eigencomponents $\Phi_{2,210}^{\pm\pm\pm\pm}$) can be chosen to have optimal growth.
It turns out that this statement holds for all square-free indices~$m$, whether they have an even or an odd
number of prime factors. If~$m$ is not square-free, then this is not true in 
general,\footnote{It is true if and only if~$m$ is the product of a square-free number and a prime power.}
but it becomes true after a very small modification of the functions in question:
\begin{thm} \label{QmOG}
The functions $\{\Phi_{2,m} \}_{m \ge 1}$ can be chosen to have the form 
\be \label{Phi0mimp}
  \Phi_{2,m}(\t,z) \= \sum_{d^{2} \mid m} \Phi^{0}_{2,m/d^{2}} (\t, \, d z)
\ee
where $\{\Phi_{2,m}^{0} \}_{m \ge 1}$ are primitive mock Jacobi forms having optimal growth.  
\end{thm}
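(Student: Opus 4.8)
\textit{Proof proposal.} The statement really packages together two claims, and the plan is to establish them in sequence. The first is essentially formal. By the primitive decomposition of \S\ref{hecke}, eq.~\eqref{primdecomp}, which as noted there is valid for elliptic forms and hence for mock Jacobi forms, every index-$m$ mock Jacobi form $\Phi$ is \emph{uniquely} of the shape $\sum_{d^{2}\mid m}\psi_{d}\!\mid\!U_{d}$ with $\psi_{d}$ primitive of index $m/d^{2}$, and $U_{d}$ is precisely the substitution $z\mapsto dz$. So the identity $\Phi_{2,m}(\t,z)=\sum_{d^{2}\mid m}\Phi^{0}_{2,m/d^{2}}(\t,dz)$ just \emph{defines} $\Phi^{0}_{2,m}:=\pi^{\rm prim}(\Phi_{2,m})$, and what has to be proved is (a) that the representatives $\Phi_{2,m}$ of the cosets $\vstF_{2,m}+\wt J_{2,m}$ can be chosen compatibly, so that $\Phi^{0}_{2,m/d^{2}}$ really is the primitive part of the chosen $\Phi_{2,m/d^{2}}$ for every $d^{2}\mid m$, and (b) that all the $\Phi^{0}_{2,m}$ so obtained can be taken to have optimal growth, i.e.\ to satisfy $\D=4mn-r^{2}\ge-1$. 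The two halves are linked, so I would not try to separate them but instead \emph{construct} the whole family at once, using Theorem~\ref{phimQm} as the organizing tool.

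The reduction goes as follows. All the Hecke-like operators $U_{t}$, $u_{t}$, $W_{m_{1}}$, $V_{k,t}$ commute (\S\ref{hecke}), so the primitive-block decomposition~\eqref{primdecomp} is compatible with the Atkin--Lehner eigendecomposition~\eqref{Jkmeigen}. Combining this with Theorem~\ref{phimQm}, which writes every admissible $\Phi_{2,m}$ as the fixed combination $\tfrac1{2^{\w(m)-1}}\sum_{M\mid m,\ \mu(M)=+1}\CQ_{M}\!\mid\!\CV^{(M)}_{2,m/M}$ of images of the square-free-index forms $\CQ_{M}$ (which, having square-free index, are automatically primitive), the problem reduces to two sub-statements: first, that each $\CQ_{M}$ can be chosen to have optimal growth; and second, that applying $\CV^{(M)}_{2,t}$ to such a $\CQ_{M}$ and assembling via Theorem~\ref{phimQm} produces a form whose primitive part has optimal growth, the ``excess'' polarity being pushed cleanly into the non-primitive $U_{d}$-blocks. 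The term $M=1$ is harmless throughout: $\CQ_{1}=-\CH$ is strongly holomorphic (Example~2 of \S\ref{basicdefs}), hence so is $\CH\!\mid\!\CV^{(1)}_{2,t}$ for every $t$.

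The core of the proof, and the step I expect to be the main obstacle, is the first sub-statement: for square-free $M$ with $\mu(M)=+1$ and $\w(M)\ge2$, the coset defining $\CQ_{M}$ — namely the anti-invariant eigenspace component $2^{\w(M)-1}\,(\vstF_{2,M}+\wt J_{2,M})^{-\cdots-}$ — contains a form with $\D\ge-1$. This is not formal: by the codimension statement of Theorem~\ref{chooseKm} one cannot in general also remove the $\D=-1$ part, so the bound is sharp, and the explicit values $\pm\v(M)/12$ of the $\D=-1$ Fourier coefficients of $\CQ_{M}$ (item~5 of the Introduction) must emerge from the argument. I would prove it using the structural results on holomorphic, weak, and weakly holomorphic Jacobi forms developed in \S\ref{structure} (resting on \cite{SkoruppaZagier2,SkoruppaZagier}): these describe exactly which ``polar parts'' — finite Laurent tails in $q^{1/4m}$ and $y$ with $\D<0$ — are realized by weak Jacobi forms of weight $2$ and index $M$, and in the Atkin--Lehner splitting one checks that the polar part of $\vstF_{2,M}$ with $\D<-1$ always lies in the realizable set for the relevant $W_{p}$-eigenspace, so that it can be cancelled by a weak Jacobi form while respecting the eigenvalue constraints, leaving only the (unremovable) $\D=-1$ part.

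Granting optimal-growth choices of the $\CQ_{M}$, the final step is the bookkeeping of the reassembly. The Möbius-corrected operators $\CV^{(M)}_{2,t}=\sum_{s^{2}\mid t,\ (s,M)=1}\mu(s)\,s\,V_{2,t/s^{2}}\,U_{s}$ are built precisely so that $V_{2,t'}$'s intrinsic ``$U_{s}$-component'' (the $d=p$ terms in the definition~\eqref{defVl} of $V$) is subtracted off: applied to a primitive form of optimal growth, $\CV^{(M)}_{2,t}$ yields a form whose primitive part again has $\D\ge-1$, while the surviving pieces with $\D=-d^{2}<-1$, which by the computation of $V_{2,t'}$ on optimal-growth forms come from exactly the $\D=-1$ coefficients $\pm\v(M)/12$ of $\CQ_{M}$, are confined to the $U_{d}$-blocks. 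Tracking which summand of Theorem~\ref{phimQm} contributes to which $U_{d}$-isotypic component then identifies $\Phi^{0}_{2,m/d^{2}}$ as a fixed combination of $\CQ_{M}$'s under $\CV$-operators with primitive optimal-growth output, and recovers the decomposition~\eqref{Phi0mimp} for all $m$. The genuinely hard point remains the realizability statement for polar parts in the anti-invariant eigenspaces, which is where the deep Jacobi-form results of \S\ref{structure} are indispensable; everything else is either formal (the primitive decomposition) or a finite computation with the Hecke-like operators.
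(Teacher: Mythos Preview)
Your route through the $\CQ_M$ decomposition inverts the paper's logic and leaves the hard step unjustified. In the paper, the Corollary about $\CQ_M$ is a \emph{consequence} of Theorem~\ref{QmOG} (for square-free $M$ one has $\Phi^0_{2,M}=\Phi_{2,M}$, and applying $\prod_p(1-W_p)$ preserves the discriminant bound), not an input to it.

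The paper's proof is direct and never touches the reassembly problem. It defines $\Phi^0_{2,m}=\sum_{d^2\mid m}\mu(d)\,\Phi_{2,m/d^2}\!\mid\!U_d$ by M\"obius inversion and then shows, by a short explicit calculation with the correction-term coefficients $C(\Phi^C_{2,m};\Delta,\ell)=\sum_{\lambda^2=-\Delta,\ \lambda\equiv\ell\,(2m)}|\lambda|$, that $\Phi^0_{2,m}\!\mid\!u_p$ has \emph{zero shadow} for every prime $p$ with $p^2\mid m$. Thus $\Phi^0_{2,m}\!\mid\!u_p$ is a genuine weak Jacobi form, so its residue vector $R$ vanishes; this places $\Phi^0_{2,m}$ in the ``res-prim'' subspace where Theorem~\ref{Emstar} guarantees an optimal-growth representative. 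Primitivity is then arranged by applying $\pi^{\rm prim}$, which only alters $\Phi^0_{2,m}$ by holomorphic Jacobi forms. The key observation you are missing is that the residue obstruction of \S\ref{optimalforms} depends only on the shadow (Theorem~\ref{Rphishad}), so it can be checked by an elementary Fourier-coefficient computation rather than by analyzing polar parts in eigenspaces.

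Your ``second sub-statement''---that $\CV^{(M)}_{2,t}$ applied to an optimal-growth $\CQ_M$ has primitive part of optimal growth, with the $\Delta<-1$ pieces confined to $U_d$-blocks---is asserted but not proved, and is exactly the point the paper flags as delicate in the discussion preceding the theorem (``even if $\CQ_M$ is chosen to have optimal growth, its image under $\CV^{(M)}_{2,t}$ will in general satisfy only $\Delta\ge -t^2$''). Making this precise would require tracking residues through the $\CV$-operators, which amounts to the same shadow bookkeeping the paper does directly, only embedded in a more complicated framework. Conversely, your ``first sub-statement'' is easier than you suggest: for square-free $M$ the first part of Theorem~\ref{Emstar} applies to $\Phi_{2,M}$ with no res-prim hypothesis at all, and the anti-invariant projection preserves $\Delta$.
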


\begin{corollary} \label{Qmcor}
The functions $\{\CQ_{M} \}_{M > 1, \, \mu(M)=1}$ can be chosen to have optimal growth. 
\end{corollary}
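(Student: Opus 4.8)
The plan is to derive this as an essentially formal consequence of Theorem~\ref{QmOG}, the only real input being that the Atkin--Lehner involutions $W_{m_1}$ leave the discriminant of every Fourier monomial unchanged. First I would fix the family $\{\Phi_{2,m}\}_{m\ge1}$ to be one of the special shape guaranteed by Theorem~\ref{QmOG}, namely $\Phi_{2,m}(\t,z)=\sum_{d^2\mid m}\Phi^0_{2,m/d^2}(\t,dz)$ with each $\Phi^0_{2,M}$ a primitive mock Jacobi form having optimal growth, and then take the $\CQ_M$ (for $M$ with $\mu(M)=1$) to be the functions built from this family by~\eqref{defQmeven}. Since $\mu(M)=1$ forces $M$ to be square-free, the only $d$ with $d^2\mid M$ is $d=1$, so for such $M$ one simply has $\Phi_{2,M}=\Phi^0_{2,M}$, which already satisfies the optimal-growth bound $4Mn-r^2\ge-1$.

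The next step is to unwind~\eqref{defQmeven}. Because the involutions $W_p$, $p\mid M$, commute and satisfy $W_pW_q=W_{pq}$ for coprime $p,q$, expanding the product over the (square-free) divisors of $M$ gives
\bes
 \CQ_M \= \frac{1}{2}\sum_{d\mid M}\mu(d)\,\bigl(\Phi_{2,M}\,|\,W_d\bigr)\,,
\ees
so that $\CQ_M$ is a fixed $\IQ$-linear combination of the finitely many forms $\Phi_{2,M}\,|\,W_d$. It therefore suffices to show that each $W_d$ maps forms with optimal growth to forms with optimal growth. But this is immediate from~\eqref{defW}--\eqref{definvol}: $W_{m_1}$ acts on the Fourier coefficients of an index-$m$ (elliptic or Jacobi) form by $C(\DD,r)\mapsto C(\DD,r^*)$, where $r\mapsto r^*$ is a permutation of $\IZ/2m\IZ$; hence the discriminant $\DD=4mn-r^2$ of each monomial $q^ny^r$ is preserved, and only its residue class modulo $2m$ is relabelled. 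Consequently the condition ``$C(\DD,r)=0$ whenever $\DD<-1$'' is invariant under $W_{m_1}$, and since it is linear it is also inherited by $\IQ$-linear combinations; applying this to the displayed formula shows that $\CQ_M$ has optimal growth.

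I do not expect a genuine obstacle here, as the Corollary is a direct spin-off of Theorem~\ref{QmOG}; the one point requiring mild care is that the argument must \emph{not} be run for non--square-free indices, for which $\Phi_{2,m}$ itself need not have optimal growth and the full strength of~\eqref{Phi0mimp} is needed---but this does not affect the $\CQ_M$, whose indices are square-free. It is worth stressing that this is precisely the place where the distinction emphasized in~\S\ref{hecke} between weakness and optimal growth is used: the operators $W_{m_1}$ destroy the weak-Jacobi-form condition (which couples $\DD$ to $r$) but respect optimal growth (a condition on $\DD$ alone), and the latter is all that is required. The same reasoning, applied to the stronger notion of strong holomorphy, likewise recovers the (already known) fact that $\CQ_1=-\CH$ is strongly holomorphic.
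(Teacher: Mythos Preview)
The proposal is correct and matches the paper's (implicit) reasoning. The paper states this as a direct corollary of Theorem~\ref{QmOG} without giving a separate proof; your argument supplies precisely the intended justification: for square-free~$M$ the sum in~\eqref{Phi0mimp} collapses to $\Phi_{2,M}=\Phi^0_{2,M}$, which has optimal growth, and the Atkin--Lehner involutions $W_d$ act by $C(\DD,r)\mapsto C(\DD,r^*)$ and hence preserve the condition ``$C(\DD,r)=0$ for $\DD<-1$'', so the projection~\eqref{defQmeven} inherits optimal growth.
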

\ndt (Here we have excluded the case $M=1$ because the function~$\CQ_{1}=-\CH$ has already been
chosen to have no poles at all.) Tables of Fourier coefficients of the functions~$\CQ_{M}$ appearing 
in the corollary for~$M \le 50$ are given in Appendix~\ref{apptabs1}.

If we denote by $\CE_{m}\NH$ the space of elliptic forms of index~$m$ (cf.~\S\ref{hecke}) 
having optimal growth, then we have the sequence of inclusions 
$$ \CE_{m}^{0} \; \subset \; \CE_{m} \; \subset \; \CE_{m}\NH \; \subset \; \wt \CE_{m} \; \subset \; \CE_{m}^{!}  
\qquad \big( \, \v \in \CE_{m}\NH \;  \underset{\rm DEF}\Longleftrightarrow  \; \text{$C_{\v}(\DD, \ell) = 0$ 
if $\DD < -1$} \, \big)
$$
of spaces of different types of elliptic forms (cusp, holomorphic, optimal growth, weak, weakly holomorphic)
with different orders of growth. Theorem~\ref{QmOG} says that we can represent all of our functions 
in terms of new, ``primitive,'' forms $\Phi_{m}^{0} \in \CE_{m}\NH$ for all~$m$. 
The function~$\Phi_{m}^{0}$ is unique up to the addition of an element of the space 
$J_{2,m}\NH=\CE_{m}\NH \cap \wt J_{2,m}$ of Jacobi forms of weight~2 and index~$m$ of optimal growth.  
We will study this latter space in the next subsection.

Theorem~\ref{QmOG} has an immediate consequence for the growth of the Fourier coefficients 
of the optimal choice of~$\Phi_{2,m}$, since the Hardy-Ramanujan circle method implies that 
the Fourier coefficients of a weakly holomorphic Jacobi (or mock Jacobi) form grow asymptotically 
like~$e^{\pi\sqrt{\DD|\DD_{0}|}/m}$, where~$\DD_{0}$ is the minimal value of the discriminants  
occurring in the form. 
We give a more precise version of this statement in the special case of 
forms of weight~2 having optimal growth.

\begin{thm}\label{CardyProp}
Let $\v$ be a weak or mock Jacobi form of weight~$2$ and index~$m$ with optimal growth.
Then the Fourier coefficients of $\v$ satisfy the asymptotic formula
  \be \label{HRR} C_\v(\D,r) \= \kappa_r\,e^{\pi\sqrt\D/m}
   \;+\;\text O\bigl(e^{\pi\sqrt\D/2m}\bigr)  \ee             
as $\D\to\infty$, where the coefficient $\kappa_r=\kappa(\v;r)$ is given by 
  \be \label{kappa} \kappa_r \= -\sum_{\ell\mypmod{2m} \atop \ell^{2}\,\equiv1\mypmod{4m}} 
    C_\v(-1,\ell)\,\cos\Bigl(\frac{\pi\ell r}m\Bigr)\;. \ee
\end{thm}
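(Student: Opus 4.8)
The plan is to extract the asymptotic behavior of $C_\v(\D,r)$ from the analytic properties of the $\vartheta$-components $h_\ell(\tau)$ of $\v$, using the circle method in its Rademacher/Poincaré-series form adapted to weight-$2$ (mock) Jacobi forms. Write the theta expansion $\v(\tau,z)=\sum_{\ell\,(2m)}h_\ell(\tau)\vth_{m,\ell}(\tau,z)$, so that $C_\v(\D,r)$ is the coefficient of $q^{\D/4m}$ in $h_r(\tau)$. Since $\v$ has optimal growth, each $h_\ell(\tau)=\sum_{\D\ge-1}C_\v(\D,\ell)q^{\D/4m}$ has at worst a single polar term $C_\v(-1,\ell)q^{-1/4m}$ at the cusp $i\infty$, and (after completing $\v$, which only adds a non-holomorphic piece that is exponentially small away from $\D=O(1)$ and does not affect the leading asymptotics) the completed vector $(\wh h_\ell)$ transforms like a weight-$3/2$ vector-valued modular form on $SL(2,\Z)$. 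The dominant contribution to $C_\v(\D,\ell)$ as $\D\to\infty$ comes, by the standard saddle-point analysis of the circle method, from the behavior of $h_\ell$ near the cusp $0$, i.e.\ from the image of the polar terms at $i\infty$ under $\tau\mapsto-1/\tau$.

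Concretely, the first step is to write down the $S$-transformation of the vector $(h_\ell)$: because the shadow of $\v$ is a multiple of $\sum_\ell\overline{\vth^0_{m,\ell}}\,\vth_{m,\ell}$, the completion $\wh h_\ell$ satisfies $\wh h_\ell(-1/\tau)=(\tau/i)^{3/2}(2m)^{-1/2}\sum_{\ell'}e^{\pi i\ell\ell'/m}\wh h_{\ell'}(\tau)$ (the multiplier system of the theta functions $\vth_{m,\ell}$), and near the cusp $0$ the relevant singular part of $h_{\ell'}$ is governed entirely by the polar coefficients $C_\v(-1,\ell')$. The second step is the standard Rademacher/Hardy–Ramanujan estimate: the coefficient of $q^{\D/4m}$ in a weight-$k$ form whose only singularity at the cusp $0$ comes from a term $c\,q^{-|\D_0|/4m}$ at $i\infty$ grows like $e^{2\pi\sqrt{|\D_0|\D}/4m}$ times an explicit constant involving $c$, a Bessel function (which for weight $3/2$ and leading order reduces to $(\D)^{(k-1)/2}$-type prefactors that here combine to give just a constant because $k-\tfrac12=1$, i.e.\ effective weight $2$ in the Jacobi normalization), and the multiplier-system entries. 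With $|\D_0|=1$ this yields the exponent $\pi\sqrt{\D}/m$ in \eqref{HRR}. Summing the contributions of all $\ell'$ with $C_\v(-1,\ell')\ne0$ — which, since optimal growth forces $\D=-1$ to be a square $\equiv\ell'^2\pmod{4m}$, means exactly the $\ell'$ with $\ell'^2\equiv1\pmod{4m}$ — and collecting the two terms $\ell'$ and $-\ell'$ (using $C_\v(-1,-\ell')=C_\v(-1,\ell')$ in weight $2$), the phase $e^{\pi i(\ell r+\ell\ell')/\text{stuff}}$ assembles into $\cos(\pi\ell' r/m)$, giving formula \eqref{kappa}. The third step is to bound all remaining terms: contributions from non-principal cusps, from subleading terms in the circle-method expansion, and from the non-holomorphic completion correction, all of which are $O(e^{\pi\sqrt{\D/2m}})$ — the dominant subleading scale comes from the next cusp (denominator $2$ in the Farey dissection) and from the fact that the next-smallest relevant $|\D_0|$ at other cusps contributes with exponent at most $\tfrac12\cdot\pi\sqrt{\D}/m$.

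I would organize the write-up by first reducing to the vector-valued modular (more precisely mock-modular) setting via Theorem~\ref{Thmdouble} (for the genuinely mock case) or directly (for $\v$ a weak Jacobi form), then invoking the Rademacher expansion for weakly holomorphic vector-valued forms of weight $3/2$ — this is classical and can be cited (e.g.\ the circle-method references already mentioned in the paper, \cite{BringMahlburg, BringOno} for the mock case and the Poincaré-series construction for the weak case) — and finally performing the bookkeeping that converts the abstract leading term into \eqref{kappa}. The main obstacle I expect is purely technical rather than conceptual: one must carefully track the multiplier system of the $\vth_{m,\ell}$ under the full modular group and verify that the phases from the $S$-matrix combine with the $q$-exponent $y^r$ (i.e.\ with the index $r$ of the Fourier coefficient being extracted) to produce exactly the cosine kernel $\cos(\pi\ell r/m)$ and not some other combination; this requires being scrupulous about the difference between the ``index $r$ of a Fourier–Jacobi coefficient'' and the ``residue class $\ell\pmod{2m}$'' and about the fact that it is $C_\v(-1,\ell)$ at $\D=-1$, weighted by the $\ell'$-to-$r$ entry of the Weil representation matrix, that appears. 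The non-holomorphic completion term must also be shown not to contribute to the $e^{\pi\sqrt\D/m}$ order; this follows because its $q$-expansion has the form $\sum_{\D<0}(\cdots)\erfc(\cdots)q^{\D/4m}$ supported on $\D$ of a fixed sign and its contribution to large positive-$\D$ coefficients under the circle method is exponentially suppressed relative to the main term, but it is worth stating explicitly. Once these phase computations are pinned down, the $O(e^{\pi\sqrt{\D/2m}})$ error bound is immediate from the standard estimates on the tail of the Rademacher series.
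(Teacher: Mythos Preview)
Your proposal is correct and takes essentially the same approach as the paper: both use the circle method on the weight-$3/2$ theta-components $h_\ell$, exploit the $S$-transformation to isolate the contribution of the polar coefficients $C_\v(-1,\ell)$ near $\tau=0$, and note that at weight $3/2$ the Bessel functions collapse to pure exponentials. The paper's write-up is slightly more streamlined---it works directly with a single Fourier integral and the $S$-transform from \cite{Eichler:1985ja} rather than invoking a full Rademacher expansion---but the content is the same; note also that your error term should read $O(e^{\pi\sqrt{\D}/2m})$ (from the $c=2$ Farey term) rather than $O(e^{\pi\sqrt{\D/2m}})$.
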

\begin{proof}
We only sketch the proof, since it is a standard application of the circle method. We write the 
coefficient $c_\v(\D,r)$ as a Fourier integral $\int_{-\half+i\ve}^{\half+i\ve}h_r(\t)\,\E(-\D\t/4m)\,d\t\,$,
with $h_\ell$ defined by~\eqref{defhltau} and~\eqref{jacobi-theta}.  Its value is unchanged if we replace 
$h_r(\t)$ by its completion $\wh h_r(\t)$, since they differ only by terms with negative discriminant
The main contribution to the integral comes from a neighbourhood of $\t=0$, where we can estimate $\wh h_r(\t)$ to high accuracy as
 $$  \wh h_r(\t) \= -\frac{(i/\t)^{3/2}}{\sqrt{2m}}\;
   \sum_{\ell\mypmod{2m}} \E\Bigl(\frac{r\ell}{2m}\Bigr)\; \wh h_\ell\Bigl(-\frac1\t\Bigr)
\= -\frac{\kappa_r}{\sqrt{2m}}\;(i/\t)^{3/2}\,\E\Bigl(\frac1{4m\t}\Bigr)\+\text O\bigl(\t^{1/2}\bigr)\,, $$
the first equation holding by the $S$-transformation equation (\cite{Eichler:1985ja}, p.~59) and the second because 
the ``optimal growth" assumption implies that $\wh h_\ell(-1/\t)=C_\v(-1,\ell)\,\E(1/4m\t)+\text O(1)$ as $\t\to0$
and the contribution from the Eichler integral of the shadow of $\wh h_\ell$ is at most~$\text O(1/\t)$. The rest
follows by a standard calculation, the only point worthy of note being that since the $h_\ell$'s have weight~3/2
the Bessel functions that typically arise in the circle method are simple exponentials here.
\end{proof}
It is interesting to compare the statement of Theorem~\ref{CardyProp} with the tables given in the 
Appendix. This will be discussed in more detail in \S\ref{specialMTF}.

\subsection{Optimal choice of the function~$\Phi_{2,m}$ \label{optimal}}

As we already saw, Theorem~\ref{QmOG} fixes the choice of the mock Jacobi form~$\Phi_{2,m}^{0}$
up to the addition of an element of the space 
$J_{2,m}\NH$ of Jacobi forms of weight~2 and index~$m$ having optimal growth.   
The structure of this latter space is described by the following theorem, proved in~\S\ref{structure}.  
  \begin{thm}\label{chooseKm}   The quotient $J_{2,m}\NH / J_{2,m}$ has dimension $1$ for every~$m$.
  An element $\CK_{m}\in J_{2,m}\NH$ representing a non-trivial element of this quotient space can be chosen to satisfy
    $$ \CK_{m}(\t,z) \= y \, - \, 2 \+ y^{-1} \+ O(q)\quad\in\;\IC[y,y^{-1}][[q]]\,, $$
   and then has polar coefficients given by 
  \be
   C(\CK_{m}; \, \DD, \, r) \= \begin{cases} 
   1 & \text{if $\DD=-1$, $r^{2} \equiv 1 \mypmod{4m}$} \, , \\  0 & \text{if $\DD<-1$} \, . \end{cases} 
   \ee
    The function $\CK_{m}$ can be chosen to be primitive and invariant under all the operators 
    $W_{m_1}$ $(m_1\|m)$.
  \end{thm}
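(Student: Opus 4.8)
The plan is to pass to the theta decomposition of~\S\ref{ThetaTaylorExp}, recast ``optimal growth'' as a statement about the principal parts at the cusp of the associated weight-$\tfrac32$ vector-valued modular forms, and reduce the theorem to a single dimension count resting on the structural results of~\cite{SkoruppaZagier2,SkoruppaZagier}. First I would set up the bookkeeping. A weak Jacobi form $\v\in\wt J_{2,m}$ is the same datum as a vector $(h_\ell)_{\ell\;(\mathrm{mod}\;2m)}$ of weight-$\tfrac32$ modular forms for the Weil representation of index~$m$, with $h_\ell(\t)=\sum_\DD C(\DD,\ell)q^{\DD/4m}$ and with the powers of $q$ in $h_\ell$ lying in $-\ell^2/4m+\Z$. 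Hence a term of discriminant $\DD=-1$ can occur only in a component with $\ell^2\equiv1\pmod{4m}$, and $\v\in J_{2,m}\NH$ says exactly that each $h_\ell$ is holomorphic except that $h_\ell=C(-1,\ell)\,q^{-1/4m}+O(1)$ in those components, with $C(-1,-\ell)=C(-1,\ell)$ by the parity relation $C(\DD,-r)=C(\DD,r)$. So there is a linear ``principal-part'' map
\be
 P:\ J_{2,m}\NH\longrightarrow\mathcal V_m:=\bigl\{v:X_m\to\C\ \big|\ v(-\ell)=v(\ell)\bigr\},\qquad P(\v)(\ell)=C(\v;-1,\ell),
\ee
where $X_m:=\{\ell\in\Z/2m\Z:\ell^2\equiv1\ (\mathrm{mod}\ 4m)\}$, and clearly $\ker P=J_{2,m}$. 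Thus the theorem is equivalent to: $(\mathrm i)$ $\mathrm{im}\,P$ is one-dimensional; $(\mathrm{ii})$ $\mathrm{im}\,P=\C\,\mathbf1$ with $\mathbf1$ the all-ones vector; and $(\mathrm{iii})$ a non-holomorphic element can be taken primitive and $W_{m_1}$-invariant for all $m_1\|m$. I would record here that the condition ``$C(\DD,r)=0$ for $\DD<-1$'' is uniform in~$r$, so --- unlike the weakness condition of~\S\ref{hecke} --- every $W_{m_1}$ and the projector $\pi^{\rm prim}$ map $J_{2,m}\NH$ to itself; indeed $u_t$ maps $J_{2,m}\NH$ into the \emph{holomorphic} space $J_{2,m/t^2}$ for $t>1$. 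Consequently $W_{m_1}$ and $\pi^{\rm prim}$ act on $J_{2,m}\NH/J_{2,m}$, the former acting on $\mathcal V_m$ by the permutation $\ell\mapsto\ell^*$ of $X_m$, which fixes~$\mathbf1$ and is transitive on $X_m/\{\pm1\}$, so that the $W$-fixed vectors in $\mathcal V_m$ are precisely the multiples of $\mathbf1$.

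The core of the argument --- and the main obstacle --- is to prove $\dim J_{2,m}\NH=\dim J_{2,m}+1$. One knows $\dim\wt J_{2,m}=\dim(M_2\oplus M_4\oplus\cdots\oplus M_{2m+2})$ from the Taylor isomorphism and $\dim J_{2,m}$ from the Eichler--Zagier dimension formula~\cite{Eichler:1985ja}, so what is needed is an exact description of the image of the full principal-part map $\wt J^{\,!}_{2,m}\to\{\text{principal parts}\}$ and, within it, of the subspace of principal parts supported only at $\DD=-1$ --- exactly the control over how $J_{2,m}$ sits inside $\wt J_{2,m}$ and $\wt J^{\,!}_{2,m}$ that is supplied by the difficult results of~\cite{SkoruppaZagier2,SkoruppaZagier}. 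Carrying this count out uniformly in~$m$ and showing that the answer is one, not larger, is where essentially all the work lies. Conceptually one expects the answer one because $S_2(SL(2,\Z))=M_2(SL(2,\Z))=0$: up to a single Eisenstein-type degree of freedom, a weakly holomorphic weight-$2$ index-$m$ Jacobi form with no worse singularity than $q^{-1/4m}$ must already be holomorphic; equivalently, after decomposing by Atkin--Lehner eigenvalues and invoking the Skoruppa--Zagier correspondence with weight-$2$ modular forms of level~$m$, every eigencomponent of $J_{2,m}\NH$ other than the totally $W$-invariant one is holomorphic, the remaining principal parts supported at $\DD=-1$ being realised only by genuinely \emph{mock} forms (such as the $\CF_M$ of~\S\ref{choosephi}), never by honest Jacobi forms. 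This yields $(\mathrm i)$, identifies $\mathrm{im}\,P$ with the $W$-fixed line $\C\mathbf1$, hence $(\mathrm{ii})$, and --- through $\dim J_{2,m}\NH>\dim J_{2,m}$ --- gives the existence of a non-holomorphic element of $J_{2,m}\NH$. An explicit closed formula for $\CK_m$ in $E_4,E_6,A,B,\Delta$ generalising those for $\CK_6$ and $\CK_{10}$ in~\S\ref{choosephi} would provide an alternative construction, but verifying its optimal growth reduces to the same vanishing statements.

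Finally I would normalise and symmetrise. Given $\mathrm{im}\,P=\C\mathbf1$, choose $\v_0\in J_{2,m}\NH\setminus J_{2,m}$; then $P(\v_0)$ is a nonzero multiple of~$\mathbf1$, and after rescaling $C(\v_0;-1,\ell)=1$ for all $\ell\in X_m$ while $C(\v_0;\DD,r)=0$ for $\DD<-1$. The $y^r$-coefficient of the $q^0$-term of $\v_0$ is $C(\v_0;-r^2,r)$, which vanishes for $|r|\ge2$, equals $1$ for $r=\pm1$, and is some constant $c_0$ for $r=0$; since $\v_0(\t,0)$ is a holomorphic modular form of weight~$2$ on $SL(2,\Z)$, hence identically zero, its constant term must vanish, forcing $c_0=-2$, so $\v_0=y-2+y^{-1}+O(q)$ and the stated formula for the polar coefficients holds. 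Setting $\CK_m:=\pi^{\rm prim}\bigl(2^{-\w(m)}\sum_{m_1\|m}\v_0\,|\,W_{m_1}\bigr)$ then gives a representative that still lies in $J_{2,m}\NH$, is primitive, is invariant under every $W_{m_1}$ (averaging over the group $\{W_{m_1}:m_1\|m\}$ projects onto its invariants, and $\pi^{\rm prim}$ commutes with each $W_{m_1}$), and still satisfies $P(\CK_m)=\mathbf1$, because each $W_{m_1}$ merely permutes the equal $\DD=-1$ coefficients and $\pi^{\rm prim}$ changes $\v_0$ only by a holomorphic form. This establishes all the assertions of the theorem.
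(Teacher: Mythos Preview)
Your architecture is right, and your normalisation and symmetrisation steps are essentially those of the paper. But there is a genuine gap at the core: you say that showing $\dim(J_{2,m}\NH/J_{2,m})=1$ ``is where essentially all the work lies'' and then offer only a conceptual sketch. In fact the two halves of this statement have quite different characters, and the half you treat as hardest is the one with a direct elementary proof.

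The inequality $\dim\le1$ (i.e.\ $\mathrm{im}\,P\subseteq\C\mathbf1$) needs no dimension formula. You already observed that $W_{m_1}$ preserves $J_{2,m}\NH$; now apply your own step~6 not just to $\v$ but to each $\v|W_{m_1}$. For $\v\in J_{2,m}\NH$ and any $m_1\|m$, the form $\v|W_{m_1}$ is a weak Jacobi form of weight~2, so $(\v|W_{m_1})(\t,0)\in M_2(SL(2,\Z))=0$, and its $q^0$-coefficient is $c_{\v|W_{m_1}}(0,0)+2c_{\v|W_{m_1}}(0,1)=c_\v(0,0)+2C_\v(-1,r_0(m_1))$, where $r_0(m_1)$ is the square root of~$1$ modulo $4m$ attached to the decomposition $m=m_1\cdot(m/m_1)$. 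Thus $C_\v(-1,r)=-\tfrac12 c_\v(0,0)$ for every $r\in X_m$, and $P(\v)\in\C\mathbf1$. This is exactly the paper's residue argument (via the maps $R_{m_1,m_2}=\CR\circ W_{m_1}$ of~\S\ref{relations}); your step~6 is the special case $m_1=1$, used too late and too narrowly. Knowing only that $\mathrm{im}\,P$ is $W$-stable, as you had set up, is not enough: $\mathcal V_m$ decomposes into $2^{s-1}$ one-dimensional $W$-eigenlines, any of which is a priori eligible.

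It is only the inequality $\dim\ge1$ (existence of a non-holomorphic element of $J_{2,m}\NH$) that genuinely rests on~\cite{SkoruppaZagier}. The paper does not obtain it from a raw dimension count but from Theorem~\ref{exactseqallm} (whose exactness at $\C^{\CN^0_{+,m}}$ is the Skoruppa--Zagier input) via Theorem~\ref{Emstar}: the exactness of the top row of diagram~\eqref{primdiag} says that any $\psi\in\CE_m^{\rm OG}\setminus\CE_m^{\rm OG,\star}$ can be written as $\phi-\psi'$ with $\phi\in\wt J_{2,m}$ and $\psi'\in\CE_m^{\rm OG,\star}$, and then $\phi=\psi+\psi'\in J_{2,m}\NH\setminus J_{2,m}$. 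Your appeal to the Atkin--Lehner eigendecomposition and the Skoruppa--Zagier correspondence with weight-$2$ forms of level~$m$ is a plausible alternative but would need to be made precise; as written it is an intuition, not a proof.
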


The choice of $\CK_m$ described in the theorem is unique up to the addition of a primitive holomorphic
Jacobi form invariant under all $W_{m}$ operators. This first occurs for $m=25$, so $\CK_m$ is
unique for small values of~$m$. Because of Theorem~\ref{chooseKm}, the function $\Phi^{0}_{m}$ has only a very 
small amount of choice: we can change it without disturbing the ``optimal growth'' property by adding an arbitrary 
multiple of $\CK_{m}$ and an arbitrary holomorphic Jacobi form, but this is the only freedom we have. In particular, we have 
three special choices of ``optimal'' $\Phi^{0}_{2,m}$, each unique for $m<25$, 
corresponding to the three that we already encountered for $m=6$ and $m=10$ in \S\ref{choosephi}: 
\begin{enumerate}
\item[(I)] The $(+,\cdots,+)$-eigencomponent of  $\Phi_{2,m}^{0, \rm I}$ is strongly holomorphic.
\item[(II)] $\Phi_{2,m}^{0, \rm II}$ has $c(0,1)=0$. 
\item[(III)] $\Phi_{2,m}^{0, \rm III}$ has $c(0,0)=0$. 
\end{enumerate}
Each of these three choices has certain advantages over the other ones. 
In case~(I) we can choose~$\Phi_{2,m}^{0, \rm I}$  within its class modulo~$J_{2,m}^{0}$ 
so that its $(+,\cdots,+)$-eigencomponent is precisely equal to $2^{1-\w(m)} \CQ_{1}|\CV^{(1)}_{2,m}$ 
and hence has Fourier coefficients expressible in terms of class numbers, like in the examples for 
small~$m$ that we already saw. 
%
Also,~$\Phi_{2,m}^{0,\rm I}$ is strongly holomorphic when $m$ is a prime power, so that 
this is certainly the best of the three choices in that special case. 
The function occurring in~(II) has a Fourier expansion beginning $\frac{1}{12}(m+1) + O(q)$, 
and the corresponding $\psi^{0,\rm II}_{m+1}$ satisfies~\eqref{coeffCm} and belongs to the 
family of the weight zero, index~$m$ Jacobi forms discussed in~\cite{Gritsenko:1999fk}.  
The choice~(III) looks at first sight like the least natural of the three, since it might seem  
strange to insist on killing a $\DD=0$ term when there are $\DD=-1$ terms that are more singular, 
but in fact  it is the nicest in some ways, since it can be further improved by the addition of a holomorphic 
Jacobi form to have {\it all} of its $\DD=0$ coefficients equal to zero, and its 
polar coefficients are then given by an elegant formula, which we now state. 
\begin{thm} \label{PolarCoeffs}
The mock Jacobi form~$\Phi_{2,m}^{0, \rm III}$ in Theorem~\ref{QmOG} can be chosen, 
uniquely up to the addition of a Jacobi cusp form of weight~$2$ and index~$m$, 
so that all its Fourier coefficients with~$\DD=0$ vanish. Its polar coefficients are 
then given by 
\be \label{Phi0IIIcoeffs}
C(\Phi_{2,m}^{0, \rm III}; -1, r) \= \frac{1}{24} \sum_{d^{2}|m} \, \mu(d) \, 
\Big(\big(\frac{r-1}{2}, \frac{m}{d^{2}} \big) \+ \big(\frac{r+1}{2}, \frac{m}{d^{2}} \big)  \Big) 
\quad \text{for $r^{2} \equiv 1 \mypmod{4m}$.} 
\ee
Moreover, the function~$\Phi_{2,m}^{0, \rm III}$ can be chosen to be primitive without affecting these properties. 
\end{thm}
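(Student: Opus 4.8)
The plan is to prove Theorem~\ref{PolarCoeffs} by combining Theorems~\ref{QmOG} and~\ref{chooseKm} with the explicit polar-coefficient data for $\CK_m$ and for the eigencomponents $\CQ_M$, using the fact that a mock Jacobi form of weight $2$ and index $m$ is determined up to a Jacobi cusp form by its shadow, hence by its \emph{negative}-discriminant Fourier coefficients (since the shadow of a form of optimal growth is read off from the $\DD=-1$ coefficients via the Eichler-integral term in~\eqref{phi2mhat}). Concretely, Theorem~\ref{QmOG} gives us a representative $\Phi_{2,m}^{\rm III}=\sum_{d^2\mid m}\Phi_{2,m/d^2}^{0,\rm III}(\t,dz)$ whose $\DD=0$ coefficients vanish; by Theorem~\ref{chooseKm} the space $J_{2,m}^{\rm OG}/J_{2,m}$ is one-dimensional, spanned by $\CK_m$ with $C(\CK_m;-1,r)=1$ for $r^2\equiv 1\pmod{4m}$ and no other polar coefficients, so adding a multiple of $\CK_m$ plus a holomorphic Jacobi form lets us normalise away all the $\DD=0$ coefficients while preserving optimal growth; primitivity is then arranged by applying the projector $\pi^{\rm prim}$ from~\eqref{primproj}, which commutes with everything in sight and does not change the $\DD\le 0$ part in a way that reintroduces $\DD=0$ terms. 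This establishes the existence and the uniqueness-up-to-cusp-form claim; what remains is to \emph{compute} the resulting $C(\Phi_{2,m}^{0,\rm III};-1,r)$.

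For the computation, first I would treat the case $m$ squarefree, where $\Phi_{2,m}^{0,\rm III}=\Phi_{2,m}^{\rm III}$ has no $d>1$ contributions and the right-hand side of~\eqref{Phi0IIIcoeffs} collapses to $\frac1{24}\big((\tfrac{r-1}2,m)+(\tfrac{r+1}2,m)\big)$. Here I would use the eigencomponent decomposition~\eqref{eigenpieces}: $\Phi_{2,m}^{\rm III}=2^{1-\w(m)}\sum_{M\mid m,\,\mu(M)=1}\CQ_M|\CV_{2,m/M}^{(M)}$, with the choice~(III) normalisation meaning $\CQ_1=-\CH$ (strongly holomorphic, contributing nothing to the $\DD<0$ coefficients through $\CV_{2,m}^{(1)}$) and each $\CQ_M$ ($M>1$) chosen with the ``$\DD=0$ vanishes'' normalisation. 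The polar coefficient $C(\CQ_M;-1,r)$ is, by item~5 of the list of main results in the introduction, equal to $\pm\v(M)/12$ where $\v(M)=\prod_{p\mid M}(p-1)$; the sign is the Atkin–Lehner eigenvalue pattern $\prod_{p\mid M}(\pm)$ determined by which residue class $r$ lies in mod $2p$. I would then push this through the Hecke-like operator $\CV_{2,m/M}^{(M)}=V_{2,m/M}$ (squarefree case, so no $U_s$ correction) using formula~\eqref{defVl}: the $\DD=-1$ coefficients of $\CQ_M|V_{2,m/M}$ are sums $\sum_{d\mid(\ldots,r,t)}d\,C(\CQ_M;-1/d^2,r/d)$ which, since $-1/d^2$ is only an integer for $d=1$, reduce to a single term $C(\CQ_M;-1,r)$ subject to a gcd/divisibility condition on $r$. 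Summing over all squarefree $M\mid m$ and carefully tracking the signs, the alternating sum over Atkin–Lehner eigenvalues telescopes into the character-like expression $\sum_{a\mid m}$ of something that, after a standard manipulation with the Chinese Remainder Theorem applied to the congruence $r^2\equiv1\pmod{4m}$, equals $\frac1{24}\big((\tfrac{r-1}2,m)+(\tfrac{r+1}2,m)\big)$. This identity — matching the alternating Atkin–Lehner sum of $\v(M)/12$ terms against the gcd formula — is the technical heart and I expect it to require a careful but elementary induction on $\w(m)$, reducing to the single-prime identity $1-(p-1)=2-p$ matched against $(\tfrac{r\mp1}2,p)\in\{1,p\}$.

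Second, for general $m$ I would invoke Theorem~\ref{QmOG} directly: $\Phi_{2,m}^{0,\rm III}$ is the primitive ``optimal growth'' piece, and the relation $\Phi_{2,m}^{\rm III}(\t,z)=\sum_{d^2\mid m}\Phi_{2,m/d^2}^{0,\rm III}(\t,dz)$ together with the effect of $z\mapsto dz$ on Fourier coefficients (it sends $q^ny^r$-coefficients of the inner form at argument $(n',r')$ with $r=dr'$, same discriminant structure scaled) lets me express $C(\Phi_{2,m/d^2}^{0,\rm III};-1,\cdot)$ in terms of the $\Phi_{2,m}^{\rm III}$ data by Möbius inversion over the square divisors. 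Applying the squarefree-index case's formula repeatedly — or rather, applying the corresponding (not-necessarily-squarefree) formula for $\Phi_{2,m}^{\rm III}$ obtained from the general $\CV_{2,t}^{(M)}$ including its $\mu(s)s^{k-1}U_s$ corrections from~\eqref{Vkt1} — and inverting, the square divisors $d$ and the Möbius factor $\mu(d)$ emerge exactly as written in~\eqref{Phi0IIIcoeffs}. The main obstacle throughout is bookkeeping: keeping the Atkin–Lehner signs, the gcd divisibility conditions coming from~\eqref{defVl}, and the $U_s$-corrections in~\eqref{Vkt1} all consistent, and verifying the combinatorial identity that converts the resulting multiplicative sum into the clean gcd expression. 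I would isolate that identity as a standalone lemma on arithmetic functions, prove it by multiplicativity (both sides being multiplicative in $m$ for fixed residue data), and thereby reduce the whole theorem to the prime-power check, which is a short direct computation using the explicit polar data for $\CH$, $\CQ_p$ (when $p^2\nmid m$), and the $U_p$-corrected operators when $p^2\mid m$.
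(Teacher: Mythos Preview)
Your computational strategy is circular. You propose to compute $C(\Phi_{2,m}^{0,\rm III};-1,r)$ by expanding $\Phi_{2,m}^{\rm III}$ through the eigencomponent decomposition~\eqref{Qdecomp} and then inserting the polar data $C(\CQ_M;-1,r)=\pm\v(M)/12$ from ``item~5 of the introduction.'' But that formula is exactly equation~\eqref{Qmpolar}, the \emph{Corollary} to Theorem~\ref{PolarCoeffs}, and in the paper it is derived \emph{from} Theorem~\ref{PolarCoeffs}, not independently. You give no other source for the $\CQ_M$ polar coefficients, so the argument as written assumes what it is trying to prove. (There is also a confusion in your first paragraph: two mock Jacobi forms with the same shadow differ by a weak Jacobi form, not a cusp form, and the shadow of every $\Phi_{2,m}$ in its coset is the \emph{same} fixed multiple of $\sum_\ell\overline{\vth^0_{m,\ell}}\,\vth_{m,\ell}$; it is not read off from the $\DD=-1$ coefficients.)

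The paper's proof is completely different and much shorter. The key input you are missing is \S\ref{MJFappl}: Theorem~\ref{Rphishad} expresses the residue $\CR[\Phi]$ of any weight-2 mock Jacobi form directly as a Petersson scalar product of its shadow components with the Thetanullwerte $\vth^0_{m,\ell}$. Combining this with the elementary computation $(\vth^0_{m,\ell_1},\vth^0_{m,\ell_2})=\frac1{2\sqrt m}(\delta_{\ell_1,\ell_2}+\delta_{\ell_1,-\ell_2})$ (Proposition~\ref{thetapeter}) gives Proposition~\ref{RPhi2mval}: $R_{m_1,m_2}(\Phi_{2,m})=(m_1+m_2)/12$ for every coprime decomposition $m=m_1m_2$. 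Now for a form of optimal growth one has $\CR[\Phi|W_{m_1}]=2C(\Phi;-1,r)+C(\Phi;0,0)$ where $r$ is the square root of~$1$ mod~$4m$ attached to the decomposition; since $C(\Phi^{\rm III};0,0)=0$ by definition, this yields $C(\Phi^{\rm III};-1,r)=(m_1+m_2)/24$ immediately, with no Hecke-operator or eigencomponent bookkeeping at all. For non-square-free~$m$ one simply applies this to each term of $\Phi^0_{2,m}=\sum_{d^2|m}\mu(d)\,\Phi_{2,m/d^2}|U_d$, using $\CR[\v|U_d]=\CR[\v]$ and the observation that $m_i/d_i^2=(\tfrac{r\mp1}2,\tfrac m{d^2})$. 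The formula~\eqref{Qmpolar} for $\CQ_M$ is then a \emph{consequence}, obtained by projecting onto the $(-,\dots,-)$-eigencomponent.
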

\ndt This also gives a multiplicative formula for the polar coefficients of all eigencomponents 
of~$\Phi_{2,m}^{0, \rm III}\,$: 
\begin{corollary}
Write $m=\prod_{i=1}^{s} P_{i}$, where $P_{i}=p_{i}^{\nu_{i}}$ with distinct primes $p_{i}$
and exponents $\nu_{i} \ge 1$. Then for any~$\ve_{i} \in \{\pm 1\}$ with~$\ve_{1}\cdots \ve_{s}=1$, 
we have  
\be \label{eigencoeffs}
C\Bigl( \Phi_{2,m}^{0, \rm III} \, \Bigl| \, \prod_{i=1}^{s} \big( 1+ \ve_{i} W_{P_{i}} \big); \, -1 , \, r \Bigr) \=
\frac{\pm 1}{12} \, \prod_{i=1}^{s} \; \begin{cases} \, (p_{i} + \ve_{i}) & \text{if $\nu_{i}=1$,}  \\
\, (p_{i}^{\nu_{i}} - p_{i}^{\nu_{i}-2}) & \text{if $\nu_{i}\ge 2$,}  
\end{cases}
\ee
where the sign is the product of the $\ve_{i}$ for all $i$ for which $r \equiv -1 \mypmod{2 P_{i}}$. 
In particular, for the functions~$\CQ_{M}$ $(M>1$, $\mu(M)=1)$, chosen by the Corollary
to Theorem~\ref{QmOG}
to have optimal growth, the Fourier coefficients with non-positive discriminant are given by
\be \label{Qmpolar}
C(\CQ_{M}; \DD, r) 
\= \begin{cases}  \qquad 0 \,  & \text{if $\DD = 0$} \, , \\
\pm \dfrac{\v(M)}{24} \,  &  \text{if $\DD \= -1$} \, , 
\end{cases} 
\ee
where in the second line $r^{2} \equiv 1 \mypmod{4M}$ and the sign is determined 
by $\mu((\frac{r+1}{2}, M)) = \pm 1$.
\end{corollary}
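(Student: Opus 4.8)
The plan is to obtain both assertions directly from the polar--coefficient formula~\eqref{Phi0IIIcoeffs} of Theorem~\ref{PolarCoeffs} by tracking how the Atkin--Lehner involutions $W_{P_{i}}$ act on Fourier coefficients. I would fix $r$ with $\DD=4mn-r^{2}=-1$, so that automatically $r^{2}\equiv1\mypmod{4m}$, and first record the elementary fact that $\frac{r-1}{2}$ and $\frac{r+1}{2}$ are coprime integers whose product $\frac{r^{2}-1}{4}$ is divisible by~$m$, so that for each prime power $P_{i}=p_{i}^{\nu_{i}}\,\|\,m$ exactly one of them, say $\frac{r-\delta_{i}}{2}$ with $\delta_{i}\in\{\pm1\}$, is divisible by $P_{i}$ (equivalently $r\equiv\delta_{i}\mypmod{2P_{i}}$), while the other is prime to $p_{i}$. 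I would use three consequences: among the residues $\ell\mypmod{2m}$ that can occur in a $\DD=-1$ coefficient, each is determined by its sign vector $(\delta_{1},\dots,\delta_{s})\in\{\pm1\}^{s}$; the operator $W_{P_{i}}$, which by \eqref{defW}--\eqref{definvol} sends $C(\DD,\ell)$ to $C(\DD,\ell^{*})$ with $\ell^{*}\equiv-\ell\mypmod{2P_{i}}$, $\ell^{*}\equiv\ell\mypmod{2m/P_{i}}$, acts on such residues by flipping $\delta_{i}$ and fixing the other entries; and
$$ \gcd\Bigl(\frac{r+\epsilon}{2},\,\frac{m}{d^{2}}\Bigr)\;=\;\prod_{i\,:\,\delta_{i}=-\epsilon} p_{i}^{\,\nu_{i}-2e_{i}}\qquad\bigl(\epsilon=\pm1,\ d=\textstyle\prod_{i}p_{i}^{e_{i}},\ d^{2}\mid m\bigr)\,. $$

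Next I would expand $\prod_{i}(1+\ve_{i}W_{P_{i}})=\sum_{S\subseteq\{1,\dots,s\}}\bigl(\prod_{i\in S}\ve_{i}\bigr)\prod_{i\in S}W_{P_{i}}$ and substitute~\eqref{Phi0IIIcoeffs}, reducing the main formula to evaluating, for $\epsilon=\pm1$ and $d^{2}\mid m$, the subset sum $\sum_{S}\bigl(\prod_{i\in S}\ve_{i}\bigr)\prod_{i\,:\,\delta_{i}^{(S)}=-\epsilon}p_{i}^{\,\nu_{i}-2e_{i}}$, where $\delta^{(S)}$ denotes $\delta$ with the entries in $S$ flipped. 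This sum factors over the $s$ coordinates, and using $1+\ve_{i}p_{i}^{\nu_{i}-2e_{i}}=\ve_{i}\bigl(p_{i}^{\nu_{i}-2e_{i}}+\ve_{i}\bigr)$ one finds it equals $\bigl(\prod_{i\,:\,\delta_{i}=\epsilon}\ve_{i}\bigr)\prod_{i}\bigl(p_{i}^{\nu_{i}-2e_{i}}+\ve_{i}\bigr)$. Because $\prod_{i}\ve_{i}=1$ by hypothesis, $\prod_{i\,:\,\delta_{i}=+1}\ve_{i}=\prod_{i\,:\,\delta_{i}=-1}\ve_{i}=:\sigma$, so the two choices of $\epsilon$ contribute equally; and summing over the squarefree $d$ with $d^{2}\mid m$ against $\mu(d)$ factors over the primes, with $i$-th factor $p_{i}+\ve_{i}$ if $\nu_{i}=1$ and $(p_{i}^{\nu_{i}}+\ve_{i})-(p_{i}^{\nu_{i}-2}+\ve_{i})=p_{i}^{\nu_{i}}-p_{i}^{\nu_{i}-2}$ if $\nu_{i}\ge2$. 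Dividing by the $24$ in~\eqref{Phi0IIIcoeffs} then gives exactly~\eqref{eigencoeffs}, with sign $\sigma=\prod_{i\,:\,r\equiv-1\,(2P_{i})}\ve_{i}$.

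The ``in particular'' statement would follow by specialization: for $m=M$ squarefree with $\mu(M)=+1$ every $\nu_{i}=1$ and, by~\eqref{defQmeven}, the optimal--growth representative of $\CQ_{M}$ supplied by the Corollary to Theorem~\ref{QmOG} may be taken to be $\CQ_{M}=\tfrac12\,\Phi_{2,M}^{0,\rm III}\,|\,\prod_{p\mid M}(1-W_{p})$ (i.e.\ $\Phi^{0}_{2,M}=\Phi^{0,\rm III}_{2,M}$ in~\eqref{Phi0mimp}). Putting $\ve_{i}=-1$ for all $i$ in~\eqref{eigencoeffs} yields $C(\CQ_{M};-1,r)=\tfrac12\cdot\tfrac{\sigma}{12}\prod_{p\mid M}(p-1)=\sigma\,\v(M)/24$, and since $r\equiv-1\mypmod{2p}$ is equivalent to $p\mid\frac{r+1}{2}$ one has $\sigma=(-1)^{\#\{p\mid M\,:\,p\mid(r+1)/2\}}=\mu\bigl((\tfrac{r+1}{2},M)\bigr)$, which is~\eqref{Qmpolar} for $\DD=-1$. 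The $\DD=0$ case is immediate, since Theorem~\ref{PolarCoeffs} lets us choose $\Phi_{2,M}^{0,\rm III}$ with all $\DD=0$ Fourier coefficients equal to zero and the $W_{p}$ preserve $\DD$.

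I expect the only slightly delicate point to be the prime $p=2$, where one must confirm that among $\DD=-1$ residues $\ell\mypmod{2m}$ one still has $\ell\equiv\pm1\mypmod{2^{\nu_{1}+1}}$ (so that $\delta_{1}$ is well defined) and that $W_{2^{\nu_{1}}}$ flips $\delta_{1}$; both follow from $v_{2}(r-1)+v_{2}(r+1)=v_{2}(r^{2}-1)\ge\nu_{1}+2$ together with $\min\bigl(v_{2}(r-1),v_{2}(r+1)\bigr)=1$, which forces $v_{2}$ of one of $\frac{r\mp1}{2}$ to be at least $\nu_{1}$. Beyond this, the calculation is routine, the only real work being the coordinatewise evaluation of the subset sum. (An essentially equivalent but less self-contained route would use Theorem~\ref{phimQm} to identify $\tfrac12\Phi_{2,m}\,|\,\prod_{i}(1+\ve_{i}W_{P_{i}})$, up to a true Jacobi form, with an image of $\CQ_{M}$ under $\CV^{(M)}_{2,m/M}$, then apply the $m=M$ case of the above together with the explicit action of $\CV^{(M)}_{2,m/M}$ on polar coefficients.)
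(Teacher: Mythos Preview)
Your proposal is correct and follows precisely the route the paper intends: the paper states the corollary immediately after Theorem~\ref{PolarCoeffs} and, in the proof of that theorem in~\S\ref{proofs}, simply remarks that the formula~\eqref{Qmpolar} ``for the polar coefficients of~$\CQ_{M}$ when $\mu(M)=1$'' follows as well, without writing out the computation. Your argument---parametrizing the $\DD=-1$ residues by the sign vector $(\delta_1,\dots,\delta_s)$, observing that each $W_{P_i}$ flips $\delta_i$, factoring the subset sum coordinatewise, and then collapsing the $\mu(d)$-sum prime by prime---is exactly the calculation that turns~\eqref{Phi0IIIcoeffs} into~\eqref{eigencoeffs} and~\eqref{Qmpolar}, and your treatment of the $p=2$ case is the right one.
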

The final statement of the corollary can be compared with the examples in~\eqref{Qlist}, 
in which each of the $\CF_{M}$'s occurring (for $M>1$)  was normalized with $C(\CF_{M};-1,1)=-1$, 
and the numerical factor relating $\CQ_{M}$ and $\CF_{M}$ equals $-\v(M)/24$ in each case.
To give a feeling for the nature of these functions in general, 
and as an illustration of the formulas in Theorem~\ref{PolarCoeffs} and its corollary, 
we give in Table~\ref{Kmphi0} below the first few Fourier coefficients $C(\Phi; \DD,\ell)$ for these 
and some related functions $\Phi$ for $m=25$ (the first index where $J_{2,m}\neq\{0\}$ and hence
$\CK_{m}$ is not unique), $m=37$~(the first index where $J_{2,m}^{0}\neq\{0\}$ 
and hence $\Phi_{2,m}^{0, \rm III}$ is not unique),
$m=50$~(the first case where \eqref{eigencoeffs} is non-trivial), 
and $m=91$~(the first case where where $J_{2,m}^{0,--}\neq\{0\}$ and hence $\CQ_m$ is not unique). 
The free coefficients~``$a$'', ``$b$'' 
and ``$c$'' in the table arise because $\dim J_{2,25} = \dim J_{2,37}^{0} = \dim J_{2,50}= \dim J_{2,91}^0 =1$. 
In the case of $m=91$ we show the two $\ell$-values corresponding to each value of~$\DD$ in the format 
$\,\ell \, / \,\ell^*\,$ and give in the following line the Fourier coefficient $C(\CQ_{91};\D,\ell)=-C(\CQ_{91};\D,\ell^*)$.
In the cases of $m=25$ and $m=37$, we have given the (non-unique) holomorphic mock Jacobi 
form~$\Phi^{0, \rm I}_{2,2m}$ as well as~$\Phi^{0, \rm III}_{2,2m}$.
\begin{table}[h] 
{\small
\begin{center}
\begin{tabular}{!{\thvline}c!{\thvline}c|ccc|c|c|c|c|c|c!{\thvline}cc}  
\noalign{\thhline}
       $\DD$ & $-1$&  \multicolumn{3}{c|}{0}  & $4$& $11$ & $16$ &$19 $& $24$& $31$   \\  
       $\pm \ell \mypmod{50}$ & $  1$& $0$ & $  10$ & $  20$ & $  14$& $  17$ & $  22$ &$  9$& $  24$& $ 13$   \\  \hline
       $ \CK_{25}$ & $1$ & $-2$ & $a$ & $1-a$ & $1-a$ & $1+2a$ & $2+3a$ & $2-4a$ & $5-2a$ & $3+2a$   \\
       $ \Phi^{0, \rm III}_{2,25} $ & $1$ & $0$ & $0$ & $0$ & $0$ & $1$ & $2$ & $-1$ & $2$ & $1$ \\
       $ \Phi^{0, \rm I}_{2,25} $ & $0$ & $2$ & $c$ & $-1-c$ & $-1-c$ & $2c$ & $3c$ & $-3-4c$ & $-3-2c$ & $-2+2c$ \\
\noalign{\thhline}
\end{tabular} 
\end{center}

\begin{center}
\begin{tabular}{!{\thvline}c!{\thvline}c|c|c|c|c|c|c|c|c|c!{\thvline}ccc}  
\noalign{\thhline}
       $\DD$ & $-1$& $0$ & 3 & 4 & $7$& $11$ & $12$ &$16 $& $27$& $28$   \\  
       $\pm \ell \mypmod{74}$ & $  1$& $0$ & $21$ & $12$ & $17$& $27$ & $32$ &$24$& $11$& $34$   \\  \hline
       $ \CK_{37} $ & $1$ & $-2$ & $a$ & $a$ & $1-a$ & $1+a$ & $2-a$ & $2-2a$ & $1-3a$ & $2+3a$ \\
       $ \Phi_{2,37}^{0, \rm III} $ & $\frac{19}{12}$ & $0$ & $b$ & $-\frac16+b$ & $\frac14-b$ & $\frac{11}{12}+b$ 
       & $\frac32-b$ & $1-2b$ & $-\frac34-3b$ & $\frac{13}{6}+3b$ \\
       $ \Phi_{2,37}^{0, \rm I} $ & $0$ & $\frac{19}{6}$ & $c$ & $-\frac{1}{6}+c$ & $-\frac{4}{3}-c$ & $-\frac{2}{3}+c$ 
       & $-\frac{5}{3}-c$ & $-\frac{13}{6}-2c$ & $-\frac{7}{3}-3c$ & $-1+3c$   \\
\noalign{\thhline}
\end{tabular} 
\end{center}
}

{\small 
\begin{center}
\begin{tabular}{!{\thvline}c!{\thvline}cc|ccc|c|c|c|cc|c|c!{\thvline}cc}  
\noalign{\thhline}
       $\DD$ & \multicolumn{2}{c|}{$-1$} &  \multicolumn{3}{c|}{0} & $4$ & $16$ &$24 $& \multicolumn{2}{c|}{31} & $36$ &39  \\  
       $\pm \ell \mypmod{100}$ & $1$& $49$ & $0$ & $20$ & $40$& $14$ & $28$ &$24$& $13$& $37$ &$42$ & $19$  \\  \hline
       $ \CK_{50} $ & $1$ & $1$ & $-2$ & $a$ & $1-a$ & $-a$ & $1+a$ & $1-2a$ & $1+2a$ & $1+2a$ &$2+3a$& $-2a$\\
       $ \Phi_{2,50}^{0, \rm III} | (1+W_{2}) $ & $3$ & $3$ & $0$ & $0$ & $0$ & $-2$ & $2$ & $-2$ & $3$ & $3$ & $8$ &$-1$ \\
       $ \Phi_{2,50}^{0, \rm III} | (1-W_{2}) $ & $1$ & $-1$ & $0$ & $0$ & $0$ & $0$ & $0$ & $0$ & $-1$ & $1$ & $0$  &$-1$ \\
\noalign{\thhline}
\end{tabular} 
\end{center}


\begin{center}
\begin{tabular}{!{\thvline}c!{\thvline}c|c|c|c|c|c|c|c|c!{\thvline}c|c}  
\noalign{\thhline}
       $\DD$ & $-1$& 3 & 12 & 27 & 40 & 48 & 55 & 68 & 75  \\  
       $\pm \ell \mypmod{182}$ & $1\, / \, 27$& $19\, / \,33$ & $38\, / \,66$ & $57\, / \,83$ & $18\, / \,60$& 
        $50\, / \,76$ & $71\, / \,85$ &$32\, / \,46$&  $17\, / \,87$   \\  \hline
       $\CQ_{91}$ & $3$ & $a$ & $-1+a$ & $-1-2a$ & $-1-3a$ & $-2+2a$ & $3a$ & $1+3a$ & $-5+2a$ \\ 
\noalign{\thhline}
\end{tabular} 
\end{center}

} 
\caption{\small{Examples of non-unique Jacobi and mock Jacobi forms}} 
 \label{Kmphi0}
\end{table}

In summary, the ``optimal growth'' property of~$\Phi_{2,m}^{0}$ and the near-uniqueness of the 
Jacobi form~$\CK_{m}$ have permitted us to pin down both of these functions in many cases, 
but there still remains an additive ambiguity whenever there are Jacobi cusp forms.  
We do not know how to resolve this ambiguity in general, and state this as an open problem:


\smallskip
\ndt{\it Question}: Is there a canonical choice of the forms $\CK_m$ and $\CQ_M$ for all
positive integers~$m$ and all positive square-free integers~$M\,$?

\smallskip
{ \ndt Should the answer to either of these questions be positive, the corresponding forms would be of
considerable interest. In particular, one can speculate that the canonical choice of~$\CQ_M$, if it exists,
might be canonically associated to the quaternion algebra of discriminant~$M$, since there
is a natural bijection between positive square-free integers and quaternion algebras over~$\Q$,
with the latter being indefinite or indefinite according as $\mu(M)=1$ (corresponding to our $k=2$
forms and to mock theta functions of weight~3/2) or $\mu(M)=-1$ (corresponding to $k=1$ and 
to mock theta functions of weight~1/2).
}
%

We end this subsection by describing one possible approach to answering the above question
that we tried.  This approach was not successful in the sense that, although it did produce
a specific choice for $\CK_m$ and $\Phi_{2,m}^0$ (or $\Phi_{1,m}^0$), these did not
seem to have arithmetic coefficients and therefore had to be rejected, but the
method of calculation involves a result that is of interest in its own right and
may be useful in other contexts.  The idea is to fix the choice of the Jacobi
forms (or mock Jacobi forms) in question by demanding that they (or their completions)
be orthogonal to Jacobi cusp forms with respect to the Petersson scalar product.
(As an analogy, imagine that one wanted to 
determine the position of~$E_{12}=1+\cdots$ in $M_{12}=\IC \, E_{4}^{3} + \IC \, \Delta$ 
but did not know how to compute the Fourier expansion of $\sum (m\t+n)^{-12} \,$. Then one  
could simply write $E_{12}=E_{4}^{3}+a\Delta$ and use the orthogonality of Eisenstein series 
and cusp forms to compute the number~$a=-432000/691$ numerically
as~$-(E_{4}^{3},\Delta)/(\Delta,\Delta)$, where $( \, \, , \,)$ denotes the Petersson scalar product.)  
To apply this idea, we need a good way to compute the Petersson scalar product
of Jacobi forms. This is the result of independent interest referred to above.
For simplicity, we state it only in the case~$k=2$.
\begin{thm}\label{JCPSP}
Let $\v_1$ and $\v_2$ be two Jacobi cusp forms of weight~$2$ and index~$m$, and set
 \be \label{AHdef} A(\D) \,=\, \sum_{\ell\mypmod{4m}} C_{\v_{1}}(\D,\ell)\,\overline{C_{\v_{2}}(\D,\ell)}\,,\;\quad 
 H(t) \,=\, \sum_{\D=1}^\infty\,\frac{A(\D)/\sqrt\D}{\exp(2\pi \sqrt{t\D/m})\,-\,1}\;.  \ee
Then the function $\dfrac{tH(t)-H(1/t)}{t-1}$ $(t\ne1)$ has a constant value $c_H$
which is proportional\footnote{The exact constant of proportionality plays no role for our
purposes, since we will only be concerned with ratios of Petersson scalar products.
If the scalar product is normalized as in~\cite{Eichler:1985ja}, its value is 1/2.
}
to the Petersson scalar product $(\v_1,\v_2)$. \end{thm}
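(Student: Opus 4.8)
The plan is to recognise $H(t)$ as a (regularised) Rankin--Selberg integral pairing the automorphic function attached to $\v_1\overline{\v_2}$ with a Jacobi theta kernel of weight $\tfrac12$, so that the modular transformation of that kernel, together with the Rankin--Selberg identity~\eqref{RS}, forces the combination $\bigl(tH(t)-H(1/t)\bigr)/(t-1)$ to be constant and equal to a multiple of the residue computing $(\v_1,\v_2)$. Throughout, write $L(s)=\sum_{\D\ge1}A(\D)\,\D^{-s}$ with $A(\D)$ as in~\eqref{AHdef}; this Dirichlet series converges for $\Re(s)>\tfrac32$ because $\v_1,\v_2$ are cusp forms, and the whole proof will amount to transporting the pole of $L$ at $s=\tfrac32$ through an explicit chain of transforms.

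First I would carry out the theta decomposition $\v_j=\sum_{\ell}h_{j,\ell}(\t)\,\vth_{m,\ell}(\t,z)$ with $h_{j,\ell}(\t)=\sum_{\D>0}C_{\v_j}(\D,\ell)\,q^{\D/4m}$, so that $\mathbf h_j=(h_{j,\ell})_{\ell}$ is a vector-valued cusp form of weight $\tfrac32$ for the Weil representation on $\IZ/2m\IZ$ and $\langle\mathbf h_1,\mathbf h_2\rangle(\t)\,\t_{2}^{3/2}$ is $SL(2,\IZ)$-invariant. By the standard identification of the Petersson products on $J_{2,m}$ and on vector-valued forms of weight $\tfrac32$ (\cite{Eichler:1985ja}) one has $(\v_1,\v_2)\doteq(\mathbf h_1,\mathbf h_2)$, and then the Rankin--Selberg formula~\eqref{RS} gives $(\mathbf h_1,\mathbf h_2)\doteq\mathrm{Res}_{s=3/2}L(s)$ with an explicit nonzero constant of proportionality. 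Unfolding the same Petersson integral against the completed Eisenstein series $E^{*}(\t,s)=\pi^{-s}\Gamma(s)\zeta(2s)E(\t,s)$ for $SL(2,\IZ)$ simultaneously provides the meromorphic continuation of $L(s)$ to $\C$, its functional equation, and the fact that its only pole in $\Re(s)>0$ other than the one at $s=\tfrac32$ is a reflected pole at $s=\tfrac12$, again with residue proportional to $(\v_1,\v_2)$.

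Next I would produce an integral representation of $H$. Using $(e^{y}-1)^{-1}=\sum_{n\ge1}e^{-ny}$ and the classical Gaussian identity $\int_{0}^{\infty}e^{-\pi a^{2}/x-\pi b^{2}x}\,dx=\tfrac1b\,e^{-2\pi ab}$ with $a=n\sqrt{t/m}$ and $b=\sqrt\D$, one obtains
\be\label{Hintrep}
H(t)\=\frac1m\int_{0}^{\infty}\vth_{0}\!\Bigl(\frac tv\Bigr)\,R(v)\,dv\,,\qquad
\vth_{0}(y):=\sum_{n\ge1}e^{-\pi n^{2}y}\,,\quad R(v):=\sum_{\D\ge1}A(\D)\,e^{-\pi v\D/m}\,,
\ee
where $R(v)$ is exponentially small as $v\to\infty$ while, by $\int_{0}^{\infty}R(v)v^{s-1}dv=\Gamma(s)(m/\pi)^{s}L(s)$, it has an expansion $R(v)=c_{3/2}\,v^{-3/2}+c_{1/2}\,v^{-1/2}+O(1)$ as $v\to0^{+}$ with $c_{3/2},c_{1/2}$ proportional to $(\v_1,\v_2)$. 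The symmetry $f(1/t)=f(t)$ for $f(t):=\bigl(tH(t)-H(1/t)\bigr)/(t-1)$ is immediate, and $f$ is regular at $t=1$ since $tH(t)-H(1/t)$ vanishes there. Inserting the theta functional equation $\vth_{0}(1/y)=\sqrt y\,\vth_{0}(y)+\tfrac12(\sqrt y-1)$ into~\eqref{Hintrep} (applied with $y=v/t$ in $t\,\vth_{0}(t/v)$ and with $y=tv$ in the $H(1/t)$-integrand $\vth_{0}(1/(tv))$) gives, formally,
$$
tH(t)-H(1/t)\=\frac{\sqrt t}{m}\int_{0}^{\infty}\!\!\sqrt v\,\bigl[\vth_{0}(v/t)-\vth_{0}(tv)\bigr]R(v)\,dv\;+\;\frac{1-t}{2m}\int_{0}^{\infty}R(v)\,dv\,.
$$
Both integrals on the right are individually divergent, the divergence being exactly the pole of $L$; one regularises by splitting off the singular part $R_{\rm sing}(v)=c_{3/2}v^{-3/2}+c_{1/2}v^{-1/2}$ of $R$ near $0$. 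Because $\vth_{0}(v/t)-\vth_{0}(tv)$ is anti-homogeneous of the appropriate degree, the $R_{\rm sing}$-contribution to the first integral vanishes by Euler-type regularisation (the homogeneous principal value $\int_0^\infty\sqrt v\,[\vth_0(v/t)-\vth_0(tv)]\,v^{-3/2}dv$ is $0$), the $R_{\rm sing}$-contribution to the second integral is an elementary combination of $\Gamma$- and $\zeta$-values times a linear polynomial in $t$ with coefficients proportional to $(\v_1,\v_2)$, and the $R-R_{\rm sing}$-part of the first integral is absolutely convergent and is shown, applying the $\vth_0$ functional equation once more and invoking $f(1/t)=f(t)$, to be $(t-1)$ times a constant. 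Assembling these pieces yields $f(t)=c_{H}$ with $c_{H}$ an explicit multiple of $\mathrm{Res}_{s=3/2}L(s)$, hence of $(\v_1,\v_2)$.

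The hardest part will be the analytic bookkeeping in this last step: because $R(v)$ only grows like $v^{-3/2}$ (rather than decaying) as $v\to0^{+}$, the naive use of the theta functional equation produces several divergent integrals, and one must organise the subtraction of $R_{\rm sing}$ so as to prove that what remains in $tH(t)-H(1/t)$ is genuinely an affine function of $t$ and to compute its slope and intercept. It is precisely here that the absence of any further pole of $L(s)$ in the critical strip -- a consequence of the clean functional equation coming from the $SL(2,\IZ)$ Eisenstein series -- is needed, so that the ``symmetric remainder'' is truly constant and not a non-trivial function satisfying $f(1/t)=f(t)$. Fixing the exact constant of proportionality (the value $\tfrac12$ in the normalisation of~\cite{Eichler:1985ja}) is then only a matter of tracking the constants in the theta decomposition, in~\eqref{RS}, and in $\mathrm{Res}_{s=1}E^{*}(\t,s)$, and is irrelevant for the intended application to ratios of Petersson products.
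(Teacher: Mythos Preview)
Your proposal and the paper agree on the set-up: the theta decomposition reduces $(\v_1,\v_2)$ to a Petersson product of the weight-$\tfrac32$ vector-valued cusp forms $(h_\ell^{(1)})$ and $(h_\ell^{(2)})$, and the Rankin--Selberg method (unfolding against the completed Eisenstein series) identifies this product with $\text{Res}_{s=3/2}L(s)$ while also supplying the meromorphic continuation, the functional equation $s\leftrightarrow1-s$, and the pole structure of the completed $L$-function.  From this point on, however, the paper is much more direct than you are.  It simply computes the Mellin transform of $H$: after the substitution $u=2\pi\sqrt{t\D/m}$ and the elementary identity $\int_0^\infty u^{2s-1}(e^u-1)^{-1}\,du=\Gamma(2s)\,\zeta(2s)$, one finds that $\int_0^\infty H(t)\,t^{s-1}\,dt$ is (up to a harmless constant) exactly the completed Rankin--Selberg function $\wt L(s)\doteq m^s(2\pi)^{-2s}\Gamma(2s)\,\zeta(2s)\,L(s+\tfrac12)$.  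Since $\wt L(s)=\wt L(1-s)$ with simple poles only at $s=0,1$ (hence residues $r_0=-r_1$), shifting the inverse-Mellin contour from $\Re s>1$ to $\Re s<0$ and using the functional equation on the shifted integral gives $tH(t)-H(1/t)=r_1(1-t)$ at once; this is the ``standard argument'' the paper invokes.  The paper also explains why this is so clean: because the $h_\ell$ have \emph{half-integral} weight~$\tfrac32$, the usual Rankin--Selberg gamma factor $\Gamma(s)\Gamma(k+s-1)$ collapses to a single $\Gamma(2s)$, which is precisely what makes $H(t)$ a sum of simple exponentials (the Mellin-inverse of $\Gamma(2s)\zeta(2s)$) rather than of $K$-Bessel functions.

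Your alternative route --- an integral representation of $H$ with a $\vth_0$-kernel and direct use of the Jacobi inversion formula --- is essentially Riemann's theta-method carried out on the Mellin-inverse side, and could in principle be made to work.  But there is a real error in your key identity: $\int_0^\infty e^{-\pi a^2/x-\pi b^2 x}\,dx$ is a $K_1$-Bessel integral and is \emph{not} equal to $\tfrac1b\,e^{-2\pi ab}$.  The identity you need is $\int_0^\infty e^{-\pi a^2/x-\pi b^2 x}\,x^{-1/2}\,dx=\tfrac1b\,e^{-2\pi ab}$ (the $K_{1/2}$ case, which \emph{is} elementary), and with it the correct representation of $H(t)$ acquires an extra factor $v^{-1/2}$ and a different overall constant, changing all of your subsequent formulas.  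More importantly, your last step --- that after subtracting $R_{\rm sing}$ the surviving integral is exactly $(t-1)$ times a constant --- is only asserted, and to justify it you would have to use precisely the functional equation and pole structure of $\wt L$ that you already obtained from Rankin--Selberg.  So the theta-kernel detour gains nothing over the paper's one-line Mellin identification, which uses that same input directly.
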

\ndt{\it Sketch of proof}: We refer to~\cite{Eichler:1985ja} for the definition of 
$(\v_1,\v_2)$ (eq.~(13), p.~27) and for its expression as a simple multiple of 
$\sum_\ell(h_\ell^{(1)},h_\ell^{(2)})$, where $h_\ell^{(1)}$ and~$h_\ell^{(2)}$ are the coefficients 
in the theta expansions of $\v_1$ and $\v_2$ (Theorem 5.3, p.~61). 
Calculating these weight 3/2 scalar products by the Rankin-Selberg method as explained in \S\ref{modularbasic},
we find that $(\v_1,\v_2)$ can be expressed as a simple multiple of the residue at $s=3/2$
of the Dirichlet series $L(s)=\sum_{\D>0}A(\D)\D^{-s}$.  The  Rankin-Selberg method also shows that 
the function $\wt L(s)=2\,(2\pi)^{-s}m^s\,\Gamma(2s)\,\zeta(2s)\,L(s+\frac12)$ has a meromorphic continuation 
to all~$s$, with poles only at $s=0$ and $s=1$, and is invariant under~$s\to1-s$.  But $\wt L(s)$ is
simply the Mellin transform of $H(t)$, so this implies the proposition by a standard argument. 
\hfill $\square$

A rather surprising aspect of this theorem is that it is actually {\it easier} to compute Petersson scalar 
products for Jacobi forms than it is for ordinary modular forms of integral weight. In that case the analogue 
of $H(t)$ would have an infinite sum of $K$-Bessel functions multiplying each coefficient~$A(\D)$,
whereas here the Bessel functions are replaced by simple exponentials (because the modular forms $h_\ell^{(i)}$
have half-integral weight, so that the usual gamma factor $\Gamma(s)\Gamma(k+s-1)$ in the 
Rankin-Selberg method becomes a single gamma function $\Gamma(2s)$)  and at the same time 
the infinite sum becomes a geometric series that can be summed explicitly.

To apply the proposition to our situation in the case of $\CK_{37}$, we start with an initial choice
(like the one given in Table~\ref{Kmphi0}) and replace it by $\CK_{37}+a\v_{37}$, where~$\v_{37}$
is the generator of~$J_{2,37}^{0}$ (normalized by~$C(\v_{37}\,; 3,21)=1$) and where $a\in\R$ is
chosen to make this function orthogonal to $\v_{37}$, i.e., $a=-(\CK_{37},\v_{37})/(\v_{37},\v_{37})$,
in the hope that this~$a$ will be an integer or simple rational number. 
(Of course $\CK_{37}$ is not a cusp form, but the integral defining the Petersson scalar product 
still converges,\footnote{The precise condition needed for the convergence of the Petersson scalar 
product of two weakly holomorphic Jacobi forms~$\v_{1}$ and~$\v_{2}$ is that $\, \text{ord}_{\infty}(h_{\ell}^{(1)})+
\text{ord}_{\infty}(h_{\ell}^{(2)})>0$ for all~$\ell \in \IZ/2m\IZ$. This is amply satisfied in the 
case~$\v_1=\v_{37}$, $\v_2=\CK_{37}$, since the only negative value of $\, \text{ord}_{\infty}(h_{\ell}^{(2)})$ 
is~$-1/148$ for~$\ell \equiv \pm 1 \mypmod{74}$, and there~$\, \text{ord}_{\infty}(h_{\ell}^{(1)}) = 147/148$.}
the sum defining $H(t)$ converges for $t>\frac{1}{148}$ by Theorem~\ref{CardyProp}, and Theorem~\ref{JCPSP}
still holds.\footnote{The proof given above breaks down, since the numbers~$A(\DD)$ grow exponentially
in~$\sqrt{\DD}$, so that the Dirichlet series~$\sum_{\DD}A(\DD)/\DD^{s}$ does not converge for any 
value of~$s$, but the function~$f(y)=y^{3/2}\sum_{\DD} A(\DD) e^{-\pi\DD y/m}$, which is the constant 
term of an~$SL(2,\IZ)$-invariant function that is small at infinity, still has a Mellin transform with a 
meromorphic continuation to all~$s$, and if we define~$\wt L(s)$ as the product of this Mellin transform
with~$\pi^{-s}\G(s)\zeta(2s)$, then~$\wt L(s)$ has the same analytic properties as before and its inverse Mellin
transform~$H(t)$ is still given by the sum in~\eqref{AHdef} for~$t$ sufficiently large and satisfies the 
same functional equation.})
The numerical calculation for
the case $\v_1=\v_2=\v_{37}$ gives the value $c_H=c_1=-0.95284748097797917403$ (a value that can be
checked independently, since up to a factor $-\pi$ it must be equal to the even period of the elliptic
curve $y^2-y=x^3-x$, as calculated by GP-PARI as {\tt ellinit}([0,0,1,-1,0])[15]). The corresponding
number for $\v_1=\v_{37}$ and $\v_2=\CK_{37}$ (with $\CK_{37}$ chosen as in 
Table~\ref{Kmphi0}, with~$a=0$) is $c_2= 0.26847600706319893417$, which is not a simple 
multiple of $c_1$, but when we calculate again with $\v_2=\Phi^{0,{\rm III}}_{2,37}$ (again as in Table~\ref{Kmphi0},  
with~$b=0$) we find a value $c_3 = 0.107471184190738587618$ that is related to the two previous values
by $12c_3-19c_2=4c_1$ numerically to high precision. Thus in this case we have indeed found
a form orthogonal to cusp forms and with rational coefficients.  This at first looks promising, but
in fact has a simple and disappointing explanation: the linear combination 
$\Phi^{0,{\rm III}}_{2,37}-\frac{19}{12}\CK_{37}-\frac{1}{3}\v_{37} =\Phi^{0,{\rm I}}_{2,37}-\frac{1}{3}\v_{37}$ 
is nothing other than $\CQ_1|V_{37}$ (as one can see in Table~\ref{Kmphi0},
where setting $c=-\frac{1}{3}$ gives entries equal to $-H(\D)$), and since the function $\CQ_1=-\CH$ is
constructed from an Eisenstein series, albeit a non-holomorphic one, it is automatically orthogonal
to all cusp forms. And indeed, when we look at the case of~$\CQ_{91}$ and~$\v_{91} \in J_{2,91}^{0,--}$, 
where there are no Eisenstein series
in sight, then the calculation of the quotient $(\wh\CQ_{91},\v_{91})/(\v_{91},\v_{91})$ failed to 
yield a simple number. (The non-holomorphy of $\wh\CQ_{91}$ does not affect the application 
of Theorem~\ref{JCPSP}, since its non-holomorphic part has Fourier coefficients only for~$\D<0$, 
where the coefficients of $\v_{91}$ vanish.)
As an incidental remark, we mention that the cusp form $\v_{37}$, whose first few Fourier
coefficients were obtained only with difficulty in~\cite{Eichler:1985ja} (pp.~118--120 and 145),
can now be obtained very easily using the ``theta blocks" of~\cite{GSZ} as
$$  \v_{37}(\t,z) \= \frac1{\eta(\t)^6}\,\prod_{i=1}^{10}\vartheta(\t,a_iz)\,,
   \qquad (a_1,\dots,a_{10})\=(1,1,1,2,2,2,3,3,4,5)\,, $$
and similarly $\v_{2,91}$ can be obtained easily as the difference of the two similarly defined theta blocks 
for the 10-tuples $(a_1,\dots,a_{10})=(1,1,2,3,4,4,5,5,6,7)$ and $(1,1,2,2,3,3,4,5,7,8)$.

\subsection{Observations on the weight one family, integrality, and positivity  \label{specialMTF}}

In the last two subsections we discussed experimental properties of the family of weight~2 mock 
Jacobi forms that were introduced in~\S\ref{simplest} and formulated a number of general results that will
be proved in \S\ref{structure}.  In this subsection we report on the experimental data for the 
weight~1 family and discuss various properties of both families that play a role in connection with
other investigations in mathematics and physics (like ``Umbral Moonshine") or that seem to be of
independent interest.  


The discussion of the weight~1 family is quite different in nature from that of the weight~2 case, because
here there do not seem to be general structural results as there were there, but only a small number of
special functions for small values of the index having nice properties.  In particular, the analogue
of Theorem~\ref{QmOG} no longer holds, and we find instead that (at least up to $m=100$, and 
presumably for all~$m$) the appropriately defined primitive mock Jacobi forms $\Phi_{1,m}^0$  
can be chosen to have optimal growth only for the indices   
   \be\label{mList}  m \= 2,\;3,\;4,\;5,\;6,\;7,\;8,\;9,\;10,\;12,\;13,\;16,\;18,\;25\,. \ee
Similarly, although the analogue of Theorem~\ref{phimQm} is still valid in the 
weight~1 case (now with the special mock Jacobi forms~$\CQ_M$ labelled by square-free integers~$M$ having an
odd rather than an even number of prime factors), the functions $\CQ_M$ can apparently be chosen to have 
optimal growth only for 
\be\label{QMList}  M \= 2,\;3,\;5,\;7,\;13,\;30,\;42,\; 70,\;78\;. \ee
(This list has a precise mathematical interpretation; see {\it 3.} below.)
Thus here a small number of special examples is singled out. Interestingly enough, these
turn out to include essentially all of the special mock theta functions that have been 
studied in the literature in the past. 

For convenience, we divide up our discussion into several topics.

\newpage
\bigskip\noindent{\it 1. Optimal choices for the weight one family  \label{wt1optimal}}
\smallskip

As in the weight~2 case, we begin our discussion by looking at small values of~$m$.  We will 
consider the original forms $\Phi_{1,m}$ first, starting with the ``standard" choices~\eqref{defphi1m}
and trying to modify them by the addition of weak Jacobi forms to attain the property of having optimal growth.  
Actually, it is better to look at the primitive mock Jacobi forms, now defined by
  \be \label{DefPhi01m}
  \Phi_{1,m}^0\;\equiv\;\sum_{d^2|m}\frac{\mu(d)}d\,\Phi_{1,m/d^2}|U_d\,, \qquad
  \Phi_{1,m}\;\equiv\;\sum_{d^2|m}\frac1d\,\Phi^0_{1,m/d^2}|U_d\qquad\mypmod{\wt J_{1,m}}\,, \ee
instead of~\eqref{Phi0mimp} and ~\eqref{defPhi02m}, for the same reason for which we introduced the
functions~$\Phi_{2,m}^0$ in~\S\ref{Statements}: residue arguments similar to those that we will give 
in~\S\ref{structure} for the weight~2 case show here that $\Phi_{1,m}$ can never have optimal growth unless
it agrees with $\Phi_{1,m}^0$ (which happens if and only if $m$ is square-free or the square
of a prime number), so that the latter class is certain to be more productive.  Finally, we
will look at the forms~$\CQ_M$ where~$M$ is a product of an odd number of distinct
prime factors, since according to Theorem~\ref{phimQm} all of the forms~$\Phi_{1,m}$ can be constructed from these.
Unlike the situation for~$k=2$, here we do not have to worry about any choices,
because of the following theorem (to be compared with Lemma~2.3 of~\cite{Cheng:2012tq} and to be proved, 
like all its companions, in~\S\ref{structure}) and corollary.
The only issue is therefore whether the ``optimal growth" condition can be fulfilled at all.

\begin{thm} \label{Wt1OSP}
There are no weight one weak Jacobi forms of optimal growth.  
\end{thm}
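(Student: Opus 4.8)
The plan is to show that a weak Jacobi form $\varphi\in\wt J_{1,m}$ of optimal growth is forced to be a \emph{holomorphic} Jacobi form of weight~$1$, and then to invoke the vanishing $J_{1,m}=0$ (a classical theorem of Skoruppa, which is in any case among the structural facts on Jacobi forms recalled in~\S\ref{structure}). First I would pass to the theta decomposition $\varphi(\t,z)=\sum_{\ell\bmod 2m}h_\ell(\t)\,\vth_{m,\ell}(\t,z)$. Since $\varphi$ has odd weight, $C(\DD,-\ell)=-C(\DD,\ell)$, so $h_0=h_m=0$ and $h_{-\ell}=-h_\ell$; since $\varphi$ has optimal growth, $h_\ell(\t)=\sum_{\DD\ge-1}C(\DD,\ell)\,q^{\DD/4m}$ has $\mathrm{ord}_\infty(h_\ell)\ge-\tfrac1{4m}$, with a genuine pole (of order exactly $\tfrac1{4m}$ in the parameter $q^{1/4m}$) only when $C(-1,\ell)\ne0$, which forces $\ell^2\equiv1\pmod{4m}$. (Here I use the observation made in the text that optimal growth implies weakness, so the Fourier expansion at~$\infty$ is the only cusp expansion that must be controlled, and since $SL(2,\IZ)$ has a single cusp this is all of the modularity data.)

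The heart of the argument is a pairing with the weight-$3/2$ unary theta series $\vth^1_{m,\ell}$ of~\eqref{defth1}. For any $\ell_1$ with $\ell_1^2\equiv1\pmod{4m}$, the map $\ell\mapsto\ell_1\ell$ is an automorphism of the finite quadratic module $(\IZ/2m\IZ,\ x\mapsto x^2/4m)$, so $(\vth^1_{m,\ell_1\ell})_\ell$ transforms under $SL(2,\IZ)$ through the same (conjugate) Weil representation as $(\vth_{m,\ell})_\ell$ does, but with weight $3/2$; consequently $G_{\ell_1}(\t):=\sum_{\ell\bmod2m}h_\ell(\t)\,\vth^1_{m,\ell_1\ell}(\t)$ is a \emph{scalar} weakly holomorphic modular form of weight~$2$ on the full modular group. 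Using $\mathrm{ord}_\infty(h_\ell)\ge-\tfrac1{4m}$ together with $\mathrm{ord}_\infty(\vth^1_{m,j})\ge\tfrac1{4m}$ (the series $\vth^1_{m,j}$ has least exponent $j_0^2/4m$ with $j_0$ the residue of smallest absolute value, and vanishes identically for $j\equiv0,m$), one checks that $G_{\ell_1}$ is holomorphic at~$\infty$, hence $G_{\ell_1}\in M_2(SL(2,\IZ))=0$. Extracting the $q^0$-coefficient of the identity $G_{\ell_1}\equiv 0$: the only indices $\ell$ that can contribute are those with $\mathrm{ord}(h_\ell)=-\tfrac1{4m}$ \emph{and} $\mathrm{ord}(\vth^1_{m,\ell_1\ell})=\tfrac1{4m}$, i.e.\ $\ell\equiv\pm\ell_1\pmod{2m}$, and a direct computation (using $\ell_1^2\equiv1\pmod{2m}$ and the parity $C(-1,-\ell_1)=-C(-1,\ell_1)$) gives $[q^0]G_{\ell_1}=C(-1,\ell_1)-C(-1,-\ell_1)=2\,C(-1,\ell_1)$. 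Hence $C(-1,\ell_1)=0$ for every $\ell_1$ with $\ell_1^2\equiv1\pmod{4m}$, which is precisely the set of $\ell_1$ for which this coefficient could be nonzero; so $\varphi$ has no polar Fourier coefficients at all.

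It then follows that $\varphi$ is holomorphic at the cusp, i.e.\ $\varphi\in J_{1,m}$, and since $J_{1,m}=0$ we get $\varphi=0$. I expect the main obstacle to be the middle paragraph: verifying cleanly that each $G_{\ell_1}$ really is holomorphic at $\infty$ and that nothing beyond the indices $\ell\equiv\pm\ell_1\pmod{2m}$ contributes to its constant term — this is exactly where the optimal-growth bound $\DD\ge-1$ is indispensable, since a single coefficient at $\DD=-2$ (say) would produce a genuine pole in some $G_{\ell_1}$ and the whole argument would collapse. A secondary, but entirely standard, ingredient is the vanishing $J_{1,m}=0$, whose proof belongs to the circle of results of~\cite{SkoruppaZagier} invoked in~\S\ref{structure}. (As an alternative entry point one could instead use the factorization $\wt J_{1,m}=C\cdot\wt J_{2,m-2}$ coming from~\eqref{WJF} and analyze when $C\cdot\psi$ can satisfy $\DD\ge-1$; but the theta-series pairing above seems the more robust route.)
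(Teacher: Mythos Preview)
Your proof is correct and is essentially the paper's own argument, unpacked through the theta decomposition: your function $G_{\ell_1}=\sum_\ell h_\ell\,\vth^1_{m,\ell_1\ell}$ is exactly (a constant multiple of) $(\v|W)'(\t,0)$ for the corresponding Atkin--Lehner--type involution $W$, and your extraction of the $q^0$-coefficient is precisely the paper's ``residue operator'' $\CR^{(1)}[\v|W]=\sum_r r\,c_{\v|W}(0,r)$. The paper phrases this more concisely by noting directly that $\v'(\,\cdot\,,0)\in M_2^{\,!}$ (so its $q^0$-coefficient vanishes) and then invoking the Atkin--Lehner operators, rather than reassembling everything by hand from the theta components.
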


\begin{corollary} \label{UniqCor}
If any of the mock Jacobi forms $\Phi_{1,m}$, $\Phi_{1,m}^0$, or $\CQ_{M}$ $(\mu(M)=-1)$ can 
be chosen to have optimal growth, then that choice is unique.
\end{corollary}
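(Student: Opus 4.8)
\textbf{Proof proposal for Corollary~\ref{UniqCor}.}

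The plan is to deduce the corollary directly from Theorem~\ref{Wt1OSP}. The common structure of all three assertions is: if a certain mock Jacobi form of weight~1 and index~$m$ (namely $\Phi_{1,m}$, $\Phi_{1,m}^0$, or $\CQ_M$) can be chosen to have optimal growth, that choice is unique. In each case the ambiguity in the choice is a true \emph{weak} Jacobi form of weight~1 and index~$m$: two admissible representatives of $\vstF_{1,m}+\wt J_{1,m}$ differ by an element of $\wt J_{1,m}$; two choices of $\Phi_{1,m}^0$ differ by the primitive part of such an element (again weak, of weight~1); and two choices of $\CQ_M$ (for $M$ square-free with $\mu(M)=-1$), being the $(-,\dots,-)$-eigencomponents under the $W_p$ ($p\mid M$) of two choices of $\Phi_{1,M}$, differ by the corresponding eigencomponent of an element of $\wt J_{1,M}$ --- and here I must be a little careful, since the Atkin--Lehner involutions $W_{m_1}$ do \emph{not} preserve weakness, as emphasized in \S\ref{hecke}; so a priori the difference lies only in $\bigl(J^{\,!}_{1,M}\bigr)^{-,\dots,-}$.

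The key point to exploit is that ``optimal growth'' is preserved under the difference. If two forms $\Phi,\Phi'$ both have optimal growth (i.e.\ $C(\D,\ell)=0$ for $\D<-1$), then so does $\Phi-\Phi'$, because the bound is linear in the Fourier coefficients. Hence in each of the three cases the difference is a weight-one weak (hence optimal-growth) Jacobi form --- for $\CQ_M$ one uses that $\D\ge-1$ forces $n\ge0$, so the optimal-growth difference is automatically weak, which upgrades $\bigl(J^{\,!}_{1,M}\bigr)^{-,\dots,-}$ to a genuine weak Jacobi form. By Theorem~\ref{Wt1OSP} this difference is~$0$. Therefore the choice, if it exists, is unique in all three cases.

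First I would state the three ambiguity spaces explicitly and observe that each is a subspace of $\wt J_{1,m}$ (after the weakness upgrade just mentioned in the $\CQ_M$ case). Next I would note that optimal growth of both candidates transfers to their difference. Then invoke Theorem~\ref{Wt1OSP}: $J_{1,m}\NH=\wt J_{1,m}\cap\CE_m\NH=\{0\}$ in weight~1 (equivalently, there are no weight-one weak Jacobi forms of optimal growth, and every optimal-growth form is weak), so the difference vanishes. The only delicate step is the passage from $J^{\,!}$ to $\wt J$ for the $\CQ_M$ case, and this is immediate once one records that optimal growth $\D\ge-1$ already implies $n\ge0$; there is no real obstacle. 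I expect the whole argument to be three or four lines once Theorem~\ref{Wt1OSP} is in hand, so the only ``hard part'' is genuinely external to this corollary --- it is the proof of Theorem~\ref{Wt1OSP} itself, deferred to \S\ref{structure}.
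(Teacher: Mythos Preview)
Your argument is correct and is exactly the intended deduction: the paper states Corollary~\ref{UniqCor} immediately after Theorem~\ref{Wt1OSP} without further proof, treating it as an immediate consequence. Your explicit handling of the $\CQ_M$ case---noting that the Atkin--Lehner projection only lands a priori in $J^{\,!}_{1,M}$, but that optimal growth forces weakness (as the paper observes in \S\ref{Statements})---is a careful point the paper leaves implicit, and your three-step structure (difference is a true Jacobi form, optimal growth transfers to the difference, invoke Theorem~\ref{Wt1OSP}) is precisely right.
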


We now look at some small individual indices.  For {\bf m=1} there is nothing
to discuss, since the function $\Phi^{\rm st}_{1,1}=(C/2A)^{\rm F}$ vanishes identically, as already
explained in~\S\ref{MJExamples} (Example~1).  (This, by the way, is the reason why $\Phi_{1,m}=\Phi_{1,m}^0$
when~$m$ is the square of a prime.) The case of {\bf m=2} has also already been treated 
in the same place (Example~2), where we saw that 
\be  \label{PHI12} 
\PHI_{1,2} \= \frac{B}{24\,A}\,C\,, \qquad \PHIF_{1,2} \= -\frac1{24}\,\CF_2\,, \ee
with $\CF_2$ the mock modular form defined in~\S\ref{Mock} and related to the Mathieu moonshine story.
For~{\bf m=3}, we find that the (unique) choice of $\Phi_{1,m}$ having optimal growth is given by
 \be  \label{PHI13}
\PHI_{1,3} \= \frac{B^2-E_4A^2}{288A}\,C\,, \qquad \PHIF_{1,3} \= \PHI_{1,3}\,-\,\AP{1,3} \= \frac1{12}\,\CF_3\,, \ee
where the normalizing factor 1/12 has been chosen to give $\CF_3$ integral Fourier coefficients. These
coefficients are determined by virtue of their periodicity and oddness by the values $C^{(3)}(\D)=C(\CF_3;\D,r)$ 
for $r\in\{1,2\}$, the first few values of $C^{(3)}(\D)$ being given by
$$ 
\begin{tabular}{c|ccccccccccc}     $\DD$ & $-1$ & 8 & 11 & 20 & 23 & 32 & 35 & 44 & 47 & 56 & 59 \\ 
 \hline  $C^{(3)}(\DD)$ & $-1$ & 10 & 16 & 44 & 55 & 110 & 144 & 280 & 330 & 572 & 704  \\    \end{tabular}  \;\,. 
$$
(A more extensive table of these coefficients, listed separately for the two  values~$r=1$ and~$r=2$,
is given in Appendix~\ref{apptabs2}, where we also give the corresponding values for the other
forms treated here, so that we no longer will give tables of coefficients in the text.)
For {\bf m=4} and {\bf m=5}, the functions $\Phi_{1,m}$ can again be chosen (of course uniquely, 
by the corollary above) to have optimal growth, the corresponding meromorphic Jacobi forms~$\PHI_{1,4}$
and~$\PHI_{1,5} $ being given by
\be  \label{4and5}
 \frac{B^3-3E_4A^2B+2E_6A^3}{2\cdot 12^2\,A}\,C \quad \text{and}\quad
\frac{B^4-6E_4A^2B^2+8E_6A^3B-3E_4^2A^4}{2\cdot 12^3\,A}\,C\,, 
\ee
respectively. Note that in the case of~$m=4$, according to the decomposition result given in
Theorem~\ref{phimQm}, we could also make the choice $\CQ_2|\CV_{1,2}^{(2)}$ for the function~$\Phi_{1,4}$,
but then it would have the worse discriminant bound $\D_{\rm min}=-4$. We also observe that, although~4
is not square-free, we do not have to distinguish between~$\Phi_{1,m}$ and~$\Phi_{1,m}^0$ here, because
they differ by the function $\Phi_{1,1}^0|U_2$ and the function $\Phi_{1,1}$ vanishes identically.

It is now very striking that the numerators of the fractions in~\eqref{PHI13} and~\eqref{4and5}
are the {\it same} as the expressions that we already saw in eqs.~\eqref{PHI21}, \eqref{defX2}
and \eqref{PSIF4} in~\S\ref{choosephi}, meaning that for these values of~$m$ we have
\be \label{coincidence}
\PHI_{1,m} \= \frac CA\,\psi_{0,m-1}^{\rm opt} \= C\,\PHI_{2,m-2} 
\ee
with the same weight~0 weak Jacobi modular forms $\psi_{0,m-1}^{\rm opt}$ as were tabulated in Table~\ref{Cmcoeffs}.
The same thing happens for several other small values of the index.  We do not know the deeper
reason (if any) for this coincidence, but merely remark that the weight~0 weak Jacobi forms obtained
from the weight~1 meromorphic Jacobi forms in our family by dividing by~$C$ also play a prominent role
in the recent work of Cheng, Duncan and Harvey~\cite{Cheng:2012tq} on ``umbral moonshine," and also that
the the forms $\psi_{0,m}^{\rm opt}$ occur in a somewhat different context in earlier work of 
Gritsenko\cite{Gritsenko:1999fk} related to Borcherds products, as was already mentioned in \S\ref{Statements} when we introduced the ``second standard choice" $\Phi_{2,m}^{0,\rm II}$ of the primitive mock Jacobi form $\Phi_{2,m}^0$.

Continuing to larger values, we look first at primes, since for $m$ prime the three functions of interest
$\Phi_{1,m}$, $\Phi_{1,m}^0$ and $\CQ_m$ all coincide.  For {\bf m=7} we find that $\CQ_7=\PHIF_{1,7}$ again has
optimal growth and again satisfies~\eqref{coincidence} with $\PHI_{0,6}$ as given in Table~\ref{Cmcoeffs}.  
However, for {\bf m=11} the function $\CQ_M$ {\it cannot} be chosen to be of optimal growth; the best possible
choice, shown in the table~\ref{apptabs3} in the appendix, has minimal discriminant~$-4$. For {\bf m=13}
the function $\CQ_{13}$ again has optimal growth and is again given by~\eqref{coincidence},
but for larger primes this never seems to happen again, the minimal value of~$\D$ for the optimal 
choices of~$\CQ_M$ for the first few prime indices being given by
$$ 
\begin{tabular}{c|ccccccccccccccc}     $M$ & 2 & 3 & 5 & 7 & 11 & 13 & 17 & 19 & 23 & 29 & 31 & 37 \\ 
 \hline  $\DD_{\rm min}$ & 1 & 1 & 1 & 1 & 4 & 1 & 4 & 4 & 8 & 5 & 8 & 4  \\    \end{tabular}  \;\,. 
$$

The next simplest case is $m=p^2$, since here $\Phi_{1,m}$ is still primitive.  For {\bf m=9} 
and {\bf m=25} this function can be chosen to have optimal growth,
but apparently not after that. The following case to look at is $m=p^\nu$ with $\nu\ge2$, since here the 
eigendecomposition~\eqref{Qdecomp} of $\Phi_{1,m}$ is trivial.  (The eigenvalue of $W_m$ must be~$-1$ 
because the weight is odd, so $\Phi_{1,m}=\Phi_{1,m}^-=\CQ_p|\CV_{p^{\nu-1}}^{(p)}$.)  Here we find optimal
growth only for {\bf m=8} and {\bf m=16}.  Next, for $M$ square-free but not prime, where the decomposition~\eqref{QmOG}
is trivial, the only two cases giving optimal growth seem to be~{\bf m=6} and {\bf m=10}. And 
finally, for $M$ neither square-free nor a power of a prime we also find two examples {\bf m=12} and {\bf m=18}. 
Altogether this leads to 
the list~\eqref{mList} given above.  The first few coefficients of the forms $\Phi_{1,m}^0$ for all of these values
of~$m$ are tabulated in Appendix~\ref{apptabs2}.
 
However, this is not the end of the story.  If $\Phi_{1,m}$ has optimal growth for some~$m$, then of course
all of its eigencomponents with respect to the Atkin-Lehner operators also do, but the converse is not true,
so there still can be further square-free values of~$M$ for which the function 
$\CQ_M=2^{1-\o(M)}\,\Phi_{1,M}^{-,\dots,-}$ has optimal growth, and this indeed happens.  Since primes
 have already been treated and products of five or more primes are beyond the range of our ability to
compute, we should consider products of three primes. Here we find four examples, for the indices
   $$  30\=2\cdot3\cdot5\,, \qquad 42\=2\cdot3\cdot7\,, \qquad 
     70\=2\cdot5\cdot7\,,\qquad 78\=2\cdot3\cdot13\,,   $$ 
but for no other values less than~200 (and probably for no other values at all), giving
the list~\eqref{QMList}.  Again the
Fourier coefficients of the corresponding forms, as well as those for all other $M<150$ of the
form $p_1p_2p_3$, have been listed in the appendix (part~\ref{apptabs3}).  These examples are of 
particular interest, for a reason to which we now turn.

\bigskip\noindent{\it 2. Relationship to classical  mock theta functions} \smallskip

Recall from \S\ref{basicdefs} that a mock modular form is called a mock theta function if its
shadow is a unary theta series, and that this definition really does include all of Ramanujan's
original mock theta functions that began the whole story.  The theta expansion coefficients $h_\ell(\t)$
of our special mock Jacobi forms $\Phi_{1,m}$ and $\Phi_{2,m}$ are always mock theta functions,
since their shadows are multiples of the unary theta series $\v_{m,\ell}^1$ and $\v_{m,\ell}^0$,
respectively.  In fact, it turns out that many of our examples actually coincide with the mock theta
functions studied by Ramanujan and his successors.  This is the ``particular interest" just mentioned.

Let us consider first the case {\bf m=30}, the first of our $\CQ_M$ examples with composite~$M$.
Because it is so highly composite, there are very few orbits of the group of Atkin-Lehner involutions
on the group of residue classes modulo~$2m$ and hence very few components in the theta decomposition
of~$\CQ_{30}$. In fact there are only two, so that up to sign we get only two distinct coefficients
$h_1$ and $h_7$ and $\CQ_{30}$ has a decomposition
\ben -3\,\CQ_{30}  \= &h_1\,\bigl(\vth_{30,1}+\vth_{30,11}+\vth_{30,19}+\vth_{30,29}
   -\vth_{30,31}-\vth_{30,41}-\vth_{30,49}-\vth_{30,59}\bigr) \cr
 + & \,h_7\,\bigl(\vth_{30,7}+\vth_{30,13}+\vth_{30,17}+\vth_{30,23}
 -\vth_{30,37}-\vth_{30,43}-\vth_{30,47}-\vth_{30,53}\bigr)\,.
\een
(The normalizing factor $-3$ is included for convenience: the function $-3\CQ_{30}$ is the function $\CF_{30}$
defined in~\ref{DefOfcM} below.) The mock theta functions $h_1$ and $h_7$ 
have Fourier expansions beginning
\ben  h_1(\t) &\= q^{-1/120}\,(-1\+q\+q^2\+2q^3\+q^4\+3q^5\+\cdots)\,
, \cr
   h_7(\t) &\= q^{71/120}\,(1\+2q\+2q^2\+3q^3\+3q^4\+4q^5\+\cdots)\,. 
\een
These expansions are easily recognized: up to trivial modifications they are two of the ``mock
theta functions of order~5" in the original letter of Ramanujan to Hardy, namely
\bal  
  2\,+\,q^{1/120}\,h_1(\t) &\= \chi_1(q) \= \sum_{n=0}^\infty\frac{q^n}{(1-q^{n+1})\cdots(1-q^{2n})}\;, \notag \\
 q^{-71/120}\,h_2(\t) &\= \chi_2(q) \= \sum_{n=0}^\infty\frac{q^n}{(1-q^{n+1})\cdots(1-q^{2n+1})}\;.\notag
\end{align}
(Actually, Ramanujan gave five pairs of mock theta functions of ``order~5" in his letter, any two differing 
from one another by weakly holomorphic modular forms, but it was pointed out in~\cite{Zagier:2007} that of the 
five associated pairs of mock modular forms only $(q^{-1/120}(2-\chi_1(q)),\,q^{71/120}\chi_2(q))$ has the 
property that its completion transforms as a vector-valued modular form for the full modular group.)  For {\bf m=42},
exactly the same thing happens, and the three distinct components $h_1$, $h_5$ and $h_{11}$ of $\CQ_{42}$,
whose initial coefficients are tabulated in Appendix~\ref{apptabs3}, turn out to be, up to powers of~$q$ and signs, 
identical with Ramanujan's three ``order~7" mock theta functions $\CF_{7,i}$ ($i=1,\,2,\,3$) that were given as our
first examples of mock modular forms in~\S\ref{basicdefs}.  What's more, this behavior continues: in~\cite{Zagier:2007}
it was mentioned that one could construct mock theta functions of arbitrary prime\footnote{actually, only
$(p,6)=1$ is needed} ``order"~$p$ generalizing these two examples as quotients by $\eta(\t)^3$ of certain
weight~2 indefinite theta series, and an explicit example 
$$  M_{11,j}(\t) \=\frac1{\eta(\t)^3}\,\sum_{m>2|n|/11 \atop n\equiv j\mypmod{11}} 
 \Bigl(\frac{-4}m\Bigr)\,\Bigl(\frac{12}n\Bigr)\,\bigl(m\,\text{sgn}(n)\,-\,\frac n6\bigr)\,q^{m^2/8-n^2/264}
  \qquad(j\in\Z/11\Z)$$
was given for~$p=11$.  Computing the beginnings of the Fourier expansions of these five mock theta functions
(only five rather than eleven because $M_{11,j}=-M_{11,-j}$) and comparing with the table in Appendix~\ref{apptabs3}, 
we find perfect agreement with the coefficients in the theta expansion of~$\CQ_{66}$, the first 
example in that table of a form that is not of optimal growth.

This discussion thus clarifies the precise mathematical meaning of the number designated by Ramanujan as the 
``order" of his mock theta functions, but which he never defined: it is indeed related to the {\it level} of 
the completions of the corresponding mock modular forms, as was clear from the results of Zwegers and other authors 
who have studied these functions, but it is even more closely related to the {\it index} of the mock 
Jacobi forms that have these mock theta functions as the coefficients of its theta expansion.  

Of course the above cases are not the only ones where there is a relationship between the special
mock Jacobi forms~$\CQ_M$ and classically treated mock theta functions, but merely the ones where the
relationship between the ``order" and the index is clearest.  Without giving any further details, we
say only that Ramanujan's order~3 functions are related in a similar way\footnote{but not ``on the nose,'' for instance, our function $h_{3,1}$ differs from Ramanujan's 3rd order mock theta function $-3\, q^{1/12} f(q^2)$ by an eta product.} to our~$\CQ_3$ and that his
order~10 functions (not given in his letter to Hardy, but contained in the so-called ``Lost Notebook")
are related to~$\CQ_5$.  We also refer once again to the beautiful recent work of Cheng, Duncan and 
Harvey~\cite{Cheng:2012tq} on ``umbral moonshine," in which many of these same functions are discussed
in detail in connection with representations of special finite groups.

Finally, we should mention that all of Ramanujan's functions were $q$-hypergeometric series but that 
in general no $q$-hypergeometric expression for the theta expansion coefficients of the functions
$\Phi_{k,m}$ or $\CQ_M$ is known.  This seems to be an interesting subject for further research.
One could look in particular for representations of this sort for the functions $\CQ_{13}$ and
$\Phi_{1,25}^0$ as given in Table~\ref{apptabs2}, since these have particularly small and smoothly
growing coefficients that suggest that such a representation, if it exists, might not be too complicated.
It would also be reasonable to look at $\CQ_{70}$ and $\CQ_{78}$, the only two other known weight~1
examples with optimal growth.


\bigskip\noindent{\it 3. Integrality of the Fourier coefficients} \smallskip

In Subsections~\ref{choosephi} and~\ref{Statements} and above, we encountered several
examples of functions $\CQ_{M}$ of optimal growth that after multiplication by a suitable factor gave 
a function $\CF_M$ having all or almost all of the following  nice arithmetic properties:
\begin{enumerate}[itemsep=0pt,topsep=3pt]
  \item [(A)] All Fourier coefficients $C(\CF_M;\D,r)$ are integral.
  \item [(B)] The polar Fourier coefficient $C(\CF_M;-1,1)$ equals $-1$.
  \item [(C)] The non-polar Fourier coefficients $C(\CF_M;\D,r)$ have a sign depending only on~$r$.
  \item [(D)] $C(\CF_M;\D,r)$ is always positive for $r=r_{\rm min}$, the smallest 
     positive number whose square equals $-\D\mypmod{4m}$. \end{enumerate}
In particular, for $k=2$ this occurred for $M=6$, 10, 14 and 15 (compare eqs.~\eqref{defC6}, \eqref{c10}, 
\eqref{c14} and~\eqref{c15}, which describe the coefficients $C^{(M)}(\D)=C(\CF_M;\D,r_{\rm min})$
for these indices) and for $k=1$, for the values $M=2$, 3, 5, 7, 13 and 30 (with ``non-negative" 
instead of ``positive" in~(C) in the case of~$M=13$), as well as for the functions $\Phi_{1,m}^0$ for 
the prime powers $M=4$, 8, 9, 16 and 25 (again with ``positive" replaced by ``non-positive," and 
also allowing half-integral values for the Fourier coefficients with~$\D=0$).
It is natural to ask for what values of~$M$ these properties can be expected to hold. We will discuss
properties~(A) and~(B) here and properties~(C) and~(D) in point {\it 4.} below.


Clearly either (A) or (B) can always be achieved by multiplying $\CQ_M$ by a suitable scalar factor.
We define a renormalized function~$\CF_M$ in all cases by setting
\be\label{DefOfcM}  \CF_M(\t,z) \= -c_M\,\CQ_M(\t,z)\,,  \ee
where $c_M$ is the unique positive rational number such that $\CF_M$ has integer coefficients with no common
factor.  Then~(A) is always true and (B) and (C) will hold in the optimal cases.
This normalization is also convenient for the presentation of the coefficients in tables, since there are
no denominators and very few minus signs.  Tables of the first few Fourier coefficients of all $\CF_M$ up to~$M=50$
in the weight~2 case (i.e., for $M$ with $\mu(M)=1$) and of all known~$\CF_M$ having optimal growth in the weight~1 
case (i.e., for $M$ with $\mu(M)=-1$) are given in~\ref{apptabs1} and~\ref{apptabs3} of the Appendix, respectively.
Looking at these tables, we see
that the values of $M$ for which all three properties above hold are $M=6$, 10, 14, 15, 21 and 26 in the case $k=2$ 
and $M=2$, 3, 5, 7, 13, 30 and 42 in the case $k=2$, and that in the $k=2$ case we also get $\CF_M$ satisfying~(C)
but not~(B) for $M=22$ and $\CF_M$ satisfying~(B) but not~(C) for $M=35$ and 39.  This list is
not changed if we look at a larger range of~$M$ (we have calculated up to $M=210$ in the weight~2 case and $M=100$
in the weight~1 case) or ~$\DD$ (in each case we have calculated about twice as many Fourier coefficients
as those shown in the tables), so that it seems quite likely that it is complete.

For the weight~2 functions, the corollary of Theorem~\ref{Phi0IIIcoeffs} tells us that $C(\CF_M;-1,1)$ will
 always be negative with the normalization~\eqref{DefOfcM} and that we have 
\be\label{tConj}  c_M \= t \cdot {\text{denom}\big(\frac{\v(M)}{24}\big)}\,, \qquad C(\CF_M;-1,1) 
\= -t\cdot{\text{numer}\big(\frac{\v(M)}{24}\big)}\ee
for some positive rational number~$t$. In all cases we have looked at, $t$ turned out to be an integer, 
and in fact usually had the value~1. For instance, the second statement is true for all values of~$M$ tabulated in 
Appendix~\ref{apptabs}, and also for~$M=210$, where Table~\ref{fcq210} in~\S\ref{Statements} shows that 
$c_{M}=1$ and hence~$t=1$, but the table below giving all values up to~$M=100$ shows that it fails for~$M=65$ and $M=85$.

 \small  
\begin{center}
  \begin{tabular}{|c|c|c|c|c|c|c|c|c|c|c|c|c|c|c|c|c|c}  \hline
       $M$ & 6 &10& 14& 15& 21 & 22 & 26 & 33 & 34 & 35 & 38 & 39 &46  \\  \hline
       $c_{M}$ &12 & 6 & 4 & 3 & 2 & 12 & 2 & 6 & 3 & 1 & 4& 1& 12 \\
       $C(\CF_{M};-1,1)$ &1&1&1&1&1&5&1&5&2&1&3&1&11 \\ 
       $t$&1&1&1&1&1&1&1&1&1&1&1&1&1 \\
    \hline \end{tabular}\quad 

\vspace{0.4cm}

  \begin{tabular}{|c|c|c|c|c|c|c|c|c|c|c|c|c|c|c|c|c|c|c|}  \hline
  51 & 55 & 57 & 58 & 62 & 65 & 69 & 74 & 77 & 82 & 85 & 86 & 87 & 91 & 93 & 94 & 95 \\  \hline
    3 & 3 & 2 & 6 & 4 & 2 & 6 & 2 & 2 & 3 & 6 & 4 & 3 & 1 & 2 & 12 & 1 \\
   4&5&3&7&5&4&11&3&5&5&16&7&7 &3 & 5 & 23 & 3\\ 
   1&1&1&1&1&2&1&1&1&1&2&1&1&1&1&1&1 \\
    \hline \end{tabular}\quad 
    \end{center} 
\normalsize

If it is true that $t$ is integral, then equation~\eqref{tConj} shows that the above list of~$\CF_M$
satisfying property~(B) is complete, because we can then only have $|C(\CF_M;-1,1)|=1$ if $\v(M)$ divides~$24$,
which in turn implies that~$M$ is at most~90 and hence within the range of our calculations. We do not know
whether the integrality is true in general, but in any case~$12^{M+1}t$ must be an integer, since 
correcting the initial choice~$\Phi_{2,M}^{\rm stand} = 12^{-M-1} B^{M+1}A^{-1} \, - \, \AP{2,M}$ 
by weak Jacobi forms cannot make its denominator any worse.

Exactly similar comments apply in the weight~1 case.  Here a residue calculation similar to the ones
described in the next section (but with the residue~$\CR$ as defined in~\eqref{RasRes} below replaced by 
its weight~1 analogue as defined in~\eqref{RasReswt1}) tells us that in the case of optimal growth
eq.~\eqref{Qmpolar} still holds, so that we find that eq.~\eqref{tConj} still holds for some postive rational 
number~$t$.  If this~$t$ turns out to be an integer, it again follows that property~(B) can only hold if
$\v(M)|24$, and the list of all integers~$M$ with $\mu(M)=-1$ satisfying this latter condition coincides exactly with the 
list~\eqref{QMList}.  Thus, although we cannot yet prove that that list is complete, we can at least understand
where it comes from.

\bigskip\noindent{\it 4.~Positivity of the Fourier coefficients} \smallskip

We now turn to properties~(C) and (D). Here we can use the asymptotic formula given in Theorem~\ref{CardyProp}
to understand the question of the signs and asymptotics of the Fourier coefficients. 
(We will discuss only the weight~2 case in detail, but a similar theorem and similar discussion could 
equally well be given for the case~$k=1$.)  This theorem implies that~(C) ``almost holds" in the sense
that for any fixed value of~$r$ the Fourier coefficients $C(\D,r)$ eventually have a constant sign, so
that there are at most finitely many exceptions to the statement.  We therefore have to consider~(D)
and also the question of which values of~$r$ are most likely to give rise to the exceptional signs.

For the function~$\CQ_{M}$ (which we remind the reader differs from~$\CF_{M}$ by a negative proportionality factor,
so that the expected behavior~(D) would now say that~$C(\DD,r)$ is always negative when 
$r=r_{\rm min}$ is the smallest positive square root of~$-\DD$), the value of the constant~$\k_{r}$ 
appearing in that theorem as given according to~\eqref{kappa} and~\eqref{Qmpolar} by
\be \label{didnt}
\k(\CQ_{M},r) \= - \frac{\v(M)}{24} \, \cdot \, 2 \! \sum_{0<\ell < M \atop \ell^{2}\,\equiv1\mypmod{4M}}
\mu\Big( \big(\frac{\ell+1}{2}, M \big) \Big) \cos\Bigl(\frac{\pi\ell r}M\Bigr) \, .
\ee
A similar formula applies in the weight~1 case with ``cos" resplaced by ``sin."  In the case when~$M$
is a prime, this immediately gives the positivity of $\k(\CQ_M,r_{\rm min})$, and in the case
when~$M$ has three prime factors, the tables in~\ref{apptabs3} show that~(D) usually fails.  We therefore 
consider the weight~2 case, where $\mu(M)=+1$.

If~$M$ (which we always assume to be larger than~1) is a product 
of two primes, and in particular for all~$M<330$ with~$\mu(M)=1$ except~210, the sum on
the right-hand side has only two terms and reduces to simply~$\cos(\pi r/M)-\cos(\pi r^{*}/M)$, 
and since~$0<r_{\rm min}<r_{\rm min}^{*}<2M$ and~$\cos(\pi x)$ is decreasing on the 
interval~$[0,\pi]$, the value of~$-\k(\CQ_{M}, r_{\rm min})$ is indeed always strictly negative.  
It then follows from Theorem~\ref{CardyProp} that almost all~$C(\CF_{M};\D, r_{\rm min})$ are positive,
but there can be, and in general are, finitely many exceptions. However, for small~$M$ these 
are fairly rare, e.g., the following table lists all triples~$(M,\,r\!=\! r_{\rm min},\, n)$ within the 
range of Appendix~\ref{apptabs} for which~$c(\CF_{M};n,r)$ is negative.

\begin{table}[h]
\begin{center}
  \begin{tabular}{|c|c|c|c|c|c|c|c|cc|c|c|c|c|}  \hline
       $M$ & 6--22 &$33$& $34$& $35$& $38$ & $39$ & $46$ \\  \hline
       $r$ &--- & $8$ & $15$ & $2$ & $17$ & $10$ & $21$ \\
       $n$ &--- & $1,\,5,\,7$ & $5,\,11$ & $1,\,5$ & $5,\,7,\,11$ & $5$ & $5,\,7,\,11,\,13$  \\ 
    \hline \end{tabular}\quad 
    \end{center} 
\end{table}
\ndt (The corresponding $r$-values are marked with a star in the appendix, and a search 
up to a much larger~$n$-limit yields no further negative coefficients for these values of~$M$.)
Here again the asymptotic formula~\eqref{HRR} helps us to understand this behavior:
the negative values are much more likely to occur when~$\k_{r}$ is small. For instance, 
for~$M=33$ the table

\begin{table}[h]
 \begin{center} \begin{tabular}{|c|c|c|c|c|c|c|c|c|c|c|c|c|c|}  \hline
       $r\;\,(\,=\,r_{\rm min}\,)$ & 1& 2& 4& 5 & 7 & 8 &10&13&16&19 \\  \hline
       $\cos(\frac{\pi r}{66})\vphantom{\Bigl(}-\cos(\frac{\pi r^{*}}{66})$ 
       & $1.58$ & $1.31$ & $1.71$ & $0.94$ & $1.71$ & $0.49$ & $1.58$ & $1.31$ & $0.94$ & $0.49$ \\
    \hline \end{tabular}\quad \end{center}
\end{table}

\ndt shows that the residue classes~$r_{\rm min} \mypmod{2M}$ most likely to give some negative 
Fourier coefficients are~$8$ and~19, for which the values of~$\k_{r}$ (which are the same; this 
duplication always occurs for odd~$M$) are minimal, and indeed the class~$r=8$ did lead to three  
negative coefficients.  The class~$r=19$ did not contain any negative values of~$c(n,r)$, but instead 
achieved its small initial average by the initial zeros forced by the optimal growth 
condition, which says that~$c(n,r)=0$ for~$0 \le n\le (r^2-2)/4m$. 

For values of~$M$ for more than two prime factors on the other hand, it does not have to be true 
that~$\k(\CF_{M}, r_{\rm min})>0$. It is true for the first such index, $M=210$, and even in a very strong form:
not only is~$\k_{r}$ positive for each of the six possible values of~$r_{\rm min}$ (which are nothing other
than the values~$\ell_{0}$ printed in boldface in Table~\ref{fcq210}), but of the seven ``partners'' of 
each~$r_{\rm min}$, i.e., the seven other values of~$r\in[0,210]$ with~$r^{2}\equiv r_{\rm min}^{2} \mypmod{840}$, the three that have the same sign of~$C(\DD,r)$ as for~$C(\DD,r_{\rm min})$ (i.e., which differ 
from~$r_{\rm min} \mypmod{2p}$ for exactly two of the four prime divisors~$p$ of~210) all belong to the 
first half of the interval~$[0,210]$, while the four that have the opposite sign (i.e., which differ 
from~$r_{\rm min} \mypmod{2p}$ for one or three of these~$p$) belong to the second half, 
so that all eight terms in the sum in~\eqref{didnt} are positive. The value of~$\k(\CF_{M},r_{\rm min})$ 
is again positive for all~$r_{\rm min}$ for the next two values~$M=330$ and~$M=390$, but for 
$M=462=2\cdot 3\cdot 7\cdot 11$ and $r\; (=r_{\rm min})\, =\, 31$ we have~$\k(\CF_{M},r)<0$,
so that in this case all but finitely many coefficients $C(\CF_{M};\DD,r)$ will be negative rather than positive. 
What does hold for any~$M$ and any~$r \mypmod{2M}$ is that the values of~$C(\CF_{M}; \DD,r)$ 
eventually all have the same sign, because the coefficient~$\k(\CF_{M},r)$ in~\eqref{HRR} can never 
vanish, since the primitive~$(2M)^{\rm th}$ roots of unity are linearly independent over~$\IQ$.

\subsection{Higher weights} \label{highweight}

So far we have studied two families of meromorphic Jacobi forms $\v_{1,m}(\t,z)$ and $\v_{2,m}(\t,z)$ with 
particularly simple singularities, and the corresponding mock Jacobi forms.  We now discuss in much less detail the analogous families for 
higher weights.  
%
%
We mention that other cases of the decomposition $\v = \v^{F} + \v^{P}$ for meromorphic Jacobi forms with 
higher order poles\footnote{These were considered in~\cite{BringFolsom} while answering a question 
due to Kac concerning the modularity of certain characters of affine Lie superalgebras.} 
\cite{BringFolsom, Olivetto}, and the asymptotics of their Fourier 
coefficients~\cite{BringmannFolsom2}, have been studied in the literature.


The starting point is a meromorphic Jacobi form $\v(\t,z)=\v_{k,m}(\t,z)$ of weight $k$ and arbitrary index $m\in\N$ having as 
its unique singularity (modulo the period lattice $\Z\t+\Z$) a pole of order $k$ at $z=0$, which we normalized by requiring the leading term of $\v$ 
to be  $(k-1)!/(2\pi iz)^k$. By subtracting a linear combination of functions $\v_{k',m}$ with $k'<k$ we can then fix the entire principal part of $\v$ at $z=0$ to be
 $$  \v(\t,z) \= \frac{(k-1)!}{(2\pi iz)^k}\,\sum_{0\le2j<k}\Bigl(\frac {m\,E_2(\t)}{12}\Bigr)^j \,\frac{(2\pi iz)^{2j}}{j!} \;+\;\text O(1)\qquad\text{as $z\to 0$.} $$
We can make a standard choice, generalizing the previously defined forms \eqref{defphi1m} and \eqref{defphi2m}, by taking $\v$ to be the unique polynomial 
in $A^{-1}$ and $B$ (resp.~$C$ times such a polynomial if $k$ is odd) with modular coefficients of the appropriate weight having this principal
part, the next three of these being
  $$  \vst_{3,m} \= \frac{B^m\,C}{12^mA^2}\,, \;\quad  \vst_{4,m} \= \frac{6\,B^{m+2}}{12^{m+2}A^2}\,,\;\quad
   \vst_{5,m} \= \frac{(B^2-mE_4A^2/2)\,B^{m-1}\,C}{12^m\,A^3}\,.   $$
Any other allowed choice of $\v_{k,m}$ will differ from this standard one by a weak Jacobi form of weight~$k$ and index~$m$, and any two choices
will have the same polar part.  To write it down, we introduce Euler's functions
$$ \mathcal E_k(y) \= \AvZ\Bigl[\frac{(k-1)!}{(2\pi iz)^k}\Bigr] \= \Bigl(-y\,\frac d{dy}\Bigr)^{k-1}\AvZ\Bigl[\frac1{2\pi iz}\Bigr]
   \= \bigl(-y\,\frac d{dy}\bigr)^{k-1}\biggl(\frac12\,\frac{y+1}{y-1}\biggr)\;\in\;\Q\Bigl[\frac1{y-1}\Bigr] $$
(which for $k=1$ and $k=2$ coincide with the functions $\mathcal R_0$ and $\mathcal R^{(2)}_0$ from \S\ref{MockfromJacobi}) and their index $m$ averages 
 $$   \AP{k,m}(\t,z) \=  \AvL\Bigl[\frac{(k-1)!}{(2\pi iz)^k}\Bigr] \= \Av m{\mathcal E_k(y)} \=\sum_{s\in\Z} q^{ms^2}y^{2ms}\mathcal E_k(q^s y)  $$
generalizing \eqref{AP1AP2}.
These functions all have wall-crossing, with Fourier expansions given by
$$ \mathcal E_k(y) \= \begin{cases}\phantom-\frac12\,\delta_{k,1}\+\sum_{t>0}t^{k-1}y^{-t} &\text{if $|y|>1$} \\
     -\frac12\,\delta_{k,1}\,-\,\sum_{t<0}t^{k-1}y^{-t} &\text{if $|y|<1$} \end{cases}\;\,=\;\,\sum_{t\in\Z}\frac{\sgn(t)-\sgn(z_2)}2\,t^{k-1}y^{-t}$$
and
\be \label{APkm} \AP{k,m}(\t,z) \= \sum_{s,\,t\in\Z}\frac{\sgn(t)\,-\,\sgn(s+z_2/\t_2)}2\,t^{k-1} q^{ms^2-st} y^{2ms-t}\;.  \ee
(Note that in the latter expression the coefficient of $y^r$ is a Laurent polynomial in~$q$ for every~$r$ and that the discriminant $4nm-r^2$ 
is the negative of a perfect square for each monomial $q^ny^r$ occurring.)  Then the polar part of $\v_{k,m}$ is given by
$$  \v_{k,m}^P(\t,z) \= \sum_{0\le2j<k}\frac{(k-1)!}{(k-1-2j)!\,j!}\,\Bigl(\frac {m\,E_2(\t)}{12}\Bigr)^j \,\AP{k-2j,m}(\t,z)\;. $$
(To see this, observe that by~\eqref{APkm} each of the functions $\AP{k,m}$, and hence any linear combination of them with coefficients depending
only on~$\t$, is in the kernel of the operator $\v\mapsto h_\ell$ defined by~\eqref{Defofhl} for all~$\ell\in\Z$.)
Its finite part $\Phi_{k,m}=\v_{k,m}^F=\v_{k,m}-\v_{k,m}^P$ is a weakly holomoprhic elliptic form that depends on the particular choice of $\v_{k,m}$.
From the results of \S\ref{structure} (specifically, Theorem~\ref{bigweight}) it follows that for $k\ge3$ the function $\v_{k,m}$ can always be chosen 
so that $\Phi_{k,m}$ is strongly holomorphic, this choice then obviously being unique up to the addition of a strongly holomorphic Jacobi form.

As an example, consider $k=3$ and $m=2$. (This is the smallest possibility since index~1 cannot occur in odd weight.)  The finite part of our standard choice is
$$    \vstF_{3,2} \= \frac{B^2C}{144A^2} \,-\,\AP{3,2}\,-\,\frac{E_2}3\,\AP{3,1} \= h_{3,2}^{\rm stand}\,\bigl(\vth_{2,1}\,-\,\vth_{2,3}\bigr) $$
with
$$  h_{3,2}^{\rm stand} \,=\,\frac1{144}\, q^{-1/8}\,\bigl(1+51q+2121q^2+14422q^3+60339q^4+201996q^5+588989q^6+\cdots\bigr)\,.$$
This is weakly holomorphic and has large Fourier coefficients, but by subtracting a multiple of the unique weak Jacobi form $E_4C$ of
this weight and index we get a new choice
$$  \PHIF_{3,2} \= \vstF_{3,2} \,-\,\frac{E_4}{144}\,C\= h_{3,2}^{\rm opt}\,\bigl(\vth_{2,1}\,-\,\vth_{2,3}\bigr) $$
that is strongly holomorphic and has much smaller coefficients
$$  h_{3,2}^{\rm opt} \=  h_{3,2}^{\rm stand} \,-\,\frac{E_4}{144\,\eta^3} \= -\frac43\,q^{-1/8}\,\bigl(q+4q^2+5q^3+11q^4+6q^5+21q^6+\cdots\bigr)\,,$$
in accordance with the theorem just cited.  

However, this choice, although optimal from the point of view of the order of growth of its Fourier coefficients, is not the
most enlightening one.  By subtracting a different multiple of $E_4C$ from $\vstF_{3,2}$, we get a third function
$$  \Phi_{3,2} \= \vstF_{3,2} \+\frac{7E_4}{144}\,C \= \PHIF_{3,2} \+ \frac{E_4}{18}\,C\= h_{3,2}\,\bigl(\vth_{2,1}\,-\,\vth_{2,3}\bigr) $$
with
$$  h_{3,2} \=  h_{3,2}^{\rm stand} \+\frac{7E_4}{144\,\eta^3}  \=  h_{3,2}^{\rm opt} \+\frac{E_4}{18\,\eta^3} 
   \= \frac1{18}\,q^{-1/8}\,(1+219q+2793q^2+15262q^3 +\cdots)\,.$$
This function can be identified: it is given by
$$   h_{3,2} \= \frac13\,\dRS{1/2}(h_2)\,, $$
where $h_2(\t)=q^{-1/8}(-1+45q+231q^2+\cdots)$ is the mock theta function defined in~\eqref{defh2} (the one occurring in the Mathieu moonshine story) and
$\dRS{1/2}$ is the Ramanujan-Serre derivative as defined in~\eqref{RSderiv}. We can also write this at the level of mock Jacobi forms as
$$   \Phi_{3,2} \=  \frac1{24}\,\LL_{1,2}\bigl(\CF_2\bigr) \,, $$
where $\CF_2$ is the mock Jacobi form defined in~\eqref{defCF2} and $\LL_{1,2}$ the modified heat operator defined in~\eqref{LRS}.

An analogous statement holds in general, e.g., for weight 3 we have
$$ \Phi_{3,m} \= -\,\LL_{1,m}\bigl(\Phi_{1,m}\bigr) \quad\pmod{\wt J_{3,m}} $$
for any index $m$ and any choice of the functions $\Phi_{3,m}$ and $\Phi_{1,m}$ in their defining classes, and similarly for higher weight,
the specific formulas for $k\in\{3,4\}$ and the standard choices being
\ben \vstF_{3,m}\+\LL_{1,m}\bigl(\vstF_{1,m}\bigr)&\=&-\,\frac{m-1}{12^m}\,\Bigl(4(m-2)E_6A\+(2m+3)E_4B\Bigr)\,B^{m-3}C\,, \\
  \vstF_{4,m}\+\LL_{2,m}\bigl(\vstF_{2,m}\bigr)&\=&-\,\frac{m+1}{72\cdot 12^m}\,\Bigl(4mE_6A\+(2m-3)E_4B\Bigr)\,B^{m-1}\,. \een
Thus in one sense, the higher-weight families are better than the weight~1 and weight~2 families (because they can be made strongly holomorphic by 
subtracting an appropriate weak Jacobi form), but in another, they are much less interesting (because they are always combinations of derivatives of 
lower-weight examples and ordinary Jacobi forms, so that they do not produce any essentially new functions).

\section{Structure theorems for Jacobi and mock Jacobi forms \label{structure}} 
In the last section we introduced two special families of meromorphic Jacobi forms and the
associated mock Jacobi forms and described various observations concerning them.  In this
section we try to give theoretical explanations for some of these. We begin by studying in
\S\ref{relations} how the holomorphic Jacobi forms of weight~$k$ and index~$m$ lie in the
space of weak Jacobi forms of the same weight and index, or equivalently, what the relations
are among the polar Fourier coefficients (i.e., the coefficients of $q^ny^r$ with $4nm < r^2$)
of weak Jacobi forms.  The results described here are relatively easy consequence of results
in the literature but are, as far as we know, new.  The remaining two subsections describe
the application of these results to mock Jacobi forms in general and to our two special 
families in particular, providing fairly complete explanations of the experimentally
found properties of the weight two family and partial explanations in the weight one case.

\subsection{Relations among the polar coefficients of weak Jacobi forms \label{relations}}

The definitions and main properties of holomorphic and weak Jacobi forms were
reviewed in~\S\ref{Jacobi}, following the more detailed treatment in \cite{Eichler:1985ja}.
Here we investigate the relationship of these two spaces in more detail.

By definition, a weak Jacobi form has coefficients $c(n,r)$ which are non-zero only for $n\ge0$,
whereas a holomorphic one has coefficients that vanish unless $4nm\ge r^2$.  By virtue of
the periodicity property~\eqref{cnrprop}, this stronger condition need only be checked for
$r$ running over a set of representatives for $\Z/2m\Z$, say $-m<r\le r$, and since condition~\eqref{modtransform}
with $\sm abcd=\sm{-1}00{-1}$ also implies that $c(n,-r)=(-1)^kc(n,r)$ and consequently $C(\DD,-r)=(-1)^kC(\D,r)$,
it in fact suffices to consider the coefficients with $0\le r\le m$ if $k$ is even and those with $0<r<m$ if $k$ is odd.
In other words, for all positive integers $k$ and $m$ we have an exact sequence
  \be\label{exseqJkm} 0 \longrightarrow J_{k,m} \longrightarrow \wt J_{k,m}
  \stackrel{P}{\longrightarrow}  \IC^{\CN_{\pm,m}} \qquad(\,(-1)^k=\pm1\,)\,,  \ee
where 
\ben
\CN_{+,m} & \= & \big\{ (r,n) \in \IZ^{2} \, \mid \, 0 \le r \le m, \; 0 \le n < r^{2}/4m  \big\} \; , \\
\CN_{-,m} & \= & \big\{ (r,n) \in \IZ^{2} \, \mid \, 0 < r < m, \; 0 \le n < r^{2}/4m  \big\} \; , 
\een
and where the map $P$ (``polar coefficients'') associates to $\v \in \wt J_{k,m}$ the collection of its
Fourier coefficients $c_{\v}(n,r)$ for $(n,r) \in \CN_{\pm,m}$. In particular we have 
\be\label{dimJkmineq}
\dim J_{k,m} \;\ge\; j(k,m) \; := \; \dim \wt J_{k,m} \, - \, \bigl|\CN_{\pm,m}\bigr| \qquad (\,(-1)^{k} \= \pm1\,)\,, 
\ee
with equality if and only if the map $P$ in \eqref{exseqJkm} is surjective.  (Cf.~\cite{Eichler:1985ja}, 
p.~105.)

Using the description of $\wt J_{*,*}$ given in \S\ref{Jacobi} (eqs.~\eqref{jacovermod} and \eqref{WJF}) and 
the definition of $\CN_{\pm,m}$, we can write this lower bound on the dimension explicitly as 
\be \label{jkmformgen}
  j(k,m)\= \begin{cases} \sum_{0\le j\le m} \big(\dim M_{k+2j}-\ceil{j^2/4m}\big) & \text{if $k$ is even,} \\ 
  \sum_{0 < j < m} \big(\dim M_{k+2j-1}-\ceil{j^2/4m}\big) & \text{if $k$ is odd,} \end{cases} 
\ee
where $\ceil x$ denotes the ``ceiling" of  a real number~$x$ (\,= smallest integer $\ge x$).
This formula is computable and elementary since $\dim M_{k} = k/12 + \CO(1)$ with the $\CO(1)$ term 
depending only on $k\mypmod{12}$, but is nevertheless not a good formula because the two 
terms $\sum \dim M_{k+2j}$ and $\sum \ceil{j^2/4m}$ are each equal to $m^{2}/12 + \CO(m)$, 
and $j(k,m)$ is only of the order of~$m$. In \cite{Eichler:1985ja}, pp.~122--125 it was shown that, 
somewhat surprisingly, the numbers $j(k,m)$ can be given by a completely different-looking formula involving 
class numbers of imaginary quadratic fields.  This formula has the form 
\be \label{jkmformula} j(k,m) \= \frac{2k-3}{24} \, (m+1) \+ \frac{b}{4} \, - \, \frac14 \sum_{d|4m} h(-d) \+ \CO(1) \, ,\ee
where $m=ab^{2}$ with $a$ square-free and where the $\CO(1)$ term depends only on $m$ and $k$ modulo~12.  
In particular, $j(k,m)=\frac{2k-3}{24}\,m \+\CO(m^{1/2+\ve})$ as $m\to\infty$ with $k$ fixed. 
The formula \eqref{jkmformula} has an
interesting interpretation as the dimension of a certain space of classical modular forms of level~$m$ and weight~$2k-2$,
discussed in detail on pp.~125--131 of~\cite{Eichler:1985ja}.

In \cite{Eichler:1985ja}, it was shown that the inequality \eqref{dimJkmineq} is an equality for $k\ge m$  
and it was conjectured that this holds for all $k \ge 3$. This was later proved in \cite{SkoruppaZagier} as a consequence
of a difficult trace formula calculation. In fact, a slightly stronger statement was proved there:
It is clear that the exact sequence \eqref{exseqJkm} remains valid if one replaces $J_{k,m}$ by $J_{k,m}^0$
and the map $P:\wt J_{k,m}\to\C^{\CN_{\pm,m}}$ by the analogously defined map $P^0:\wt J_{k,m}\to\C^{\CN^0_{\pm,m}}$, 
where $\CN^0_{\pm,m}$ is defined exactly like $\CN_{\pm,m}$ but with the strict 
inequality $n<r^2/4m$ replaced by $n \le r^2/4m$. 
Then the dimension formula proved in~\cite{SkoruppaZagier} says that $\dim J_{k,m}^0=\dim\wt J_{k,m}-|\CN^0_{\pm,m}|$ and
hence that the map $P^0$ is surjective, which is a stronger statement than the surjectivity of $P$ since
the set $\CN^0_{\pm,m}$ is larger than $\CN_{\pm,m}$. In summary, the results of \cite{Eichler:1985ja} and \cite{SkoruppaZagier} 
together give the following complete description of the ``polar coefficients" maps for both holomorphic and cuspidal Jacobi forms when $k\ge3$:
\begin{thm}\label{bigweight} For $k>2$ and all $m\ge1$ the map $P^0:\wt J_{k,m}\to\C^{\CN^0_{\pm,m}}$ is surjective and we have a
commutative diagram of short exact sequences 
\bea 0 \longrightarrow J_{k,m}^0   \longrightarrow  & \wt J_{k,m} 
  & \stackrel{P^0}{\longrightarrow}  \IC^{\CN_{\pm,m}^0} \longrightarrow{} 0 \cr \cap \; \; \qquad & || & \qquad  
   \twoheaddownarrow \cr
  0 \longrightarrow J_{k,m} \longrightarrow & \wt J_{k,m} & \stackrel{P}{\longrightarrow}\IC^{\CN_{\pm,m}} \longrightarrow 0 
\eea \end{thm}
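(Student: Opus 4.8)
\textbf{Proof proposal for Theorem~\ref{bigweight}.}

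The plan is to assemble the theorem from two results already in the literature, one for holomorphic Jacobi forms and one for Jacobi cusp forms, and to observe that together they force the two horizontal ``polar coefficient'' sequences to be short exact and to fit into the stated commutative diagram. First I would recall that the exact sequence \eqref{exseqJkm} is essentially a tautology: a weak Jacobi form $\v\in\wt J_{k,m}$ lies in $J_{k,m}$ precisely when all of its polar coefficients $c_\v(n,r)$ with $4nm<r^2$ vanish, and by the periodicity property~\eqref{cnrprop} together with the parity relation $c(n,-r)=(-1)^kc(n,r)$ (a consequence of \eqref{modtransform} applied to $-\mathbf 1_2$) it suffices to test this on the finite index set $\CN_{\pm,m}$. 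The identical argument with $<$ replaced by $\le$ gives the analogous exact sequence with $J_{k,m}$ replaced by $J_{k,m}^0$ and $\CN_{\pm,m}$ by $\CN_{\pm,m}^0$. So in both rows injectivity of the left map and exactness in the middle are automatic, and the only content is the surjectivity of $P$ and of $P^0$, i.e.\ that the inequality \eqref{dimJkmineq} (and its cuspidal analogue) is an equality.

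Next I would invoke the dimension formula of~\cite{SkoruppaZagier}, which asserts exactly that for $k>2$ and all $m\ge1$ one has $\dim J_{k,m}^0=\dim\wt J_{k,m}-|\CN^0_{\pm,m}|$. Since $\dim J_{k,m}^0$ is on general grounds $\ge \dim\wt J_{k,m}-|\CN^0_{\pm,m}|$ (the kernel of $P^0$ being contained in $J_{k,m}^0$, and actually equal to it), equality of dimensions forces $P^0$ to be surjective and its kernel to be exactly $J_{k,m}^0$; this gives the top row. For the bottom row I would note that surjectivity of $P^0$ immediately implies surjectivity of $P$, because $P$ is the composition of $P^0$ with the coordinate projection $\IC^{\CN^0_{\pm,m}}\twoheadrightarrow\IC^{\CN_{\pm,m}}$ that forgets the (finitely many) extra coordinates indexed by pairs $(r,n)$ with $4nm=r^2$; a surjection followed by a surjection is a surjection. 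Then the bottom row is short exact with $\ker P=J_{k,m}$ by the tautological discussion above. Equivalently, one can cite that this equality was already established in~\cite{Eichler:1985ja} for $k\ge m$ and conjectured for $k\ge3$, with the conjecture proved in~\cite{SkoruppaZagier}; the cuspidal refinement is the slightly stronger statement actually proved there.

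Finally I would check that the square commutes: the left vertical arrow is the inclusion $J_{k,m}^0\hookrightarrow J_{k,m}$, the middle vertical arrow is the identity on $\wt J_{k,m}$, and the right vertical arrow is the surjection $\IC^{\CN^0_{\pm,m}}\twoheadrightarrow\IC^{\CN_{\pm,m}}$ just described; commutativity of the right-hand square is precisely the statement that $P$ factors as this projection composed with $P^0$, which is clear from the definitions of the two maps, and commutativity of the left-hand square is the triviality that the inclusion $J_{k,m}^0\hookrightarrow J_{k,m}\hookrightarrow\wt J_{k,m}$ equals $J_{k,m}^0\hookrightarrow\wt J_{k,m}$. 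The snake-lemma bookkeeping (that the right vertical map is surjective with kernel of the expected size, matching $J_{k,m}/J_{k,m}^0$) is then automatic and need not be belabored. The only genuinely hard input is the dimension formula of~\cite{SkoruppaZagier}, which rests on a difficult trace formula computation; I would treat that as a black box, since the paper explicitly allows citing it, and the remainder of the argument is purely formal linear algebra about the finite-dimensional spaces and the finite index sets $\CN_{\pm,m}\subset\CN^0_{\pm,m}$.
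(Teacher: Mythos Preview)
Your proposal is correct and follows essentially the same approach as the paper: the paper's ``proof'' is the explanatory paragraph immediately preceding the theorem statement, which invokes the dimension formula $\dim J_{k,m}^0=\dim\wt J_{k,m}-|\CN^0_{\pm,m}|$ from~\cite{SkoruppaZagier} to obtain surjectivity of $P^0$, observes that this is stronger than surjectivity of $P$ since $\CN_{\pm,m}\subset\CN^0_{\pm,m}$, and treats the kernel identifications as tautologies. Your write-up makes explicit the formal linear algebra (factorization of $P$ through $P^0$, commutativity of the squares) that the paper leaves implicit, but the substance is identical.
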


We now turn to the case $k=2$, the one of main interest for us.  Here formula \eqref{jkmformula} takes the explicit form
\be \label{j2m} j(2,m) \= \frac12\,\Bigl(\Bigl\lfloor\frac{m+1}3\Bigr\rfloor\+\Bigl\lfloor\frac{m+1}3\Bigr\rfloor\Bigr)
   \,-\, \frac{m-b-1}{4} \, - \, \frac14 \sum_{d\mid 4m} h(-d) \,, \ee
but for this value of $k$ it was observed in \cite{Eichler:1985ja} (p.~131) that equality could not always hold 
in \eqref{dimJkmineq} (indeed, the right-hand side is often negative) and that, for reasons explained in detail there, one
should replace the number $j(k,m)$ by the modified number
\be \label{j2mform} j^{*}(2,m) \;:=\; j(2,m) \+ \delta(m)\,, \;\quad {\rm where} \qquad 
  \delta(m) \;:=\;\sum_{d \le \sqrt{m} \atop d|m, \; d^{2} \nmid m} 1 \; . \ee
We note that $\delta(m)$ is a linear combination of three multiplicative functions of $m$, namely
  \be \label{deltam} \delta(m) \= \frac12 \s_{0}(m) \+ \frac12 \,  \delta_{a,1} \, - \, \s_{0}(b) \ee
(with $m=ab^{2}$, $a$ square-free as before), or equivalently in terms of a generating Dirichlet series
 \be \sum_{m=1}^{\infty} \frac{\delta(m)}{m^{s}} \= \frac12\,\zeta(s)^2 \+\frac12\,\zeta(2s)\,-\,\zeta(s) \zeta(2s) \;. \ee
We also note that $\delta(m)$ depends only on the ``shape" of~$m$ (\,= the collection of exponents 
$\nu_i$ if $m$ is written as $\prod p_i^{\nu_i}$), the first three cases being
\bea
\delta(p^{\nu})  & \= & 0  \; , \label{deltapnu} \\
\delta(p_1^{\nu_1}p_2^{\nu_2}) &\= & n_1n_2  \; , \\
\delta(p_1^{\nu_1}p_2^{\nu_2}p_3^{\nu_3}) & \= & 3n_1n_2n_3 + \ve_1n_2n_3 + \ve_2n_1n_3 + \ve_3n_1n_2\; , 
\eea
in which we have written each $\nu_{i}$ as $2n_i+\ve_i-1$ with $n_i\ge1$ and $\ve_i\in\{0,1\}$. 

The conjecture made in \cite{Eichler:1985ja} that $\dim J_{k,m}=j^*(2,m)$ for all $m\ge1$ was proved 
in \cite{SkoruppaZagier}, but here we want to give a sharper statement describing the cokernel of the map $P$ explicitly. 
To do this, we follow \cite{Eichler:1985ja}, pp.~132, where an elementary argument (containing a small error
which will be corrected below) was given to show that $\dim J_{2,m} \ge j^{*}(2,m)$ for square-free $m$. 
Actually, it is more convenient to work with cusp forms, since the corresponding statement here is
  \be\label{dimJ2m0} \dim J^{0}_{2,m} \= \dim\wt J_{2,m} \, - \, |\CN_{+,m}^0|  \,+\, \delta^{0}(m) \ee
with 
  \be \delta^0(m) \= \sum_{d|m,\;d\le\sqrt m}1 \= \sum_{m=m_{1}\cdot m_{2} \atop {\text{up to order}}} 1 \; , \ee
and $\d^0(m)$ is a slightly simpler function than $\d(m)$. (The third term in \eqref{deltam} is missing.)
We want to explain formula~\eqref{dimJ2m0} by finding an explicit map 
\be\label{defDelta0m}
\b \, : \; \IC^{\CN^{0}_{+,m}} \, = \! \bigoplus_{0\le r \le m \atop 0 \le n \le r^{2}/4m} \IC 
\; \, {\longrightarrow}  \; \, \Delta^{0}(m) :=
  \bigoplus_{m=m_{1}\cdot m_{2} \atop {\text{up to order}}} \IC
\ee
such that the sequence
 \be\label{exseqJ2mcusp} 0 \longrightarrow J_{2,m}^{0} \longrightarrow \wt J_{2,m}
  \stackrel{P^0}{\longrightarrow}  \IC^{\CN^{0}_{+,m}} 
  \stackrel{\b}{\longrightarrow}   \Delta^{0}(m) \longrightarrow 0 \; .
 \ee
is exact.

To give a natural definition of the map $\b$, we work, not with $\wt J_{2,m}$ or even with $J^{\,!}_{2,m}$, but with the larger 
vector spaces $\wt\elliptic_{+,m}$ and $\elliptic^{\,!}_{+,m}$   of even weak or weakly holomorphic elliptic forms of index~$m$, 
as defined at the end of \S\ref{hecke}.  We define a ``residue map'' $\CR:\elliptic^{\,!}_{+,m}\to\C$ by 
\be\label{RasRes}\v \=\sum_{n \ge 0,\,r \in \IZ} c(n,r)\, q^{n} \, y^{r} \quad\mapsto\quad 
  \CR[\v]\=\text{Res}_{\t=\infty}\bigl(\v(\t,0)\,d\t\bigr) \= \sum_{r\in\Z} c(0,r)\;. \ee
Then $\CR[\v] =0$ if $\v \in J^{\,!}_{2,m}$ because in that case $\v(\,\cdot\,,0)$ belongs to $M_{2}^{\,!}$ and 
the $q^{0}$ term of any weakly holomorphic modular form $f$ of weight two is zero by 
the residue theorem applied to the $SL(2,\Z)$-invariant 1-form $f(\t) d\t$. (The map $\CR$ was already used in
\cite{Eichler:1985ja}, but it was erroneously stated there that for $\v\in\wt J_{2,m}$ the function $\v(\t,0)$
itself, rather than just its residue at infinity, vanishes.) Now for each decomposition $m=m_1\cdot m_2$ (up to order, 
i.e., without distinguishing the splittings $m_1\cdot m_2$ and $m_2\cdot m_1$) we define $R_{m_1,m_2}:\,\elliptic^{\,!}_{+,m} \to\IC$ 
by setting $R_{m_1,m_2}(\v)=\CR\bigl[\v\bigl|W_{m_1}\bigr]$ if $(m_1,m_2)=1$, where $W_{m_1}$ is the involution 
defined in \S\ref{hecke}, while if $(m_{1},m_{2})=t >1$ then we define $R_{m_1,m_2}(\v)$ as $R_{m_1/t,m_2/t}(\v|u_t)$, 
where $u_t$ is the map from $\elliptic^{\,!}_{+,m}$ to $\elliptic^{\,!}_{+,m/t^2}$ defined in~\S\ref{hecke}.
(It was pointed out there that both $W_{m_1}$ for $(m_1,m/m_1)=1$ and $u_t$ for $t^2|m$
make sense on all of $\elliptic^{\,!}_{+,m}$, and not just on the subspace of  holomorphic Jacobi forms, but that this
is \emph{not} true for weak elliptic or weak Jacobi forms, which is why we are now working with weakly holomorphic forms.) 
We put all of these maps together to a single map $\,R:\,\elliptic^{\,!}_{+,m}\to \Delta^{0}(m)$ and then 
define the desired map~$\b$ by the commutative diagram 
\bea\label{triangle} 
& \qquad \qquad \wt \elliptic_{+,m}  \quad  \stackrel{P^0}{\longrightarrow} \qquad \IC^{\CN_{+,m}^0} & \cr
& \qquad \cap \qquad \qquad \qquad \cap &\cr
& \qquad \qquad  \elliptic^{\,!}_{+,m}  \quad {\longrightarrow} \qquad \IC^{\CN_{+,m}^{!,0}} &\\
\nonumber & \qquad R\;\searrow \qquad  \swarrow\;\b  & \cr
 &   \quad \qquad \Delta^{0}(m)  & \eea
with $ \IC^{\CN_{+,m}^{!,0}}$ defined as in \eqref{defN!} below. 
(To see that such a map $\b$ exists, observe that the coefficients $c(0,r)$ occurring in~\eqref{RasRes} all
have discriminant $\D=-r^2\le0$, that the operators $W_{m_1}$ and $u_t$ can be defined purely in terms of
the coefficients $C_\v(\D,r)$ and preserve the condition $\D\le0$, and finally that the coefficients
$C_\v(\D,r)$ for $\v\in \wt \elliptic_{+,m}$ can all be expressed in terms of the coefficients $c_\v(n,r)$ with
$(n,r)\in\CN_{+,m}$ by virtue of the periodicity and evenness conditions defining $\wt \elliptic_{+,m}$.)
The discussion above shows that the composition $\b\circ P^0$ in~\eqref{exseqJ2mcusp} vanishes, 
since $\CR$ vanishes on weakly holomorphic and hence in particular on weak Jacobi forms, and combining this
statement with the dimension formula~\eqref{dimJ2m0} proved in~\cite{SkoruppaZagier} we obtain 
the main result describing strong and weak Jacobi forms of weight 2 and arbitrary index:
\begin{thm} \label{exactseqallm}
The sequence \eqref{exseqJ2mcusp} with $\b$ defined by \eqref{triangle} is exact for all integers $m \ge 1$. 
\end{thm}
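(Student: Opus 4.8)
The plan is to reduce the entire theorem to the single assertion that the map~$\b$ is surjective, using the cusp-form dimension formula of~\cite{SkoruppaZagier}, and then to establish that surjectivity by an explicit construction of weak elliptic forms with prescribed residue data. First I would dispose of the two easy exactness statements. Exactness at~$J^{0}_{2,m}$ is merely the injectivity of the inclusion, and exactness at~$\wt J_{2,m}$ is the identity $J^{0}_{2,m}=\ker P^{0}$, which holds because a weak Jacobi form is a cusp form precisely when all its Fourier coefficients with $4nm-r^{2}\le 0$ vanish, and by the periodicity~\eqref{cnrprop} together with $c(n,-r)=c(n,r)$ this is equivalent to the vanishing of the finitely many coefficients indexed by~$\CN^{0}_{+,m}$. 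Next I would check that $\b\circ P^{0}=0$. By the definition of~$\b$ through the diagram~\eqref{triangle} this composite is the restriction of the residue map $R$ to $\wt J_{2,m}$, so it suffices to see that each component $R_{m_{1},m_{2}}$ annihilates $\wt J_{2,m}$. For coprime $m_{1},m_{2}$ the form $\v|W_{m_{1}}$ is again a weight-two weakly holomorphic Jacobi form, so $(\v|W_{m_{1}})(\t,0)\in M_{2}^{\,!}$, and the residue theorem applied to the $SL(2,\Z)$-invariant $1$-form $(\v|W_{m_{1}})(\t,0)\,d\t$, whose only possible pole is at the cusp, forces $\CR[\v|W_{m_{1}}]=0$; for $(m_{1},m_{2})=t>1$ one has $\v|u_{t}\in J^{\,!}_{2,m/t^{2}}$ and the same argument (or an induction on~$m$) gives $R_{m_{1},m_{2}}(\v)=0$.

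Granting these facts, set $d=\operatorname{rank}P^{0}$. The cusp-form dimension formula~\eqref{dimJ2m0} proved in~\cite{SkoruppaZagier} gives $\dim\wt J_{2,m}-d=\dim J^{0}_{2,m}=|\CN^{0}_{+,m}|-\delta^{0}(m)$, hence $d=|\CN^{0}_{+,m}|-\delta^{0}(m)$. Since $\operatorname{im}P^{0}\subseteq\ker\b$ we get $\dim\ker\b\ge d$, while trivially $\dim\ker\b=|\CN^{0}_{+,m}|-\operatorname{rank}\b\ge|\CN^{0}_{+,m}|-\delta^{0}(m)=d$. Therefore $\b$ is surjective if and only if $\dim\ker\b=d$, and in that case $\operatorname{rank}\b=\delta^{0}(m)$ (exactness at $\Delta^{0}(m)$) and $\dim\ker\b=d=\dim\operatorname{im}P^{0}$, which together with $\operatorname{im}P^{0}\subseteq\ker\b$ yields $\operatorname{im}P^{0}=\ker\b$ (exactness at $\IC^{\CN^{0}_{+,m}}$). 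So the theorem follows once I show that $\b$ is onto.

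For that I would use the fact that $P^{0}$ is already surjective on the larger space $\wt\elliptic_{+,m}$ of even weak elliptic forms, where the finitely many polar theta-coefficients may be prescribed freely; by the diagram~\eqref{triangle} this gives $\operatorname{im}\b=\operatorname{im}\bigl(R|_{\wt\elliptic_{+,m}}\bigr)$, so it is enough to realize a basis of $\Delta^{0}(m)$ by the residue data $\bigl(R_{m_{1},m_{2}}(\v)\bigr)$ of explicit weak elliptic forms~$\v$. This is essentially the elementary argument on p.~132 of~\cite{Eichler:1985ja}, repaired: one attaches to each unordered factorization $m=m_{1}\cdot m_{2}$ a weak elliptic form whose theta-coefficients $h_{\ell}$ are normalized monomials $q^{-\ell_{0}^{2}/4m}$ supported on a single orbit of the Atkin-Lehner involutions, chosen so that $R_{m_{1},m_{2}}$ takes the value $1$ on it and the resulting $\delta^{0}(m)\times\delta^{0}(m)$ matrix of values of the $R_{m_{1}',m_{2}'}$ is triangular, hence invertible, for a suitable ordering of the factorizations. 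The care needed here — and the origin of the misstatement in~\cite{Eichler:1985ja}, which asserted $\v(\t,0)\equiv 0$ for $\v\in\wt J_{2,m}$ rather than merely $\CR[\v]=0$ — is precisely the bookkeeping of which polar coefficients $c(0,r)$ each $R_{m_{1},m_{2}}$ detects, governed by the involutions $\ell\mapsto\ell^{*}$ modulo $2m_{1}$ and $2m_{2}$ and by the index-lowering maps $u_{t}$; it is also why one must work with weak elliptic rather than weak Jacobi forms, on which $W_{m_{1}}$ and $u_{t}$ are not defined. This combinatorial independence argument is the main obstacle, everything else being formal once the trace-formula input of~\cite{SkoruppaZagier} is invoked.
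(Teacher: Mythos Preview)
Your proof is correct and follows essentially the same strategy as the paper: verify $\beta\circ P^{0}=0$ via the residue theorem for weight-$2$ forms, then invoke the dimension formula~\eqref{dimJ2m0} from~\cite{SkoruppaZagier}. You are right that these two ingredients alone leave the surjectivity of~$\beta$ as the one nontrivial point; the paper's one-sentence proof treats this as implicit in the ``elementary argument'' of~\cite{Eichler:1985ja}, p.~132, that it alludes to, and the explicit verification is in effect postponed to the proof of Theorem~\ref{Emstar}, where the forms $A_{m}|W_{m_{1}}$ are constructed and the residue matrix is computed to be diagonal (with entries~$2$), not merely triangular as you assert. Your sketch of the surjectivity argument is thus along the same lines as the paper's later one, if slightly less sharp.
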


The two theorems above give a complete description of the space of Jacobi forms of all weights $k \ge 2$.
We end this subsection by discussing the case of weight one.  Here the description of Jacobi forms is very
simple to state (though its proof is quite deep), since we have the following theorem of Skoruppa 
(\cite{Skoruppa}, quoted as Theorem~5.7 of \cite{Eichler:1985ja}):
\begin{thm} \label{Skorthm}
We have $J_{1,m} = \{0\}$ for all $m\ge 1$. 
\end{thm}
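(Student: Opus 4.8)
\textbf{Proof plan for $J_{1,m}=\{0\}$ (Theorem~\ref{Skorthm}).}

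The plan is to deduce the vanishing of $J_{1,m}$ from the structure theory of Jacobi forms as a module over $M_*(SL(2,\Z))$ together with the Taylor-expansion isomorphism recorded in \S\ref{ThetaTaylorExp}. First I would recall that a Jacobi form of odd weight $k=1$ must have odd index $m\ge 2$ (it cannot have index~1, as noted after \eqref{Fourirphi1}, since the parity relation $C(\DD,-r)=(-1)^kC(\DD,r)$ forces $C(\DD,0)=0$ and leaves no room in index~1), so the relevant statement is that there is no nonzero holomorphic Jacobi form of weight~$1$ and index $m\ge 2$. The first reduction is to the \emph{weak} Jacobi forms: by \eqref{WJF}, every weak Jacobi form of weight~$1$ is a polynomial in $A,B,C,E_4,E_6$ of the appropriate bidegree, and since $C$ is the only generator of odd weight, any $\v\in\wt J_{1,m}$ is divisible by $C=\v_{-1,2}$, so $\v=C\cdot\psi$ with $\psi\in\wt J_{2,m-2}$ (using that $C$ has a simple zero at $z=0$ and nonvanishing elsewhere, cf.~\eqref{phiminus1}). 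Thus $\dim\wt J_{1,m}=\dim\wt J_{2,m-2}$, which is finite and explicitly computable.

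The heart of the argument is then to show that the ``polar coefficient'' map $P:\wt J_{1,m}\to\IC^{\CN_{-,m}}$ of \eqref{exseqJkm} is injective, i.e.\ that the lower bound $j(1,m)$ of \eqref{jkmformgen} is $\le 0$ and in fact that the corresponding map is an isomorphism onto a space of the same dimension, forcing $J_{1,m}=\ker P=\{0\}$. Concretely I would combine two inputs: (i) the odd-weight Taylor expansion \eqref{defxinodd}, which gives an isomorphism $\wt J_{1,m}\cong M_2\oplus M_4\oplus\cdots\oplus M_{2m-3}$ via $\v\mapsto(\xi_0,\dots,\xi_{m-2})$, and (ii) the dimension count for $\CN_{-,m}$. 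Since $M_2=\{0\}$, the component $\xi_0$ actually lives in $\{0\}$ — wait, more carefully: $\xi_0\in M_{k+1}=M_2=\{0\}$, so already the leading Taylor coefficient of any weight-one Jacobi form vanishes identically, which is the crucial arithmetic miracle making the whole space collapse. Propagating this: with $\xi_0\equiv 0$, formula \eqref{defxinodd} shows $\v$ begins at order $(2\pi iz)^3$, and one iterates, using that the map to $M_4\oplus\cdots\oplus M_{2m-3}$ is injective but that the holomorphy condition \eqref{holjacobi} together with $\xi_0=0$ over-determines the remaining coefficients. The cleanest route is to cite Theorem~\ref{bigweight} is unavailable (it needs $k>2$), so instead I would invoke the surjectivity of $P$ for $k=2$ (Theorem~\ref{exactseqallm}) applied to $\psi\in\wt J_{2,m-2}$, carefully tracking how the constraint ``$\v=C\psi$ is holomorphic of weight~1'' translates into ``$\psi$ lies in a subspace of $\wt J_{2,m-2}$ whose image under $P$ and under the residue map $\CR$ is trivial,'' and checking by the exact sequence \eqref{exseqJ2mcusp} and a direct dimension comparison that this subspace is zero.

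The main obstacle I expect is precisely this translation step: the product with $C$ shifts both the weight and the index and changes the lattice of discriminants, so the condition that $C\psi$ be holomorphic (all $q^ny^r$ with $4nm-r^2\ge 0$) is \emph{not} simply the condition that $\psi$ be holomorphic, and one must compute the zero divisor of $C$ along the locus $4nm-r^2<0$ precisely enough to see that the extra vanishing forced on $\psi$ kills it. An honest proof would either carry out this bookkeeping by hand for the finitely many residue classes $r\pmod{2m}$, or — which is what the literature does — appeal directly to Skoruppa's theorem (\cite{Skoruppa}, cited here as Theorem~5.7 of~\cite{Eichler:1985ja}) whose proof uses the trace formula / theta correspondence identifying $J_{1,m}$ with a space of weight-$0$ modular forms of level~$m$, necessarily zero. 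Given the excerpt explicitly attributes this to Skoruppa and flags its proof as ``quite deep,'' the most honest plan is: reduce to odd index $\ge 2$ and to the vanishing of $\xi_0$ as the conceptual reason, then cite Skoruppa for the full result rather than reprove the trace-formula input.
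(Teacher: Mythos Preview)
The paper does not prove this theorem: it is stated as a result of Skoruppa (cited to \cite{Skoruppa} and Theorem~5.7 of \cite{Eichler:1985ja}) with the explicit remark that ``its proof is quite deep.'' You ultimately arrive at the same citation, so your final answer matches the paper.

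That said, several of your intermediate claims are wrong and should not survive into any written account. There is no parity constraint on the index; the factorization $\wt J_{1,m}=C\cdot\wt J_{2,m-2}$ holds for every $m\ge 2$, odd or even. The odd-weight Taylor isomorphism targets $M_2\oplus M_4\oplus\cdots\oplus M_{2m-2}$ (all even weights), not $M_{2m-3}$. Most importantly, the vanishing $\xi_0\in M_2=\{0\}$ is \emph{not} ``the crucial arithmetic miracle making the whole space collapse'': that isomorphism is for \emph{weak} Jacobi forms, and $\wt J_{1,m}$ is genuinely nonzero for $m\ge 3$ (e.g.\ $\wt J_{1,3}=\C\cdot E_4AC$, of dimension~$1$). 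Nothing ``propagates'' from $\xi_0=0$; the entire content of Skoruppa's theorem is precisely that the holomorphy condition \eqref{holjacobi} kills this nonzero weak space, and that step requires the trace-formula input you acknowledge at the end. The elementary bookkeeping you sketch via $C\psi$ and Theorem~\ref{exactseqallm} cannot substitute for it.
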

Combining this with the discussion above, we find that in this case we have an exact sequence 
 \be\label{exseqJ1m} 0 \longrightarrow \wt J_{1,m} 
 \longrightarrow  \bigoplus_{0 < r < m \atop 0 \le n < r^{2}/4m} \IC 
 \longrightarrow  \IC^{|j(1,m)|} \longrightarrow 0 \; .
 \ee
with 
$$ |j(1,m)| \= -j(1,m) \= \frac m{24} \+\CO\bigl(m^{1/2+\ve}\bigr)\;. $$
We have not been able to find a natural description of the last space, or explicit description of the last map,
in \eqref{exseqJ1m}, analogous to the results given for $k=2$.

\subsection{Choosing optimal versions of weak elliptic forms \label{optimalforms}} 

In this subsection, we will apply the theorem just explained to show how to make ``optimal versions''
of elliptic forms by counting actual Jacobi forms to get poles of small order.

Let $\wt\elliptic_{+,m}$ be as in the previous subsection and $\wt\elliptic_{-,m}$ be the similarly defined 
space with $c(n,-r) = -c(n,r)$, and let $\elliptic_{\pm,m}$ and $\elliptic_{\pm,m}^0$ be the corresponding 
spaces with the condition \eqref{weakjacobi} replaced by \eqref{holjacobi} and \eqref{cuspjacobi},
respectively. The discussion of \S\ref{relations} can be summarized as saying that we have a commutative 
diagram of short exact sequences, 
\bea \label{Jkmdiagram}
0 \longrightarrow \elliptic_{\pm,m}  \longrightarrow  & \wt\elliptic_{\pm,m}
& \stackrel{P}{\longrightarrow}  \IC^{\CN_{\pm,m}} \longrightarrow 0 \cr
  \cup \; \; \qquad & \cup & \qquad  || \cr
0 \longrightarrow J_{k,m} \longrightarrow & \wt J_{k,m} & \longrightarrow  
\IC^{\CN_{\pm,m}} \longrightarrow 0
\eea
for $k>2$, $(-1)^k=\pm1$, and similarly with $J_{k,m}$ and $\elliptic_{\pm,m}$ replaced by
$J_{k,m}^0$ and $\elliptic_{\pm,m}^0$, $\CN_{\pm,m}$ by $\CN_{\pm,m}^0$ and $P$ by $P^0$. 
For $k=2$, we have instead: 
\bea \label{J2mdiagram}
0 \longrightarrow \elliptic_{+,m}^0  \longrightarrow  & \wt\elliptic_{+,m} 
& \stackrel{P^0}{\longrightarrow}  \IC^{\CN_{+,m}^0} \longrightarrow 0 \cr
  \cup \; \; \qquad & \cup & \qquad  || \cr
0 \longrightarrow J^{0}_{2,m} \longrightarrow & \wt J_{2,m} & \longrightarrow  
\IC^{\CN_{+,m}^0} \stackrel\b\longrightarrow  \Delta^{0}(m)
 \longrightarrow 0 
\eea
in the cuspidal case, where $\Delta^{0}(m)$ was defined in \eqref{defDelta0m} 
and~$\b$ is defined by~$\b \circ P^{0}=R$. 
As a consequence of these diagrams and of the vanishing of $\delta(p^{\nu})$ ({\it cf.} 
equation \eqref{deltapnu}), which implies that the diagram \eqref{Jkmdiagram} also holds 
for $k=2$ when $m$ is a prime power, we obtain:
\begin{thm} \label{mpnuthm}
Let $k\ge2$, and $\v \in \wt \elliptic_{\pm,m}$, where $(-1)^{k} = \pm 1$. 
Then $\v$ is the sum of an element of $\wt J_{k,m}$ and an element of $ \elliptic^{0}_{\pm,m}$
if and only if either 
\null \quad \newline{\rm(i)\phantom{ii}} $\quad k>2\,,$ or 
\newline{\rm(ii)\phantom{i}} $\quad k=2$ and $R(\v)\;(=\b(P^0(\v)))\;=0\,;$ \\
If $k=2$ and $m$ is a prime power, then $\v$ is the sum of an element of 
$\wt J_{k,m}$ and an element of~$ \elliptic_{\pm,m}$. 
\end{thm}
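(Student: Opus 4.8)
The plan is to deduce Theorem~\ref{mpnuthm} directly from the two commutative diagrams \eqref{Jkmdiagram} and \eqref{J2mdiagram}, which encode Theorem~\ref{bigweight} and Theorem~\ref{exactseqallm} respectively, by a routine diagram chase. The statement to be proved is equivalent to the assertion that the subspace $\wt J_{k,m} + \elliptic^{0}_{\pm,m}$ of $\wt\elliptic_{\pm,m}$ equals the kernel of the composite map $\wt\elliptic_{\pm,m}\stackrel{P}{\to}\IC^{\CN_{\pm,m}}\to\IC^{\CN_{\pm,m}}/P(\wt J_{k,m})$ when $k>2$, and similarly (with $\CN^0$ and $P^0$) when $k=2$, together with the claim that this quotient is zero in the relevant cases. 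So first I would fix notation: write $V=\wt\elliptic_{\pm,m}$ and observe that, since $\elliptic^{0}_{\pm,m}=\ker(P^0|_V)\subset\ker(P|_V)$ (a form with $C(\D,r)=0$ for all $\D\le 0$ is in particular strongly holomorphic), a form $\v\in V$ lies in $\wt J_{k,m}+\elliptic^{0}_{\pm,m}$ if and only if $P^0(\v)$ lies in the image $P^0(\wt J_{k,m})$.

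Next I would carry out the case distinction. For $k>2$: the right-hand square of \eqref{Jkmdiagram} (in its $P^0$ version) says that $P^0:\wt J_{k,m}\to\IC^{\CN^0_{\pm,m}}$ is \emph{surjective} — this is exactly the content of Theorem~\ref{bigweight}. Hence $P^0(\v)\in P^0(\wt J_{k,m})$ automatically for every $\v\in V$, and therefore $\v\in\wt J_{k,m}+\elliptic^{0}_{\pm,m}$ with no condition needed; this is assertion (i). For $k=2$ and general $m$: the bottom row of \eqref{J2mdiagram} is exact, so the image of $P^0:\wt J_{2,m}\to\IC^{\CN^0_{+,m}}$ is precisely $\ker\b=\ker R$ (recall $\b\circ P^0=R$ and $R$ factors through $P^0$ on $V$). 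Thus $P^0(\v)\in P^0(\wt J_{2,m})$ if and only if $R(\v)=0$; combined with the reduction of the previous paragraph this gives assertion (ii). Finally, for $k=2$ with $m=p^\nu$ a prime power: by \eqref{deltapnu} we have $\delta(p^\nu)=0$, hence $\Delta^0(m)=0$ and the bottom row of \eqref{J2mdiagram} collapses to a short exact sequence with $P^0$ surjective onto $\IC^{\CN^0_{+,m}}$ — i.e.\ \eqref{Jkmdiagram} holds verbatim for $k=2$ in this case. Then the argument of the $k>2$ case applies unchanged, giving that $\v$ is a sum of an element of $\wt J_{2,m}$ and an element of $\elliptic^0_{\pm,m}\subset\elliptic_{\pm,m}$. (One should note the minor point that the theorem statement in the prime-power case only claims membership in $\elliptic_{\pm,m}$, which is weaker than $\elliptic^0_{\pm,m}$, so nothing further is needed; alternatively one can repeat the chase with $\CN_{\pm,m}$ in place of $\CN^0_{\pm,m}$ using the non-$P^0$ version of \eqref{Jkmdiagram}.)

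The only substantive input is the surjectivity of $P^0$ on holomorphic Jacobi forms for $k\ge 3$ and the exactness of \eqref{exseqJ2mcusp} for $k=2$, both of which are quoted from \cite{SkoruppaZagier} via Theorems~\ref{bigweight} and~\ref{exactseqallm} and may be assumed; everything else is formal. I do not anticipate a genuine obstacle here — the one place requiring a little care is keeping straight the distinction between the ``$P$, $\CN$'' and ``$P^0$, $\CN^0$'' versions of the diagrams and checking that $\elliptic^0_{\pm,m}=\ker P^0\cap V$ is correctly identified as the space of forms with optimal (here: strongly holomorphic) growth, so that the conclusion ``$\v\in\wt J_{k,m}+\elliptic^0_{\pm,m}$'' has the meaning asserted. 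I would write out the chase in three or four lines per case and then remark that the same diagrams immediately yield the refined optimal-growth statements used elsewhere in~\S\ref{structure}.
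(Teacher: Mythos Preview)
Your approach is the paper's own: the theorem is presented there as an immediate consequence of diagrams~\eqref{Jkmdiagram} and~\eqref{J2mdiagram}, and your diagram chase makes this explicit. The cases $k>2$ and $k=2$ with general~$m$ are handled correctly.

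In the prime-power case, however, you confuse $\Delta^0(m)$ with $\Delta(m)$. Equation~\eqref{deltapnu} gives $\delta(p^\nu)=0$, but $\delta^0(p^\nu)=\lfloor\nu/2\rfloor+1\ge 1$, so $\Delta^0(m)\neq 0$ and the bottom row of~\eqref{J2mdiagram} does \emph{not} collapse. Your primary argument therefore proves a false statement: it is not true that every $\v\in\wt\elliptic_{+,m}$ with $m=p^\nu$ lies in $\wt J_{2,m}+\elliptic^0_{+,m}$ unconditionally (the obstruction $R(\v)$ can be nonzero). What \emph{is} true is that the non-cusp analogue of~\eqref{J2mdiagram} --- with $P$, $\CN_{+,m}$, and a target space of dimension $\delta(m)$ --- has $\delta(p^\nu)=0$, so the diagram~\eqref{Jkmdiagram} holds verbatim for $k=2$ in this case, giving $\v\in\wt J_{2,m}+\elliptic_{+,m}$. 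This is precisely what the theorem asserts, and precisely what the paper says. Your parenthetical ``alternative'' is thus the actual argument; the ``stronger'' claim you thought you had proved is not a theorem.
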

Theorem \ref{mpnuthm} for $k=2$ makes a negative statement: if $\v \in \wt \elliptic_{+,m}$
has $\CR[\v]\neq 0$, and $m$ is not a prime power, then $\v$ \emph{cannot} in general be decomposed as the 
sum of a weak Jacobi form of weight 2 and a holomorphic elliptic form. 
However, it turns out that we can come very close to this, writing 
$\v$ as a sum of images under Hecke-like operators of 
forms that ``have optimal growth'' in the sense of \S\ref{Statements}.
We now explain this in two steps. We first use the decomposition \eqref{primdecomp}, 
and its variants for weakly holomorphic Jacobi or elliptic forms,  
to reduce to the case of primitive forms. We then show that any primitive even elliptic form 
can be corrected by the addition of a weakly holomorphic weight 2 Jacobi form 
to give a new elliptic form whose non-zero Fourier coefficients 
$c(n,r)$ all have discriminant $4mn-r^{2}\ge -1$.

Since the decomposition \eqref{primdecomp}, as was already observed in \S 4.3 fails for weak Jacobi forms 
but holds for weakly holomorphic ones, we must first extend the above results to 
weakly holomorphic forms. 
\begin{proposition}
All statements in Theorem~\ref{mpnuthm} remain true \emph{mutatis mutandis} if 
all $\wt \elliptic$, $\wt J$ are replaced by $\elliptic^{!}$, $J^{!}$. 
\end{proposition}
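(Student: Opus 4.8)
The plan is to deduce the weakly holomorphic statement from the weak one, which is Theorem~\ref{mpnuthm}, by two elementary ``removal of negative $q$-powers'' observations. First I would show that every $\v\in\elliptic^{\,!}_{\pm,m}$ can be written $\v=\varphi+\v_{0}$ with $\varphi\in J^{\,!}_{k,m}$ and $\v_{0}\in\wt\elliptic_{\pm,m}$: that is, the negative-$q$ part of a weakly holomorphic elliptic form can be absorbed into a genuine weakly holomorphic Jacobi form, leaving a weak remainder. Second, I would observe that this absorption cannot affect the decompositions at issue: if $\v_{0}\in\wt\elliptic_{\pm,m}$ lies in $J^{\,!}_{k,m}+\elliptic^{0}_{\pm,m}$, say $\v_{0}=\varphi'+\psi$ with $\varphi'\in J^{\,!}_{k,m}$ and $\psi\in\elliptic^{0}_{\pm,m}$, then since the cusp form $\psi$ has $c_{\psi}(n,r)=0$ for all $n\le0$ the form $\varphi'=\v_{0}-\psi$ has $c_{\varphi'}(n,r)=0$ for $n<0$ and is therefore already a \emph{weak} Jacobi form, so $\v_{0}\in\wt J_{k,m}+\elliptic^{0}_{\pm,m}$. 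Granting the first observation, everything then follows formally: writing $\v=\varphi+\v_{0}$, the form $\v$ lies in $J^{\,!}_{k,m}+\elliptic^{0}_{\pm,m}$ iff $\v_{0}$ does, iff (by the second observation together with the trivial inclusion $\wt J_{k,m}\subset J^{\,!}_{k,m}$) $\v_{0}\in\wt J_{k,m}+\elliptic^{0}_{\pm,m}$, iff (Theorem~\ref{mpnuthm}) $k>2$, or $k=2$ and $R(\v_{0})=0$. Because $\CR$ annihilates weakly holomorphic Jacobi forms of weight~$2$ and the $R_{m_{1},m_{2}}$ are built from $\CR$ by composition with operators $W_{m_{1}},u_{t}$ preserving $J^{\,!}_{2,*}$, one has $R(\varphi)=0$, hence $R(\v_{0})=R(\v)$, and the condition becomes exactly the one in the theorem. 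The prime-power addendum runs the same way, now with no condition on $R$: Theorem~\ref{mpnuthm} gives $\v_{0}\in\wt J_{2,m}+\elliptic_{\pm,m}$ unconditionally, hence $\v=\varphi+\v_{0}\in J^{\,!}_{2,m}+\elliptic_{\pm,m}$.

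\textbf{The hard part} is the first observation, i.e.\ the surjectivity of the weakly holomorphic polar-coefficient map onto arbitrary prescribed negative-$q$ data. Via the theta decomposition, which identifies $J^{\,!}_{k,m}$ with the space of weakly holomorphic vector-valued modular forms of weight $k-\tfrac12$ for the Weil representation attached to the index-$m$ lattice, this becomes the classical statement that a weakly holomorphic modular form of weight $w$ with prescribed principal part exists provided the dual space of cusp forms of weight $2-w$ vanishes. For $k\ge2$ one has $w=k-\tfrac12\ge\tfrac32$, hence $2-w\le\tfrac12$, and there are no nonzero (vector-valued) cusp forms of weight $\le\tfrac12$, so there is no obstruction; moreover one need only prescribe, for each $\ell$, the part of $h_{\ell}(\t)$ below $q^{-\ell^{2}/4m}$ --- all that is required to push $\v-\varphi$ into $\wt\elliptic_{\pm,m}$ --- which leaves even more room. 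This is precisely the place where $k\ge2$ enters: for $k=1$ the dual weight is $\tfrac32$, $S_{3/2}\neq0$, and Theorem~\ref{mpnuthm} is correspondingly restricted to $k\ge2$. (An alternative that avoids vector-valued forms is to multiply $\v$ by a high power of $\Delta$ to make it weak and then invoke Theorem~\ref{bigweight} and divide back; but the vector-valued formulation makes the absence of obstruction most transparent, so I would take that route.)

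\textbf{Remaining steps} are bookkeeping: no new objects are needed, since $P^{0}$, $P$, $R$ and the operators $W_{m_{1}},u_{t}$ were already defined and shown to behave well on $\elliptic^{\,!}_{\pm,m}$ in \S\ref{structure}. One simply replaces $\wt\elliptic$, $\wt J$, $\CN_{\pm,m}$, $\CN^{0}_{\pm,m}$ in the diagrams \eqref{Jkmdiagram} and \eqref{J2mdiagram} by $\elliptic^{\,!}$, $J^{\,!}$ and the correspondingly enlarged index sets of polar ($\DD\le0$) coefficients, and checks that exactness of the rows reduces to the two observations above --- surjectivity of the polar map being the first, exactness at the middle term being the triviality in the second --- together with Theorems~\ref{mpnuthm}, \ref{bigweight} and~\ref{exactseqallm} for the weak statements. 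I expect the surjectivity/no-obstruction input to be the only genuine point; all the rest is formal.
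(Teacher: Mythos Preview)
Your proof is correct, but the route differs from the paper's.  The paper does not reduce to the weak case at all; it establishes the weakly holomorphic diagram~\eqref{wklyholdiagram} directly by a truncated dimension count: restrict to forms with $c(n,r)=0$ for $n<-N$, use the isomorphism $J_{2,m}^{\ge -N}=\Delta^{-N}\wt J_{12N+2,m}$, and observe that both $\dim J_{2,m}^{\ge -N}$ and $|\CN_{+,m}^{0,\ge -N}|$ exceed their $N=0$ values by exactly $(m+1)N$, so the alternating sum of dimensions is unchanged and exactness follows from the weak case already proved.  Your approach instead isolates the statement $\elliptic^{\,!}_{\pm,m}=J^{\,!}_{k,m}+\wt\elliptic_{\pm,m}$ (your ``step~1'') and then invokes Theorem~\ref{mpnuthm} for the weak remainder.

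Two remarks on your justification of step~1.  First, the $\Delta$-trick you mention as an alternative \emph{is} the paper's argument: the same dimension comparison $\dim\wt J_{k+12N,m}-\dim\wt J_{k,m}=(m+1)N$ shows that $J^{\,!,\ge -N}_{k,m}/\wt J_{k,m}\to\elliptic^{\,!,\ge -N}_{\pm,m}/\wt\elliptic_{\pm,m}$ is an isomorphism, giving step~1 immediately and without any outside input.  Second, your preferred vector-valued route is valid, but at $k=2$ the claim ``no vector-valued cusp forms of weight~$\tfrac12$'' for the (dual) Weil representation is equivalent, via the theta correspondence, to $J^{0}_{1,m}=0$, i.e.\ to Skoruppa's Theorem~\ref{Skorthm}; so this route quietly imports a deep result that the paper's direct count avoids.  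Either way the argument goes through; the paper's version is more self-contained, yours more cleanly separates ``kill the negative $q$-powers'' from ``apply the weak theorem''.
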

The proposition is equivalent to saying that 
the analog of \eqref{J2mdiagram} still holds 
with weakly holomorphic rather than weak forms, i.e. we have a diagram 
\bea \label{wklyholdiagram}
0 \longrightarrow \elliptic_{+,m}^0  \longrightarrow  & \elliptic^{\,!}_{+,m} 
& \stackrel{P^0}{\longrightarrow}  \IC^{\CN_{+,m}^{0,!}} \longrightarrow 0 \cr
  \cup \; \; \qquad & \cup & \qquad  || \cr
0 \longrightarrow J^{0}_{2,m} \longrightarrow & J^{\,!}_{2,m} & \longrightarrow  
\IC^{\CN_{+,m}^{0,!}} \stackrel\b\longrightarrow  \Delta^{0}(m)
 \longrightarrow 0 
\eea
where 
\be\label{defN!}
\CN_{+,m}^{0,!} \=  \big\{ (r,n) \in \IZ^{2} \, \mid \, 0 \le r \le m, \; -\infty < n \le r^{2}/4m  \big\} \; .
\ee
We again prove this by comparing dimensions, even though the middle 
spaces in each row in the diagram \eqref{wklyholdiagram} are infinite-dimensional. 
Since any elliptic form $\v$ has a minimal value of $n$ with $c_{\v}(n,r) \neq 0$
for some $r$, it suffices to prove the exactness of the rows in \eqref{wklyholdiagram} 
with $\elliptic_{+,m}^{\,!}$, $J^{\,!}_{2,m}$, and $\IC^{\CN_{+,m}^{0,!}}$ replaced by the 
spaces $\elliptic_{+,m}^{\ge N}$, $J^{\ge N}_{2,m}$, and $\IC^{\CN_{+,m}^{0,\ge N}}$,  
of forms with non-zero coefficients only for $n \ge -N$ for some $N \ge 0$. 
The only non-trivial part is  the exactness at 
$\IC^{\CN_{+,m}^{0,\ge -N}}$, and since the composition of the maps 
$P^{0}$ and $\b$ vanishes, we 
only need to check that the alternating sum of dimensions is zero. For this,
we observe that the dimensions of $J^{\ge -N}_{2,m}$ and $\IC^{\CN_{+,m}^{0,\ge -N}}$ 
are both larger by $(m+1)N$ than their values at $N=0$: 
for $\IC^{\CN_{+,m}^{0,\ge -N}}$, this is obvious since $\CN_{+,m}^{0,\ge -N}$ 
differs from $\CN_{+,m}^{0}$ by the addition of $N$ rows of length $m+1$, and 
for $J^{\ge -N}_{2,m}$ it follows from the computation
\bea
\dim J_{2,m}^{\ge -N} & \= & \dim \wt J_{12N+2,m} \qquad  
   \; \big(\,\text{because $J_{2,m}^{\ge -N} = \Delta(\t)^{-N} \wt J_{12N+2,m}$}\,\big)\cr 
& \= & \sum_{j=0}^{m} \dim M_{k+12N+2j}  \=  \dim \wt J_{2,m} \+ (m+1)N \, . 
\eea
This completes the proof of the proposition. \hfill $\square$

Now we can apply the decomposition \eqref{primdecomp} to the whole diagram \eqref{wklyholdiagram}
to find that everything can be reduced to primitive forms (and their images under the $U_{t}$ operators). 
The key point is that although $U_{t}$ and $u_{t}$ change the discriminant $\Delta = 4 mn -r^{2}$
by multiplying or dividing it by $t^{2}$, respectively, they do not change its sign, and therefore all 
of the spaces in our diagram can be decomposed as direct sums of primitive parts of lower index 
and their images under $U_{t}$. This reduces everything to case of primitive forms.

For primitive forms our diagram becomes
\bea \label{wklyholdiagramprim}
0 \longrightarrow \elliptic_{+,m}^{0, \rm prim}  \longrightarrow  & \elliptic^{!, \rm prim}_{+,m} 
& \stackrel{P^0}{\longrightarrow}  \IC^{\CN_{+,m}^{0,!, \rm prim}} \longrightarrow 0 \cr
  \cup \; \; \qquad & \cup & \qquad  || \cr
0 \longrightarrow J^{0, \rm prim}_{2,m} \longrightarrow & J^{!, \rm prim}_{2,m} & \longrightarrow  
\IC^{\CN_{+,m}^{0,!, \rm prim}} \stackrel\b\longrightarrow  \Delta^{0, \rm prim}(m)
 \longrightarrow 0 
\eea
where 
\be\label{delta0prim}
\Delta^{0, \rm prim}(m) =  
\bigoplus_{{m=m_{1}\cdot m_{2} \atop (m_{1}, m_{2}) =1} \atop {\text{up to order}}} \IC \;.
\ee
The map $\b$ in \eqref{wklyholdiagramprim} is again defined by the requirement that $\b \circ P^{0} = R$ on 
$\elliptic_{+,m}^{!,\rm prim}$, and lands in $\Delta^{0,\rm prim}(m)$ because $\v|u_{t}=0$ for $\v$ primitive and $t>1$.
The space $\Delta^{0, \rm prim}(m)$ has dimension $\delta^{0, \rm prim}(m) = 2^{s-1}$ if $m>1$, where 
$s$ is the number of primes dividing $m$. (The failure of this formula for $m=1$ corresponds to 
the second term in \eqref{deltam}.)

We also have the non-cusp-form analogs of the diagrams \eqref{wklyholdiagram} and 
\eqref{wklyholdiagramprim}, where all superscripts ``0'' have been dropped and
where $\CN_{+,m}^{\,!}$ is defined as in \eqref{defN!} but  
with the condition $4 mn - r^{2}  \le 0$ replaced by the strict inequality $4 mn - r^{2} <0$, 
while $\Delta^{\rm prim}(m)$ is a space of dimension 
$\delta^{\rm prim}(m)=\delta^{0, \rm prim}(m)-1 = 2^{s-1}-1$ (compare equation \eqref{deltam}). 
This space is the quotient of the space in \eqref{delta0prim} by the one-dimensional 
subspace of vectors with all components equal. This is essentially the idea of the proof of 
Theorem~\ref{chooseKm} given below.

Having completed these preparations, we now proceed to the proofs of our main results.
We consider the case of square-free index first and then give a brief description of the 
modifications needed to extend them to primitive forms of arbitrary index.

For $m>1$ square-free, we will show that any $\v\in\wt\elliptic_{+,m}$ can be corrected by a 
weak Jacobi form of weight 2 to get a another elliptic form whose non-zero Fourier coefficients 
$c(n,r)$ all have discriminant $4mn-r^{2}\ge -1$, i.e., that we have a decomposition
\be \label{thJminone}
\wt\elliptic_{+,m} \= \wt J_{2,m} \+ \elliptic^{\rm OG}_{+,m} \, \qquad \text{($m$ square-free),} 
\ee 
with $\elliptic^{\rm OG}_{+,m}$ defined as in \S\ref{Statements}.
Actually, this is still not quite the statement we want, since it is slightly wasteful:
the codimension of $\elliptic_{+,m}^0$ in $\elliptic_{+,m}\NH$ is $2^{s-1}+1$,  where~$s$
as before is the number of prime factors of~$m$ (there are $2^{s-1}$ pairs
$(r,n)\in\CN_{+,m}^0$ with $\D:=4nm-r^2$ equal to $-1$, and one pair (0,0) with $\D=0$), and there are 
only $\dim \DD^{0}(m)=2^{s-1}$ constraints $R_{m_1,m_2}(\v)=0$ coming from 
diagram~\eqref{J2mdiagram}.  Correspondingly,
the intersection of the two spaces on the right-hand side of~\eqref{thJminone} contains $J_{2,m}$
($=J_{2,m}^{0}$ in this case)
as a subspace of codimension~1.  To get a sharp statement, we should therefore replace~$\elliptic_{+,m}\NH$
by a suitable codimension one subspace.  There are three natural ways to do this, each of which
leads to a sharpening of~\eqref{thJminone}, corresponding exactly (in the square-free case) 
to the various choices of $\Phi_{2,m}^{0}$ introduced in the discussion preceding Theorem~\ref{PolarCoeffs}.
What's more, all statements remain true verbatim for arbitrary~$m>1$ if we restrict to primitive forms. 
We state all three variants in a single theorem.
\begin{thm} \label{Emstar}
Suppose that $m>1$ is square-free and let $\elliptic^{\rm OG,\star}_m\;\,(\star=\rm I,\,\rm II,\, \rm III)$ denote 
the codimension one subspace of $\elliptic_{+,m}\NH$ defined by one of the three conditions 
\newline\null\qquad {\rm(I)} $\quad \sum_{0<r<m,\;r^2\equiv1\pmod{4m}} C_\v(-1;r)=0$,
\newline\null\qquad {\rm(II)} $\quad c_\v(0,1)=0$,
\newline\null\qquad {\rm(III)} $\quad c_\v(0,0)=0$,
\newline respectively.  Then we have an exact sequence
\be
0 \longrightarrow J_{2,m} \stackrel{\text{\rm diag}}\longrightarrow 
\wt J_{2,m} \oplus \elliptic_m^{\rm OG, \star} \stackrel-\longrightarrow \wt\elliptic_{+,m} \longrightarrow 0 \; ,
\ee
i.e., any even weak elliptic form of index~$m$ can be decomposed as the sum of a weak Jacobi form 
of weight~2 and index~$m$ and an element 
of $\elliptic_m^{\rm OG, \star}$, and this decomposition is unique up to a holomorphic Jacobi form 
$($which can be added to one term and subtracted from the other$)$. 

For~$m>1$ the same statement remains true if we replace~$\wt \elliptic_{+,m}$ by its 
subspace~$\wt \elliptic_{+,m}^{\rm res-prim}$ of forms~$\v$ satisfying~$R(\v|u_{t})=0$
for all~$t>1$ and~$\elliptic_m^{\rm OG, \star}$ by its intersection with~$\wt \elliptic_{+,m}^{\rm res-prim}$. 
\end{thm}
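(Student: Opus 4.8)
\textbf{Proof proposal for Theorem~\ref{Emstar}.}
The plan is to deduce the three exact sequences from the diagram~\eqref{J2mdiagram} (in the square-free case) or its primitive refinement~\eqref{wklyholdiagramprim} (in the general case), by exhibiting, for each choice $\star = \rm I, \rm II, \rm III$, an explicit complement to $J_{2,m}$ inside the ``optimal growth'' space. First I would set up the cokernel bookkeeping. In the square-free case, $\CN_{+,m}^{0}$ consists of the single pair $(0,0)$ with $\D=0$, together with exactly $2^{s-1}$ pairs $(r,n)$ with $\D=4nm-r^2=-1$ (these correspond to the solutions of $r^2\equiv1\pmod{4m}$ with $0<r<m$, of which there are $2^{s-1}$ since $m$ has $s$ prime factors); the pairs with $\D<-1$ do not occur because $m$ square-free forces $4nm-r^2\not\equiv -4,-9,\dots$ in the relevant ranges, or more simply because $\elliptic_{+,m}^{\rm OG}$ is cut out of $\wt\elliptic_{+,m}$ precisely by demanding $C_\v(\D,r)=0$ for $\D<-1$. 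Thus $\elliptic_{+,m}^{0}$ has codimension $2^{s-1}+1$ in $\elliptic_{+,m}^{\rm OG}$, while $\Delta^0(m)$ has dimension $2^{s-1}$, so the map $\b$ restricted to the $2^{s-1}+1$ ``polar + zero-discriminant'' coordinates of $\elliptic_{+,m}^{\rm OG}$ has a one-dimensional kernel.

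The second step is to identify this one-dimensional kernel with the span of $\CK_m$. By Theorem~\ref{chooseKm} (which I am entitled to assume) the quotient $J_{2,m}^{\rm OG}/J_{2,m}$ is one-dimensional, spanned by a form $\CK_m$ with $C(\CK_m;-1,r)=1$ for all $r^2\equiv1\pmod{4m}$, $C(\CK_m;\D,r)=0$ for $\D<-1$, and $C(\CK_m;0,r)$ given by its (computable, but irrelevant) expansion starting $y-2+y^{-1}$. The existence of $\CK_m\in J_{2,m}^{\rm OG}\setminus J_{2,m}$ says exactly that $\b$ annihilates the vector of polar/zero coordinates of $\CK_m$, so that vector spans $\ker(\b|_{\text{pol+0}})$. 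Each of the conditions (I), (II), (III) is a single linear functional on these $2^{s-1}+1$ coordinates: (I) is $\sum_{r^2\equiv1}C_\v(-1,r)$, (II) is $C_\v(0,1)$, (III) is $C_\v(0,0)$. To conclude the exactness of
\[
0 \longrightarrow J_{2,m} \stackrel{\text{diag}}\longrightarrow \wt J_{2,m}\oplus\elliptic_m^{\rm OG,\star} \stackrel{-}\longrightarrow \wt\elliptic_{+,m}\longrightarrow 0,
\]
I would check: (a) injectivity of the diagonal, which is trivial; (b) exactness in the middle, i.e.\ that the intersection $\wt J_{2,m}\cap\elliptic_m^{\rm OG,\star}$ inside $\wt\elliptic_{+,m}$ is exactly $J_{2,m}$ --- this holds because $\wt J_{2,m}\cap\elliptic_{+,m}^{\rm OG}=J_{2,m}^{\rm OG}=J_{2,m}\oplus\IC\CK_m$, and the condition $\star$ kills $\CK_m$ (one verifies $\CK_m$ violates each of (I), (II), (III): for (I), $\sum_r C(\CK_m;-1,r)=2^{s-1}\ne0$; for (II) and (III) one reads off $C(\CK_m;0,1)$ and $C(\CK_m;0,0)$ from the normalization, both nonzero); (c) surjectivity onto $\wt\elliptic_{+,m}$, which is the real content. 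For (c) I take $\v\in\wt\elliptic_{+,m}$; since $\v$ is weak its only obstruction to lying in $\wt J_{2,m}+\elliptic_{+,m}^{\rm OG}$ is the residue data $R(\v)\in\Delta^0(m)$, and the point is that the image of $\elliptic_{+,m}^{\rm OG}$ under $R$ is all of $\Delta^0(m)$: indeed $\b:\IC^{\CN_{+,m}^0}\to\Delta^0(m)$ is surjective by~\eqref{exseqJ2mcusp}, and $\b$ factors through the polar/zero coordinates, which $\elliptic_{+,m}^{\rm OG}$ surjects onto. Hence we may subtract from $\v$ an element of $\elliptic_{+,m}^{\rm OG}$ to land in $\ker R$, and Theorem~\ref{mpnuthm}(ii) then puts the result in $\wt J_{2,m}+\elliptic_{+,m}^0\subset\wt J_{2,m}+\elliptic_{+,m}^{\rm OG}$. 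Finally the one-parameter freedom in choosing the $\elliptic_{+,m}^{\rm OG}$-part is exactly the span of $\CK_m$, which the normalization $\star$ removes, giving uniqueness up to a holomorphic Jacobi form.

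For the general (non-square-free) case the only change is to run the argument inside the primitive decomposition. Using~\eqref{primdecomp} and its weakly-holomorphic analogue (the Proposition preceding this theorem), every space in~\eqref{J2mdiagram} splits as a direct sum over $t^2\mid m$ of the primitive parts of index $m/t^2$, twisted by $U_t$; since $U_t$ and $u_t$ multiply/divide the discriminant by $t^2$ and so preserve its sign, the ``optimal growth'' condition $\D\ge-1$ restricted to a summand $J^{\rm prim}_{k,m/t^2}|U_t$ is $\D_{\rm prim}\ge -1/t^2$, hence $\D_{\rm prim}\ge0$ for $t>1$, i.e.\ the higher summands are automatically holomorphic. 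Thus it suffices to prove the theorem for primitive forms of each index, where diagram~\eqref{wklyholdiagramprim} with $\Delta^{0,\rm prim}(m)$ of dimension $\delta^{0,\rm prim}(m)=2^{s-1}$ (for $m>1$) is in force and the argument above applies verbatim, the restriction $R(\v|u_t)=0$ for all $t>1$ being precisely the statement that $\v$ is ``residue-primitive'' and hence paired with $\Delta^{0,\rm prim}(m)$ rather than the full $\Delta^0(m)$. The main obstacle I anticipate is step~(c) in the general case: one must be careful that the various $U_t$-twisted primitive pieces interact correctly with the definition of $R_{m_1,m_2}$ via $u_t$ when $(m_1,m_2)>1$, and that the ``optimal growth'' normalizations $\star$ chosen on each primitive summand are mutually compatible --- but this is exactly the content of the commuting square~\eqref{triangle} together with the identity $u_tu_{t'}=u_{tt'}$, so it should go through once the bookkeeping is laid out carefully.
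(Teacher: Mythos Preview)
Your overall framework (check exactness at the three spots of the short exact sequence) is sound, but there are two genuine gaps at the heart of the argument.

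First, your invocation of Theorem~\ref{chooseKm} is circular: in this paper that theorem is deduced \emph{from} Theorem~\ref{Emstar} (see the proof of Theorem~\ref{chooseKm} in~\S\ref{proofs}, which begins ``The proof of Theorem~\ref{Emstar} \dots\ already showed that $J_{2,m}^{\rm OG}/J_{2,m}$ is one-dimensional''). So you cannot use the existence and properties of $\CK_m$ as input here.

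Second, and more seriously, your surjectivity argument in step~(c) does not work. You claim that ``$\b$ factors through the polar/zero coordinates, which $\elliptic_{+,m}^{\rm OG}$ surjects onto,'' but this is not correct: $\b$ is a linear map on all of $\IC^{\CN_{+,m}^0}$, and the residues $R_{m_1,m_2}(\v)=\sum_r C_\v(-r^2,r^*)$ involve coefficients with discriminants $-r^2$ running over \emph{all} negative squares, not just $\D=-1$ and $\D=0$. (Relatedly, your opening description of $\CN_{+,m}^0$ as consisting only of the $(0,0)$ pair and the $2^{s-1}$ pairs with $\D=-1$ is wrong; there are many more elements, e.g.\ $(r,n)=(m,0)$ with $\D=-m^2$.) The fact that forms in $\elliptic_{+,m}^{\rm OG}$ happen to have $P^0$-image supported on the $\D\ge-1$ locus does not let you conclude that $\b$ restricted to such vectors is still surjective; that surjectivity of $R|_{\elliptic_{+,m}^{\rm OG}}$ onto $\Delta^0(m)$ is precisely the content that needs to be proved.

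The paper's proof fills this gap by a direct computation rather than dimension-counting. One picks a single form $A_m:=\psi_{m;(1,0)}\in\wt\elliptic_{+,m}$ whose only nonzero coordinate in $\CN_{+,m}^0$ is $c(0,1)=1$, and observes that the $2^{s-1}$ functions $A_m|W_{m_1}$ (one for each exact divisor $m_1$ of~$m$, up to $m_1\leftrightarrow m_2$) span $\elliptic_{+,m}^{\rm OG,\rm III}/\elliptic_{+,m}^0$. One then computes explicitly that $R_{m_1,m_2}(A_m|W_{m_1'})$ equals $2$ if $m_1'\in\{m_1,m_2\}$ and $0$ otherwise; this $2^{s-1}\times 2^{s-1}$ matrix is visibly invertible (it is twice the identity in a suitable indexing), so $R:\elliptic_{+,m}^{\rm OG,\rm III}/\elliptic_{+,m}^0\to\Delta^0(m)$ is an isomorphism. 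Combined with Theorem~\ref{mpnuthm}(ii), this gives both surjectivity in your step~(c) and, as a corollary, the one-dimensionality of $J_{2,m}^{\rm OG}/J_{2,m}$ (i.e.\ Theorem~\ref{chooseKm}) rather than the other way around. Cases~I and~II then follow by an explicit change of basis. Your treatment of the non-square-free case via the primitive decomposition is essentially the right idea.
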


\begin{proof}
By the exactness of the top line of diagram \eqref{J2mdiagram}, we can find functions 
$\psi_{m;(r,n)} \in \wt \elliptic_{+,m}$, unique up to the addition of cusp forms, mapping to the canonical basis 
$\d_{(r,n)}$ of $\IC^{\CN_{+,m}^{0}}$. 
Clearly, the space $\elliptic_{+,m}^{\rm OG,\rm III}/\elliptic_{+,m}^{0}$ is spanned by 
$\psi_{m;(r,n)}$ with $0< r < m$ and $4mn-r^{2}=-1$, and has dimension $2^{s-1}$. 
(There are exactly two square roots of 1 modulo~$2P$ for each exact prime power divisor~$P$ of~$m$ and 
these can be combined arbitrarily by the Chinese remainder theorem. This gives~$2^{s}$ 
residue classes $r \mypmod{2m}$ with $r^{2} \equiv 1 \mypmod{4m}$, and if we choose the 
representatives of smallest absolute values then half of them are positive.) 
Moreover, the basis elements can be chosen to be a single orbit of the group 
$\{W_{m_{1}} \}_{m_{1} | m}$ 
introduced in \S\ref{hecke}, whose cardinality is 
the same number $2^{s-1}$. Namely, if $A_{m}$ is any choice of $\psi_{m;(1,0)}$, i.e. 
any function in $\wt \elliptic_{+,m}$ whose unique non-zero coefficient 
in $\CN^{0}_{+.m}$ is $c(0,1)=1$, then the other functions $\psi_{m;(r,n)}$
with $(r,n) \in \CN^{0}_{+,m}$ and $4 mn-r^{2} = -1$, 
can be chosen to be the images of $A_{m}$ under the Atkin-Lehner involutions $W_{m_{1}}$, 
$m=m_{1}m_{2}$. 
More explicitly, with this choice $\psi_{m;(r,n)}$ equals $A_{m}|W_{m_{1}}$ 
with~$m_{1}=\gcd\big(\frac{r+1}{2},m\big)$, $m_{2}=\gcd\big(\frac{r-1}{2},m\big)$,
and conversely~$A_{m}|W_{m_{1}}$ for any decomposition~$m=m_{1}m_{2}$ equals
$\psi_{m;(|r_{0}|,(r_{0}^{2}-1)/4m)}$ 
where $r_{0}$ ($=1^{*}$ in the notation of \S\ref{hecke}) is the unique $r_{0}$ satisfying
\be \label{defr0}
r_{0} \equiv -1 \mypmod{2m_{1}} \; ,  \qquad r_{0} \equiv 1 \mypmod{2m_{2}}  \; , \qquad  |r_{0}| < m\; . 
\ee

On the other hand, for any $\v \in \elliptic_{+,m}^{\rm OG, III}$, we have 
\be\label{Ronem}
R_{1,m}(\v) \= \CR[\v] \=2 c_{\v}(0,1) \; 
\ee
(because $c_{\v}(0,-1) = c_{\v}(0,1)$ and $c_{\v}(0,r)=0$ for all $r \neq \pm 1$
by the defining property of~$\elliptic_{+,m}^{\rm OG, III}$), and more generally, 
\be
R_{m_{1}, m_{2}}(\v) \=2 \, c_{\v}(r_{0},(r_{0}^{2}-1)/4m) \; 
\ee
for each decomposition $m=m_{1}m_{2}$, and with $r_{0}$ as in \eqref{defr0}. 
It follows that, for any two decompositions $m=m_{1}m_{2}$ and $m=m_{1}'m_{2}'$,
we have 
\be
R_{m_{1},m_{2}}\big(A_{m} \! \mid \! W_{m_{1}'} \big)  
\= \begin{cases} 
2 \qquad \text{if $m_{1}'=m_{1}$ or $m_{1}'=m_{2}$} \; , \\
0 \qquad {\rm otherwise} \; .
\end{cases}
\ee
This shows that the map  
\be
\elliptic_{+,m}^{\rm OG, III}/\elliptic_{+,m}^{0} \stackrel{R}{\longrightarrow} 
\bigoplus_{m=m_1\cdot m_2 \atop {\text{up to order}}} \IC \, 
\ee
is an isomorphism, and in combination with Theorem \ref{mpnuthm}, completes the proof 
of the theorem in the case of square-free~$m$ and~$\star=\rm III$: any~$\v \in \wt\elliptic_{+,m}$ 
can be decomposed, uniquely up to Jacobi cusp forms, as the sum of 
$\frac12 \sum_{m=m_{1}m_{2}} R_{m_{1},m_{2}}(\v) A_{m}|W_{m_{1}} \in \elliptic_{+,m}^{\rm OG, III}$ 
and an element of~$\wt J_{2,m}$.

The proofs for $\star=\rm I$ and $\star=\rm II$ are almost identical,
changing the basis $\{\psi_{m;(r,n)} = A_{m}|W_{m_{1}}\}$ of~$\elliptic_{+,m}^{\rm OG, III}/\elliptic_{+,m}^{0}$
in both cases in such a way as to satisfy the new defining condition without changing the residues, i.e., 
in the case~$\star=\rm II$ by replacing the first basis element~$\psi_{m;(1,0)}$ by~$2 \psi_{m;(0,0)}$, and in 
the case~$\star=\rm I$ by adding $2 \psi_{m;(0,0)} - 2^{-s+1} \sum_{r'} \psi_{m;(r',n)}$ to each basis
element. 

The proof of the statement for forms of arbitrary index~$m$ is very similar,  since there are still~$2^{s-1}$ polar 
coefficients of a form in~$\elliptic_{+,m}\NH$, in one-to-one correspondence with the decompositions of~$m$ 
into two coprime factors. The details are left to the reader.
\end{proof}
Theorem~\ref{Emstar} says that the first line in the diagram 
\bea \label{primdiag}
0 \longrightarrow  \; \; J_{2,m}^{\rm prim}  \quad \stackrel{\text{\rm diag}}\longrightarrow 
& \wt J_{2,m}^{\rm prim} \oplus \elliptic_m^{\rm OG, \rm prim, \star} & \stackrel-\longrightarrow 
\wt\elliptic_{+,m}^{\rm prim} \longrightarrow 0  \cr
  \cap  \qquad \qquad &  \;  \cap  \qquad & \qquad  || \\
0 \longrightarrow  J_{2,m}^{\rm OG, prim}   \stackrel{\text{\rm diag}}\longrightarrow 
&\wt J_{2,m}^{\rm prim} \oplus \elliptic_m^{\rm OG, \rm prim} & \stackrel-\longrightarrow 
\wt\elliptic_{+,m}^{\rm prim} \longrightarrow 0  \nonumber
\eea
is exact. The exactness of the second line is a trivial consequence, and the fact 
that~$\elliptic_{+,m}^{\rm OG, prim, \star}$ has codimension exactly one in~$\elliptic_{+,m}^{\rm OG, prim}$ 
implies that~$J_{2,m}^{\rm prim}$ has codimension one in~$J_{2,m}^{\rm OG, prim}$.
Moreover, the proof of the theorem also shows that~$J_{2,m}^{\rm OG, \star} = J_{2,m}$. 
(If~$\v\in J_{2,m}\NH$, then the vanishing of the residues~$R_{m_{1}m_{2}}(\v)$ for all decompositions 
with~$(m_{1},m_{2})=1$ implies that the polar coefficients~$C_{\v}(-1,r)$ are all equal to one another 
and to~$-\half c_{\v}(0,0)$, so that imposing any of the conditions~$(\rm I)$, $(\rm II)$, $(\rm III)$ 
forces~$\v$ to be holomorphic. Conversely, if~$\v\in J_{2,m}$, then the conditions $\rm OG$, (I), (II) are 
trivial, and (III) holds because $\CR[\v]=0$.) It follows that the codimension of~$J_{2,m}$ 
in~$J_{2,m}^{\rm OG}$ is also one.

\subsection{The residues of mock Jacobi forms} \label{MJFappl}

We can apply the general theorems of the previous subsection 
to mock Jacobi forms. In particular, if we have any weak mock 
Jacobi form of weight $k\ge3$ and index $m$, it will be in $\wt \elliptic_{\pm,m}$, and can be 
corrected by a weak Jacobi form to get a holomorphic mock Jacobi form. 
In the case $k=2$, there is an obstruction $R(\v)$ to this being true, but, by the way we defined it, this
obstruction can only depend on the shadow of~$\v$. In this subsection, we show how to compute $R(\v)$ 
explicitly in terms of the shadow. Since $R(\v)$ is the vector with components $\CR\bigl[\v|u_{(m_1,m_2)}|W_{m_1/(m_1,m_2)}\bigr]$,
it suffices to compute $R[\v]$ for a weak mock Jacobi form of weight~2 in terms of the shadow of~$\v$.
 
Let $\v(\t,z)$ be a (strong or weak) mock Jacobi form of weight 2 and index $m$. Then, by definition, $\v$ has a 
theta expansion \eqref{jacobi-theta} where each $h_{\ell}$ is a (in general weakly holomorphic) mock modular form 
of weight~3/2 and has a completion $\wh h_{\ell} = h_{\ell} + g_\ell^*$ for some modular form $g_{\ell}$ of weight~1/2 
such that the completion 
$\wh \v = \sum_{\ell} \, \wh h_{\ell} \, \vth_{m,\ell} = \v + \sum_{\ell} \, g_{\ell}^{*} \, \vth_{m,\ell}$
transforms like a Jacobi form of weight 2 and index $m$.

As usual, we denote by $\vth_{m,\ell}^0(\t)$ the Thetanullwert $\vth_{m,\ell}(\t,0)$. 
\begin{thm} \label{Rphishad}
Let $\v$ be as above. Then
\be \label{Rphivalue}
\CR[\v] \= \frac{\sqrt{\pi}}{6} \, \sum_{\ell \mypmod{2m}} \bigl(\vartheta_{m,\,\ell}^{0}, g_\ell \bigr)\,, 
\ee
where $(\,\cdot\,,\,\cdot\,)$ denotes the Petersson scalar product in weight 1/2.
\end{thm}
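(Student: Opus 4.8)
The plan is to evaluate the residue $\CR[\v]=\sum_{r\in\Z}c_\v(0,r)$ by relating the $q^0$-coefficient of $\v(\t,0)$ to the $q^0$-coefficient of its completion $\wh\v(\t,0)$ and then using the fact that $\wh\v(\t,0)$ is a non-holomorphic modular form of weight $2$. Since $\wh h_\ell=h_\ell+g_\ell^*$ with $g_\ell^*$ as in \eqref{defstar}, and since the Eichler integral $g_\ell^*$ contributes only to the non-holomorphic part, we have $\wh\v(\t,z)=\v(\t,z)+\sum_\ell g_\ell^*(\t)\,\vth_{m,\ell}(\t,z)$. Setting $z=0$ and extracting the constant term with respect to $\t_1$ (i.e., the $q^0$ Fourier coefficient, which here means the genuine constant term of the real-analytic function, picking up contributions from both $\v$ and the $g_\ell^*$ terms), I would first observe that $\v(\t,0)$ is holomorphic with constant term $\CR[\v]$, while each $g_\ell^*(\t)\vth_{m,\ell}^0(\t)$ is a product of a (completed) Eichler integral and a holomorphic theta-nullwert whose constant term I need to isolate.

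The key computation is that the $\t_1$-average (constant term) of $g_\ell^*(\t)\,\vth_{m,\ell}^0(\t)$, as a function of $\t_2$, is a constant (independent of $\t_2$) times a Petersson-type pairing. This is exactly the Rankin--Selberg / constant-term mechanism: for a weight-$1/2$ holomorphic form $\vth_{m,\ell}^0=\sum b_n q^n$ and the non-holomorphic Eichler integral $g_\ell^*$ of the weight-$1/2$ form $g_\ell=\sum c_n q^n$, the product $g_\ell^*\cdot\vth_{m,\ell}^0$ has a $\t_1$-independent term whose value is, up to the explicit normalization constants in \eqref{defstar}–\eqref{erfc}, precisely $\sum_n c_n\overline{b_n}$ weighted appropriately — and by the Rankin--Selberg formula \eqref{RS} this is a multiple of the Petersson product $(\vth_{m,\ell}^0,g_\ell)$ in weight $1/2$. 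I would make this precise by writing $g_\ell^*(\t)=\sum_{n\ge0}n^{-1/2}\overline{c_n}\,\G(-1/2,4\pi n\t_2)q^{-n}$ (with the $k=1/2$ specialization and the $n=0$ term handled separately, noting $b_0$ or $c_0$ may be nonzero only in degenerate cases), multiplying by $\sum b_n q^n$, and selecting the $q^0$ part: this gives $\sum_{n>0}\overline{c_n}b_n\,n^{-1/2}\,\G(-1/2,4\pi n\t_2)$, and then invoking the known evaluation that the $\t_1$-average of $\wh\v(\t,0)$ must itself be $\t_2$-independent (because a weight-$2$ real-analytic modular form with polynomial growth has a constant term that is a linear combination of $1$ and $\t_2^{-1}$, and the $\t_2^{-1}$ term is forced to vanish here by matching the two sides). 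Taking the limit $\t_2\to\infty$ (or $\t_2\to 0$ via the functional equation) then pins down the constant and produces the Rankin--Selberg residue, which by \eqref{RS} equals a universal constant times $(\vth_{m,\ell}^0,g_\ell)$.

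So the steps in order are: (1) write $\wh\v(\t,0)=\v(\t,0)+\sum_\ell g_\ell^*(\t)\vth_{m,\ell}^0(\t)$ and take the full constant term of both sides; (2) use that $\wh\v(\t,0)\in\mathfrak M_2$ has a constant term that is a polynomial in $\t_2^{-1}$ of degree $\le 1$, and that — since $\wh\v$ transforms like a \emph{holomorphic} weight-$2$ Jacobi form at $z=0$, hence like a holomorphic weight-$2$ modular form — this constant term is actually just a constant, which moreover must vanish by the residue theorem (the $SL(2,\Z)$-invariant $1$-form $\wh\v(\t,0)\,d\t$ would otherwise have nonzero residue; care is needed because $\wh\v(\t,0)$ is only real-analytic, but the argument still applies to its holomorphic part after accounting for the known $\bar\t$-derivative \eqref{ddtbarh}); (3) deduce $\CR[\v]=-\sum_\ell\big(\text{constant term of }g_\ell^*\vth_{m,\ell}^0\big)$; (4) evaluate each constant term by the Rankin--Selberg method of \S\ref{modularbasic}, tracking the normalization factor in \eqref{defstar} with $k=1/2$, to get $\big(\text{const.\ term of }g_\ell^*\vth_{m,\ell}^0\big)=-\tfrac{\sqrt\pi}{6}(\vth_{m,\ell}^0,g_\ell)$ with the particular constant $\sqrt\pi/6$ emerging from $\Gamma(k)/(4\pi)^k$ at $k=3/2$ combined with the $2\sqrt{\pi/m}$ renormalization mismatch mentioned in the footnote to \eqref{defEichler3/2}.

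The main obstacle I anticipate is step (2): carefully justifying that the constant term of $\wh\v(\t,0)$ vanishes, given that $\wh\v(\t,0)$ is genuinely non-holomorphic. One cannot directly apply the residue theorem for weight-$2$ holomorphic forms; instead I would argue that $\wh\v(\t,0)$ transforms like a modular form of weight $2$ and is real-analytic of moderate growth, so its constant Fourier coefficient (in $\t_1$) has the shape $\alpha+\beta\t_2^{-1}$, and then pair it against $d\mu$ over a fundamental domain (or use the unfolding of a non-holomorphic Eisenstein series) to see that the only way it can be a legitimate weight-$2$ automorphic object is for $\alpha$ to be what the Rankin--Selberg machine says and for the total to balance — in other words, the identity \eqref{Rphivalue} is really the statement that these two independent ways of computing the same constant agree. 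A clean alternative, which I would actually prefer to carry out, is to bypass the vanishing issue entirely: apply the holomorphic projection operator to $\wh\v(\t,0)$, which lands in $M_2(SL(2,\Z))=\{0\}$, so its $q^0$ coefficient is zero; the holomorphic projection of $\v(\t,0)$ has $q^0$ coefficient $\CR[\v]$, and the holomorphic projection of $\sum_\ell g_\ell^*\vth_{m,\ell}^0$ has $q^0$ coefficient equal (by the standard holomorphic-projection formula for products of an Eichler integral and a holomorphic form) to exactly $\tfrac{\sqrt\pi}{6}\sum_\ell(\vth_{m,\ell}^0,g_\ell)$ up to sign, giving \eqref{Rphivalue} directly. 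The details of the holomorphic-projection constant are routine but must be done with the specific normalizations of this paper.
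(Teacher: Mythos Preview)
Your overall strategy---relate $\CR[\v]$ to a Petersson-type quantity via the completion $\wh\v(\t,0)$---is right, but both concrete implementations you propose have a genuine gap, and the paper takes a different (and cleaner) route.

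The paper applies Stokes's theorem to the $\Gamma$-invariant $1$-form $\omega=\wh\v(\t,0)\,d\t$ on a truncated fundamental domain $\CF(T)$. Since $\wh\v(\t,0)$ is not holomorphic, $d\omega=-\frac{\partial}{\partial\bar\t}\wh\v(\t,0)\,d\t\,d\bar\t$ is nonzero and, by the analogue of \eqref{ddtbarh}, equals $\frac{1}{2\sqrt\pi}F(\t)\,d\mu$ with $F=\t_2^{1/2}\sum_\ell\overline{g_\ell}\,\vth_{m,\ell}^0$. Thus $\int_{\CF}d\omega$ is, by the Petersson normalization \eqref{Peterkabeta}, exactly $\frac{\sqrt\pi}{6}\sum_\ell(\vth_{m,\ell}^0,g_\ell)$. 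On the other side, $\int_{\partial\CF(T)}\omega$ collapses to the top edge $\t_2=T$ (the remaining edges cancel in $\Gamma$-equivalent pairs), and as $T\to\infty$ the $g_\ell^*\vth_{m,\ell}^0$ contribution is $O(T^{-1/2})$, leaving $\int_0^1\v(\t_1+iT,0)\,d\t_1\to\CR[\v]$.

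Your step~(2) asserts that the constant Fourier coefficient (in $\t_1$) of $\wh\v(\t,0)$ is a constant in $\t_2$ and vanishes. Neither holds: that coefficient equals $\CR[\v]$ plus a sum of terms involving $\Gamma(1/2,4\pi n\t_2)$, a genuinely transcendental function of $\t_2$ (not of the shape $\alpha+\beta\t_2^{-1}$). It tends to $\CR[\v]$ as $\t_2\to\infty$ but has no reason to vanish. The Stokes identity is exactly what replaces the na\"{\i}ve ``residue theorem in weight~$2$'' here: it quantifies how the non-closedness of $\omega$ contributes.

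Your holomorphic-projection alternative does not rescue this. Holomorphic projection maps into $S_2(SL(2,\Z))=\{0\}$, which constrains only the $q^n$-coefficients for $n\ge1$ and says nothing about the $q^0$ term; and you cannot split $\wh\v(\t,0)=\v(\t,0)+\sum_\ell g_\ell^*\vth_{m,\ell}^0$ and project each summand, since neither piece is modular and holomorphic projection is not defined for them individually. The Stokes argument is what makes your intuition rigorous.
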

\begin{proof}
With the normalization of the Petersson scalar product as given in \eqref{Peterkabeta},
we have
\be
\sum_{\ell \mypmod{2m}}  \bigl(\vartheta_{m,\ell}^{0}, g_\ell \bigr) \= \frac{3}{\pi} \int_{\CF} \, F(\t) \, 
d \mu(\t) \; , 
\ee
where
\be \label{defFtau}
F(\t) \=  \t_{2}^{1/2}\, \sum_{\ell \mypmod{2m}}  \overline{g_{\ell}(\t)} \,  \vth_{m,\ell}^{0}(\t) \; , 
\ee
and $\CF$ denotes a fundamental domain for $\G \equiv SL_{2}(\IZ)$. This makes sense since the function 
$F(\t)$ is $\G$-invariant (the individual terms in \eqref{defFtau} are not, which is why we had to 
be a little careful in the normalization of the scalar product) and is convergent because the weight of the
modular forms $g_\ell$ and $\vth_{m,\ell}^0$ is less than~1. From eq.~\eqref{starinv} it follows that the
$\bar\t$-derivative of the completion
 \be\label{phihatismock}
  \wh\v(\t,0) \= \v(\t,0) \+ \sum_{\ell \mypmod{2m}}  g_{\ell}^{*}(\t) \,  \vth_{m,\ell}^{0}(\t) \, , \ee
 of $\v(\t,0)$ satisfies
\be \label{Fisshadow} \frac{\p}{\p \bar\t} \, \wh \v (\t,0) \= \frac{1}{4i\sqrt\pi} \frac{F(\t)}{\t_2^{2}} \,. \ee

The fact that $\wh \v(\t,0)$ transforms like a modular form of weight 2 on $\G$ 
implies that the differential one-form $\w = \wh \v(\t,0) \, d\t$ is $\G$ invariant, and 
equation \eqref{Fisshadow} implies 
\be  d\o \= d\big(\wh\v(\t,0)\,d\t\big) \= -\frac{\p\,\wh\v(\t,0)}{\p \tbar} \, d\t \, d\tbar 
\= \frac{1}{2 \sqrt\pi} \, F(\t) \, d \mu(\t) \; .
\ee
Therefore, by Stokes's theorem, we have 
\be
\frac{\sqrt\pi}{6} \, \sum_{\ell \mypmod{2m}}  \bigl(\vartheta_{m,\ell}^{0}, g_\ell \bigr) \= \int_{\CF} d \o \= \int_{\p \CF} \w 
\= \lim_{T \to \infty} \int_{\p \CF(T)} \w
\ee
where $\CF(T)$ is the truncated fundamental domain 
\be
\{\t \in \CH \mid |\t|>1\, , \; |\t_{1}| < \frac12 \, , \; \t_{2} < T \} \; .
\ee
On the other hand, we have 
\be
\int_{\p \CF(T)} \w \= \int_{-\frac12 + iT}^{\frac12 + iT} \w
\= \int_{-\frac12 + iT}^{\frac12 + iT} \v(\t,0) \, d\t  \+ O\big(\frac{1}{\sqrt{T}}\big) \; .
\ee
The first equality holds because $\w$ is $\G$-invariant, and 
all the edges of $\CF(T)$ except the top edge $\t_{2}=T$
come in pairs which are $\G$-equivalent, but of opposite orientation. 
The second equality follows from \eqref{phihatismock}, and 
because $\vth^{0}_{m,\ell}$ is $O(1)$ and $g_{\ell}^{*}(\t)$ is $O(T^{-1/2})$ by \eqref{defstar} 
with $k=3/2$. The theorem then follows.
\end{proof}
\ndt The theorem immediately implies a more general formula for the residue 
$R_{m_1,m_2}(\v) = \CR\bigl[\v\bigl|W_{m_1}\bigr]$ when $m=m_{1}m_{2}$ with 
$(m_{1}, m_{2})=1$  in terms of Petersson scalar products, namely, 
\be \label{Rm1m2ans}
R_{m_1,m_2}(\v)  \= \frac{\sqrt{\pi}}{6} \, \sum_{\ell \mypmod{2m}} \bigl(\vartheta_{m,\,\ell}^{0}, g_{\ell^{*}} \bigr)\,, 
\ee
where $\ell \mapsto \ell^{*}$ is the involution on $(\IZ/2m\IZ)^{*}$ appearing in the definition of~$W_{m_{1}}$.

\subsection{Remaining proofs of theorems from \S\protect\ref{flyJac} \label{proofs}}

In this subsection, we apply the results proved so far to the special family studied in~\S\ref{flyJac}, namely,
the meromorphic Jacobi forms~$\v_{2,m}$ having a pole~$1/(2 \pi i z)^{2}$ at the origin
and their associated mock Jacobi forms~$\Phi_{2,m}$. Specifically, we shall prove 
Theorems \ref{phimQm}, \ref{QmOG}, \ref{chooseKm}, \ref{PolarCoeffs}, and \ref{Wt1OSP}.

We begin with a preliminary remark.
Two of the results we need to prove are Theorem~\ref{phimQm} and the statement that the 
functions~$\Phi_{2,m}^{0}$ 
defined implicitly by~\eqref{Phi0mimp} are primitive modulo weak Jacobi forms. Both of these have the form 
that a certain elliptic form is in fact a (weak) Jacobi form, i.e., that its shadow vanishes. Since the forms 
in question are defined using the Hecke-like operators defined in~\S\ref{hecke}, we first have to 
explain how the action of these operators extends to non-holomorphic elliptic forms, since then 
we can apply them to the completions of our mock Jacobi forms and verify the necessary identities 
between shadows. The definitions of the Hecke-like operators were given in terms of the Fourier coefficients~$c(n,r)$ 
or $C(\DD,\ell)$ defined by~\eqref{fourierjacobi} and \eqref{cnrprop}.
In the case of the completions of mock modular forms~$\Phi$, when we apply equation~\eqref{defUl} or~\eqref{defVl} 
for the action of $U_{t}$ or $V_{t}$ with the~$\v$ there replaced by~$\Phi^{C}:=\wh \Phi - \Phi$ 
(the ``correction term'' discussed at the end of~\S\ref{MockDecomposition}), 
we must interpret~$c(n,r)$ as the 
coefficient of $\b(|n|\t_{2}) \, q^{n} \, y^{r}$ rather than simply $q^{n} y^{r}$ as in the holomorphic case,
where $\b(t)=\sqrt{\pi/4m} \; \erfc(2\sqrt{\pi t})$ in the weight~2 case, 
and similarly for equation~\eqref{Ucech2} and $C(\DD, \, r\! \mypmod{2m})$. 
In view of~\eqref{phi2mhat} and~\eqref{defstar}, these coefficients in the case of~$\Phi_{2,m}$ are given by  
\be \label{phiCcfs}
C(\Phi^{C}_{2,m} ; \DD , \, \ell \mypmod{2m}) \= \sum_{\l^{2} = - \DD \atop  \l \equiv \ell \mypmod{2m}} |\l| \; . 
\ee

\bigskip

\ndt {\bf Proof of Theorem~\ref{phimQm}}:
We have to show that for $m=P_{1} \cdots P_{r}$ ($P_{i} = p_{i}^{\nu_{i}}$, $p_{i}$ distinct primes) 
we have 
\be
\Phi_{2,m} \, \big| \, \prod_{i=1}^{r} (1+\ve_{i} W_{P_{i}}) \, \equiv \, 
\Bigl(\Phi_{2,M} \, \big| \, \prod_{p|M} (1-W_{p}) \Bigr) \,\bigl|\,\CV_{2,m/M}^{\,(M)} \, , 
\ee
where $M$ is the product of the primes~$p_{i}$ for which~$\ve_{i}=-1$ and ``$\, \equiv \, $''
means that the two sides of the equation differ by a weak Jacobi form. Because of the 
multiplicativity of the Hecke-like operators $U_{d}$, $V_{t}$ and $W_{m_{1}}$, this formula 
can be rewritten as 
\be
\Phi_{2,m} \, \big| \, \prod_{i=1}^{r} (1+\ve_{i} W_{P_{i}}) \, \equiv \, 
\Phi_{2,M} \, \big| \,  \prod_{\ve_{i}=-1} (1-W_{p_{i}}) \CV_{P_{i}/p_{i}}^{(p_{i})} \, \big| \, 
\prod_{\ve_{i}=+1} \CV_{P_{i}}^{(1)} \, , 
\ee
so by induction on $r$, it suffices to show that if $P=p^{\nu} || m$ then
\be\label{induction}
\Phi_{2,m} \! \mid \! (1\+W_{P}) \, \equiv \, \Phi_{2,m/P}  \! \mid \! \CV_{2,P}^{(1)} \, , \qquad 
\Phi_{2,m} \! \mid \! (1\, - \, W_{P}) \, \equiv \, \Phi_{2,mp/P}  \! \mid \! (1-W_{p}) \, \CV_{2,P/p}^{(p)} \, .
\ee 
For this, it is enough to show that in each the difference of the left-hand side and the right-hand side 
has shadow zero. 

For the first equation in \eqref{induction}, eq.~\eqref{phiCcfs} implies that 
the ``correction term'' for the left-hand side is given by 
\bea\label{LHS1}
&& C(\Phi^{C}_{2,m} | (1+W_{P}) \, ; \DD , \ell \mypmod{2m})  =  C(\Phi^{C}_{2,m} \, ; \DD , \ell \mypmod{2m}) + C(\Phi^{C}_{2,m} \, ; \DD , \ell^{*} \mypmod{2m}) \cr
&& \; =  \begin{cases} D (\d^{(2m)}_{D,\ell} + \d^{(2m)}_{D,-\ell} + \d^{(2m)}_{D,\ell^{*}}
+ \d^{(2m)}_{D,-\ell^{*}}) & \text{if $\DD=-D^{2}$, for some $D \in \IN$} \\
0 & \text{if $-\DD$ is not a perfect square}
\end{cases} 
\eea
where $\ell^{*} \mypmod{2m}$ is defined by 
$\ell^{*} \equiv  -\ell \mypmod{2P}$, $\ell^{*} \equiv +\ell \mypmod{2m/P}$, and the notation~$\d^{(n)}_{a,b}$
means $\d_{a,b \mypmod{n}}$.
The correction term for the right-hand side, on the other hand, is given 
(since $\CV_{2,P}^{(1)}=V_{2,P}-pV_{2,P/p^{2}} U_{p}$, where the second term is omitted if $p^{2} \nmid P$) 
by
\bea
&& \sum_{d|(\frac{\DD+\ell^{2}}{4m},\ell,P)} \! d \, C\Big(\Phi^{C}_{2,\frac{m}{P}} \, ; 
\frac{\DD}{d^2}, \frac{\ell}{d}\mypmod{2\frac{m}{P}} \Big) \; \quad -  p \sum_{d|(\frac{\DD+\ell^{2}}{4m},
\ell,\frac{P}{p^{2}})} \! d \, C\Big(\Phi^{C}_{2,\frac{m}{P}} \, ; \frac{\DD }{p^{2} d^2}, 
\frac{\ell}{pd}\mypmod{2\frac{m}{P}} \Big) \cr
&& \quad \= \sum_{d|(\frac{\DD+\ell^{2}}{4m},\ell,P)} 
\sum_{\l^{2} = - \DD  \atop  \l \equiv \ell \mypmod{\frac{2md}{P}}} |\l| \; \quad - 
\sum_{d|(\frac{\DD+\ell^{2}}{4m},\ell,\frac{P}{p^{2}})} 
\sum_{\l^{2} = - \DD  \atop \l \equiv \ell \mypmod{\frac{2md}{P}}} |\l| \; , \cr
&& \quad \= \sum_{d \in \{P,P/p\} \atop d|(\frac{\DD+\ell^{2}}{4m},\ell,P)} 
\sum_{\l^{2} = - \DD  \atop  \l \equiv \ell \mypmod{\frac{2md}{P}}} |\l|  \; ,
\eea
where the last line holds because the terms in the first sum in the second line 
with~$d|Pp^{-2}$ cancel with the terms in the second sum.
The terms with~$d=P$ in the third line give the first two terms $\sqrt{-D}=\pm \ell^{*}$ 
of the right hand side of~\eqref{LHS1},
while the terms with~$d=P$ in the third line give the second two terms $\sqrt{-D}=\pm \ell^{*}$.

The correction term for the left-hand side of the second equation in \eqref{induction} is given 
by the the right-hand side of~\eqref{LHS1} with the signs of the last two terms changed. 
This is now equal to zero unless~$(\ell,m)=1$, in which case it is equal to~$\sqrt{-\DD}$ (resp.~$-\sqrt{-\DD}$) 
if $D \equiv \pm \ell \mypmod{2m}$ (resp.~$D \equiv \pm \ell^{*} \mypmod{2m}$).
The correction term for the right-hand side, on the other hand, is given by
\bea\label{RHS2}
&& \sum_{d|(\ell,\frac{P}{p},\frac{\DD+\ell^{2}}{4m})} \! d \, \bigg( C\Big(\Phi^{C}_{2,mp/P} \, ; 
\frac{\DD}{d^2} \,, \frac{\ell}{d}\mypmod{\frac{2mp}{P}} \Big) \, - \, C\Big(\Phi^{C}_{2,mp/P} \, ; 
\frac{\DD}{d^2} \,, \frac{\ell^{*}}{d}\mypmod{\frac{2mp}{P}} \Big) \bigg)\cr
&& \qquad \=  \sum_{d|(\ell,\frac{P}{p},\frac{\DD+\ell^{2}}{4m})} 
\Big(\sum_{\l^{2} = - \DD  \atop  \frac{\l}{d} \equiv \frac{\ell}{d} \mypmod{\frac{2mp}{P}}}
\, - \, \sum_{\l^{2} = - \DD  \atop  \frac{\l}{d} \equiv \frac{\ell^{*}}{d} \mypmod{\frac{2mp}{P}}} \Big) |\l| \; ,
\eea
where $\ell^{*} \mypmod{2mp/P}$ is now defined by 
$\ell^{*} \equiv  -\ell \mypmod{2p}$, $\ell^{*} \equiv +\ell \mypmod{2m/P}$.
If~$p|\ell$, then~$\ell^{*}=\ell$ and the expression vanishes, so that only~$d=1$
contributes in the first summation. The condition~$4m|(\DD+\ell^{2})$ combined 
with~$\l \equiv \ell \mypmod{2mp/P}$ implies~$\l \equiv \ell \mypmod{2m}$ (and similarly 
for $\ell^{*}$ since $\ell^{2} \equiv \ell^{*2} \mypmod{2mp/P}$), thus proving the second 
equation in~\eqref{induction}.  \hfill $\square$

\bigskip

\ndt {\bf Proof of Theorem~\ref{QmOG}}:
By M\"obius inversion, the mock Jacobi forms~$\Phi_{2,m}$  associated to any choice of minimal meromorphic 
Jacobi forms~$\v_{2,m}$ can be represented in the form~\eqref{Phi0mimp}, where~$\Phi_{2,m}^{0}$ 
is defined by 
  \be \label{defPhi02m}
  \Phi^{0}_{2,m} \= \sum_{d^{2} \mid m} \mu(d) \; \Phi_{2,m/d^{2}}\,|\,U_{d}  \, .
  \ee
We have to show that $\Phi^{0}_{2,m}$ can be chosen to be primitive and to have optimal growth.

To prove the second of these statements, it is enough to show that $\Phi^{0}_{2,m}| u_{p}$ has no shadow 
for all primes $p$ with $p^{2} | m$. 
Then, by Theorem~\ref{Emstar}, the weak elliptic form~$\Phi^{0}_{2,m}$ can be chosen (by adding 
a weak Jacobi form) to have optimal growth.
Writing the square divisors of~$m$ as $d^{2} p^{2i}$ with $p \nmid d$ and noting that $\mu(d p^{i})=0$ for $i>1$ 
and that $U_{dp}=U_{d} U_{p}$, $U_{d} u_{p} = u_{p} U_{d}$, and $U_{p} u_{p} = p$, 
we find from~\eqref{defPhi02m} that  
\bea
\Phi^{0}_{2,m} \! \mid \! u_{p} &\=& \sum_{d^{2}\mid m, \, p \nmid d} \mu(d) 
\big[\Phi_{2,m/d^{2}} \! \mid \! U_{d} \, - \, \Phi_{2,m/d^{2} p^{2}} \! \mid \! U_{d}  \! \mid \! U_{p} \big] \mid u_{p} \cr
& \= & \sum_{d^{2}\mid m, \, p \nmid d} \mu(d) 
\big[\Phi_{2,m/d^{2}} \! \mid \! u_{p} \, - \, p \, \Phi_{2,m/d^{2} p^{2}} \big] \mid U_{d} \, , \nonumber
\eea
so it suffices  to show that $\Phi_{2,m} | u_{p}$ and $p\, \Phi_{2,m/p^{2}}$ 
have the same shadow for any $m$ with~$p^{2}|m$. But the shadow of $\Phi_{2,m} = \v^{\rm F}_{2,m}$ 
is determined by $\v^{C}_{2,m} = \wh \Phi_{2,m} - \Phi_{2,m}$ (cf.~comments at the end of \S\ref{MockDecomposition}), 
and by \eqref{phiCcfs}, we have: 
\be
C(\v^{C}_{2,m} \, ; \DD p^{2}, \, rp  \mypmod{2m} ) 
 \=  \begin{cases} p |\l| & \text{if $\DD = - \l^{2}$, \,  $\l \equiv r \mypmod{2m/p}$} \, ,  \\ 
\; 0 & \text{otherwise} \, .
\end{cases}
\ee
Then, from \eqref{Ucech2}, we have:
\bea
C(\v^{C}_{2,m}|u_{p} \, ; \DD, \, \ell \mypmod{2m/p^{2}}) 
&\= &\sum_{r \mypmod{2m/p} \atop r \equiv \ell \mypmod{2m/p^{2}}} 
C(\v^{C}_{2,m} \, ;  \DD p^{2} ,\, rp  \mypmod{2m} ) \; , \cr
& \= & \begin{cases} p |\l| & \text{if $\DD= - \l^{2}$, \,  $\l \equiv \ell \mypmod{2m/p^{2}}$} \, ,  \\ 
\; 0 & \text{otherwise} \, .
\end{cases} \cr
& \= & p \, C(\v^{C}_{2,m/p^{2}} \, ; \DD,\ell) \, .
\eea
This completes the proof of the fact that $\Phi^{0}_{2,m}$ can be chosen to have optimal growth. 
Now, by formula~\eqref{Ucech2}, we have that~$\Phi^{0}_{2,m} | u_{t}$, and hence 
also~$\Phi^{0}_{2,m} | u_{t}|U_{t}$, is holomorphic for $t >1$, so by replacing~$\Phi^{0}_{2,m}$ 
by~$\pi^{\rm prim}(\Phi^{0}_{2,m})$ as defined in~\eqref{primproj} we can also assume 
that~$\Phi^{0}_{2,m}$ is primitive.  \hfill $\square$

\bigskip

\ndt {\bf Proof of Theorem~\ref{chooseKm}}:
The proof of Theorem~\ref{Emstar} given in~\S\ref{optimalforms} the following discussion already showed 
that $J_{2,m}\NH/J_{2,m}$ is one-dimensional and that any representative $\CK_{m}$ of this 
quotient space, if we normalize it to have $c(\CK_{m}\,; 0,0)=2$, has polar coefficients as given in 
the statement of Theorem~\ref{chooseKm}. Since all the polar coefficients $C(\CK_{m} \, ; -1,r)$ are equal, 
we have that~$\CK_m|W_{m_{1}}-\CK_m \in J_{2,m}$ for all $m_{1} || m$, so (by averaging) we can 
choose~$\CK_m$ to be invariant under all Atkin-Lehner operators. Finally, we can make~$\CK_{m}$ 
primitive by the argument used at the end of the proof of Theorem~\ref{QmOG}. 
\hfill{$\square$}
\vspace{0.2cm}

\ndt {\bf Remark}. For square-free $m$, since all holomorphic Jacobi forms are in fact cusp forms, $\CK_{m}$
is unique up to Jacobi cusp forms, and its coefficients with $\DD=0$ are unique. In this case the only such 
coefficient, up to translation, is~$c(0,0)$. If~$m$ is not square-free, there is more than one coefficient~$c(n,r)$ such 
that~$\DD=0$. Notice that $r^{2}\equiv 0 \mypmod{4m} \Leftrightarrow (r,2m)=2m/t$ with~$t^{2}|m$.
For each~$t$ with $t^{2}|m$, we have 
\be\label{ckm00}
\sum_{j \mypmod{t}} C(\CK_{m} \, ; 0, \frac{2mj}{t} \mypmod{2m}) \= 
\begin{cases} -2   & \text{if $t=1$} \, , \\ \; 0  & \text{if $t>1$}\, . \end{cases} 
\ee
(For~$t=1$, the sum on the left reduces to the single term~$c(\CK_{m}\,;0,0)=-2$, and for~$t>1$,
it equals~$C(\CK_{m}|u_{t}; 0,0)$, which vanishes because~$\CK_m | u_{t} \in J_{2,m/t^{2}}$.)
We can choose the~$\DD=0$ coefficients arbitrarily subject to this one constraint, and once we have 
made a choice, then~$\CK_{m}$ is unique up to cusp forms even in the non-square-free case. 
For instance, at the expense of introducing denominators, we could fix~$\CK_{m}$ uniquely up to cusp forms by requiring 
that~$C(\CK_{m};0,r)$ depends only on the gcd of~$r$ and~$2m$. (This would correspond to choosing~$a=\half$
in the first and third tables at the end of~\S\ref{Statements}.) Another choice, more similar to our ``$\star = {\rm II}$''
condition, would be to choose~$C(\CK_{m};0,r)=0$ for all~$r$ with~$4m|r^{2}$ and~$r \nmid 2m$, in which case
the remaining coefficients~$C(\CK_{m};0,2m/t)$ ($t^{2}|m$) are uniquely determined by~\eqref{ckm00}. (This would 
correspond to choosing~$a=1$ in the first of the tables at the end of~\S\ref{Statements}.) The reader can easily
work out the complete formula for all of the coefficients~$C(\CK_{m};0,r)$ for either of these two special choices.

Our next task is to prove Theorem~\ref{PolarCoeffs}. For this we need to calculate the residues of~$\Phi_{2,m}$,
using the results of the previous subsection. We therefore have to apply the formula~\eqref{Rm1m2ans} to the mock modular 
form~$\Phi_{2,m}=\v_{2,m}^{\rm F}$. (Note that the residue of this form will be independent 
of the specific choice of~$\v_{2,m}$, since any two choices differ by a Jacobi form
and the residues of Jacobi forms vanish.) Since the completion of $\Phi_{2,m}$ 
involves the weight~1/2 unary theta series~$\vth_{m,\ell}^{0}\,$, the first step is to 
compute the scalar products of these theta series. 
\begin{proposition} \label{thetapeter}
For $m \in \IN$, and $\ell_{1} , \ell_{2} \in \IZ/2m\IZ$, we have
\be
(\vartheta_{m, \, \ell_{1}}^0 \,, \,\vartheta_{m, \, \ell_{2}}^0) \= \frac{1}{2 \sqrt{m}} \big( 
\delta_{\ell_{1}, \ell_{2} \! \mypmod{2m}}
+ \delta_{\ell_{1}, -\ell_{2} \! \mypmod{2m}} \big)
\ee
\end{proposition}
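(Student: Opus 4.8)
The plan is to compute the Petersson scalar product of the unary theta series $\vartheta_{m,\ell}^0(\tau)$ of weight $1/2$ directly via the Rankin--Selberg formula~\eqref{RS}, which is legitimate here because the weight is less than $1$ (so the integral converges) and because these are genuine holomorphic modular forms, not just mock objects. First I would recall from~\eqref{thetadef} that $\vartheta_{m,\ell}^0(\tau) = \sum_{r \equiv \ell \,(2m)} q^{r^2/4m}$, so its Fourier expansion is $\sum_{\lambda \geq 0} a_\lambda q^\lambda$ where $a_\lambda$ counts the number of $r \in \IZ$ with $r \equiv \ell \pmod{2m}$ and $r^2/4m = \lambda$; equivalently, writing a general exponent as $\lambda = r^2/4m$, the coefficient $a_{r^2/4m}$ picks up a contribution $+1$ from $r$ and, when $r \not\equiv 0 \pmod{2m}$, also $+1$ from $-r$ (which satisfies $-r \equiv \ell' \pmod{2m}$ only if $\ell \equiv -\ell$, so one must be slightly careful — I would organize this by summing over $r$ in a fixed residue class and noting the overlap between the classes $\ell_1$ and $-\ell_1$).

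Next I would form the Dirichlet series $\sum_{\lambda > 0} a_\lambda^{(1)} \overline{a_\lambda^{(2)}} \lambda^{-s}$ appearing in~\eqref{RS} for $f = \vartheta_{m,\ell_1}^0$ and $g = \vartheta_{m,\ell_2}^0$. Since $a_\lambda^{(i)}$ is supported on $\lambda$ of the form $r^2/4m$, the product of coefficients is nonzero only when the same value $\lambda = r^2/4m$ is hit by both a class-$\ell_1$ integer and a class-$\ell_2$ integer, which (for $r$ in the relevant range, and taking the nonnegative representative) forces $\ell_1 \equiv \pm \ell_2 \pmod{2m}$. In that case the Dirichlet series becomes, up to the bookkeeping of the $r \leftrightarrow -r$ multiplicity, a multiple of $\sum_{r > 0,\, r \equiv \ell_1 (2m)} (r^2/4m)^{-s} = (4m)^s \sum_{r>0,\,r\equiv\ell_1(2m)} r^{-2s}$, which is $(4m)^s (2m)^{-2s}$ times a Hurwitz zeta function $\zeta(2s, \ell_1/2m)$ plus its partner. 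I would then take the residue at $s = k = 1/2$: the Hurwitz zeta function $\zeta(2s, \cdot)$ has a simple pole at $s = 1/2$ with residue $1/2$ independent of the shift parameter, so the residue of the whole combination is a clean constant, and multiplying by $\Gamma(k)/(4\pi)^k = \Gamma(1/2)/(4\pi)^{1/2} = \sqrt{\pi}/(2\sqrt{\pi}) = 1/2$ should produce exactly $\frac{1}{2\sqrt m}$ times the Kronecker-delta combination $\delta_{\ell_1,\ell_2} + \delta_{\ell_1,-\ell_2}$ asserted in the proposition.

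The main obstacle I anticipate is purely combinatorial rather than analytic: keeping straight the multiplicities in the Fourier coefficients $a_\lambda$ when $\ell_1 \equiv -\ell_1 \pmod{2m}$ (i.e.\ $\ell_1 \in \{0, m\}$), where the classes $\ell_1$ and $-\ell_1$ coincide and the ``$\delta_{\ell_1,\ell_2} + \delta_{\ell_1,-\ell_2}$'' on the right-hand side degenerates to $2\delta_{\ell_1,\ell_2}$ — this has to be matched by the fact that in those exceptional classes the coefficient $a_{r^2/4m}$ really does get two contributions ($r$ and $-r$ lie in different classes in general but the same class here, so the doubling happens on the other side). I would handle this by doing the computation uniformly: write $a_\lambda^{(\ell)} = \#\{r \in \IZ : r \equiv \ell \,(2m),\ r^2 = 4m\lambda\}$ without folding to $r \geq 0$, so that the full bilinear sum $\sum_\lambda a_\lambda^{(\ell_1)}\overline{a_\lambda^{(\ell_2)}}\lambda^{-s}$ is literally $\sum_{r \equiv \ell_1, \, r' \equiv \ell_2, \, r^2 = r'^2} (r^2/4m)^{-s}$, pair $r' = \pm r$, and let the Kronecker deltas emerge automatically. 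As a cross-check I would verify the normalization against the stated identity $(\eta,\eta) = 1/\sqrt 6$ from~\S\ref{modularbasic} applied through~\eqref{defeta} (which expresses $\eta$ as a unary theta series of the type~\eqref{unary1} with $\lambda = 1/24$), and against the footnote linking $\vartheta_{m,\ell}^{(0,0)\,*}$ to $(\vartheta_{m,\ell}^0)^*$, to make sure the factor $1/(2\sqrt m)$ is correctly placed.
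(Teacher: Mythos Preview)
Your approach is correct and essentially identical to the paper's: apply the Rankin--Selberg formula~\eqref{RS} with $k=1/2$ (so that $\Gamma(k)/(4\pi)^k=1/2$), recognize that the Fourier coefficient of $\vartheta_{m,\ell}^0$ at $q^{r^2/4m}$ for $r>0$ is $\delta_{r,\ell\,(2m)}+\delta_{r,-\ell\,(2m)}$, and take the residue of the resulting Hurwitz-type zeta function at $s=1/2$. The paper's proof is just a two-line version of exactly this computation, and your careful handling of the degenerate cases $\ell_1\equiv-\ell_1$ is the correct way to make the argument uniform.
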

\begin{proof}
Formula~\eqref{RS} gives 
\bes
(\vartheta_{m, \, \ell_{1}}^0 \,, \,\vartheta_{m, \, \ell_{2}}^0) \= \frac{1}{2} \, \text{Res}_{s=\half} 
\biggl( \sum_{r>0} \frac{(\delta_{r, \, \ell_{1}\;(2m)}+ \delta_{r, \, -\ell_{1}\;(2m)})
(\delta_{r, \, \ell_{2}\;(2m)}+ \delta_{r, \, -\ell_{2}\;(2m)})}{(r^{2}/4m)^{s}} \biggr) \; . 
\ees
The propostition follows.
\end{proof}

\begin{proposition} \label{RPhi2mval}
Let $\v_{2,m}$ be any meromorphic Jacobi form of weight 2 and index $m$ with pole
$1/(2 \pi i z)^{2}$ and its translates, and let $\Phi_{2,m}$ be its finite part $\v^{\rm F}_{2,m}$.
Then
\be \label{Rm1m2}
R_{m_{1},m_{2}}(\Phi_{2,m}) \= \frac{m_{1}+m_{2}}{12}  \, .
\ee
\end{proposition}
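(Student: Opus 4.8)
The plan is to deduce Proposition~\ref{RPhi2mval} from machinery already in place: the identification of the shadow of $\Phi_{2,m}$, the residue formula of Theorem~\ref{Rphishad}, and the theta scalar-product evaluation of Proposition~\ref{thetapeter}. Consequently the proof will be short, and the only real care needed is in tracking normalization constants.

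First I would record the shadow. By construction (equation~\eqref{phi2mhat}, which is the double-pole case of the general theory of \S\ref{double}), the completion of $\Phi_{2,m}=\v_{2,m}^{\rm F}$ is
$$\wh\Phi_{2,m}(\t,z) \= \Phi_{2,m}(\t,z) \,-\, \sqrt{\frac m{4\pi}}\,\sum_{\ell\mypmod{2m}}\bigl(\vth^0_{m,\ell}\bigr)^*(\t)\,\vth_{m,\ell}(\t,z)\,,$$
so the $\ell$th theta coefficient $h_\ell$ of $\Phi_{2,m}$ is a weight $3/2$ mock modular form whose shadow $g_\ell$ is a fixed real multiple of the unary theta series $\vth^0_{m,\ell}$ of weight $1/2$, the multiple being independent of $\ell$. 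In particular this shadow, and hence by Theorem~\ref{Rphishad} also the residues $\CR[\Phi_{2,m}\,|\,W_{m_1}]$, is unchanged if $\v_{2,m}$ is replaced by another representative of its coset $\vst_{2,m}+\wt J_{2,m}$, in accordance with the dependence on $m_1,m_2$ only that is asserted in the proposition.

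Assume now $(m_1,m_2)=1$, so that $R_{m_1,m_2}(\Phi_{2,m})=\CR[\Phi_{2,m}\,|\,W_{m_1}]$ and formula~\eqref{Rm1m2ans} gives
$$R_{m_1,m_2}(\Phi_{2,m}) \= \frac{\sqrt\pi}6\,\sum_{\ell\mypmod{2m}}\bigl(\vth^0_{m,\ell},\,g_{\ell^*}\bigr)\,,$$
where $\ell\mapsto\ell^*$ is determined by $\ell^*\equiv-\ell\ (2m_1)$, $\ell^*\equiv\ell\ (2m_2)$. Substituting $g_{\ell^*}=(\text{const})\,\vth^0_{m,\ell^*}$ and applying Proposition~\ref{thetapeter}, the term $(\vth^0_{m,\ell},\vth^0_{m,\ell^*})$ vanishes unless $\ell\equiv\ell^*$ or $\ell\equiv-\ell^*$ modulo $2m$. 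By the Chinese remainder theorem for the coprime moduli $2m_1,2m_2$, the first congruence holds exactly when $m_1\mid\ell$ and the second exactly when $m_2\mid\ell$. Among residues modulo $2m=2m_1m_2$ there are $2m_2$ with $m_1\mid\ell$ and $2m_1$ with $m_2\mid\ell$ (the class $\ell\equiv0$, where also $\ell^*\equiv0$, being counted once in each, as it must be since Proposition~\ref{thetapeter} produces a factor $2$ there). Hence $\sum_\ell(\vth^0_{m,\ell},\vth^0_{m,\ell^*})=\frac1{2\sqrt m}(2m_1+2m_2)=\frac{m_1+m_2}{\sqrt m}$, and feeding this together with the constant from the shadow in~\eqref{phi2mhat} into the factor $\sqrt\pi/6$ of~\eqref{Rm1m2ans} yields, after a one-line reconciliation of the $\sqrt\pi$ and $\sqrt m$ factors, the value $\frac1{12}(m_1+m_2)$.

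For a decomposition with $t:=(m_1,m_2)>1$ I would use $R_{m_1,m_2}(\v)=R_{m_1/t,m_2/t}(\v\,|\,u_t)$ together with the fact, established during the proof of Theorem~\ref{QmOG}, that $\Phi_{2,m}\,|\,u_t$ has the same shadow as $t\,\Phi_{2,m/t^2}$; since $R$ depends only on the shadow, this reduces to the coprime case and gives $\frac t{12}\bigl(\frac{m_1}t+\frac{m_2}t\bigr)=\frac{m_1+m_2}{12}$. The main --- and essentially only --- obstacle is the bookkeeping of normalizations: keeping straight the normalization of $(\vth^0_{m,\ell})^*$ in~\eqref{phi2mhat} as an Eichler integral in the sense of~\eqref{defstar}, the conjugation conventions and the factor $\sqrt\pi/6$ in Theorem~\ref{Rphishad}, and the $\frac1{2\sqrt m}$ of Proposition~\ref{thetapeter}. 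Everything else is a mechanical count of residue classes.
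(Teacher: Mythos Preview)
Your proposal is correct and follows essentially the same route as the paper: identify the shadow from~\eqref{phi2mhat} as $g_\ell\propto\vth^0_{m,\ell}$, apply~\eqref{Rm1m2ans} together with Proposition~\ref{thetapeter} in the coprime case, count the residue classes with $\ell\equiv\pm\ell^*\pmod{2m}$ (giving $2m_2+2m_1$), and reduce the non-coprime case via $u_t$ and the shadow identity $\Phi_{2,m}|u_t\sim t\,\Phi_{2,m/t^2}$ from the proof of Theorem~\ref{QmOG}. The paper carries out the constant-tracking you defer, obtaining the prefactor $\frac{\sqrt m}{\sqrt{4\pi}}\cdot\frac{\sqrt\pi}6\cdot\frac1{2\sqrt m}=\frac1{24}$ so that $\frac1{24}(2m_1+2m_2)=\frac{m_1+m_2}{12}$; one small slip in your write-up is the appeal to the ``Chinese remainder theorem for the coprime moduli $2m_1,2m_2$''---these moduli are not coprime, though the conclusion you draw (that $\ell\equiv\ell^*$ iff $m_1\mid\ell$, and $\ell\equiv-\ell^*$ iff $m_2\mid\ell$) is correct and follows directly from the defining congruences of~$\ell^*$.
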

\begin{proof}
By equation \eqref{phi2mhat}, 
the $\ell^{\rm th}$ component  $g_{\ell}$ of the shadow for $\Phi_{2,m}$ is $-\sqrt{\frac{m}{4 \pi}} \vth^{0}_{m,\ell}$. 
When $(m_{1}, m_{2})=1$, $R_{m_{1},m_{2}}(\Phi_{2,m}) = \CR[\Phi_{2,m} \! \mid \! W_{m_{1}}]$. 
Theorem~\ref{Rphishad} then gives:
\bea 
\CR[\Phi_{2,m} \! \mid \! W_{m_{1}}] &\= & \frac{\sqrt{m}}{\sqrt{4 \pi}} \frac{\sqrt{\pi}}{6}  
\sum_{\ell  \! \mypmod{2m}} \bigl( \vth^{0}_{m,\ell}, \vartheta_{m,\ell^{*}}^{0} \bigr) \cr
& \= & \frac{1}{24} \, \sum_{\ell  \! \mypmod{2m}}  \big( \delta_{\ell, \ell^{*} \! \mypmod{2m}}
+ \delta_{\ell, -\ell^{*} \! \mypmod{2m}} \big) \cr
&\=& \frac{1}{12} \, (m_{1}+m_{2}) \, . \nonumber 
\eea
In this calculation, we used Proposition~\ref{thetapeter} to get the second line. For 
the third line, recall that $\ell^{*}$ is defined by 
$(\ell^{*} \equiv -\ell \mypmod{2m_{1}}, \ell^{*} \equiv \ell \mypmod{2m_{2}})$, so that 
$\ell^{*} \equiv \ell \mypmod{2m} \Leftrightarrow m_{1} \! \mid \! \ell$ (which is true for~$2 m_{2}$ values of~$\ell$ in 
$\IZ/2m\IZ$), 
and similarly, $\ell^{*} \equiv -\ell \mypmod{2m} \Leftrightarrow m_{2} \! \mid \! \ell$ (which is true for~$2m_{1}$ values of~$\ell$).  
When $(m_{1},m_{2})=t>1$, we have 
$R_{m_{1},m_{2}}(\Phi_{2,m}) = R_{m_{1}/t,m_{2}/t}(\Phi_{2,m} \! \mid \! u_{t})$.
As we have shown in the proof of Theorem~\ref{QmOG},  
the shadow of $\Phi_{2,m} \! \mid \! u_{t}$
is the same as the shadow of $t \Phi_{2,m/t^{2}}$, so $R_{m_{1}/t,m_{2}/t}(\Phi_{2,m} \! \mid \! u_{t}) = 
t R_{m_{1}/t,m_{2}/t}(\Phi_{2,m/t^{2}}) = t (\frac{m_{1}}{t} + \frac{m_{2}}{t})/12 = (m_{1}+m_{2})/12$ by the 
special case already proved.
\end{proof}

The reader might find it helpful to compare the statement of this proposition with the table 
in \S\ref{choosephi} presenting the data for the case~$m=6$, 
in which the function $\Phi=\Phi_{2,6}$ (in any of its three versions I, II, or III) has 
\ben 
R_{1,6}(\Phi) &\= 2 C(\Phi ; -1, \, \pm 1\! \mypmod{12}) + C(\Phi ; 0, \, 0\! \mypmod{12}) \= 7/12 \, , \\
R_{2,3}(\Phi) &\= 2 C(\Phi ; -1, \, \pm 5\! \mypmod{12}) + C(\Phi ; 0, \, 0\! \mypmod{12}) \= 5/12 \, . 
\een

\vspace{0.2cm}

\ndt {\bf Proof of Theorem~\ref{PolarCoeffs}}: We can now complete the proof of Theorem~\ref{PolarCoeffs}}
and its corollary. For $m$ square-free, any choice of~$\Phi=\Phi_{2,m}^{0}$ in $\CE_{m}^{\rm OG}$, and 
any~$r$ with $r^{2} \equiv 1 \mypmod{4m}$, we have 
\be
2 C(\Phi; \, -1, \ell) + C(\Phi; \, 0,0) \= \CR[\Phi \!\mid \! W_{m_{1}}] \= \frac{m_{1}+m_{2}}{12} 
\ee
by Proposition~\ref{RPhi2mval}, where $m=m_{1} m_{2}$ is the decomposition of~$m$ for which~$r^{*}=1$.
Since $C(\Phi^{\rm III}; 0,0)=0$, this proves Theorem~\ref{PolarCoeffs} and also gives the 
corresponding formulas 
$$
C(\Phi^{\rm I}; -1,r) \= -\frac{(m_{1}-1)(m_{2}-1)}{12} \, , \qquad  C(\Phi^{\rm II}; -1,r) \= -\frac{1}{12} \, 
\biggl( \frac{m_{1}+m_{2}}{2} \, - \, \prod_{p|m} \frac{p+1}{2} \biggr)  
$$
for $\Phi^{\rm I}$ and $\Phi^{\rm II}$, as well as formula~\eqref{Qmpolar} for the polar coefficients of~$\CQ_{M}$
when $\mu(M)=1$. 
If~$m$ is not square-free, then the solutions~$r \mypmod{2m}$ of~$r^{2} \equiv 1 \mypmod{4m}$
are still in~1:1 correspondence with the decompositions of~$m$ into coprime factors~$m_{1}$ and~$m_{2}$. 
To compute the corresponding Fourier coefficient of~$\Phi_{2,m}^{0}$, we look at each term 
in~\eqref{defPhi02m} separately, writing each~$d$ as $d_{1}d_{2}$ with~$d_{i}^{2}|m_{i}$ and noting 
that $\frac{m_{1}}{d_{1}^{2}}=(\frac{r-1}{2},\frac{m}{d^{2}})$,  $\frac{m_{2}}{d_{2}^{2}}=(\frac{r+1}{2},\frac{m}{d^{2}})$,
after which~\eqref{Phi0IIIcoeffs} follows from~\eqref{Rm1m2} and the fact that $\CR[\v  \! \mid \! U_{d}] \= \CR[\v]$. 
\hfill $\square$

\vspace{0.2cm}

\ndt {\bf Proof of Theorem~\ref{Wt1OSP}}: It is enough to show that any weak Jacobi form of optimal growth 
is actually holomorphic since, as we have seen in Theorem \ref{Skorthm}, there are no holomorphic Jacobi forms 
of weight~1. To show this, we use the ``residue operator on weight~1 forms'' defined by
\be\label{RasReswt1}  \CR^{(1)}[\v]\=\text{Res}_{\t=\infty}\bigl(\v'(\t,0)\,d\t\bigr) \= \sum_{r\in\Z} r \, c_{\v}(0,r)\,\ee
instead of~\eqref{RasRes}, where, as usual,~$\v'(t,z) = \frac{1}{2 \pi i} \frac{d}{dz} \,\v(t,z)$. 
This residue vanishes if~$\v\in \wt J_{1,m}$, since then~$\v'(\cdot, 0) \in M_{2}^{!}$. 
If, further, $\v$ has optimal growth, then $2  c_{\v}(0,1)=\CR^{(1)}[\v]=0$, so~$C_{\v}(-1,1)$ vanishes. 
By the same argument with~$\v$ replaced by $\v$ acted upon by each of the Atkin-Lehner operators, 
one has that~$C_{\v}(-1,r)=0$ for all~$r$ with~$r^{2} \equiv -1 \mypmod{4m}$,  
and thus~$\v \in J_{1,m}$. \hfill $\square$

\section{Quantum black holes and mock modular forms \label{MockforDyons}}

We now have all the ingredients required to address the questions posed
in the introduction. These answers follow from applying our results of
\S{\ref{MockfromJacobi}}, in particular theorem~\ref{Thmdouble}, to the quarter-BPS partition 
function of the the $\CN=4$ theory. It turns out that in this case, many of the mathematical objects 
in the theorem simplify. A reader who has not gone through  the details of \S{\ref{MockfromJacobi}} could still
fruitfully read this section to obtain the relevant results for this example.

In particular,  we show in  \S{\ref{Degen}} that the immortal black hole degeneracies 
 extracted from the asymptotic partition function (\ref{inverse3})   are 
Fourier coefficients of a (mixed) mock Jacobi form. The completion transforms as a true Jacobi form and obeys a first order partial differential equation~\eqref{holanom}
which can be understood as a  holomorphic anomaly equation. 
In \S{\ref{Anomaly}} we discuss the physical origin of the meromorphy of the asymptotic counting function in this example and how it is related
to the noncompactness of the microscopic brane dynamics. 
In \S{\ref{Mlimit}} we analyze the contour corresponding to the M-theory limit to embed our analysis in the context of  $AdS_{3}/CFT_{2}$ holography near the horizon of a black string. 
We conclude in \S{\ref{generalization}} with  comments and open problems.

\subsection{Mock Jacobi forms, immortal black holes, and $AdS_{2}$ holography \label{Degen}}

We first discuss the consequences of the  $S$-duality symmetry and its relation to the spectral flow symmetry for the moduli-independent immortal  degeneracies  $d^{*}(n,\ell,m)$ defined by \eqref{inverse3}.
Recall that under the S-duality transformations
\be\label{11b1}
\bem 1 & b  \\
    0 & 1\eem \ 
\ee 
with integer $b$, 
the S-modulus \eqref{Sdef} transforms as:
\be
S  \to  S +  b \, . 
\ee
which means that the axion transforms as $a \to a +b$. Since the axion arises from a dimensional reduction of  the 3-form field in the M-theory frame (discussed in \S\ref{Mlimit}), this subgroup \eqref{11b1} of $S$-duality 
has its origins  in large gauge transformations of the 3-form field. When the axion transforms as above, the effective theta angle changes and as a result the charge vector of the dyon  transforms as $(Q, P) \rightarrow (Q + P b, P)$  due to the Witten effect \cite{Witten:1979ey}.
Consequently, the charge-invariants transform as
\be\label{nmltrans}
 n  \to  n + m b^2 + b \ell \, , \qquad  \ell  \to  \ell + 2 m b \, , \qquad m \rightarrow m .
\ee
which are nothing but the spectral flow transformations.  Note that $m$ is left fixed by this subgroup of S-duality whereas  other elements of the S-duality group would change $m$. Thus,  only this subgroup is expected to act on the sector with a fixed $m$. The covariance of the spectrum under these $S$-duality transformations implies
\be
d(n, \ell, m) |_{S, \mu} \= d(n + m b^2 + b, \ell + 2 m b, m) | _{S + b, \mu} \, .
\ee
Since the degeneracies of immortal black holes are independent of the moduli by definition, 
they are actually invariant under~$S$-duality:
\be\label{spectral-immortal}
d^*(n + m b^2 + b, \ell + 2 m b, m) \=  d^* (n, \ell, m) \, .
\ee 
{}From the mathematical point of view, this is precisely the action of the elliptic part of the Jacobi transformations \eqref{elliptic} on the Fourier coefficients of a Jacobi (or mock Jacobi) form. 

The immortal degeneracies  $d^{*}(n,\ell,m)$ are computed using the attractor contour~\eqref{contourattract}, 
for which the imaginary parts of the potentials are proportional to the charges and scale as 
\begin{equation}\label{Attractorcontour}
\Im (\sigma) \= 2n/\varepsilon \, , \qquad 
\Im (\tau) \= 2m/\varepsilon \, ,\qquad 
\Im (z) \= -\ell/\varepsilon \, ,
\end{equation}
with $\varepsilon$ very small and positive. In other words,  
\bea\label{attractor2}
|p| = \l^{2n}\, , \quad |q| = \l^{2m} \, , \quad  |y| = \l^{-\ell} \, , \qquad \text{with $\l  = \exp(-2\pi/\e) \rightarrow 0$}
\eea
on the attractor contour.
We  assume that $n > m$ without loss of generality, otherwise one can 
simply exchange  $\s$ and $\tau$. Moreover, using the spectral flow symmetry \eqref{spectral-immortal} we can always bring $\ell$ to the window $0 \leq \ell <2m$ for the immortal degeneracies. 

To extract the Fourier coefficient from  the product representation \eqref{final2} of the counting 
function~\eqref{igusa} using \eqref{attractor2}, we need to expand a typical term in the product of the form
\be
\frac{1}{(1- p^{r}q^{s} y^{t})^{2C_0(4rs -t^{2})}} \, .
\ee
where $2 C_0$ is the Fourier coefficient of the elliptic genus of a single copy 
of $K3$. Depending on whether~$|p^{r}q^{s}y^{t}|$ is less than one or greater than one, we expand 
the denominator using 
\be\label{xsmall}
\frac{1}{1-x} \= \begin{cases} 1+ x + x^{2} + \ldots & \text{for $|x| < 1$}\,, \\
 -(x^{-1} + x^{-2} + x^{-3} + \ldots)  & \text{for $|x| > 1$}\,.  
  \end{cases}
\ee
We see that it is not \textit{a priori} obvious  that expanding first in $p$  is the same as using the 
attractor contour because if  the powers of $t$ are sufficiently large and positive, the attractor contour may 
correspond to expanding some terms effectively in   $p^{-1}$ instead of in $p$. 

To address this question  we now show that for the terms that appear in the product representation,  $|p^{r}q^{s}y^{t}|$ is always less than one using the fact that
\be
C_{0}(\DD) = 0 \qquad \text{for $\DD  <-1$} \, .
\ee
To begin with, the $p$-dependent terms with $\DD=0, -1$   arise only for $s=0$ and  are of the form
\be
\frac{1}{(1-py )^{2}(1- py^{-1})^{2} (1- p^{r})^{20}} \,
\ee
These involve only single  powers $y$ or $y^{-1}$ and positive powers of $p$.  The absolute value of 
$py$ or $py^{-1}$ for the attractor contour goes as $\lambda^{2m\pm l}$  which is always much less than one for $0 \leq |l|  < 2m$.   
The remaining  $p$-dependent terms with $s > 0$  have $\DD= 4rs -t^{2}>0 $ and $r >0$. The absolute value of $p^{r}q^{s} y^{t}$  goes as
\be
\l^{2nr+ 2ms -\ell t} \, .
\ee
The exponent is most negative when $\ell$ takes its maximum positive value $\ell_{max} = \sqrt{2mn}$ and $t$ takes its  maximum positive value $t_{max}=\sqrt{2rs}$. 
For black hole configurations with a smooth  horizon,   $4mn- \ell ^{2}>0 $ because the classical area of the horizon is the square-root of this quantity.  Hence the most negative value of the exponent  would be
\be
2nr+ 2ms - 2 \sqrt{mnrs} \= 2(\sqrt{nr} -\sqrt{ms})^{2} >  0 \, , 
\ee
and the absolute value of $p^{r}q^{s} y^{t}$ is always less than or equal to
\be
\l^{ 2(\sqrt{nr} -\sqrt{ms})^{2}} 
\ee
which is always much less than one. This implies that to extract the attractor Fourier coefficients, we expand all terms in the product in the small $p$ expansion using~\eqref{xsmall} thus giving us the Jacobi form 
$\psi_{m}$ of~\eqref{reciproigusa}  as a  partition function in the remaining  two chemical potentials.

The degeneracies extracted from $\psi_{m}(\t, z)$ still experience many wall-crossings, and we should now 
use the rest of the attractor contour to analyze the immortal part of the degeneracies of 
$\psi_{m}$.  We do this in two steps by first performing the inverse Fourier transform in $z$, 
and then in $\t$. 
For the first step, we need to specify a contour for the imaginary part of $z$ relative to that of $\t$. 
The conditions~\eqref{Attractorcontour}  on the imaginary parts of $(\t,z)$ implies that 
$\Im(z) = -\frac{\ell}{2m} \Im(\t)$. 
Since by moving the contour in a purely horizontal direction we do not cross any poles, 
we can add an arbitrary real part to $z$, which we can choose so that $z = -\ell\t/2m$. This 
gives the result of the first integral to be:
\be\label{fmlpsim}
f_{m,\ell}^*(\t)  \=  e^{-\pi i l^2 \t / 2m} \int_{P}^{P+1} \psi_m(\t,z) \, e^{- 2 \pi i \ell z} \, dz \, , 
\qquad P = - \ell \t/2m \, . 
\ee
We then have to perform a~$\tau$ integral with the contour specified by the attractor 
values~\eqref{Attractorcontour}, but since the integrand~$f_{m,\ell}^{*}(\t)$ is holomorphic 
in~$\t$, the answer is independent of the contour. We thus get the degeneracies $d^{*}(n,\ell,m)$ to be:
\be\label{dstarfml}
d^*(n,\ell,m) \= \int e^{\pi i \ell^2 \t / 2m}\, f_{m,\ell}^*(\t) \, d\tau \, , 
\ee
where the integral is over an interval of length~$1$ for the fixed imaginary value~\eqref{Attractorcontour} of $\tau$.

To extend to other values of $\ell$, we use the spectral flow invariance \eqref{spectral-immortal} and sum over all values of~$\ell$ to get a canonical partition function in the two variables~$(\t,z)$. 
Because of the factor~$e^{\pi i \ell^2 \t / 2m}$ in \eqref{dstarfml}, the sum over spectral-flow images yields the function~$\vartheta_{m, l}(\t,z)$. 
Putting all this together, and comparing the above equations \eqref{fmlpsim},~\eqref{dstarfml} 
with our definition \eqref{phiF} of the finite part of a meromorphic Jacobi form, 
we see that the single centered black hole degeneracies are Fourier coefficients of 
precisely the {\it finite} part of $\psi_{m}$ as defined in~\S\ref{MockfromJacobi}:
\be
\sum_{\ell \, \mod \, (2m)} f_{m,\ell}^*(\t) \, \vartheta_{m,\ell}(\t,z) 
\= \psi_{m}^{\rm F}(\t,z) \, .
\ee
We see that the seemingly unconventional charge-dependent choice for the contour 
in~\S\ref{MockfromJacobi} for defining the finite part is completely natural from the point 
of view of the attractor mechanism (indeed, it was motivated there by these physical considerations).
According to our analysis in~\S\ref{MockfromJacobi}, this finite part $\psi_{m}^{\rm F}$ is a {\it mock} Jacobi form. 
This is one of the main conclusions of our analysis, so we summarize the entire discussion by the 
following concise statement:
\begin{quote}
{\it The degeneracies of single centered (immortal) black holes with magnetic charge 
invariant $M^2/2=m$ are Fourier coefficients of  a mock Jacobi form of index $m$.} 
\end{quote}

We stress here that the converse is not true. 
The single centered black hole solution exists only when the 
discriminant $4mn-\ell^2$ is positive, and its degeneracy grows exponentially 
for large values of the discriminant. On the other hand, as we saw in \S\ref{Jacobi}, 
any Jacobi form with exponential growth necessarily has negative discriminant states. 
This means that the partition function for single centered black holes alone with fixed magnetic charge 
and varying electric charge cannot have good modular properties. Indeed the partition 
function $\psi_{m}^{\rm F}$ does contain negative discriminant states for example for $n<m$. These low-lying states 
do not correspond to single centered black holes~\cite{Dabholkar:2007vk,  Sen:2011mh}.

We now turn to the physical interpretation of the polar part $\psi_{m}^{\rm P}(\t,z)$ of the meromorphic Jacobi form defined in~\eqref{phiPdoub}. As mentioned in the introduction,   the only multi-centered gravitational configurations that contribute 
to the supersymmetric index of quarter-BPS dyons in our~$\CN=4$ string theory have exactly two centers \cite{Dabholkar:2009dq} each of which is a half-BPS state.  At a wall, one of these two-centered configurations decays. A basic decay at a wall-crossing  is when  one center is purely electric and the other is 
purely magnetic. Other decays are related to this basic decay by duality \cite{Sen:2007vb}.
For this   two-centered configuration, with one electric and one magnetic center, the indexed 
partition function after summing over the electric charges is:
\be\label{basicmulti}
p_{24}(m+1) \cdot \frac{1}{\eta^{24}(\tau)} \, \cdot \sum_{\ell > 0} \, \ell \, y^{\ell} \, , 
\ee
where the first factor is  the degeneracy of the half-BPS magnetic center, 
the second factor is the partition function of the electric center ~\cite{Dabholkar:1989jt},  
 and the third factor is the partition function counting the dimensions of the $SU(2)$  multiplets with  angular momentum $(\ell -1)/2$ contained in the electromagnetic field generated by the two centers~\cite{Denef:2000nb} for which $N \cdot M = \ell$.
The third factor is an expansion of the function
\be\label{basicmulti2}
 \frac{y}{(1-y)^{2}}
 \ee
in the range  $|q| < |y| < 1$ (see equation~\eqref{BBm}). 
Here we already see the basic wall-crossing of the theory because different 
ways of expanding this meromorphic function for~$|y|<1$ and~$|y|>1$ will give different degeneracies. Other wall-crossings  are related to this basic wall-crossing by S-duality which as we have explained is nothing but spectral flow for a given fixed $m$. Thus  the full partition function that captures all wall-crossings  is obtained by `averaging'  over the spectral-flow images of ~\eqref{basicmulti2}. 
This averaging  is nothing but the  averaging operation defined in \eqref{AP1AP2} and  \eqref{AP1AP2again}
\be
 \AvB m{ \frac{y}{(y-1)^2} } \, = \, \frac{q^{ms^2 +s}y^{2ms+1}}{(1 -q^s y)^2}  \,
\ee
It is now easy to see that this averaging over  spectral-flow  images gives 
exactly the polar part~$\psi_{m}^{\rm P}$ of the meromorphic Jacobi form~$\psi_{m}$:
\be\label{Tm} 
 \frac{p_{24}(m+1) }{\eta^{24}(\tau)} \, \sum_{s\in\Z} \, \frac{q^{ms^2 +s}y^{2ms+1}}{(1 -q^s y)^2} 
 \= \psi_{m}^{\rm P}(\t,z) \, .
\ee
This follows from the fact that the poles of~$\psi_{m}(\t,z)$ are exactly at~$z \in \IZ \tau +\IZ$ and the residue at~$z=0$ is $$\frac{p_{24}(m+1) }{\eta^{24}(\tau)} \, .$$  as mentioned in  \eqref{psifromphim}.

We thus see that  the decomposition theorem~(\ref{Thmdouble}) 
as applied to our dyon partition function~$\psi_{m}(\t,z)$ has a natural physical interpretation. It is simply the statement  that  the full partition function is a sum of 
its immortal and decaying constituents:
\be \label{singlepart} 
\psi_{m}(\tau,z) \; = \;  \psi_{m}^{\rm F}(\tau,z) \, + \, \psi^{\rm P}_{m}(\tau,z) \, .
\ee
The nontrivial part is of course  the implication of such a decomposition for modularity. By separating part of the function $\psi_{m}$, we have evidently broken modularity, and $\psi_{m}^{\rm F}(\t,z)$  
is not a Jacobi form any more. However,  the decomposition theorem~(\ref{Thmdouble}) guarantees 
that~$\psi_{m}^{\rm F}$ still has a very special modular behavior in that 
it is a mock Jacobi form. The mock behavior can be summarized by the following partial differential equation
 obeyed by its completion: 
\be \label{holanom}
 \tau_2^{3/2} \; \frac{\partial} {\partial \bar{\tau}}   \, \wh \psi_{m}^{\rm F}(\tau,z) \; = \; 
\sqrt{\frac{m}{8 \pi i}} \; \frac{ p_{24}(m+1)}{ \Delta(\tau)} \,  
 \sum_{\ell \, \mod \, (2m)}  {\overline{\vth_{m,\ell}(\tau,0)}} \, \vartheta_{m,\ell} (\tau,z) \, .
\ee
Note that~$\psi_{m}^{\rm P}$ also admits a completion~$\wh{\psi_{m}^{\rm P}}$ which is modular.
However, the function~$\wh{\psi_{m}^{\rm P}}$, unlike~$\wh{\psi_{m}^{\rm F}}$, is meromorphic,
and therefore has wall-crossings. We refer the reader to the summary at the end of~\S\ref{MockDecomposition} 
for a detailed discussion of the relationship between meromorphy (in~$z$), holomorphy (in~$\t$) and modularity.

Let us now discuss the implications of these results for $AdS_{2}/CFT_{1}$ holography. 
The near horizon  geometry of a supersymmetric black hole is a two-dimensional anti de Sitter space $AdS_{2}$ and the dual theory is a one-dimensional conformal field theory  $CFT_{1}$. The partition function of this $CFT_{1}$ is nothing but the integer $d^*(n, \ell, m)$ which gives the number of microstates of the black hole~\cite{Sen:2008yk,Sen:2008vm}. Our results show that that these integers are the Fourier coefficients of a mock modular form and as a result there is a hidden modular symmetry. Such a modular symmetry would make it possible to use
powerful techniques such as the  Hardy-Ramanujan-Rademacher expansion to process  the integers $d^*(n, \ell, m)$ into a form that can be more readily identified with a bulk partition function of string theory in $AdS_{2}$. There has been recent progress in evaluating the bulk partition function using localization techniques which makes it possible to study the subleading nonperturbative corrections in the Rademacher expansion in a systematic way~\cite{Dabholkar:2010uh,Dabholkar:2011ec,Gupta:2012cy,Murthy:2013xpa}. It would be interesting to see if the additional terms in the Rademacher expansion  arising from the `mock' modular nature can be given a physical bulk interpretation. 

\subsection{Meromorphy, mock modularity,  and noncompactness \label{Anomaly}}

From the mathematical point of view, we have seen how the mock modularity of
$\psi_{m}^{F}$ is a consequence of meromorphy of the Jacobi form $\psi_{m}$. In
the $\CN=4$ example above, we can understand the physical origin of this
meromorphy as coming from the non-compactness of the underlying brane dynamics. To see this
more clearly, we write the meromorphic Jacobi form $\psi_m(\t, z)$ as
  \be\label{meroformtwo}  \psi_m(\t, z) =  \frac 1{\v_{-2,1}(\t,z)} \, \frac{1}{\Delta(\t)} \,
   \, \chi (\t,z; \textrm{Sym}^{m+1}(K3)) \, . \ee
The three factors in the above equations have the following physical interpretation~\cite{David:2006yn}. 
The function $\psi_m(\t, z)$
can be viewed as an indexed partition function of the following $(0, 4)$ SCFT
  \be\label{scft}  \sigma(\textrm{TN}) \times \sigma_L(\widetilde K\text{-}n) \times \s(\textrm{Sym}^{m+1}(K3)) \,. \ee
Each CFT factor has a natural interpretation from the microscopic derivation of
the dyon counting formula \cite{David:2006yn} using the 4d-5d lift~\cite{Gaiotto:2005gf}. 
The $\sigma_L(\widetilde K$-$n)$ factor is the purely left-moving bosonic
CFT associated with $\widetilde K$-$n$ bound states of the Kaluza-Klein monopole
with momentum. The $ \sigma(\textrm{TN})$ factor is a $(4, 4)$ SCFT associated
with the motion of the center of mass of motion in the Kaluza-Klein background in
going from five dimensions to four dimensions in the 4d-5d lift. The
$\s(\textrm{Sym}^{m+1}(K3))$ is the $(4, 4)$ SCFT that is associated with the
five-dimensional Strominger-Vafa $Q1$-$Q5$-$n$ system. Note that the SCFT
$\sigma(\textrm{TN})$ has a \textit{noncompact} target space and the double pole
of  $1/{\v_{-2,1} (\t,z)}$ can be traced back to this noncompactness.

Going over from the partition function to the Fourier coefficients corresponds to going over from the
canonical ensemble with fixed $z$ to the micro-canonical ensemble with fixed $\ell$.
The Fourier coefficient  is supposed to count the number of
right-moving ground-states in this fixed charge sector. Right-moving states with nonzero energy cancel in pairs in this
index. Now, counting ground states for a compact theory is a well-defined operation.
However, when the target-space is noncompact, the wavefunctions are not properly
normalized without putting an infrared regulator in the target space. In the
present context, the essential physics of the noncompactness is captured by the point particle motion of the center of
mass in the Taub-NUT geometry.  A simple regulator can be introduced by turning on the axion
field  which according to \eqref{hetBrel} corresponds  in the Type-IIB frame  to  taking the modulus $U_{B} = U_{1B}+ iU_{2B}$ to have a small 
nonzero real part $U_{1B}$ \cite{David:2006yn}. This introduces a nonzero potential on the
Taub-NUT geometry proportional to $U_{1B}^2 U_{2B}$  that regulates the infrared behavior to
give a well defined counting problem \cite{Gauntlett:1999vc,Pope:1978zx}.
However, the index thus computed depends on the sign of $U_{1B}$ and jumps as $U_{1B}$ goes
from being slightly negative to slightly positive.  Indeed the partition function for these modes is precisely the function
  \be\label{pole-part}  \frac{y}{(1-y)^2}  \ee
which can be expanded either in power of $y$ or in powers of $y^{-1}$ depending on whether $U_{1B}$ is positive or negative. 
For $\ell$ positive, there are $\ell$ normalizable states  when $U_{1B} < 0$ and none when $U_{1B} >0$.  This is the essential source
of the wall-crossing behavior in this context.  We should emphasize that even though turning on $U_{1B}$ appears to make the
problem effectively compact, this is true only up to some maximal $\ell_{max}$ which scales as  
$U_{1B}^2 U_{2B}$ \cite{Gauntlett:1999vc,Pope:1978zx}.
For a given $\ell $ positive, one can always make the problem well-defined with a compact target space by choosing the moduli
so that $\ell_{max}$ is larger than the $\ell$ of interest. The partition function in that case is well-defined without any poles and goes as
  \be   y + 2y^{2} + \ldots +  \ldots + \ell_{max} y^{\ell_{max}}  \, . \ee
However,  if one wishes to compute the canonical partition function that sums over all $\ell$, one has to essentially take 
$U_{1B}^2 U_{2B}$ to go to infinity which is what leads to the pole at $y=1$ in (\ref{pole-part}).
This in essence is the origin of the meromorphy of the canonical counting 
function of the asymptotic states, and therefore indirectly also of the mock modularity of the immortal near-horizon counting function.

As we have seen, this meromorphy of the asymptotic counting function $\psi_{m}$ implies that the counting 
function of the near-horizon immortal states cannot be both modular and holomorphic
at the same time. One can either have the holomorphic function $\psi_{m}^{\rm F}$ which is not quite 
modular, or a modular function $\wh \psi_{m}^{\rm F}$ which is not quite holomorphic. 
{}From this point of view, the shadow can be regarded as a `modular anomaly' in $\psi_{m}^{\rm F}$ 
or as a `holomorphic anomaly' in the completion~$\wh \psi_{m}^{\rm F}$. 

A natural question that arises from this discussion is if the  mock Jacobi form and the holomorphic anomaly can be given a direct physical interpretation rather than arrive at it via a meromorphic Jacobi form as we have done. To address this question, it is useful to view the problem from the  perspective of $AdS_{3}$ holography in M-theory as we now discuss.

\subsection{The M-Theory limit  and $AdS_{3}$ holography \label{Mlimit}}

 {}From the  perspective of $AdS_{2}/CFT_{1}$ holography discussed above, there is no \textit{a priori} reason why the black hole degeneracy should have  anything to do with modularity. It is only when we can  view the black hole as an excitation of a black string that we can have an \textit{a priori} physical expectation of modularity. This is because the near horizon geometry of a black string is a three-dimensional anti de Sitter space $AdS_{3}$. The Euclidean thermal $AdS_{3}$ has a conformal torus as a boundary whose complex structure parameter is $\tau$. The partition function of the dual boundary conformal field theory $CFT_{2}$ depends on on $\tau$.  In this framework the  modular symmetry is identified with the $SL(2, \mathbb{Z})$ mapping class group of the boundary 
torus~\cite{Maldacena:1998bw}, whereas the  elliptic symmetry is identified with large gauge transformations of the 
3-form field~\cite{deBoer:2006vg} as we have already discussed in \S\ref{Degen}. 
  
We now describe how the quarter-BPS black holes considered  thus far can be regarded  as excitations of a black string in an appropriate M-theory frame. This  will enable us to make contact with $AdS_3/CFT_2$ holography. The M-theory framework is 
 natural also from the point of view of generalizing  these considerations to BPS black holes  in  ${\cal N}=2$  compactifications on a
general Calabi-Yau manifold $X_6$.

To go to the M-theory frame, we consider the $T^2$ to be  a product
of two circles $S^1 \times \tS^1$ and consider the charge configuration \eqref{final charges} in the Type-IIB frame.   We first  $T$-dualize the $\tilde S$ circle to the $\hat S^{1}$ circle to go to the Type-IIA frame and then use  mirror symmetry of $K3$ to map the Type-IIB configuration \eqref{final charges} to  a Type-IIA configuration of D4-F1-n-NS5 charges. We then  lift it to M-theory  to obtain a charge configuration consisting of M5-M2 bound states carrying momentum along $S^{1}$.  
This configuration now has  M5-branes 
wrapping  $D \times S^{1}$ where $D$ is a divisor in $K3 \times \hat T^2$, an M2-brane wrapping the $\hat T^2 = \hat S^{1} \times S^{1}_{m}$,  with some momentum  along  $S^{1}$. This  is a precisely a  configuration of the kind considered by Maldacena, Strominger, and Witten \cite{Maldacena:1997de}. Since all M5-branes wrap $S^{1}$ and the momentum is along $S^{1}$, it is natural  flip $S^{1}_{m}$ and $S^1$  and regard the $S^{1}$ as the new M-circle.   In the limit when the radius $R$ of the  new M-circle $S^{1}$ is very large, the low-lying excitations of this M5-brane are described by the MSW string wrapping the M-circle. 
To write the final configuration in the M-theory frame,  let $C^1$ to be the homology 2-cycle of the $\hat T^2$, and let   $C^2$ and $C^3$ be two 2-cycles in $K3$ with intersection matrix 
  \be \left(  \begin{array}{cc}  0 & 1 \\  1 & 0 \\  \end{array}\right) \, . \ee
The intersection number of these three 2-cycles is 
  \be \int_{K3 \times \hat T^2} C^1 \wedge C^2 \wedge C^3 =1 \,. \ee
Let  $\{ D_{a}\}$ be the  4-cycles   dual to $ \{ C^{a} \}$,  $D_{a} \cap C^{b} = \delta_{a}^{b}$ .
In this basis, the M5-brane is wrapping the  4-cycle $$D =  \sum_{a=1}^3p^{{a}} D_{a} $$ with $p^1=\tilde K$, $p^2=Q_1$ and $p^3=Q_5$.
The M2-brane charges are $q_2=q_3=0$ and $q_1=\tilde n$, and M-momentum or equivalently the D0-brane charge from the Type-IIA
perspective is $q_0=n$.  Using this notation we see that
\begin{myitemize} 
\item The electric charges in \eqref{final charges} map to $n$ units of momentum along the M-circle
$S^1$, and $\tilde K$  M5-branes wrapping   $D_1 \times S^1$.
  \item The magnetic charges in \eqref{final charges} map to $Q_1$ M5-branes  wrapping $D_2 \times S^1$, $Q_5$
M5-branes  wrapping $D_3 \times S^1$,  and $\tilde n$ M2-branes wrapping $\hat T^2$.
\end{myitemize}
In summary, the charge configuration of our interest is
  \be\label{M-charges}  \G = \left[ \begin{array}{c}  N \\ M \end{array}\right] 
     \= \left[ \begin{array}{cccc} 0,& n ; & 0, & \tilde K \\ Q_1 ,& \tilde n; & Q_5, & 0\\ \end{array} \right]_{B}  
     \=  \left[ \begin{array}{cccc} 0,& q_0 ; & 0, &  p^1 \\ p^2 ,&  q_1; & p^3, & 0\\ \end{array} \right]_{M}\, \ee
Reduction of this charge configurations along $S^{1}$ gives a configuration in Type-IIA frame consisting of only D4-D2-D0-branes.

Since the effective MSW string wraps the M-circle, we have to take the radius $R$  of this circle to be large  keeping other scales fixed to obtain an $AdS_3$ near-horizon geometry of a long black string.
This implies that  in the original  heterotic frame in which we have labeled the charges \eqref{chargevector},  the radius $R$ goes to infinity keeping other scales 
and moduli fixed. Since the 2-torus in the heterotic frame  is also a simple product of two circles $ S^1 \times \tilde S^{1}$,  the moduli in \eqref{metric-B}   take the form
  \be \label{M-moduli} G_{ij} =  \left( \begin{array}{cc} \tilde R^2 & 0\\  0 & R^2 \\  \end{array} \right) \, , \quad B_{ij} =0 \ee
which implies that for the three moduli $S, T, U$ we have 
  \be S\= a + i \frac{R\tilde R}{g_6^2}\,, \qquad T \= i R \tilde R\,, \qquad U = i\,\frac{R}{\tilde R} \,,  \ee
where $g_6$ is the six-dimensional string coupling constant of heterotic compactified on $T^4$ and we have allowed a nonzero
axion field $a$. This means in particular that
  \be S_2 \sim R, \quad T_2 \sim R, \quad U_2 \sim R \ee
and for the charge configuration (\ref{M-charges}),  the central charge matrix scales as
  \be\label{MSW-central-charge}  {\cal Z} \sim  \left( \begin{array}{cc} R &   0 \\ 
     0 & 0 \\ \end{array} \right) + \frac 1{R}\left( \begin{array}{cc} 2n \quad & \quad   \ell + {a g_6^2}/{\tilde R} \\ 
     \ell + {a g_6^2}/{\tilde R} & \quad 2m + (Q_1^2/{\tilde R}^2 + Q_5^{2 }{\tilde R}^2 )\\ \end{array} \right) \,  \, . \ee
From the contour prescription (\ref{contour}), we conclude that 
the imaginary parts  for the Fourier integral scale as 
  \begin{eqnarray}\label{Mcontour} \Im(\sigma) &\sim & R \\ \Im (\tau) &\sim & 1/R \\ \Im (z) & \sim & 1/R \, . \end{eqnarray}
Therefore, in the region of the moduli space corresponding to the M-theory limit,  $p := \exp{(2\pi i \sigma)}$ is becoming much
smaller compared to $q:= \exp{(2\pi i \tau)}$ and $y:= \exp{(2\pi i z)}$. Hence one must first expand around $p=0$:
  \be\label{defpsin} \frac 1{\Phi_{10}(\Omega)} \= \sum_{m=-1}^{\infty} p^m \psi_m (\t,z) \ . \ee
Since the function $\Phi_{10}(\Omega)$ has double zeros at $z=0$, the function $\psi_m(\t,z)$ is meromorphic with double
 poles at $z=0$ and its translates.

We conclude that the  function $\psi_{m}(\tau,z)$  can  be interpreted as the asymptotic  partition function for counting the BPS excitation of this MSW M5-brane  for fixed value of the 
magnetic charge $M^2/2= m$ and is  a \textit{meromorphic} Jacobi 
form of weight $-10$ and index $m$.
 This is the essential new ingredient compared to other situations encountered earlier such as the partition function of the  
D1-D5 string \cite{Strominger:1996sh} which is a \textit{holomorphic} weak Jacobi form\footnote{Here we mean holomorphic 
or meromorphic in the $z$ variable that is conjugate to $\ell$. This  should not be confused with the nomenclature used  in \S\ref{Jacobi} while discussing  growth conditions \eqref{holjacobi}.}.

 Note that $\psi_{m}(\tau,z)$ is the asymptotic counting function  in the canonical ensemble for a fixed chemical potential $z$ where one sums
over all M2-brane charges $\ell$ for a fixed chemical potential $z$. To obtain the micro-canonical partition function for a fixed M2-brane
charge, one has to perform the Fourier integral in $z$.  There is an infinity of double poles in the $z$ plane at $z = \a\t $ for
$\a \in \Z$. The $z$ contour can be made to lie between the poles at $(\a-1)\t$ and $\a\t$ depending on how $\Im(z)$ compares with
 $\Im(\tau)$ given  the $O(1/R)$ terms in the central charge matrix (\ref{MSW-central-charge}).   It is easy to see that by varying
the moduli, the contour can be made to cross all poles. Thus, there are an infinity of walls still accessible in the regime
when the radius of the M-circle becomes large. 

\begin{figure}[h]
\begin{center}
\includegraphics[height=6cm]{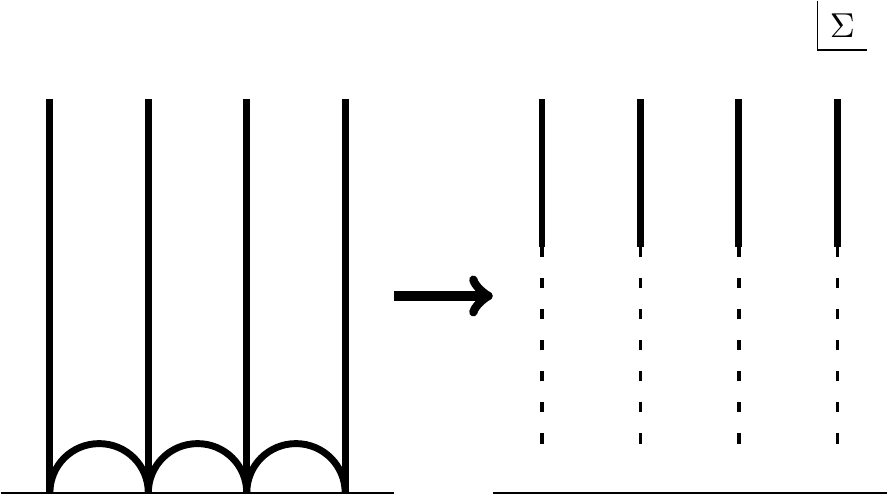}
\end{center}
\caption{
The diagram on the left shows the moduli space projected onto the $\Sigma$ upper-half plane divided into chambers separated 
by walls where  a quarter-BPS state decays into two half-BPS states. The M-theory limit, shown on the right, corresponds to taking
$\Im(\Sigma)$ to be very large. In this limit several walls are no longer visible.  There are still an infinite number of
walls  which can be crossed by varying $\Re(\Sigma)$. Spectral flow transformation maps one chamber to another chamber.
}
\label{walls}
\end{figure}
 
To display this graphically it is useful to define a complex scalar field $\Sigma = \Sigma_1 + i \Sigma_2$ by
 \be 
 {\CZ } \sim   \frac 1{\Sigma_2}\left(\begin{array}{cc} |\Sigma | ^2& \Sigma_1 \\  \Sigma_1& 1\\   \end{array}\right) \, . \ee
For fixed charges, the MSW region which corresponds to taking $\Sigma_2$ large with $\Sigma_1$ fixed corresponding to 
vanishing axion. Varying $\Sigma_1$  allows one to access the infinite number of chambers separated by walls 
as shown in Fig. \ref{walls}. 
Degeneracies in these different chambers can be obtained simply by choosing the Fourier contour in $z$ for an appropriate
value of $\alpha$. Note that the field $\Sigma_1$ depends both on the moduli  and    on the charges and thus varies as
moduli are varied for fixed charges or as the charges are varied for fixed moduli.   In particular, even for fixed values of
moduli, one can cross all walls by varying $\ell$ appropriately.  

With this embedding into M-theory, we  now outline what we regard as a consistent physical picture to interpret $\wh \psi_{m}^{F}$ even though we do not yet fully understand the details.   In the  limit of large radius $R$, we effectively have a five-dimensional theory obtained from compactifying M-theory on $K3 \times \hat T^{2}$. For a given a set of M5-brane charges $\{p^{i}\}$ at asymptotic infinity,  the five-dimensional supergravity   admits a single-centered black string solution carrying magnetic charges $\{p^{i}\}$. The near horizon geometry of this solution is an $AdS_{3}$ geometry where all vector multiplet moduli of the five-dimensional supergravity are fixed to their attractor values.   Following the usual reasoning of holography, we  expect that the superconformal field theory describing the low energy excitations of the M5-brane will be dual to this near horizon $AdS_{3}$. The indexed partition function of this SCFT is expected to describe the BPS gravitational configurations inside the $AdS_{3}$.  

What is the indexed partition function of this SCFT? It  cannot possibly be the meromorphic Jacobi form $\psi_{m}$ that exhibits many wall-crossings, because  all vector-multiplet moduli are fixed in the attractor $AdS_{3}$ geometry and no wall-crossings are possible.   It cannot possibly be $\psi_{m}^{F}$, because it is not modular as would be expected from the symmetry under the $SL(2, \mathbb{Z})$ mapping class group of the boundary torus. We propose that  the most natural  object to identify with   the indexed partition function of the $SCFT_{2}$ dual to the $AdS_{3}$ is the modular completion $\wh \psi_{m}^{F}(\t, z)$.  It satisfies (in a rather nontrivial way) both requirements of being properly modular and  not having any wall-crossings. The non-holomorphy is not something that we would naively expect for the indexed partition function but this can be a reflection of noncompactness as we comment upon later. The quarter-BPS black holes discussed in this paper are  excitations inside this single $AdS_{3}$ represented as BTZ black holes and their degeneracies should then  be captured by the Fourier coefficients   of $\psi_{m}^{F}(\t, z)$ which is indeed the case for $n>m$.

This \textit{microscopic} picture is consistent with the \textit{macroscopic}  analysis of
multi-centered black string solutions \cite{deBoer:2008fk} of  five dimensional supergravity.  These solutions can be obtained by lifting  multi-centered black hole solutions of four-dimensional supergravity. According to this supergravity analysis, given a set of asymptotic magnetic charges $\{p^{i}\}$ of a wrapped M5-brane,  the single-centered $AdS_{3}$ throat described above is not the only allowed solution \cite{deBoer:2008fk}.   There exist multi-centered solutions with multiple $AdS_{3}$ throats with the same total asymptotic charge.  Now,  as we have seen from the microscopic analysis, 
 there are still  infinite chambers separated by walls visible in the M-theory limit as shown in Fig~\ref{walls}. These chambers  can thus  correspond to chambers in which different   multi-string configurations exist as stable solutions.  The  wall-crossings can result  not from decays of multi-centered black holes inside a single $AdS_{3}$ throat but from the decays of these multi-string solutions. 
Thus, it is reasonable to identify~$\wh\psi_{m}^{F}$,  which has no wall-crossings, with the (indexed) partition function of the CFT dual to the single~$AdS_{3}$ throat.

\subsection{Open problems and comments\label{generalization}}

The interpretation  of $\wh \psi_{m}^{F}(\t, z)$ proposed above raises a number of interesting questions. The essentially new feature of our proposal  is the fact  that the modular partition function $\wh \psi_{m}^{F}(\t, z)$  is nonholomorphic and correspondingly the holomorphic counting function $\psi_{m}^{F}$ is mock modular. It would clearly be important to understand the physical origin of this nonholomorphy and mock modularity from the boundary and bulk perspectives.

\begin{itemize}

\item 
\textit{Mock modularity from  the boundary perspective}:
\vspace{2mm}

 We have seen in \S\ref{Anomaly} that the mock modularity of $\psi^{F}_{m}$  is tied  to the noncompactness of the asymptotic SCFT. A more interesting question is to understand this mock modularity directly from the point of view of the near horizon SCFT. It is natural to assume that the near horizon SCFT is also noncompact and this noncompactness is what is responsible for the mock modularity.  The connection between mock modularity/nonholomorphy and noncompactness has been noted  earlier in physics. For example, the partition function
of topological $SO(3)$  $\CN =4$ Yang-Mills theory on $\mathbb{CP}^2$ is a mock modular form and it was suggested in \cite{Vafa:1994tf} that it is  related to  the noncompactness of  flat directions in field space. The nonholomorphy of the elliptic genus has also been noted in the context of superconformal 
field theories \cite{Eguchi:2004yi,Troost:2010ud} with the noncompact $SL(2, \mathbb{R})/U(1)$ target 
space.

In this context,  the nonholomorphy  can be understood in complete detail \cite{Troost:2010ud, Ashok:2011cy} as a consequence of the continuum of spectrum of  the  noncompact SCFT. 
It would be interesting to see if  the holomorphic anomaly \eqref{holanom} of the completion of $\wh \psi_{m}^{F}$ 
can be understood from such a path integral perspective of the conformal field theory. 

In the framework of  $AdS_{2}$ holography, the quantity of interest  is the integer $d^{*}(n, m, \ell)$ which is identified with the partition function of the boundary $CFT$. Using the fact that the  completion $\wh \psi_{m}^{F}(\t, z)$ is modular, it should be possible to develop a Rademacher-like expansion for the $d^{*}(n, m, \ell)$ 
along the lines of \cite{BringOno, BringRicht, Bringmann:2010sd}. This method has been used to prove 
that the black hole degeneracies are positive integers~\cite{Bringmann:2012zr}.

\item 
\textit{Mock modularity from the bulk perspective}:

If mock modularity is indeed a consequence of the noncompactness of the  boundary $CFT$, what are the implications of this noncompactness in the bulk $AdS_{3}$? In particular, if the spectrum of conformal dimensions is continuous, then the bulk theory must have a continuum of masses. 
Perhaps these are to be identified with some extended long-string like excitations. It would be interesting to see if the mock modularity of the elliptic genus can be understood  in terms of such bulk excitations. 
Indeed, recent investigations seem to indicate that the bulk string partition function in 
string theories~\cite{Harvey:2013mda} 
 based on the non-compact $SL(2, \mathbb{R})/U(1)$ CFT~\cite{Giveon:1999px, Murthy:2003es} 
show a mock modular behavior. 

The M-theory limit corresponds to the attractor contour  for $n>m$ and $0 \leq \ell <2m$ but the converse is not necessarily true, as explained in \S\ref{Degen}. For $n \leq m$, we have the possibility that there will be multi-centered configurations. In particular, if  $n \leq m$, then even if $\ell <2m$, it is possible to have states with non-positive discriminant ($\Delta = 4mn - \ell^{2} \leq 0)$.  Since, the discriminant must be positive for a black hole horizon, such states cannot possibly correspond to single-centered black holes and must be realized in supergravity  as multi-centered configurations if there are smooth solutions corresponding to them at all.  Hence, if  
$\wh \psi_{m}^{F}$ is indeed  the elliptic genus of a single-centered $AdS_{3}$, then there must be supergravity configurations corresponding to these  states for $n \leq m$ that fit inside a single  $AdS_{3}$. 

This microscopic prediction is  plausible  from the supergravity perspective. According to \cite{deBoer:2008fk}, all two-centered black hole solutions where each center is a D4-D2 bound state,  are scaled out of  a single~$AdS_{3}$ throat. However,  the two-centered solutions that have one center carrying D6-brane charge and the other carrying anti-D6-brane charge can continue to exist as stable solutions inside a single $AdS_{3}$.  Such configurations can give the supergravity realization of the  negative discriminant states.  Further analysis  is required to verify these consequences of  our proposed interpretation of $\wh \psi_{m}^{F}$. 

\item
\textit{Generalization to $\CN=2$ compactifications}:

One motivation for the present investigation is to abstract some general lessons which
 could be applicable to the more
challenging case of dyons in $\CN=2$ compactifications on a general Calabi-Yau
three-fold. 
In the $\CN=4$ case, the mock modularity of $\psi_{m}^{F}$ was a consequence of meromorphy of $\psi_{m}$. As we have seen in \S\ref{Anomaly}, this meromorphy is in turn a consequence of noncompactness of Taub-NUT factor $ \sigma(\textrm{TN})$ in   the asymptotic SCFT (\ref{scft})\footnote{Note that this is not to be confused with $\mathbb{R}^{3}$ factor corresponding to the center of mass motion. It corresponds to the relative motion of the KK-monopole and the D1D5 system \cite{David:2006yn} and thus  would  belong to  what is called the ``entropic" factor
\cite{Minasian:1999qn} for  the MSW string in the M-theory frame.}.  This suggests that the SCFT dual to the near horizon $AdS_{3}$ is also noncompact. In the  $\CN=2$
context also, there is no a priori reason to
expect that the MSW SCFT will be compact in general  everywhere in the moduli space.
If this SCFT is noncompact, then our results suggest that mock Jacobi forms would provide the right framework for making the  holographic
$SL(2, \Z)$ symmetry manifest for the indexed partition function considered, for example, in \cite{Dijkgraaf:2000fq,deBoer:2006vg,Manschot:2007ha}.  
In the $\CN=4$ case, the  mock modularity  is closely
correlated with the wall-crossing phenomenon. The wall-crossing phenomenon in $\CN=2$ case corresponds to jumps in the
Donaldson-Thomas invariants  and is known to be much
more complicated \cite{Denef:2007vg, Konstevich:2008, Joyce:2008, Manschot:2010qz}. 
It would be interesting though to define the  analog of the counting function for the immortal degeneracies in this case. Since these degeneracies  do not change under wall-crossing, such a counting function is likely to have an interesting mathematical interpretation. Moreover, it is expected to have nice (mock) modular properties from the considerations of holography discussed above.

\item
\textit{Mock modularity and indefinite theta series}:

Mock modular forms appear in Zwegers's work as members of  three quite different families of functions:
\begin{myenumerate}
\item Fourier coefficients of meromorphic Jacobi forms, 
\item Appell-Lerch sums,  
\item Indefinite theta series. 
\end{myenumerate}
We have seen that the first two families appear naturally in the context of black hole physics with 
natural physical interpretations. This  suggests that the indefinite theta series could also play a role 
in this physical context. It  would be  interesting to develop this connection further. 
For some earlier work in this direction, see~\cite{Manschot:2009ia, Alexandrov:2012au}.
\end{itemize}

\smallskip

\section*{Acknowledgments}
%
%
%
We would like to acknowledge useful conversations with Kathrin Bringmann, Miranda Cheng, Jan de
Boer, Rajesh Gopakumar, Jeff Harvey, Jan Manschot, Shiraz Minwalla, Boris Pioline, Jan Troost, 
Erik Verlinde, Sander Zwegers, and especially Ashoke Sen. We would also like to thank 
Alexander Weisse for technical help. It is a pleasure to thank the
organizers of the 2007 workshop on `Black Holes and Modular Forms' in Paris and of the
2008 workshop on `Number Theory and Physics at Crossroads' in Banff for hospitality
where this research was initiated. The work of A.~D. was supported in part by the
Excellence Chair of the Agence Nationale de la Recherche (ANR). The work of S.~M.
was supported by the ERC Advanced Grant no.~246974,
{\it ``Supersymmetry: a window to non-perturbative physics''}. \\
v2: The work of A.~D.~was conducted within the framework of  the ILP LABEX (ANR-10-LABX-63)  supported by French state funds managed by the ANR within the Investissements d'Avenir programme under reference ANR-11-IDEX-0004-02. S.~M.~would like to thank the SFTC for support from Consolidated grant number ST/J002798/1.

\newpage

\appendix

\section{Appendix: Tables of special mock Jacobi forms \label{apptabs}}

\smallskip

\small

{\normalsize 
This appendix contains three collections of tables. The first gives the first Fourier coefficients 
of the weight~2 mock Jacobi forms~$\CQ_{M}$ for all products of two primes less than~50, the second gives
coefficients of the primitive mock Jacobi forms~$\Phi_{1,m}^{0}$ for all known cases having optimal growth, 
and the third gives coefficients of the weight~1 mock Jacobi forms~$\CQ_{M}$ for  all products of three primes
less than~150.

\bigskip

\subsection{Table of $\CQ_M$ (weight 2 case)\label{apptabs1}}

For each~$M$ with~$\mu(M)=1$ with~$1<M<50$ we tabulate the value of~$c_{M}$ and the Fourier
coefficients~$c(\CF_{M}; n, r_{\rm min})$ for~$n$ ranging from~0 to some limit and all values of~$r_{\rm min}$,
(defined as the minimal positive representatives prime to~$M$ of their squares modulo~$4M$). 
Here $c_{M}$ is the smallest positive rational number such that~$\CF_{M}=-c_{M}\CQ_{M}$ has integral 
coefficients, as defined in~\eqref{DefOfcM}.  The asterisk before a value of~$r_{\rm min}$
means that the corresponding row contains at least one negative non-polar coefficient. 
}

\vskip.2cm {\bf M = 6} \qquad{$c_M\=12$}
\begin{center} \begin{tabular}{|c|ccccccccccccc|cccc}
\hline
\diagbox[width=1.5cm,height=0.8cm]{$r_{\text{min}}$}{$n$} & 0 & 1 & 2 & 3 & 4 & 5 & 6 & 7 & 8 & 9 & 10 & 11 &12 \\ \hline
1 & $-1$ & $35$ & $130$ & $273$ & $595$ & $1001$ & $1885$ & $2925$ & $4886$ & $7410$ & $11466$ & $16660$ & $24955$ \\ 
\hline \end{tabular}  \end{center}

\vskip.2cm {\bf M = 10}\qquad{$c_M\=6$}
\begin{center} \begin{tabular}{|c|cccccccccccccc|cccc}
\hline
\diagbox[width=1.5cm,height=0.8cm]{$r_{\text{min}}$}{$n$} & 0 & 1 & 2 & 3 & 4 & 5 & 6 & 7 & 8 & 9 & 10 & 11 & 12 & 13 \\ \hline
1 & $-1$ & $21$ & $63$ & $112$ & $207$ & $306$ & $511$ & $693$ & $1071$ & $1442$ & $2037$ & $2709$ & $3766$ & $4788$   \\ 
3 & $0$ & $9$ & $35$ & $57$ & $126$ & $154$ & $315$ & $378$ & $625$ & $819$ & $1233$ & $1491$ & $2268$ & $2772$ \\ 
\hline
\end{tabular}  \end{center}

\vskip.2cm {\bf M = 14} \qquad $c_M\=4$
\begin{center} \begin{tabular}{|c|cccccccccccccc|cccc}
\hline
\diagbox[width=1.5cm,height=0.8cm]{$r_{\text{min}}$}{$n$}  & 0 & 1 & 2 & 3 & 4 & 5 & 6 & 7 & 8 & 9 & 10 & 11 & 12 & 13 \\ \hline
1 & $-1$ & $15$ & $42$ & $65$ & $120$ & $150$ & $255$ & $312$ & $465$ & $575$ & $819$ & $975$ & $1355$  & $1605$ \\ 
3 & $0$ & $10$ & $30$ & $51$ & $85$ & $120$ & $195$ & $230$ & $360$ & $465$ & $598$ & $765$ & $1065$ & $1235$ \\ 
5 & $0$ & $3$ & $15$ & $20$ & $51$ & $45$ & $113$ & $105$ & $195$ & $215$ & $345$ & $348$ & $595$ & $615$ \\ 
\hline
\end{tabular}  \end{center}

\vskip.2cm {\bf M = 15} \qquad $c_M\=3$
\begin{center} \begin{tabular}{|c|ccccccccccccccc|cc}
\hline
\diagbox[width=1.5cm,height=0.8cm]{$r_{\text{min}}$}{$n$} & 0 & 1 & 2 & 3 & 4 & 5 & 6 & 7 & 8 & 9 & 10 & 11 & 12 & 13 & 14 \\ \hline
1 & $-1$ & $14$ & $29$ & $54$ & $83$ & $128$ & $172$ & $230$ & $331$ & $430$ & $537$ & $726$ & $924$ & $1116$ & $1409$  \\ 
2 & $0$ & $4$ & $24$ & $28$ & $58$ & $52$ & $136$ & $120$ & $224$ & $232$ & $376$ & $368$ & $626$ & $632$ & $920$   \\ 
4 & $0$ & $10$ & $28$ & $40$ & $88$ & $98$ & $176$ & $208$ & $304$ & $386$ & $552$ & $632$ & $888$ & $1050$ & $1372$ \\ 
7 & $0$ & $2$ & $9$ & $26$ & $36$ & $50$ & $84$ & $116$ & $155$ & $226$ & $269$ & $340$ & $488$ & $580$ & $712$ \\ 
\hline
\end{tabular}  \end{center}

\newpage
\vskip.2cm {\bf M = 21} \qquad $c_M\=2$
\begin{center} \begin{tabular}{|c|cccccccccccccccc|cc}
\hline
\diagbox[width=1.5cm,height=0.8cm]{$r_{\text{min}}$}{$n$}  & 0 & 1 & 2 & 3 & 4 & 5 & 6 & 7 & 8 & 9 & 10 & 11 & 12 & 13 & 14 & 15 \\ \hline
1 & $-1$ & $8$ & $15$ & $32$ & $40$ & $56$ & $79$ & $96$ & $135$ & $176$ & $191$ & $248$ & $336$ & $368$ & $434$ & $576$   \\ 
2 & $0$ & $8$ & $26$ & $28$ & $56$ & $64$ & $112$ & $104$ & $180$ & $184$ & $272$ & $300$ & $416$ & $432$ & $606$ & $636$   \\ 
4 & $0$ & $2$ & $12$ & $8$ & $32$ & $14$ & $60$ & $40$ & $80$ & $64$ & $148$ & $96$ & $208$ & $166$ & $272$ & $268$   \\ 
5 & $0$ & $8$ & $17$ & $32$ & $49$ & $64$ & $85$ & $120$ & $150$ & $200$ & $241$ & $288$ & $374$ & $448$ & $534$ & $656$   \\ 
8 & $0$ & $2$ & $12$ & $20$ & $32$ & $38$ & $76$ & $72$ & $112$ & $138$ & $176$ & $200$ & $304$ & $302$ & $408$ & $472$   \\ 
11 & $0$ & $0$ & $3$ & $8$ & $11$ & $24$ & $26$ & $32$ & $59$ & $64$ & $74$ & $112$ & $138$ & $144$ & $189$ & $248$   \\ 
\hline
\end{tabular}  \end{center}

\vskip.2cm {\bf M = 22} \qquad $c_M\=12$
\begin{center} \begin{tabular}{|c|cccccccccccccc|ccc}
\hline
\diagbox[width=1.5cm,height=0.8cm]{$r_{\text{min}}$}{$n$}  & 0 & 1 & 2 & 3 & 4 & 5 & 6 & 7 & 8 & 9 & 10 & 11 & 12 & 13  \\ \hline
1 & $-5$ & $49$ & $132$ & $176$ & $308$ & $357$ & $572$ & $604$ & $951$ & $1061$ & $1391$ & $1572$ & $2227$ & $2325$   \\ 
3 & $0$ & $44$ & $103$ & $169$ & $272$ & $301$ & $536$ & $580$ & $771$ & $976$ & $1307$ & $1380$ & $2004$ & $2121$  \\ 
5 & $0$ & $25$ & $91$ & $116$ & $223$ & $229$ & $427$ & $405$ & $716$ & $700$ & $1071$ & $1121$ & $1608$ & $1633$ \\ 
7 & $0$ & $13$ & $48$ & $79$ & $127$ & $180$ & $272$ & $272$ & $452$ & $549$ & $632$ & $764$ & $1120$ & $1133$ \\ 
9 & $0$ & $1$ & $23$ & $24$ & $91$ & $25$ & $180$ & $92$ & $247$ & $205$ & $403$ & $229$ & $715$ & $453$ \\ 
\hline
\end{tabular}  \end{center}

\vskip.2cm {\bf M = 26} \qquad $c_M\=2$
\begin{center} \begin{tabular}{|c|cccccccccccccccc|c}
\hline
\diagbox[width=1.5cm,height=0.8cm]{$r_{\text{min}}$}{$n$}  & 0 & 1 & 2 & 3 & 4 & 5 & 6 & 7 & 8 & 9 & 10 & 11 & 12 & 13 & 14 & 15 \\ \hline
1 & $-1$ & $9$ & $21$ & $30$ & $51$ & $54$ & $87$ & $96$ & $138$ & $149$ & $207$ & $219$ & $308$ & $315$ & $399$ & $468$   \\ 
3 & $0$ & $7$ & $21$ & $27$ & $43$ & $51$ & $90$ & $76$ & $132$ & $144$ & $183$ & $201$ & $297$ & $283$ & $393$ & $423$   \\ 
5 & $0$ & $6$ & $15$ & $23$ & $39$ & $42$ & $69$ & $75$ & $102$ & $128$ & $165$ & $165$ & $243$ & $258$ & $324$ & $371$   \\ 
7 & $0$ & $3$ & $12$ & $14$ & $33$ & $27$ & $57$ & $51$ & $90$ & $84$ & $132$ & $126$ & $201$ & $183$ & $261$ & $267$   \\ 
9 & $0$ & $1$ & $6$ & $12$ & $15$ & $21$ & $36$ & $34$ & $57$ & $63$ & $70$ & $93$ & $138$ & $118$ & $162$ & $210$   \\ 
11 & $0$ & $0$ & $3$ & $2$ & $12$ & $3$ & $26$ & $6$ & $30$ & $24$ & $57$ & $15$ & $86$ & $45$ & $87$ & $81$ \\
\hline
\end{tabular}  \end{center}

\vskip.2cm {\bf M = 33} \qquad $c_M\=6$
\begin{center} \begin{tabular}{|r|ccccccccccccccc|cc}
\hline
\diagbox[width=1.5cm,height=0.8cm]{$r_{\text{min}}$}{$n$}  & 0 & 1 & 2 & 3 & 4 & 5 & 6 & 7 & 8 & 9 & 10 & 11 & 12 & 13 & 14  \\ \hline
1 & $-5$ & $33$ & $55$ & $127$ & $127$ & $160$ & $242$ & $259$ & $347$ & $468$ & $468$ & $534$ & $749$ & $826$ & $903$ \\ 
2 & $0$ & $21$ & $66$ & $43$ & $142$ & $120$ & $252$ & $174$ & $340$ & $262$ & $526$ & $394$ & $658$ & $580$ & $844$  \\ 
4 & $0$ & $24$ & $88$ & $88$ & $165$ & $154$ & $286$ & $244$ & $418$ & $376$ & $572$ & $596$ & $869$ & $728$ & $1122$   \\ 
5 & $0$ & $14$ & $36$ & $47$ & $80$ & $113$ & $113$ & $160$ & $215$ & $259$ & $229$ & $372$ & $460$ & $386$ & $507$  \\ 
7 & $0$ & $33$ & $44$ & $99$ & $143$ & $167$ & $209$ & $308$ & $365$ & $440$ & $506$ & $541$ & $748$ & $847$ & $915$  \\ 
* \, 8 & $0$ & $-4$ & $29$ & $18$ & $62$ & $-4$ & $146$ & $-8$ & $124$ & $76$ & $234$ & $54$ & $336$ & $116$ & $380$  \\ 
10 & $0$ & $11$ & $50$ & $66$ & $116$ & $110$ & $248$ & $220$ & $292$ & $358$ & $474$ & $435$ & $738$ & $622$ & $892$  \\ 
13 & $0$ & $0$ & $21$ & $54$ & $77$ & $120$ & $131$ & $153$ & $230$ & $273$ & $317$ & $405$ & $482$ & $548$ & $613$ \\ 
16 & $0$ & $0$ & $3$ & $22$ & $58$ & $36$ & $94$ & $124$ & $127$ & $146$ & $248$ & $182$ & $350$ & $380$ & $408$  \\ 
19 & $0$ & $0$ & $0$ & $7$ & $18$ & $7$ & $40$ & $40$ & $58$ & $113$ & $58$ & $113$ & $186$ & $153$ & $120$ \\
\hline
\end{tabular}  \end{center}

\newpage
\vskip.2cm {\bf M = 34} \qquad $c_M\=3$
\begin{center} \begin{tabular}{|r|ccccccccccccccc|cc}
\hline
\diagbox[width=1.5cm,height=0.8cm]{$r_{\text{min}}$}{$n$}  & 0 & 1 & 2 & 3 & 4 & 5 & 6 & 7 & 8 & 9 & 10 & 11 & 12 & 13 & 14 \\ \hline
1 & $-2$ & $14$ & $36$ & $43$ & $69$ & $82$ & $123$ & $117$ & $183$ & $195$ & $253$ & $252$ & $364$ & $352$ & $481$  \\ 
3 & $0$ & $13$ & $29$ & $44$ & $77$ & $61$ & $125$ & $116$ & $164$ & $185$ & $244$ & $234$ & $381$ & $334$ & $427$   \\ 
5 & $0$ & $11$ & $28$ & $39$ & $59$ & $67$ & $106$ & $107$ & $162$ & $157$ & $214$ & $242$ & $321$ & $306$ & $391$  \\ 
7 & $0$ & $7$ & $26$ & $28$ & $53$ & $54$ & $100$ & $74$ & $149$ & $144$ & $194$ & $177$ & $292$ & $252$ & $378$  \\ 
9 & $0$ & $5$ & $14$ & $27$ & $41$ & $43$ & $79$ & $72$ & $92$ & $125$ & $165$ & $141$ & $236$ & $234$ & $262$  \\ 
11 & $0$ & $1$ & $13$ & $12$ & $36$ & $26$ & $57$ & $43$ & $100$ & $69$ & $124$ & $100$ & $196$ & $140$ & $236$  \\ 
13 & $0$ & $0$ & $5$ & $12$ & $16$ & $18$ & $34$ & $32$ & $54$ & $65$ & $53$ & $85$ & $128$ & $80$ & $139$  \\ 
* \, 15 & $0$ & $0$ & $2$ & $1$ & $13$ & $-2$ & $35$ & $0$ & $29$ & $14$ & $59$ & $-2$ & $94$ & $28$ & $72$ \\ 
\hline
\end{tabular}  \end{center}

\vskip.2cm {\bf M = 35} \qquad $c_M\=1$
\begin{center} \begin{tabular}{|r|cccccccccccccccc|c}
\hline
\diagbox[width=1.5cm,height=0.8cm]{$r_{\text{min}}$}{$n$}  & 0 & 1 & 2 & 3 & 4 & 5 & 6 & 7 & 8 & 9 & 10 & 11 & 12 & 13 & 14 & 15 \\ \hline
1 & $-1$ & $10$ & $14$ & $18$ & $31$ & $44$ & $49$ & $60$ & $76$ & $86$ & $98$ & $134$ & $147$ & $168$ & $200$ & $226$ \\ 
* \, 2 & $0$ & $-2$ & $8$ & $4$ & $14$ & $-2$ & $24$ & $8$ & $30$ & $12$ & $56$ & $0$ & $70$ & $22$ & $72$ & $44$ \\ 
3 & $0$ & $2$ & $6$ & $16$ & $11$ & $18$ & $24$ & $26$ & $43$ & $50$ & $44$ & $58$ & $84$ & $70$ & $93$ & $124$ \\ 
4 & $0$ & $6$ & $10$ & $16$ & $28$ & $26$ & $46$ & $40$ & $56$ & $70$ & $90$ & $96$ & $128$ & $128$ & $152$ & $176$ \\ 
6 & $0$ & $6$ & $16$ & $16$ & $30$ & $32$ & $56$ & $52$ & $78$ & $80$ & $104$ & $108$ & $152$ & $150$ & $208$ & $216$ \\ 
8 & $0$ & $2$ & $10$ & $8$ & $20$ & $12$ & $42$ & $24$ & $52$ & $40$ & $74$ & $48$ & $104$ & $90$ & $122$ & $120$ \\ 
9 & $0$ & $2$ & $3$ & $8$ & $14$ & $18$ & $17$ & $16$ & $30$ & $42$ & $39$ & $58$ & $69$ & $60$ & $71$ & $100$ \\ 
11 & $0$ & $2$ & $5$ & $16$ & $22$ & $20$ & $30$ & $42$ & $52$ & $68$ & $67$ & $84$ & $114$ & $118$ & $121$ & $166$ \\ 
13 & $0$ & $0$ & $5$ & $10$ & $9$ & $18$ & $24$ & $34$ & $33$ & $42$ & $59$ & $52$ & $79$ & $92$ & $99$ & $126$ \\ 
16 & $0$ & $0$ & $2$ & $0$ & $12$ & $12$ & $16$ & $8$ & $28$ & $18$ & $50$ & $40$ & $56$ & $36$ & $88$ & $56$ \\ 
18 & $0$ & $0$ & $0$ & $4$ & $6$ & $6$ & $24$ & $8$ & $30$ & $24$ & $32$ & $40$ & $64$ & $46$ & $80$ & $80$ \\ 
23 & $0$ & $0$ & $0$ & $0$ & $1$ & $2$ & $5$ & $8$ & $7$ & $16$ & $11$ & $8$ & $33$ & $26$ & $19$ & $50$ \\ 
\hline
\end{tabular}  \end{center}

\vskip.2cm {\bf M = 38} \qquad $c_M\=4$
\begin{center} \begin{tabular}{|r|ccccccccccccccc|cc}
\hline
\diagbox[width=1.5cm,height=0.8cm]{$r_{\text{min}}$}{$n$}  & 0 & 1 & 2 & 3 & 4 & 5 & 6 & 7 & 8 & 9 & 10 & 11 & 12 & 13 & 14 \\ \hline
1 & $-3$ & $21$ & $44$ & $59$ & $103$ & $97$ & $159$ & $149$ & $238$ & $240$ & $322$ & $300$ & $465$ & $445$ & $540$  \\ 
3 & $0$ & $16$ & $46$ & $62$ & $80$ & $102$ & $166$ & $138$ & $226$ & $231$ & $287$ & $318$ & $449$ & $391$ & $559$  \\ 
5 & $0$ & $15$ & $43$ & $48$ & $91$ & $75$ & $151$ & $148$ & $194$ & $208$ & $303$ & $269$ & $427$ & $354$ & $512$ \\ 
7 & $0$ & $14$ & $29$ & $43$ & $82$ & $83$ & $119$ & $117$ & $179$ & $197$ & $256$ & $252$ & $360$ & $358$ & $433$  \\ 
9 & $0$ & $6$ & $30$ & $36$ & $61$ & $55$ & $124$ & $77$ & $179$ & $165$ & $201$ & $193$ & $360$ & $256$ & $404$  \\ 
11 & $0$ & $4$ & $16$ & $31$ & $47$ & $45$ & $88$ & $91$ & $108$ & $136$ & $168$ & $155$ & $257$ & $254$ & $300$  \\ 
13 & $0$ & $0$ & $14$ & $14$ & $40$ & $26$ & $72$ & $45$ & $103$ & $64$ & $169$ & $105$ & $207$ & $137$ & $249$  \\ 
15 & $0$ & $0$ & $5$ & $11$ & $14$ & $29$ & $43$ & $25$ & $57$ & $75$ & $58$ & $76$ & $160$ & $88$ & $133$  \\ 
*\,17 & $0$ & $0$ & $1$ & $1$ & $20$ & $-7$ & $33$ & $-5$ & $47$ & $21$ & $60$ & $-14$ & $109$ & $15$ & $83$ \\ 
\hline
\end{tabular}  \end{center}

\newpage

\vskip.2cm {\bf M = 39} \qquad $c_M\=1$
\begin{center} \begin{tabular}{|r|cccccccccccccccc|c}
\hline
\diagbox[width=1.5cm,height=0.8cm]{$r_{\text{min}}$}{$n$}  & 0 & 1 & 2 & 3 & 4 & 5 & 6 & 7 & 8 & 9 & 10 & 11 & 12 & 13 & 14 & 15 \\ \hline
1 & $-1$ & $6$ & $8$ & $16$ & $22$ & $26$ & $31$ & $42$ & $47$ & $60$ & $72$ & $72$ & $94$ & $110$ & $115$ & $154$ \\ 
2 & $0$ & $4$ & $14$ & $16$ & $26$ & $20$ & $52$ & $32$ & $64$ & $60$ & $86$ & $76$ & $128$ & $100$ & $148$ & $152$ \\ 
4 & $0$ & $2$ & $12$ & $6$ & $20$ & $16$ & $36$ & $22$ & $48$ & $32$ & $68$ & $46$ & $96$ & $66$ & $124$ & $86$ \\ 
5 & $0$ & $6$ & $11$ & $18$ & $22$ & $34$ & $37$ & $44$ & $64$ & $66$ & $71$ & $102$ & $116$ & $120$ & $137$ & $170$ \\ 
7 & $0$ & $2$ & $3$ & $10$ & $12$ & $12$ & $16$ & $18$ & $28$ & $42$ & $28$ & $40$ & $64$ & $44$ & $57$ & $88$ \\ 
8 & $0$ & $4$ & $12$ & $12$ & $28$ & $22$ & $44$ & $40$ & $56$ & $62$ & $88$ & $74$ & $120$ & $108$ & $156$ & $148$ \\ 
*\,10 & $0$ & $0$ & $4$ & $0$ & $10$ & $-4$ & $22$ & $4$ & $18$ & $0$ & $36$ & $0$ & $46$ & $16$ & $42$ & $24$ \\ 
11 & $0$ & $2$ & $7$ & $16$ & $17$ & $24$ & $34$ & $36$ & $47$ & $66$ & $62$ & $76$ & $107$ & $104$ & $117$ & $154$ \\ 
14 & $0$ & $0$ & $6$ & $8$ & $14$ & $20$ & $28$ & $24$ & $46$ & $40$ & $62$ & $60$ & $86$ & $72$ & $118$ & $116$ \\ 
17 & $0$ & $0$ & $1$ & $6$ & $10$ & $12$ & $15$ & $24$ & $31$ & $34$ & $35$ & $44$ & $61$ & $72$ & $71$ & $86$ \\ 
20 & $0$ & $0$ & $0$ & $2$ & $4$ & $6$ & $16$ & $8$ & $16$ & $20$ & $24$ & $30$ & $48$ & $32$ & $44$ & $60$ \\ 
23 & $0$ & $0$ & $0$ & $0$ & $1$ & $4$ & $3$ & $2$ & $11$ & $12$ & $4$ & $18$ & $19$ & $6$ & $25$ & $32$ \\ 
\hline
\end{tabular}  \end{center}

\vskip.2cm {\bf M = 46} \qquad $c_M\=12$
\begin{center} \begin{tabular}{|r|cccccccccccccc|ccc}
\hline
\diagbox[width=1.5cm,height=0.8cm]{$r_{\text{min}}$}{$n$}  & 0 & 1 & 2 & 3 & 4 & 5 & 6 & 7 & 8 & 9 & 10 & 11 & 12 & 13  \\ \hline
1 & $-11$ & $57$ & $161$ & $184$ & $276$ & $299$ & $483$ & $414$ & $667$ & $644$ & $874$ & $897$ & $1219$ & $1023$ \\ 
3 & $0$ & $69$ & $115$ & $184$ & $299$ & $253$ & $493$ & $437$ & $585$ & $690$ & $838$ & $736$ & $1288$ & $1127$ \\ 
5 & $0$ & $46$ & $138$ & $147$ & $299$ & $285$ & $414$ & $368$ & $667$ & $607$ & $782$ & $676$ & $1205$ & $1044$ \\ 
7 & $0$ & $46$ & $123$ & $161$ & $215$ & $230$ & $414$ & $353$ & $560$ & $575$ & $744$ & $790$ & $1043$ & $905$ \\ 
9 & $0$ & $30$ & $92$ & $138$ & $230$ & $168$ & $421$ & $329$ & $467$ & $444$ & $766$ & $559$ & $1065$ & $766$ \\ 
11 & $0$ & $23$ & $75$ & $98$ & $167$ & $230$ & $282$ & $276$ & $443$ & $443$ & $558$ & $535$ & $817$ & $765$ \\ 
13 & $0$ & $5$ & $69$ & $69$ & $166$ & $120$ & $281$ & $143$ & $442$ & $350$ & $465$ & $424$ & $810$ & $516$ \\ 
15 & $0$ & $0$ & $27$ & $73$ & $119$ & $92$ & $211$ & $188$ & $211$ & $349$ & $376$ & $303$ & $606$ & $487$ \\ 
17 & $0$ & $0$ & $26$ & $26$ & $92$ & $26$ & $187$ & $95$ & $256$ & $121$ & $325$ & $233$ & $466$ & $236$ \\ 
19 & $0$ & $0$ & $2$ & $23$ & $25$ & $71$ & $94$ & $48$ & $163$ & $163$ & $96$ & $140$ & $349$ & $188$ \\ 
*\,21 & $0$ & $0$ & $0$ & $1$ & $47$ & $-22$ & $93$ & $-22$ & $70$ & $1$ & $208$ & $-68$ & $324$ & $-21$ \\ 
\hline
\end{tabular}  \end{center}

\newpage

{\normalsize 

\bigskip\bigskip

\subsection{Table of $\Phi^0_{1,m}$ of optimal growth \label{apptabs2}}

We give the  Fourier coefficients of the primitive weight~1 mock Jacobi form~$\Phi^0_{1,m}$ for the
fourteen known values of~$m$ for which it has optimal growth, in the same format as was used in the $k=2$
 case above.  The number~$c$ is the factor relating the true value of~$\Phi_{1,m}^0$ to the form whose
coefficients have been tabulated, and which has been normalized to have integral Fourier coefficients
with no common factor except for the values $C(\D=0,r)$, which are allowed to have a denominator of~2.
(This happens for the indices 9, 16, 18, and 25.)
}

\vskip.8cm {\bf m = 2} \qquad $c\=24$
\begin{center} \begin{tabular}{|r|cccccccccccc|ccccc}\hline
\diagbox[width=1.5cm,height=0.8cm]{$r_{\text{min}}$}{$n$}  & 0 & 1 & 2 & 3 & 4 & 5 & 6 & 7 & 8 & 9 & 10 & 11  \\ \hline
$1$& $-1$ & $45$ & $231$ & $770$ & $2277$ & $5796$ & $13915$ & $30843$ & $65550$ & $132825$ & $260568$ & $494385$   \\ 
\hline \end{tabular}  \end{center}

\vskip.5cm {\bf m = 3} \qquad $c\=12$
\begin{center} \begin{tabular}{|r|cccccccccccccc|ccc}
\hline \diagbox[width=1.5cm,height=0.8cm]{$r_{\text{min}}$}{$n$}  & 0 & 1 & 2 & 3 & 4 & 5 & 6 & 7 & 8 & 9 & 10 & 11 & 12 & 13 \\ \hline
$1$& $-1$ & $16$ & $55$ & $144$ & $330$ & $704$ & $1397$ & $2640$ & $4819$ & $8480$ & $14509$ & $24288$ & $39765$ & $63888$   \\ 
$2$& $0$ & $10$ & $44$ & $110$ & $280$ & $572$ & $1200$ & $2244$ & $4180$ & $7348$ & $12772$ & $21330$ & $35288$ & $56760$   \\ 
\hline \end{tabular}  \end{center}

\vskip.5cm {\bf m = 4} \qquad $c\=8$
\begin{center} \begin{tabular}{|r|ccccccccccccccc|cc}
\hline \diagbox[width=1.5cm,height=0.8cm]{$r_{\text{min}}$}{$n$}  & 0 & 1 & 2 & 3 & 4 & 5 & 6 & 7 & 8 & 9 & 10 & 11 & 12 & 13 & 14 \\ \hline
$1$& $-1$ & $7$ & $21$ & $43$ & $94$ & $168$ & $308$ & $525$ & $882$ & $1407$ & $2255$ & $3468$ & $5306$ & $7931$ & $11766$   \\ 
$2$& $0$ & $8$ & $24$ & $56$ & $112$ & $216$ & $392$ & $672$ & $1128$ & $1840$ & $2912$ & $4536$ & $6936$ & $10416$ & $15456$   \\ 
$3$& $0$ & $3$ & $14$ & $28$ & $69$ & $119$ & $239$ & $393$ & $693$ & $1106$ & $1806$ & $2772$ & $4333$ & $6468$ & $9710$   \\ 
\hline \end{tabular}  \end{center}

\vskip.5cm {\bf m = 5} \qquad $c\=6$
\begin{center} \begin{tabular}{|r|cccccccccccccccc|c}
\hline \diagbox[width=1.5cm,height=0.8cm]{$r_{\text{min}}$}{$n$}  & 0 & 1 & 2 & 3 & 4 & 5 & 6 & 7 & 8 & 9 & 10 & 11 & 12 & 13 & 14 & 15 \\ \hline
$1$& $-1$ & $4$ & $9$ & $20$ & $35$ & $60$ & $104$ & $164$ & $255$ & $396$ & $590$ & $864$ & $1259$ & $1800$ & $2541$ & $3560$  \\ 
$2$& $0$ & $5$ & $15$ & $26$ & $54$ & $90$ & $156$ & $244$ & $396$ & $590$ & $905$ & $1320$ & $1934$ & $2751$ & $3924$ & $5456$  \\ 
$3$& $0$ & $4$ & $11$ & $24$ & $45$ & $80$ & $135$ & $220$ & $350$ & $540$ & $810$ & $1204$ & $1761$ & $2524$ & $3586$ & $5040$  \\ 
$4$& $0$ & $1$ & $6$ & $10$ & $25$ & $36$ & $76$ & $110$ & $189$ & $280$ & $446$ & $636$ & $970$ & $1360$ & $1980$ & $2750$  \\ 
\hline \end{tabular}  \end{center}

\newpage
\vskip.2cm {\bf m = 6} \qquad $c\=24$
\begin{center} \begin{tabular}{|r|ccccccccccccccc|cc}
\hline \diagbox[width=1.5cm,height=0.8cm]{$r_{\text{min}}$}{$n$} & 0 & 1 & 2 & 3 & 4 & 5 & 6 & 7 & 8 & 9 & 10 & 11 & 12 & 13 & 14  \\ \hline
$1$& $-5$ & $7$ & $26$ & $33$ & $71$ & $109$ & $185$ & $249$ & $418$ & $582$ & $858$ & $1184$ & $1703$ & $2291$ & $3213$  \\ 
$2$& $0$ & $24$ & $48$ & $96$ & $168$ & $264$ & $432$ & $672$ & $984$ & $1464$ & $2112$ & $2976$ & $4200$ & $5808$ & $7920$ \\ 
$3$& $0$ & $12$ & $36$ & $60$ & $120$ & $180$ & $312$ & $456$ & $720$ & $1020$ & $1524$ & $2124$ & $3036$ & $4140$ & $5760$   \\ 
$4$& $0$ & $12$ & $36$ & $72$ & $120$ & $216$ & $348$ & $528$ & $816$ & $1212$ & $1752$ & $2520$ & $3552$ & $4920$ & $6792$   \\ 
$5$& $0$ & $1$ & $13$ & $14$ & $51$ & $53$ & $127$ & $155$ & $291$ & $382$ & $618$ & $798$ & $1256$ & $1637$ & $2369$  \\ 
\hline \end{tabular}  \end{center}

\vskip.2cm {\bf m = 7} \qquad $c\=4$
\begin{center} \begin{tabular}{|r|cccccccccccccccc|c}
\hline \diagbox[width=1.5cm,height=0.8cm]{$r_{\text{min}}$}{$n$}  & 0 & 1 & 2 & 3 & 4 & 5 & 6 & 7 & 8 & 9 & 10 & 11 & 12 & 13 & 14 & 15 \\ \hline
$1$& $-1$ & $2$ & $3$ & $5$ & $10$ & $15$ & $21$ & $34$ & $48$ & $65$ & $94$ & $129$ & $175$ & $237$ & $312$ & $413$  \\ 
$2$& $0$ & $2$ & $6$ & $10$ & $16$ & $24$ & $40$ & $54$ & $84$ & $116$ & $164$ & $222$ & $310$ & $406$ & $552$ & $722$  \\ 
$3$& $0$ & $3$ & $6$ & $11$ & $18$ & $29$ & $45$ & $66$ & $95$ & $137$ & $192$ & $264$ & $361$ & $486$ & $650$ & $862$  \\ 
$4$& $0$ & $2$ & $6$ & $8$ & $18$ & $24$ & $42$ & $58$ & $90$ & $122$ & $180$ & $240$ & $338$ & $448$ & $612$ & $794$  \\ 
$5$& $0$ & $1$ & $3$ & $7$ & $11$ & $18$ & $28$ & $41$ & $63$ & $91$ & $125$ & $177$ & $245$ & $328$ & $441$ & $590$  \\ 
$6$& $0$ & $0$ & $2$ & $2$ & $6$ & $6$ & $16$ & $18$ & $32$ & $40$ & $66$ & $80$ & $126$ & $156$ & $224$ & $286$  \\ 
\hline \end{tabular}  \end{center}

\vskip.2cm {\bf m = 8} \qquad $c\=4$
\begin{center} \begin{tabular}{|r|cccccccccccccccc|c}
\hline \diagbox[width=1.5cm,height=0.8cm]{$r_{\text{min}}$}{$n$}  & 0 & 1 & 2 & 3 & 4 & 5 & 6 & 7 & 8 & 9 & 10 & 11 & 12 & 13 & 14 & 15 \\ \hline
$1$& $-1$ & $1$ & $2$ & $4$ & $6$ & $7$ & $14$ & $18$ & $25$ & $35$ & $48$ & $63$ & $87$ & $110$ & $146$ & $190$  \\ 
$2$& $0$ & $2$ & $4$ & $6$ & $10$ & $16$ & $22$ & $32$ & $46$ & $62$ & $86$ & $116$ & $152$ & $202$ & $264$ & $340$  \\ 
$3$& $0$ & $2$ & $5$ & $7$ & $13$ & $18$ & $29$ & $38$ & $58$ & $77$ & $108$ & $141$ & $195$ & $250$ & $333$ & $424$  \\ 
$4$& $0$ & $2$ & $4$ & $8$ & $12$ & $18$ & $28$ & $40$ & $56$ & $80$ & $108$ & $144$ & $196$ & $258$ & $336$ & $440$  \\ 
$5$& $0$ & $1$ & $4$ & $5$ & $11$ & $14$ & $24$ & $32$ & $50$ & $63$ & $94$ & $122$ & $170$ & $215$ & $294$ & $371$  \\ 
$6$& $0$ & $0$ & $2$ & $4$ & $6$ & $10$ & $16$ & $22$ & $32$ & $46$ & $62$ & $86$ & $116$ & $152$ & $202$ & $264$  \\ 
$7$& $0$ & $0$ & $1$ & $1$ & $4$ & $3$ & $9$ & $8$ & $17$ & $20$ & $33$ & $36$ & $60$ & $70$ & $101$ & $124$  \\ 
\hline \end{tabular}  \end{center}


\vskip.4cm {\bf m = 9} \qquad $c\=3$
\begin{center} \begin{tabular}{|r|cccccccccccccccc|c}
\hline \diagbox[width=1.5cm,height=0.8cm]{$r_{\text{min}}$}{$n$}  & 0 & 1 & 2 & 3 & 4 & 5 & 6 & 7 & 8 & 9 & 10 & 11 & 12 & 13 & 14 & 15 \\ \hline
$1$& $-1$ & $0$ & $2$ & $2$ & $3$ & $6$ & $8$ & $10$ & $15$ & $20$ & $27$ & $36$ & $46$ & $58$ & $78$ & $98$  \\ 
$2$& $0$ & $2$ & $3$ & $4$ & $8$ & $9$ & $15$ & $20$ & $27$ & $37$ & $51$ & $63$ & $87$ & $109$ & $142$ & $178$  \\ 
$3$& $0$ & $2$ & $3$ & $6$ & $9$ & $12$ & $18$ & $26$ & $36$ & $48$ & $63$ & $84$ & $111$ & $144$ & $183$ & $234$  \\ 
$4$& $0$ & $1$ & $4$ & $5$ & $9$ & $13$ & $21$ & $25$ & $39$ & $50$ & $70$ & $90$ & $122$ & $151$ & $202$ & $251$  \\ 
$5$& $0$ & $2$ & $3$ & $6$ & $8$ & $12$ & $18$ & $26$ & $34$ & $48$ & $65$ & $84$ & $112$ & $146$ & $186$ & $240$  \\ 
$6$& $0$ & $\frac12$ & $3$ & $3$ & $7$ & $9$ & $15$ & $18$ & $30$ & $36$ & $54$ & $66$ & $93$ & $115$ & $156$ & $192$  \\ 
$7$& $0$ & $0$ & $1$ & $2$ & $4$ & $6$ & $9$ & $12$ & $19$ & $26$ & $33$ & $46$ & $62$ & $78$ & $103$ & $134$  \\ 
$8$& $0$ & $0$ & $1$ & $1$ & $3$ & $1$ & $6$ & $5$ & $10$ & $10$ & $18$ & $19$ & $33$ & $35$ & $52$ & $60$  \\ 
\hline \end{tabular}  \end{center}


\vskip.8cm 
{\bf m = 10} \qquad $c\=8$
\begin{center} \begin{tabular}{|r|cccccccccccccccc|c}
\hline \diagbox[width=1.5cm,height=0.8cm]{$r_{\text{min}}$}{$n$} & 0 & 1 & 2 & 3 & 4 & 5 & 6 & 7 & 8 & 9 & 10 & 11 & 12 & 13 & 14 & 15 \\ \hline
$1$& $-3$ & $3$ & $1$ & $2$ & $5$ & $10$ & $9$ & $15$ & $19$ & $26$ & $31$ & $41$ & $52$ & $76$ & $87$ & $108$  \\ 
$2$& $0$ & $4$ & $8$ & $16$ & $16$ & $28$ & $40$ & $48$ & $72$ & $96$ & $120$ & $160$ & $208$ & $256$ & $328$ & $416$  \\ 
$3$& $0$ & $1$ & $7$ & $7$ & $16$ & $14$ & $27$ & $32$ & $51$ & $57$ & $87$ & $101$ & $144$ & $166$ & $226$ & $269$  \\ 
$4$& $0$ & $8$ & $12$ & $16$ & $28$ & $40$ & $56$ & $80$ & $104$ & $136$ & $184$ & $240$ & $304$ & $392$ & $496$ & $624$  \\ 
$5$& $0$ & $2$ & $6$ & $8$ & $14$ & $18$ & $30$ & $34$ & $54$ & $68$ & $92$ & $114$ & $158$ & $190$ & $252$ & $308$  \\ 
$6$& $0$ & $4$ & $8$ & $16$ & $24$ & $32$ & $48$ & $64$ & $88$ & $124$ & $160$ & $208$ & $272$ & $348$ & $440$ & $560$  \\ 
$7$& $0$ & $0$ & $5$ & $3$ & $11$ & $8$ & $22$ & $23$ & $40$ & $39$ & $69$ & $75$ & $113$ & $128$ & $182$ & $210$  \\ 
$8$& $0$ & $0$ & $4$ & $8$ & $8$ & $16$ & $24$ & $32$ & $44$ & $64$ & $80$ & $104$ & $144$ & $176$ & $232$ & $296$  \\ 
*\,$9$& $0$ & $0$ & $1$ & $-1$ & $5$ & $2$ & $9$ & $2$ & $13$ & $11$ & $23$ & $18$ & $43$ & $37$ & $60$ & $60$  \\ 
\hline \end{tabular}  \end{center}

\vskip1.4cm {\bf m = 12} \qquad $c\=8$
\begin{center} \begin{tabular}{|r|cccccccccccccccc|clc}
\hline \diagbox[width=1.5cm,height=0.8cm]{$r_{\text{min}}$}{$n$} & 0 & 1 & 2 & 3 & 4 & 5 & 6 & 7 & 8 & 9 & 10 & 11 & 12 & 13 & 14 & 15 \\ \hline
$*\,1$& $-3$ & $-2$ & $1$ & $-1$ & $-2$ & $-2$ & $1$ & $-5$ & $-1$ & $-7$ & $-5$ & $-6$ & $-5$ & $-16$ & $-10$ & $-16$  \\ 
$2$& $0$ & $4$ & $4$ & $8$ & $12$ & $12$ & $20$ & $28$ & $32$ & $44$ & $56$ & $68$ & $88$ & $112$ & $132$ & $164$  \\ 
$3$& $0$ & $4$ & $8$ & $8$ & $16$ & $20$ & $28$ & $32$ & $48$ & $60$ & $80$ & $92$ & $124$ & $148$ & $188$ & $224$  \\ 
$4$& $0$ & $2$ & $4$ & $6$ & $8$ & $12$ & $16$ & $20$ & $28$ & $36$ & $44$ & $58$ & $72$ & $88$ & $112$ & $136$  \\ 
$5$& $0$ & $3$ & $5$ & $9$ & $13$ & $14$ & $24$ & $31$ & $40$ & $51$ & $66$ & $80$ & $109$ & $130$ & $162$ & $199$  \\ 
$6$& $0$ & $4$ & $8$ & $12$ & $16$ & $24$ & $32$ & $44$ & $56$ & $72$ & $96$ & $120$ & $152$ & $188$ & $232$ & $288$  \\ 
$7$& $0$ & $1$ & $6$ & $5$ & $11$ & $14$ & $22$ & $21$ & $39$ & $43$ & $61$ & $71$ & $98$ & $111$ & $152$ & $174$  \\ 
$8$& $0$ & $0$ & $2$ & $4$ & $6$ & $8$ & $12$ & $16$ & $20$ & $28$ & $36$ & $44$ & $58$ & $72$ & $88$ & $112$  \\ 
$9$& $0$ & $0$ & $4$ & $4$ & $12$ & $8$ & $20$ & $20$ & $36$ & $36$ & $56$ & $64$ & $92$ & $100$ & $144$ & $160$  \\ 
$10$& $0$ & $0$ & $0$ & $4$ & $4$ & $8$ & $12$ & $12$ & $20$ & $28$ & $32$ & $44$ & $56$ & $68$ & $88$ & $112$  \\ 
$*\,11$& $0$ & $0$ & $0$ & $-1$ & $1$ & $-3$ & $1$ & $-2$ & $0$ & $-5$ & $0$ & $-9$ & $2$ & $-8$ & $-7$ & $-14$  \\ 
\hline \end{tabular}  \end{center}

\newpage

\vskip.8cm {\bf m = 13} \qquad $c\=2$
\begin{center} \begin{tabular}{|r|cccccccccccccccc|c}
\hline \diagbox[width=1.5cm,height=0.8cm]{$r_{\text{min}}$}{$n$}  & 0 & 1 & 2 & 3 & 4 & 5 & 6 & 7 & 8 & 9 & 10 & 11 & 12 & 13 & 14 & 15 \\ \hline
$1$& $-1$ & $1$ & $1$ & $0$ & $1$ & $1$ & $2$ & $3$ & $3$ & $4$ & $6$ & $7$ & $7$ & $10$ & $12$ & $14$  \\ 
$2$& $0$ & $0$ & $1$ & $2$ & $2$ & $3$ & $4$ & $4$ & $6$ & $8$ & $10$ & $12$ & $16$ & $18$ & $24$ & $28$  \\ 
$3$& $0$ & $1$ & $1$ & $2$ & $3$ & $4$ & $5$ & $7$ & $9$ & $11$ & $13$ & $17$ & $22$ & $26$ & $32$ & $39$  \\ 
$4$& $0$ & $1$ & $2$ & $2$ & $4$ & $4$ & $6$ & $8$ & $11$ & $12$ & $18$ & $20$ & $26$ & $32$ & $40$ & $46$  \\ 
$5$& $0$ & $1$ & $2$ & $3$ & $3$ & $5$ & $7$ & $8$ & $11$ & $15$ & $18$ & $23$ & $29$ & $34$ & $43$ & $53$  \\ 
$6$& $0$ & $1$ & $2$ & $2$ & $4$ & $4$ & $8$ & $8$ & $12$ & $14$ & $19$ & $22$ & $30$ & $34$ & $44$ & $52$  \\ 
$7$& $0$ & $1$ & $1$ & $2$ & $4$ & $5$ & $6$ & $8$ & $11$ & $13$ & $17$ & $22$ & $27$ & $34$ & $41$ & $51$  \\ 
$8$& $0$ & $0$ & $2$ & $2$ & $3$ & $3$ & $6$ & $6$ & $10$ & $12$ & $16$ & $18$ & $26$ & $28$ & $38$ & $44$  \\ 
$9$& $0$ & $0$ & $1$ & $2$ & $2$ & $3$ & $4$ & $6$ & $7$ & $10$ & $13$ & $15$ & $20$ & $25$ & $30$ & $37$  \\ 
$10$& $0$ & $0$ & $1$ & $0$ & $2$ & $2$ & $4$ & $4$ & $6$ & $6$ & $10$ & $10$ & $16$ & $17$ & $24$ & $26$  \\ 
$11$& $0$ & $0$ & $0$ & $1$ & $1$ & $1$ & $2$ & $2$ & $4$ & $5$ & $5$ & $8$ & $10$ & $11$ & $14$ & $18$  \\ 
$12$& $0$ & $0$ & $0$ & $0$ & $1$ & $0$ & $2$ & $0$ & $2$ & $1$ & $4$ & $2$ & $6$ & $4$ & $8$ & $8$  \\ 
\hline \end{tabular}  \end{center}

\vskip1.4cm {\bf m = 16} \qquad $c\=2$
\begin{center} \begin{tabular}{|r|cccccccccccccccc|c}
\hline \diagbox[width=1.5cm,height=0.8cm]{$r_{\text{min}}$}{$n$}  & 0 & 1 & 2 & 3 & 4 & 5 & 6 & 7 & 8 & 9 & 10 & 11 & 12 & 13 & 14 & 15 \\ \hline
$1$& $-1$ & $0$ & $0$ & $0$ & $1$ & $0$ & $1$ & $2$ & $1$ & $1$ & $2$ & $3$ & $3$ & $4$ & $5$ & $5$  \\ 
$2$& $0$ & $0$ & $1$ & $1$ & $1$ & $1$ & $2$ & $2$ & $3$ & $4$ & $4$ & $5$ & $7$ & $7$ & $9$ & $11$  \\ 
$3$& $0$ & $1$ & $1$ & $1$ & $2$ & $2$ & $3$ & $3$ & $4$ & $5$ & $7$ & $7$ & $9$ & $11$ & $13$ & $15$  \\ 
$4$& $0$ & $1$ & $1$ & $1$ & $2$ & $3$ & $3$ & $4$ & $5$ & $6$ & $8$ & $9$ & $11$ & $14$ & $16$ & $19$  \\ 
$5$& $0$ & $0$ & $1$ & $2$ & $2$ & $2$ & $4$ & $4$ & $6$ & $7$ & $9$ & $10$ & $14$ & $14$ & $19$ & $22$  \\ 
$6$& $0$ & $1$ & $1$ & $2$ & $2$ & $3$ & $4$ & $5$ & $6$ & $8$ & $9$ & $11$ & $14$ & $17$ & $20$ & $24$  \\ 
$7$& $0$ & $1$ & $2$ & $1$ & $3$ & $3$ & $4$ & $5$ & $7$ & $7$ & $10$ & $11$ & $15$ & $17$ & $21$ & $23$  \\ 
$8$& $0$ & $\frac12$ & $1$ & $2$ & $2$ & $3$ & $4$ & $4$ & $6$ & $8$ & $9$ & $12$ & $14$ & $16$ & $20$ & $24$  \\ 
$9$& $0$ & $0$ & $1$ & $1$ & $2$ & $2$ & $4$ & $4$ & $6$ & $6$ & $9$ & $10$ & $14$ & $14$ & $19$ & $22$  \\ 
$10$& $0$ & $0$ & $1$ & $1$ & $2$ & $2$ & $3$ & $4$ & $5$ & $6$ & $8$ & $9$ & $11$ & $14$ & $17$ & $20$  \\ 
$11$& $0$ & $0$ & $1$ & $1$ & $2$ & $2$ & $3$ & $2$ & $5$ & $5$ & $7$ & $7$ & $11$ & $11$ & $15$ & $16$  \\ 
$12$& $0$ & $0$ & $0$ & $1$ & $1$ & $1$ & $2$ & $3$ & $3$ & $4$ & $5$ & $6$ & $8$ & $9$ & $11$ & $14$  \\ 
$13$& $0$ & $0$ & $0$ & $0$ & $1$ & $0$ & $2$ & $1$ & $3$ & $2$ & $4$ & $4$ & $6$ & $6$ & $9$ & $9$  \\ 
$14$& $0$ & $0$ & $0$ & $0$ & $0$ & $1$ & $1$ & $1$ & $1$ & $2$ & $2$ & $3$ & $4$ & $4$ & $5$ & $7$  \\ 
$15$& $0$ & $0$ & $0$ & $0$ & $1$ & $0$ & $1$ & $0$ & $1$ & $1$ & $2$ & $0$ & $3$ & $1$ & $3$ & $2$  \\ 
\hline \end{tabular}  \end{center}

\newpage

\vskip.8cm {\bf m = 18} \qquad $c\=3$
\begin{center} \begin{tabular}{|r|cccccccccccccccc|c}
\hline \diagbox[width=1.5cm,height=0.8cm]{$r_{\text{min}}$}{$n$}  & 0 & 1 & 2 & 3 & 4 & 5 & 6 & 7 & 8 & 9 & 10 & 11 & 12 & 13 & 14 & 15 \\ \hline
*\,$1$& $-2$ & $0$ & $-1$ & $0$ & $-1$ & $2$ & $1$ & $0$ & $0$ & $1$ & $0$ & $3$ & $1$ & $3$ & $2$ & $4$  \\ 
$2$& $0$ & $3$ & $0$ & $3$ & $3$ & $3$ & $3$ & $6$ & $6$ & $6$ & $9$ & $9$ & $12$ & $15$ & $15$ & $21$  \\ 
*\,$3$& $0$ & $-1$ & $2$ & $1$ & $1$ & $0$ & $3$ & $1$ & $4$ & $2$ & $5$ & $5$ & $7$ & $4$ & $9$ & $8$  \\ 
$4$& $0$ & $0$ & $3$ & $3$ & $3$ & $6$ & $6$ & $6$ & $12$ & $12$ & $15$ & $18$ & $21$ & $24$ & $30$ & $36$  \\ 
$5$& $0$ & $2$ & $1$ & $1$ & $2$ & $3$ & $3$ & $3$ & $4$ & $6$ & $5$ & $6$ & $9$ & $11$ & $12$ & $13$  \\ 
$6$& $0$ & $3$ & $3$ & $3$ & $6$ & $6$ & $9$ & $12$ & $12$ & $15$ & $18$ & $24$ & $27$ & $33$ & $39$ & $45$  \\ 
*\,$7$& $0$ & $-1$ & $1$ & $0$ & $3$ & $1$ & $3$ & $2$ & $6$ & $4$ & $8$ & $7$ & $11$ & $8$ & $15$ & $15$  \\ 
$8$& $0$ & $3$ & $3$ & $6$ & $6$ & $6$ & $9$ & $12$ & $12$ & $18$ & $21$ & $24$ & $30$ & $36$ & $39$ & $48$  \\ 
$9$& $0$ & $0$ & $1$ & $1$ & $2$ & $2$ & $3$ & $3$ & $5$ & $5$ & $7$ & $7$ & $10$ & $11$ & $14$ & $15$  \\ 
$10$& $0$ & $0$ & $3$ & $3$ & $3$ & $6$ & $9$ & $9$ & $12$ & $15$ & $18$ & $21$ & $27$ & $30$ & $39$ & $45$  \\ 
$11$& $0$ & $0$ & $2$ & $1$ & $3$ & $0$ & $4$ & $3$ & $5$ & $3$ & $7$ & $5$ & $10$ & $8$ & $14$ & $12$  \\ 
$12$& $0$ & $0$ & $\frac32$ & $3$ & $3$ & $6$ & $6$ & $6$ & $9$ & $12$ & $15$ & $18$ & $21$ & $24$ & $30$ & $36$  \\ 
*\,$13$& $0$ & $0$ & $0$ & $-1$ & $1$ & $1$ & $2$ & $0$ & $3$ & $3$ & $4$ & $3$ & $8$ & $6$ & $9$ & $8$  \\ 
$14$& $0$ & $0$ & $0$ & $3$ & $3$ & $3$ & $3$ & $6$ & $6$ & $9$ & $9$ & $9$ & $15$ & $18$ & $18$ & $24$  \\ 
*\,$15$& $0$ & $0$ & $0$ & $0$ & $2$ & $-1$ & $1$ & $1$ & $3$ & $0$ & $4$ & $1$ & $5$ & $2$ & $5$ & $4$  \\ 
$16$& $0$ & $0$ & $0$ & $0$ & $0$ & $0$ & $3$ & $0$ & $3$ & $3$ & $3$ & $6$ & $6$ & $6$ & $9$ & $12$  \\ 
*\,$17$& $0$ & $0$ & $0$ & $0$ & $1$ & $0$ & $2$ & $0$ & $2$ & $-1$ & $1$ & $0$ & $3$ & $1$ & $3$ & $0$  \\ 
\hline \end{tabular}  \end{center}

\newpage

\vskip.2cm {\bf m = 25} \qquad $c\=1$
\begin{center} \begin{tabular}{|r|cccccccccccccccc|c}
\hline \diagbox[width=1.5cm,height=0.8cm]{$r_{\text{min}}$}{$n$}  & 0 & 1 & 2 & 3 & 4 & 5 & 6 & 7 & 8 & 9 & 10 & 11 & 12 & 13 & 14 & 15 \\ \hline
$1$& $-1$ & $0$ & $0$ & $0$ & $0$ & $0$ & $0$ & $0$ & $0$ & $0$ & $1$ & $0$ & $0$ & $1$ & $1$ & $1$  \\ 
$2$& $0$ & $0$ & $0$ & $1$ & $0$ & $0$ & $1$ & $0$ & $1$ & $1$ & $1$ & $1$ & $1$ & $1$ & $1$ & $2$  \\ 
$3$& $0$ & $0$ & $0$ & $0$ & $1$ & $0$ & $1$ & $1$ & $1$ & $1$ & $1$ & $1$ & $2$ & $2$ & $2$ & $2$  \\ 
$4$& $0$ & $1$ & $1$ & $0$ & $1$ & $1$ & $1$ & $1$ & $1$ & $1$ & $2$ & $2$ & $2$ & $3$ & $3$ & $3$  \\ 
$5$& $0$ & $0$ & $1$ & $1$ & $0$ & $1$ & $1$ & $1$ & $1$ & $2$ & $2$ & $2$ & $3$ & $3$ & $3$ & $4$  \\ 
$6$& $0$ & $0$ & $0$ & $0$ & $1$ & $1$ & $1$ & $1$ & $2$ & $2$ & $2$ & $2$ & $3$ & $3$ & $4$ & $4$  \\ 
$7$& $0$ & $1$ & $0$ & $1$ & $1$ & $1$ & $1$ & $2$ & $2$ & $2$ & $2$ & $3$ & $3$ & $4$ & $4$ & $5$  \\ 
$8$& $0$ & $0$ & $1$ & $1$ & $1$ & $1$ & $1$ & $1$ & $2$ & $2$ & $3$ & $3$ & $4$ & $3$ & $5$ & $5$  \\ 
$9$& $0$ & $1$ & $1$ & $1$ & $1$ & $1$ & $2$ & $2$ & $2$ & $2$ & $3$ & $3$ & $4$ & $4$ & $5$ & $6$  \\ 
$10$& $0$ & $\frac12$ & $1$ & $0$ & $1$ & $1$ & $2$ & $2$ & $2$ & $2$ & $3$ & $3$ & $4$ & $4$ & $5$ & $5$  \\ 
$11$& $0$ & $0$ & $0$ & $1$ & $1$ & $1$ & $1$ & $1$ & $2$ & $3$ & $2$ & $3$ & $4$ & $4$ & $5$ & $6$  \\ 
$12$& $0$ & $0$ & $1$ & $1$ & $1$ & $1$ & $2$ & $1$ & $2$ & $2$ & $3$ & $3$ & $4$ & $4$ & $5$ & $5$  \\ 
$13$& $0$ & $0$ & $1$ & $0$ & $1$ & $1$ & $1$ & $2$ & $2$ & $2$ & $3$ & $3$ & $3$ & $4$ & $5$ & $5$  \\ 
$14$& $0$ & $0$ & $1$ & $1$ & $1$ & $1$ & $2$ & $1$ & $2$ & $2$ & $3$ & $3$ & $4$ & $3$ & $5$ & $5$  \\ 
$15$& $0$ & $0$ & $0$ & $1$ & $1$ & $1$ & $1$ & $1$ & $2$ & $2$ & $2$ & $2$ & $3$ & $4$ & $4$ & $5$  \\ 
$16$& $0$ & $0$ & $0$ & $0$ & $1$ & $0$ & $1$ & $1$ & $2$ & $1$ & $2$ & $2$ & $3$ & $3$ & $4$ & $4$  \\ 
$17$& $0$ & $0$ & $0$ & $1$ & $0$ & $1$ & $1$ & $1$ & $1$ & $2$ & $2$ & $2$ & $3$ & $3$ & $3$ & $4$  \\ 
$18$& $0$ & $0$ & $0$ & $0$ & $1$ & $0$ & $1$ & $1$ & $1$ & $1$ & $2$ & $1$ & $3$ & $2$ & $3$ & $3$  \\ 
$19$& $0$ & $0$ & $0$ & $0$ & $1$ & $1$ & $1$ & $1$ & $1$ & $1$ & $1$ & $2$ & $2$ & $2$ & $2$ & $3$  \\ 
$20$& $0$ & $0$ & $0$ & $0$ & $\frac12$ & $0$ & $1$ & $0$ & $1$ & $1$ & $1$ & $1$ & $2$ & $1$ & $3$ & $2$  \\ 
$21$& $0$ & $0$ & $0$ & $0$ & $0$ & $0$ & $0$ & $0$ & $0$ & $1$ & $1$ & $1$ & $1$ & $1$ & $1$ & $2$  \\ 
$22$& $0$ & $0$ & $0$ & $0$ & $0$ & $0$ & $1$ & $0$ & $1$ & $0$ & $1$ & $0$ & $1$ & $1$ & $1$ & $1$  \\ 
$23$& $0$ & $0$ & $0$ & $0$ & $0$ & $0$ & $0$ & $0$ & $1$ & $0$ & $0$ & $0$ & $1$ & $0$ & $1$ & $1$  \\ 
$24$& $0$ & $0$ & $0$ & $0$ & $0$ & $0$ & $1$ & $0$ & $0$ & $0$ & $1$ & $0$ & $1$ & $0$ & $1$ & $0$  \\ 
\hline \end{tabular}  \end{center}

\bigskip\bigskip

{\normalsize 
\subsection{Table of $\CQ_M$ (weight 1 case)  \label{apptabs3}}


 We give the coefficients of $\CQ_M$ for all products of three primes less than 150 for which $\CQ_M$.
We also show the factor $c_M$ relating $\CQ_M$ to $\CF_M$ and the minimal discriminant of~$\CF_M$,
since it is no longer assumed to be~$-1$.  This time an asterisk means simply that the corresponding
row contains (non-polar) coefficients of opposite signs, since here there are many examples of
values of~$r\mypmod{2m}$ for which the scalar factor~$\k_r$ that determines the sign of the coefficients
$C(\D,r)$ for large~$\D$ is negative rather than positive for~$r=r_{\rm min}$.

\newpage

\vskip.2cm {\bf M = 30} \qquad $c_M\=3$   \qquad $\DD_{\rm min}\=-1$
\begin{center} \begin{tabular}{|r|cccccccccccccccc|cccc}
\hline
\diagbox[width=1.5cm,height=0.8cm]{$r_{\text{min}}$}{$n$} 
      & 0 & 1 & 2 & 3 & 4 & 5 & 6 & 7 & 8 & 9 & 10 & 11 & 12 & 13 & 14 & 15 \\ \hline
$1$& $-1$ & $1$ & $1$ & $2$ & $1$ & $3$ & $2$ & $3$ & $3$ & $5$ & $3$ & $6$ & $5$ & $7$ & $7$ & $9$  \\
$7$& $0$ & $1$ & $2$ & $2$ & $3$ & $3$ & $4$ & $4$ & $6$ & $5$ & $7$ & $8$ & $9$ & $9$ & $12$ & $12$  \\ 
\hline  \end{tabular}  \end{center}

\vskip.2cm {\bf M = 42} \qquad $c_M\=2$   \qquad $\DD_{\rm min}\=-1$
\begin{center} \begin{tabular}{|r|cccccccccccccccc|cccc}
\hline
\diagbox[width=1.5cm,height=0.8cm]{$r_{\text{min}}$}{$n$} 
      & 0 & 1 & 2 & 3 & 4 & 5 & 6 & 7 & 8 & 9 & 10 & 11 & 12 & 13 & 14 & 15 \\ \hline
$1$& $-1$ & $-1$ & $0$ & $-1$ & $-1$ & $-1$ & $0$ & $-2$ & $-1$ & $-2$ & $-1$ & $-2$ & $-1$ & $-3$ & $-2$ & $-3$ \\
$5$& $0$ & $1$ & $1$ & $1$ & $2$ & $1$ & $2$ & $2$ & $2$ & $3$ & $3$ & $2$ & $4$ & $4$ & $4$ & $4$ \\
 $11$& $0$ & $1$ & $1$ & $2$ & $1$ & $2$ & $2$ & $3$ & $2$ & $3$ & $3$ & $4$ & $4$ & $5$ & $4$ & $6$  \\ 
\hline  \end{tabular}  \end{center}

\vskip.2cm {\bf M = 66} \qquad $c_M\=12$   \qquad $\DD_{\rm min}\=-25$
\begin{center} \begin{tabular}{|r|ccccccccccc|cccc}
\hline
\diagbox[width=1.5cm,height=0.8cm]{$r_{\text{min}}$}{$n$} 
      & 0 & 1 & 2 & 3 & 4 & 5 & 6 & 7 & 8 & 9 & 10  \\ \hline
$1$& $-5$ & $-5$ & $-15$ & $-60$ & $-125$ & $-313$ & $-620$ & $-1270$ & $-2358$ & $-4394$ & $-7698$  \\
 $5$& $-1$ & $15$ & $65$ & $175$ & $450$ & $989$ & $2105$ & $4140$ & $7930$ & $14508$ & $25915$ \\
$7$& $0$ & $16$ & $55$ & $155$ & $385$ & $852$ & $1816$ & $3597$ & $6880$ & $12645$ & $22627$ \\
 $13$& $0$ & $5$ & $22$ & $60$ & $155$ & $357$ & $781$ & $1567$ & $3070$ & $5725$ & $10367$  \\
$19$& $0$ & $0$ & $-5$ & $-33$ & $-99$ & $-268$ & $-605$ & $-1320$ & $-2623$ & $-5104$ & $-9398$  \\ 
\hline  \end{tabular}  \end{center}

\vskip.2cm {\bf M = 70} \qquad $c_M\=1$   \qquad $\DD_{\rm min}\=-1$
\begin{center} \begin{tabular}{|r|cccccccccccccccc|cccc}
\hline
\diagbox[width=1.5cm,height=0.8cm]{$r_{\text{min}}$}{$n$} 
      & 0 & 1 & 2 & 3 & 4 & 5 & 6 & 7 & 8 & 9 & 10 & 11 & 12 & 13 & 14 & 15 \\ \hline
$1$& $-1$ & $0$ & $-1$ & $-1$ & $0$ & $0$ & $-1$ & $0$ & $-1$ & $-1$ & $-1$ & $-1$ & $-1$ & $0$ & $-1$ & $-1$ \\
 $3$& $0$ & $0$ & $1$ & $0$ & $1$ & $0$ & $0$ & $1$ & $1$ & $0$ & $1$ & $1$ & $1$ & $0$ & $1$ & $1$ \\
$9$& $0$ & $1$ & $0$ & $1$ & $0$ & $1$ & $1$ & $1$ & $0$ & $1$ & $1$ & $1$ & $1$ & $1$ & $1$ & $2$ \\
 $11$& $0$ & $-1$ & $0$ & $0$ & $0$ & $-1$ & $0$ & $-1$ & $0$ & $0$ & $0$ & $-1$ & $0$ & $-1$ & $0$ & $-1$ \\
 $13$& $0$ & $1$ & $1$ & $0$ & $1$ & $1$ & $1$ & $1$ & $1$ & $1$ & $1$ & $1$ & $1$ & $2$ & $2$ & $1$ \\
 $23$& $0$ & $0$ & $1$ & $1$ & $1$ & $1$ & $1$ & $1$ & $1$ & $1$ & $2$ & $1$ & $2$ & $1$ & $2$ & $2$  \\ 
\hline  \end{tabular}  \end{center}

\vskip.2cm {\bf M = 78} \qquad $c_M\=1$   \qquad $\DD_{\rm min}\=-1$
\begin{center} \begin{tabular}{|r|cccccccccccccccc|cccc}
\hline
\diagbox[width=1.5cm,height=0.8cm]{$r_{\text{min}}$}{$n$} 
      & 0 & 1 & 2 & 3 & 4 & 5 & 6 & 7 & 8 & 9 & 10 & 11 & 12 & 13 & 14 & 15 \\ \hline
$1$& $-1$ & $0$ & $0$ & $-1$ & $-1$ & $0$ & $0$ & $-1$ & $0$ & $0$ & $-1$ & $0$ & $-1$ & $-1$ & $-1$ & $-1$ \\
$5$& $0$ & $1$ & $0$ & $1$ & $0$ & $1$ & $0$ & $1$ & $1$ & $1$ & $0$ & $1$ & $1$ & $1$ & $1$ & $1$ \\
$7$& $0$ & $-1$ & $0$ & $0$ & $0$ & $0$ & $0$ & $-1$ & $0$ & $-1$ & $0$ & $0$ & $0$ & $-1$ & $0$ & $0$ \\
$11$& $0$ & $1$ & $1$ & $0$ & $1$ & $0$ & $1$ & $1$ & $1$ & $1$ & $1$ & $1$ & $1$ & $1$ & $1$ & $1$ \\
$17$& $0$ & $1$ & $0$ & $1$ & $1$ & $1$ & $1$ & $1$ & $0$ & $1$ & $1$ & $1$ & $1$ & $2$ & $1$ & $2$ \\
$23$& $0$ & $0$ & $1$ & $1$ & $1$ & $0$ & $1$ & $1$ & $1$ & $1$ & $1$ & $1$ & $2$ & $1$ & $1$ & $1$  \\ 
\hline  \end{tabular}  \end{center}

\vskip.2cm {\bf M = 102} \qquad $c_M\=6$   \qquad $\DD_{\rm min}\=-25$
\begin{center} \begin{tabular}{|r|cccccccccccc|cccc}
\hline
\diagbox[width=1.5cm,height=0.8cm]{$r_{\text{min}}$}{$n$} 
      & 0 & 1 & 2 & 3 & 4 & 5 & 6 & 7 & 8 & 9 & 10 & 11\\ \hline
$1$& $-3$ & $-1$ & $-8$ & $-14$ & $-31$ & $-56$ & $-102$ & $-173$ & $-293$ & $-461$ & $-732$ & $-1129$ \\
$5$& $-1$ & $7$ & $16$ & $41$ & $76$ & $153$ & $262$ & $454$ & $745$ & $1215$ & $1886$ & $2941$ \\
 $7$& $0$ & $3$ & $12$ & $21$ & $45$ & $76$ & $146$ & $236$ & $408$ & $636$ & $1024$ & $1555$  \\
$11$& $0$ & $3$ & $13$ & $26$ & $59$ & $103$ & $197$ & $327$ & $560$ & $890$ & $1427$ & $2187$  \\
 $13$& $0$ & $6$ & $15$ & $31$ & $68$ & $125$ & $229$ & $389$ & $657$ & $1055$ & $1688$ & $2603$  \\
 $19$& $0$ & $2$ & $7$ & $16$ & $35$ & $61$ & $120$ & $205$ & $349$ & $574$ & $924$ & $1433$ \\
 $*\,25$& $0$ & $0$ & $1$ & $-1$ & $-4$ & $-11$ & $-16$ & $-42$ & $-61$ & $-115$ & $-178$ & $-303$  \\
 $31$& $0$ & $0$ & $0$ & $-3$ & $-12$ & $-27$ & $-61$ & $-113$ & $-209$ & $-358$ & $-605$ & $-972$  \\ 
\hline  \end{tabular}  \end{center}

\vskip.8cm {\bf M = 105} \qquad $c_M\=1$   \qquad $\DD_{\rm min}\=-4$
\begin{center} \begin{tabular}{|r|cccccccccccccccc|cccc}
\hline
\diagbox[width=1.5cm,height=0.8cm]{$r_{\text{min}}$}{$n$} 
      & 0 & 1 & 2 & 3 & 4 & 5 & 6 & 7 & 8 & 9 & 10 & 11 & 12 & 13 & 14 & 15 \\ \hline
$1$& $0$ & $1$ & $1$ & $1$ & $2$ & $1$ & $2$ & $3$ & $3$ & $3$ & $4$ & $4$ & $5$ & $6$ & $7$ & $7$ \\
 $2$& $-1$ & $-1$ & $-1$ & $-1$ & $-1$ & $-1$ & $-2$ & $-2$ & $-2$ & $-3$ & $-3$ & $-3$ & $-4$ & $-4$ & $-5$ & $-6$ \\
$4$& $0$ & $1$ & $0$ & $1$ & $1$ & $1$ & $1$ & $1$ & $2$ & $2$ & $2$ & $2$ & $3$ & $4$ & $3$ & $4$ \\
$8$& $0$ & $0$ & $1$ & $1$ & $0$ & $1$ & $2$ & $1$ & $2$ & $2$ & $2$ & $3$ & $3$ & $3$ & $4$ & $5$ \\
$11$& $0$ & $0$ & $1$ & $0$ & $1$ & $1$ & $1$ & $1$ & $2$ & $1$ & $3$ & $2$ & $3$ & $2$ & $4$ & $4$ \\
$13$& $0$ & $1$ & $0$ & $1$ & $1$ & $1$ & $1$ & $2$ & $1$ & $2$ & $2$ & $3$ & $3$ & $4$ & $3$ & $5$ \\
$16$& $0$ & $1$ & $1$ & $1$ & $2$ & $2$ & $2$ & $3$ & $3$ & $4$ & $4$ & $5$ & $6$ & $7$ & $8$ & $8$ \\
$17$& $0$ & $-1$ & $0$ & $-1$ & $0$ & $-1$ & $0$ & $-2$ & $0$ & $-1$ & $-1$ & $-2$ & $-1$ & $-3$ & $-1$ & $-3$ \\
$19$& $0$ & $1$ & $1$ & $2$ & $1$ & $2$ & $2$ & $3$ & $3$ & $4$ & $4$ & $5$ & $5$ & $7$ & $7$ & $9$ \\
$23$& $0$ & $0$ & $1$ & $0$ & $1$ & $1$ & $2$ & $1$ & $2$ & $1$ & $3$ & $3$ & $4$ & $3$ & $5$ & $4$ \\
$32$& $0$ & $0$ & $0$ & $0$ & $0$ & $1$ & $0$ & $0$ & $1$ & $1$ & $1$ & $1$ & $1$ & $1$ & $2$ & $2$ \\
$34$& $0$ & $0$ & $0$ & $1$ & $1$ & $1$ & $1$ & $2$ & $2$ & $2$ & $2$ & $3$ & $4$ & $4$ & $4$ & $6$  \\ 
\hline  \end{tabular}  \end{center}

\newpage

\vskip.2cm {\bf M = 110} \qquad $c_M\=3$   \qquad $\DD_{\rm min}\=-9$
\begin{center} \begin{tabular}{|r|ccccccccccccccc|cccc}
\hline
\diagbox[width=1.5cm,height=0.8cm]{$r_{\text{min}}$}{$n$} 
      & 0 & 1 & 2 & 3 & 4 & 5 & 6 & 7 & 8 & 9 & 10 & 11 & 12 & 13 & 14  \\ \hline
$1$& $-2$ & $-1$ & $-2$ & $-3$ & $-3$ & $-5$ & $-5$ & $-9$ & $-9$ & $-13$ & $-15$ & $-22$ & $-25$ & $-32$ & $-37$  \\
$3$& $-1$ & $0$ & $2$ & $1$ & $4$ & $4$ & $6$ & $7$ & $11$ & $14$ & $18$ & $21$ & $28$ & $34$ & $44$  \\
$7$& $0$ & $1$ & $1$ & $3$ & $2$ & $3$ & $4$ & $5$ & $6$ & $9$ & $9$ & $13$ & $16$ & $20$ & $21$ \\
$9$& $0$ & $2$ & $3$ & $2$ & $5$ & $5$ & $9$ & $10$ & $13$ & $15$ & $22$ & $24$ & $33$ & $39$ & $50$ \\
$13$& $0$ & $1$ & $2$ & $4$ & $6$ & $7$ & $10$ & $14$ & $18$ & $23$ & $31$ & $37$ & $49$ & $58$ & $74$  \\
$17$& $0$ & $2$ & $3$ & $4$ & $5$ & $8$ & $10$ & $12$ & $17$ & $22$ & $27$ & $34$ & $43$ & $52$ & $65$ \\
$19$& $0$ & $1$ & $2$ & $2$ & $2$ & $3$ & $5$ & $5$ & $6$ & $9$ & $10$ & $13$ & $16$ & $19$ & $24$ \\
$23$& $0$ & $0$ & $2$ & $1$ & $4$ & $4$ & $6$ & $8$ & $13$ & $13$ & $20$ & $23$ & $31$ & $37$ & $49$  \\
$*\,29$& $0$ & $0$ & $1$ & $0$ & $1$ & $0$ & $1$ & $-2$ & $0$ & $-3$ & $-2$ & $-4$ & $-2$ & $-9$ & $-6$  \\
$*\,39$& $0$ & $0$ & $0$ & $0$ & $1$ & $-1$ & $-1$ & $-2$ & $-3$ & $-6$ & $-5$ & $-10$ & $-11$ & $-16$ & $-20$ \\ 
\hline \end{tabular}  \end{center}

\vskip.8cm {\bf M = 114} \qquad $c_M\=4$   \qquad $\DD_{\rm min}\=-25$
\begin{center} \begin{tabular}{|r|cccccccccccccc|cccc}
\hline
\diagbox[width=1.5cm,height=0.8cm]{$r_{\text{min}}$}{$n$} 
      & 0 & 1 & 2 & 3 & 4 & 5 & 6 & 7 & 8 & 9 & 10 & 11 & 12 & 13   \\ \hline
$1$& $-1$ & $4$ & $9$ & $18$ & $37$ & $62$ & $110$ & $181$ & $291$ & $449$ & $695$ & $1034$ & $1537$ & $2235$  \\
 $5$& $-1$ & $0$ & $4$ & $3$ & $11$ & $16$ & $32$ & $41$ & $83$ & $113$ & $188$ & $266$ & $413$ & $574$  \\
$7$& $0$ & $6$ & $12$ & $30$ & $53$ & $99$ & $166$ & $282$ & $444$ & $706$ & $1067$ & $1622$ & $2387$ & $3498$  \\
$11$& $0$ & $3$ & $11$ & $21$ & $42$ & $74$ & $133$ & $216$ & $351$ & $547$ & $849$ & $1271$ & $1897$ & $2757$   \\
$13$& $0$ & $3$ & $8$ & $16$ & $37$ & $59$ & $111$ & $178$ & $295$ & $457$ & $718$ & $1062$ & $1606$ & $2327$\\
 $17$& $0$ & $2$ & $9$ & $19$ & $38$ & $70$ & $127$ & $205$ & $345$ & $538$ & $840$ & $1267$ & $1904$ & $2778$  \\
$23$& $0$ & $0$ & $3$ & $5$ & $15$ & $24$ & $48$ & $75$ & $135$ & $204$ & $333$ & $496$ & $765$ & $1110$  \\
 $29$& $0$ & $0$ & $-1$ & $-2$ & $-6$ & $-10$ & $-20$ & $-37$ & $-64$ & $-100$ & $-165$ & $-257$ & $-388$ & $-583$\\
$35$& $0$ & $0$ & $0$ & $-3$ & $-7$ & $-18$ & $-32$ & $-67$ & $-111$ & $-194$ & $-310$ & $-501$ & $-767$ & $-1184$  \\ 
\hline  \end{tabular}  \end{center}

\newpage

\vskip.2cm {\bf M = 130} \qquad $c_M\=2$   \qquad $\DD_{\rm min}\=-9$
\begin{center} \begin{tabular}{|r|ccccccccccccccc|cccc}
\hline
\diagbox[width=1.5cm,height=0.8cm]{$r_{\text{min}}$}{$n$} 
      & 0 & 1 & 2 & 3 & 4 & 5 & 6 & 7 & 8 & 9 & 10 & 11 & 12 & 13 & 14 \\ \hline
$1$& $-1$ & $0$ & $-1$ & $0$ & $-1$ & $-1$ & $-2$ & $-2$ & $-3$ & $-3$ & $-5$ & $-5$ & $-7$ & $-7$ & $-12$ \\
 $3$& $-1$ & $-1$ & $-1$ & $-2$ & $-1$ & $-3$ & $-3$ & $-5$ & $-5$ & $-7$ & $-7$ & $-11$ & $-11$ & $-16$ & $-17$\\
$7$& $0$ & $1$ & $2$ & $2$ & $3$ & $3$ & $6$ & $6$ & $8$ & $9$ & $13$ & $14$ & $19$ & $21$ & $27$ \\
$9$& $0$ & $1$ & $2$ & $2$ & $4$ & $4$ & $6$ & $8$ & $10$ & $12$ & $16$ & $18$ & $24$ & $28$ & $35$ \\
$11$& $0$ & $1$ & $1$ & $2$ & $2$ & $3$ & $4$ & $4$ & $6$ & $8$ & $9$ & $10$ & $14$ & $17$ & $19$  \\
$17$& $0$ & $1$ & $1$ & $2$ & $2$ & $3$ & $4$ & $5$ & $5$ & $8$ & $9$ & $12$ & $14$ & $17$ & $19$  \\
$19$& $0$ & $1$ & $2$ & $2$ & $4$ & $5$ & $6$ & $8$ & $11$ & $13$ & $17$ & $20$ & $26$ & $31$ & $38$  \\
$21$& $0$ & $1$ & $2$ & $3$ & $3$ & $5$ & $6$ & $8$ & $10$ & $13$ & $15$ & $20$ & $23$ & $29$ & $35$  \\
$27$& $0$ & $0$ & $1$ & $0$ & $1$ & $1$ & $2$ & $0$ & $3$ & $1$ & $3$ & $3$ & $5$ & $3$ & $7$ \\
$29$& $0$ & $0$ & $1$ & $1$ & $2$ & $2$ & $4$ & $4$ & $6$ & $6$ & $10$ & $11$ & $15$ & $16$ & $22$ \\
$37$& $0$ & $0$ & $0$ & $0$ & $0$ & $-1$ & $0$ & $-1$ & $-2$ & $-2$ & $-2$ & $-4$ & $-4$ & $-6$ & $-7$  \\
$47$& $0$ & $0$ & $0$ & $0$ & $0$ & $-1$ & $0$ & $-2$ & $-2$ & $-3$ & $-3$ & $-6$ & $-5$ & $-9$ & $-10$   \\ 
\hline  \end{tabular}  \end{center}

\vskip.8cm {\bf M = 138} \qquad $c_M\=2$   \qquad $\DD_{\rm min}\=-49$
\begin{center} \begin{tabular}{|r|cccccccccccc|cccc}
\hline
\diagbox[width=1.5cm,height=0.8cm]{$r_{\text{min}}$}{$n$} 
      & 0 & 1 & 2 & 3 & 4 & 5 & 6 & 7 & 8 & 9 & 10 & 11 \\ \hline
$1$& $-5$ & $5$ & $20$ & $57$ & $141$ & $323$ & $658$ & $1308$ & $2449$ & $4450$ & $7786$ & $13373$  \\
$5$& $-2$ & $9$ & $27$ & $56$ & $125$ & $235$ & $451$ & $812$ & $1442$ & $2463$ & $4182$ & $6884$ \\
$7$& $-1$ & $13$ & $44$ & $111$ & $265$ & $555$ & $1138$ & $2170$ & $4032$ & $7183$ & $12541$ & $21237$ \\
$11$& $0$ & $14$ & $46$ & $115$ & $261$ & $546$ & $1091$ & $2055$ & $3787$ & $6702$ & $11610$ & $19591$\\
$13$& $0$ & $9$ & $29$ & $68$ & $151$ & $308$ & $612$ & $1139$ & $2080$ & $3675$ & $6323$ & $10653$  \\
$17$& $0$ & $6$ & $28$ & $68$ & $174$ & $354$ & $753$ & $1419$ & $2678$ & $4775$ & $8407$ & $14255$ \\
$*\,19$& $0$ & $2$ & $6$ & $-2$ & $-12$ & $-69$ & $-157$ & $-397$ & $-784$ & $-1581$ & $-2884$ & $-5240$  \\
$25$& $0$ & $0$ & $-3$ & $-23$ & $-69$ & $-192$ & $-427$ & $-926$ & $-1817$ & $-3473$ & $-6310$ & $-11189$  \\
$31$& $0$ & $0$ & $-1$ & $-11$ & $-33$ & $-100$ & $-220$ & $-491$ & $-973$ & $-1890$ & $-3456$ & $-6203$ \\
$*\,37$& $0$ & $0$ & $0$ & $-1$ & $-1$ & $-5$ & $6$ & $8$ & $49$ & $110$ & $263$ & $490$  \\
$43$& $0$ & $0$ & $0$ & $0$ & $2$ & $13$ & $47$ & $128$ & $309$ & $677$ & $1369$ & $2653$  \\ 
\hline  \end{tabular}  \end{center}

\newpage

\providecommand{\bysame}{\leavevmode\hbox to3em{\hrulefill}\thinspace}
\providecommand{\MR}{\relax\ifhmode\unskip\space\fi MR }
\providecommand{\MRhref}[2]{%
  \href{http://www.ams.org/mathscinet-getitem?mr=#1}{#2}
}
\providecommand{\href}[2]{#2}

\end{document}